\documentclass[12pt]{amsart}

\usepackage{amsmath,amsthm,amsfonts,amssymb,eucal, bbm}
\usepackage{scalerel}
\usepackage{mathabx}
\usepackage{hyperref}

\usepackage{color}

\usepackage{xcolor}

\renewcommand {\a}{ \alpha }
\renewcommand{\b}{\beta}
\newcommand{\e}{\epsilon}

\newcommand{\g}{\gamma}
\newcommand{\G}{\Gamma}

\newcommand{\vark}{\varkappa}
\renewcommand{\d}{\delta}

\renewcommand{\l}{\lambda}
\renewcommand{\L}{\Lambda}
\newcommand{\z}{\zeta}
\renewcommand{\t}{\theta}

\newcommand{\p}{\partial}
\newcommand{\om}{\omega}
\newcommand{\Om}{\Omega}

\newcommand{\oq}{\ {\raise 7pt\hbox{${\scriptstyle\circ}$}}
	\kern -7pt{
		\hbox{$Q$}}}

\newcommand{\R}{ \mathbb R}

\newcommand {\ba}{\mathbf a}

\newcommand {\BA}{\mathbf A}

\newcommand {\BS}{\mathbf S}

\newcommand {\bx}{\mathbf x}

\newcommand {\bk}{\mathbf k}

\newcommand {\bq}{\mathbf q}
\newcommand {\bm}{\mathbf m}
\newcommand {\bl}{\mathbf l}

\newcommand {\by}{\mathbf y}

\newcommand {\bn}{\mathbf n}

\newcommand{\SG}{{\sf{\Gamma}}}

\newcommand{\SfG}{{\sf{G}}}
\newcommand{\sg}{{\sf{g}}}

\newcommand{\SPsi}{{\sf{\Psi}}}

\newcommand{\SK}{{\sf{K}}}
\newcommand{\SN}{{\sf{N}}}

\newcommand{\SU}{{\sf{U}}}
\newcommand{\SV}{{\sf{V}}}
\newcommand{\SW}{{\sf{W}}}

\newcommand{\sff}{{\sf{f}}}

\newcommand{\sv}{{\sf{v}}}

\newcommand{\SB}{{\sf B}}
\newcommand{\SQ}{{\sf Q}}

\newcommand{\SfS}{{\sf S}}

\newcommand {\bal}{\boldsymbol\alpha}
\newcommand{\bbeta}{\boldsymbol\beta}

\newcommand{\CR}{\mathcal R}
\newcommand{\CK}{\mathcal K}
\newcommand{\CB}{\mathcal B}

\newcommand{\CT}{\mathcal T}

\newcommand{\CH}{\mathcal H}

\newcommand{\CP}{\mathcal P}

\newcommand{\CA}{\mathcal A}

\newcommand{\CM}{\mathcal M}

\newcommand{\CC}{\mathcal C}

\newcommand{\plainW}[1]{\textup{{\textsf{W}}}^{#1}}
\newcommand{\plainC}[1]{\textup{{\textsf{C}}}^{#1}}
\newcommand{\plainB}{\textup{{\textsf{B}}}}

\newcommand{\plainN}[1]{\textup{{\textsf{N}}}^{#1}}

\newcommand{\plainL}[1]{\textup{{\textsf{L}}}^{#1}}

\DeclareMathOperator{\tr}{{tr}}

\newcommand{\scalel}[1]
{{\scaleto{#1}{3pt}}}
\newcommand{\scalet}[1]
{{\scaleto{#1}{4pt}}}

\newcommand{\4}{\ \!{\vrule depth3pt height9pt}
	{\vrule depth3pt height9pt}
	{\vrule depth3pt height9pt}{\vrule depth3pt height9pt}}

\DeclareMathOperator {\dist} {{dist}}

\DeclareMathOperator{\op}{{Op}}

\DeclareMathOperator{\iop}{{\sf Int}}

\DeclareMathOperator{\supp}{{supp}}
\DeclareMathOperator{\dc}{d}
\DeclareMathOperator{\DC}{D}

\newtheorem{thm}{Theorem}[section]
\newtheorem{cor}[thm]{Corollary}
\newtheorem{lem}[thm]{Lemma}
\newtheorem{prop}[thm]{Proposition}

\theoremstyle{definition}
\newtheorem{example}[thm]{Example}

\newtheorem*{remark}{Remark}
\newtheorem{rem}[thm]{Remark}

\numberwithin{equation}{section}

\newcommand{\Z}{\mathbb Z}

\newcommand{\CF}{\mathcal F}

\newcommand{\1}{\mathbbm 1}

\newenvironment{lcases}
  {\left\lbrace\begin{aligned}}
  {\end{aligned}\right.}

\begin{document}
	\hoffset -4pc

\title
[Eigenvalue asymptotics]
{{Eigenvalue asymptotics for the one-particle density matrix and one-particle kinetic energy density operator}} 
\author{S\o ren Fournais}
\address{ Department of Mathematics\\ 
University of Copenhagen \\
Universitetsparken 5 \\DK-2100 Copenhagen Ø \\
Denmark}
\email{fournais@math.ku.dk}
\author{Alexander V. Sobolev}
\address{Department of Mathematics\\ University College London\\
	Gower Street\\ London\\ WC1E 6BT UK}
 \email{a.sobolev@ucl.ac.uk}
\keywords{Multi-particle Schr\"odinger operator, one-particle 
density matrix, eigenvalues, integral operators} 
\subjclass[2020]{Primary 81Q10; Secondary 35J10, 47G10, }

\begin{abstract} 
Let $\psi(\bx)$, $\bx \in\R^{3N}$, be 
an eigenfunction of the $N$-particle atomic Schr\"odinger operator. We consider 
the one-particle density matrix  
$\g(x, y)$ and 
one-particle kinetic energy density 
$\varkappa(x, y)$, $x, y\in\R^3$, 
associated with the eigenfunction $\psi$. 
Both functions play a central 
role in quantum chemistry computations of atomic and molecular bound states: 
the knowledge of the eigenvalue behaviour 
of the integral operators $\SG$ and $\SK$ with kernels 
$\g(x, y)$ and $\varkappa(x, y)$  
serves to estimate the errors due to finite-dimensional 
approximations.  We find the following asymptotic formulas 
for their eigenvalues $\l_k(\SG)>0$ and $\l_k(\SK)>0$:
\begin{align*}
\lim_{k\to \infty} k^{\frac{8}{3}} \,\l_k(\SG) = A^{\frac{8}{3}},\quad
\lim_{k\to \infty} k^2\,\l_k(\SK) = B^2,
\end{align*}
where $A$ and $B$ are 
non-negative constants given explicitly in terms of the eigenfunction~$\psi$. These 
asymptotics are determined by the singularities of the function 
$\psi$ at pair coalescence points of the particles. To identify and isolate 
these singularities we use some recent regularity results for $\psi$. 
At the last step we apply Birman-Solomyak spectral asymptotics results for 
pseudodifferential operators with homogeneous symbols.

In the special case where the eigenfunction $\psi$ is totally antisymmetric, it exhibits 
enhanced regularity, which leads to a faster decay of 
the eigenvalues $\l_k(\SG)$ and $\l_k(\SK)$. The asymptotic formulas take the form 
\begin{align*} 
\lim_{k\to \infty} k^{\frac{10}{3}} \,\l_k(\SG) = 
\big(A_{\textup{asym}}\big)^{\frac{10}{3}},\quad 
\lim_{k\to \infty} k^{\frac{8}{3}} \,\l_k(\SK) = \big(B_{\textup{asym}}\big)^{\frac{8}{3}},
\end{align*}
where $A_{\textup{asym}}$ and $B_{\textup{asym}}$ are 
non-negative constants given explicitly in terms of the gradient of $\psi$. 
 
\end{abstract}

\maketitle 
\tableofcontents

\section{Introduction}  

This paper builds on the techniques developed in papers 
\cite{FS2021}, \cite{HearnSob2023}, \cite{Sobolev2022}, \cite{Sobolev2022a}. Our aim is 
to present the main results of these works in a coherent and 
streamlined manner while extending them with several new findings. 
In particular, compared with papers \cite{Sobolev2022} and 
\cite{Sobolev2022a}, Theorem \ref{thm:maingk} 
provides an improved formula for the 
asymptotic coefficients, making their dependence on the 
eigenfunction explicit. Theorem \ref{thm:maingkasym} is entirely new and 
establishes spectral asymptotics in the totally antisymmetric case.

\subsection{Background} 
Consider on $\plainL2(\R^{3N})$ the Schr\"odinger operator 
 \begin{align}
\CH = &\ \CH_0 + V,\quad \CH_0 = - \Delta = - \sum_{k=1}^N \Delta_k,\notag\\
V(\bx) = &\ - Z
\sum_{k=1}^N \frac{1}{|x_k|}  
 + \sum_{1\le j< k\le N} \frac{1}{|x_j-x_k|},\label{eq:potential}
\end{align}
describing an atom with $N$ electrons 
with coordinates $\bx = (x_1, x_2, \dots, x_N),\,x_k\in\R^3$, $k= 1, 2, \dots, N$, 
and a nucleus with charge $Z>0$ fixed at the origin. The notation $\Delta_k$ is used for 
the Laplacian w.r.t. the variable $x_k$. 
The operator $H$ acts on the Hilbert space $\plainL2(\R^{3N})$ 
and it is self-adjoint on the domain 
$D(\CH) =\plainW{2,2}(\R^{3N})$, since the potential in~\eqref{eq:potential} 
is an infinitesimal perturbation 
relative to the unperturbed operator $\CH$, see e.g.~\cite[Theorem X.16]{ReedSimon2}.   
Let $\psi = \psi(\bx)$ 
be an eigenfunction of the operator $\CH$ with eigenvalue $E\in\R$, i.e. $\psi\in D(\CH)$ and 
\begin{align}\label{eq:eigen}
(\CH-E)\psi = 0.
\end{align}  
We always assume that $\psi$ is normalized:
\begin{align}\label{eq:normalized}
\|\psi\|_{\plainL2(\R^{3N})}=1.
\end{align}
For each $j=1, \dots, N$, we represent
\begin{align*}
\bx = (\hat\bx_j, x_j), \quad \textup{where}\ 
\hat\bx_j = (x_1, \dots, x_{j-1}, x_{j+1},\dots, x_N),
\end{align*}
with obvious modifications if $j=1$ or $j=N$. 
The \textit{one-particle density matrix} is defined as the function 
\begin{align}\label{eq:den}
\g(x, y) = \sum_{j=1}^N\int\limits_{\R^{3N-3}} \overline{\psi(\hat\bx_j, x)} \psi(\hat\bx_j, y)\  
d\hat\bx_j,\quad (x,y)\in\R^3\times\R^3. 
\end{align} 
Note that traditionally, the one-particle density matrix is defined as  
$\g(y, x)$, see e.g. \cite{LLS2019}, but the permutation $x\leftrightarrow y$ 
will have no bearing on our results.

Introduce also the function 
\begin{align}\label{eq:tau}
\vark(x, y) = 
\sum_{j=1}^N\int\limits_{\R^{3N-3}} \overline{\nabla_x\psi(\hat\bx_j, x)} 
\cdot\nabla_y \psi(\hat\bx_j, y)\  
d\hat\bx_j, 
\end{align}
that we call the \textit {one-particle kinetic energy density matrix}. 
The choice of this term does not seem to be standard, but it is partly 
motivated by the fact that the integral
\begin{align*}
\int_{\R^3} \vark(x, x)\, dx 
\end{align*}
gives the kinetic energy of the $N$ particles, see e.g. 
\cite[Section 2A]{Davidson1976}, \cite[Chapter 3]{LiebSei2010} or 
\cite[Section 4]{LLS2019}. 
If we model the electrons as spinless fermions (resp. bosons), i.e. assume that the eigenfunction 
is fully antisymmetric (resp. symmetric) with respect to the permutations $x_k\leftrightarrow x_j$, 
$j, k = 1, 2, \dots, N, j\not = k$, 
then the formulas for $\g(x, y)$ and $\vark(x, y)$ take a more compact form:
\begin{align}\label{eq:fb}
\begin{lcases}
\g(x, y) = & N \int\limits_{\R^{3N-3}} 
\overline{\psi(\hat\bx, x)} \psi(\hat\bx, y)\ d\hat\bx,\\
\vark(x, y) = & N\int\limits_{\R^{3N-3}} \overline{\nabla_x\psi(\hat\bx, x)} 
\cdot\nabla_y \psi(\hat\bx, y)\,d\hat\bx,
\end{lcases}
\end{align}
where $\hat\bx = \hat\bx_N$.

Denote by $\SG = \iop(\g)$ and $\SK = \iop(\vark)$ the integral operators 
with kernels $\g(x, y)$ and $\vark(x, y)$ respectively. 
We call $\SG$ \textit {the one-particle density operator}, 
and $\SK$ - \textit {the one-particle kinetic energy density operator}. 
Since $\psi, \nabla\psi\in\plainL2(\R^{3N})$, 
both $\SG$ and $\SK$ are trace class.
We are interested in the exact decay rate of the eigenvalues $\l_k, k= 1, 2, \dots,$ 
for $\SG$ and $\SK$. 
Recall that in the physics literature 
the eigenvalues $\l_k(\SG)$ are 
called  \textit{occupation numbers}, see e.g. \cite{SzaboOstlund1996}.
Both functions $\g(x, y)$ and $\varkappa(x, y)$ are 
key objects in multi-particle quantum mechanics, see 
\cite{RDM2000, Davidson1976, LLS2019, LiebSei2010, SzaboOstlund1996} 
for details and futher references, 
and both operators $\SG$ and $\SK$ play a central 
role in quantum chemistry computations of atomic and molecular bound states, see e.g.  the papers 
\cite{Fries2003_1, Fries2003, Lewin2004, Lewin2011}  and the book \cite{SzaboOstlund1996}. 
The knowledge of the eigenvalue behaviour should 
serve to estimate the errors due to finite-dimensional 
approximations,  see e.g. 
\cite{Cioslowski2020}, 
\cite{CioStras2021},  \cite{Fries2003} and \cite{HaKlKoTe2012}.

The occupation number asymptotics in the case $N=2$ were discussed 
in \cite{Cioslowski2020, CioPrat2019}, and it was found that $\l_k(\SG)$  
behave like $k^{-8/3}$ for large $k$. The case of arbitrary $N$ 
was studied in \cite{Sobolev2022} (for $\SG$) 
and \cite{Sobolev2022a} (for $\SK$) 
under the assumption that the function $\psi$ satisfies an exponential bound 
\begin{align}\label{eq:exp}
\sup_{\bx\in\R^{3N}} e^{c |\bx|}\, 
 |\psi(\bx)|<\infty
\end{align}  
with some constant $c>0$.  
For discrete eigenvalues $E$ under 
the essential spectrum of $\CH$, the bound \eqref{eq:exp} follows from 
\cite{DHSV1978_79}. It also holds for embedded eigenvalues as long as they are away from the so-called \textit{thresholds}, see \cite{CombesThomas1973}, \cite{FH1982}. 
For more references and detailed discussion we quote \cite{SimonSelecta}. 
It was shown in \cite{Sobolev2022} and \cite{Sobolev2022a} that 
\begin{align}\label{eq:bolg}
\lim_{k\to \infty} k^{\frac{8}{3}} \,\l_k(\SG) = A^{\frac{8}{3}},\quad
\lim_{k\to \infty} k^2\,\l_k(\SK) = B^2,
\end{align}
with some non-negative constants $A$ and $B$. 

In the current paper we obtain two asymptotic results which are 
contained in Theorems~\ref{thm:maingk} 
and \ref{thm:maingkasym} respectively. The asymptotic formulas in Theorem \ref{thm:maingk} 
look exactly like \eqref{eq:bolg}. The main difference is that in 
Theorem \ref{thm:maingk} 
the coefficients $A$ and $B$ are given explicitly in terms of the eigenfunction $\psi$, 
whereas the expressions for $A$ and $B$ in \cite{Sobolev2022} and \cite{Sobolev2022a} are 
much less transparent. 
We elaborate on this point in Sect.~\ref{subsect:ra}. Another difference is that 
in both Theorems \ref{thm:maingk} and 
\ref{thm:maingkasym} we do not need 
the exponential bound~\eqref{eq:exp}, but make more natural assumptions 
on the decay of $\psi$ at infinity. 
Theorem \ref{thm:maingkasym} is new, and it concerns the case where the eigenfunction $\psi$ 
is totally antisymmetric, so that both coefficients $A$ and $B$ vanish, 
which means that the eigenvalues decay faster than in \eqref{eq:bolg}. 
Now we proceed to the precise statements and discussion of our main results. 

\subsection{Main results}
We obtain asymptotic formulas for $\l_k(\SG)$ and $\l_k(\SK)$ in two cases: 
without any symmetry assumptions on $\psi$, and under the assumption that $\psi$ is totally antisymmetric. Introduce the notation $\tilde\bx_{j,k}$ for the 
variable $\bx$ with $x_j$ and $x_k$ taken out:
\begin{align*}
\R^{3N-6} \ni \tilde\bx_{j, k} = \big( x_1, \dots, x_{j-1}, x_{j+1},\dots, x_{k-1}, x_{k+1},\dots, x_N\big), 
\ j < k. 
\end{align*}
With this notation we write $\bx = (\tilde\bx_{j, k}, x_j, x_k)$. 

Let us consider first the general case, and define the function $H(x), x\in\R^3$, by  
\begin{align}\label{eq:H}
\begin{lcases}
H(x) 
= &\ \frac{1}{8} 
\sum_{1\le j<k \le N}\, \int_{\R^{3N-6}}\,  
|\psi (\tilde\bx_{j, k}, x, x)|^2 \,d\tilde\bx_{j, k}, \ N\ge 3,\\[0.3cm]
H(x) =  &\ 
\frac{1}{8}\,   
|\psi (x, x)|^2, \ N = 2.
\end{lcases}
\end{align}
Introduce also the asymptotic coefficients: 
\begin{align}\label{eq:coeffAB}
A = \frac{1}{3}\bigg(\frac{2}{\pi}\bigg)^{\frac{5}{4}}
 \int_{\R^{3}} H(x)^{\frac{3}{8}}\, dx,\quad 
  B = \frac{4}{3\pi}
 \int_{\R^{3}} H(x)^{\frac{1}{2}}\, dx. 
\end{align}  
As shown by T. Kato in \cite{Kato1957}, the eigenfunction $\psi$ is Lipschitz. Together 
with some natural decay conditions in Theorem 
\ref{thm:maingk}, this ensures that the function $H(x)$ is well-defined and the coefficients 
\eqref{eq:coeffAB} are finite. 

The decay conditions on the function $\psi$ will 
be stated in terms of \textit{the one-particle density}
\begin{align}\label{eq:density}
\rho(x) := \g(x, x) = 
\sum_{j=1}^N\int\limits_{\R^{3N-3}} \, |\psi(\hat\bx_j, x)|^2\, d\hat\bx_j,
\end{align}
and its \textit{lattice quasi-norms} that are defined as follows. 
Let $\CC = [-1/2, 1/2)^3\in \R^3$ be a unit cube, and let $\CC_n = \CC+n, n\in \mathbb Z^3,$ 
be its translations by integer vectors.  For $p\ge 1$, $q >0$ and a function 
$f\in \plainL{p}_{\rm loc}(\R^3)$ define its \textit{lattice (quasi-)norm}
\begin{align}\label{eq:lattice}
\4\, f\4_{\, p, q} = \bigg[\sum_{n\in \Z^3} \, \|f\|_{\plainL{p}(\CC_n)}^q\bigg]^{\frac{1}{q}}.
\end{align}
This functional defines a norm (if $q\ge 1$) or a quasi-norm (if $q <1$). 
It is easily checked that for any function $f$ satisfying 
$|f(x)|\le C (1+|x|)^{-\b}, \b >0,$ the quasi-norm $\4\, f\4_{\, p, q}$ is finite for all $p \ge 1$ 
and $q>3\b^{-1}$.

\begin{thm}\label{thm:maingk}
Let $\psi\in D(\CH)$ be a normalized eigenfunction of $\CH$. 
\begin{enumerate}
\item \label{item:m}
Suppose that the one-particle density $\rho$ satisfies
$\4\,\rho\4_{\, 1, 3/8}<\infty$. 
Then the eigenvalues $\l_k(\SG), k = 1, 2, \dots,$ 
of the operator $\SG$  satisfy the asymptotic formula 
\begin{align}\label{eq:sgamma}
\lim_{k\to \infty} k^{\frac{8}{3}} \,\l_k(\SG) = A^{\frac{8}{3}}.
\end{align}
\item \label{item:gm}
Suppose that $\4\,\rho\4_{\, 1, 1/2}<\infty$. Then 
the eigenvalues $\l_k(\SK), k = 1, 2, \dots,$ 
of the operator $\SK$  satisfy the asymptotic formula 
\begin{align}\label{eq:skappa}
\lim_{k\to \infty} k^{2} \,\l_k(\SK) = B^2.
\end{align}
\end{enumerate}
\end{thm}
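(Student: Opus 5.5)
Both $\SG$ and $\SK$ are Gram-type operators, and I will work with that structure throughout. Write $\SG = \sum_j \Psi_j^*\Psi_j$, where $\Psi_j:\plainL2(\R^3)\to\plainL2(\R^{3N-3})$ is the integral operator with kernel $\psi(\hat\bx_j, x)$. Then the nonzero eigenvalues of $\SG$ coincide with the squares of the singular values of the stacked operator $\Psi=(\Psi_1,\dots,\Psi_N)^T$. So \eqref{eq:sgamma} is equivalent to
\[
\lim_{k\to\infty} k^{4/3}s_k(\Psi)=A^{4/3},
\]
and \eqref{eq:skappa} is equivalent to $\lim_k k\, s_k(\nabla\Psi)=B$. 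The task is therefore to find the singular value asymptotics of an integral operator whose kernel has limited smoothness in $x$. The dominant nonsmooth part of $\psi$ comes from the coalescence points $x=x_k$, $k\ne j$. There Kato's cusp condition gives locally
\[
\psi(\bx)=\psi|_{x_j=x_k}\Big(1+\tfrac12|x_j-x_k|\Big)+(\text{more regular}).
\]

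The first step is a regularity decomposition. Using the recent regularity results for $\psi$ (Fournais--Sørensen type: $\psi=e^{F}\phi$ with $F$ an explicit Jastrow-type factor containing $\tfrac12|x_j-x_k|$ and $-\tfrac Z2|x_k|$, and $\phi$ in $C^{1,\theta}$), I will write each kernel as
\[
\psi(\hat\bx_j,x)=\tfrac12\sum_{k\ne j}|x-x_k|\,\eta_{jk}(\hat\bx_j,x)+R_j(\hat\bx_j,x).
\]
Here $\eta_{jk}$ is smooth near the diagonal and equals $\psi$ on $\{x=x_k\}$. The nuclear cusp $|x|$ involves only the variable $x$, so it is a fixed finite-rank-like singular piece: it contributes $o(k^{-4/3})$ because it is localized at a single point. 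The remainder $R_j$ has one more order of smoothness. By the Birman--Solomyak estimates for integral operators with kernels in Sobolev/Besov classes with weights, I expect $s_k(R)=o(k^{-4/3})$, with the weights controlled by the lattice quasi-norm $\4\,\rho\4_{1,3/8}$. The exponent $3/8$ is exactly the $q$ that appears in the weighted Birman--Solomyak bound $s_k\lesssim k^{-4/3}$ for order-$1$ singular kernels in three dimensions, since $(4/3)^{-1}\cdot\tfrac12=3/8$. The same bounds will be used to approximate $\psi$ by compactly supported cut-offs, with uniform control of the tails.

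The second step is the model problem. For a kernel of the form $|x-t|\,b(t,x)$ with $b$ smooth and compactly supported, viewed as an operator $\plainL2(\R^3_x)\to\plainL2(\R^3_t)$ (after integrating out the other variables), freeze $t=x$. The result is a pseudodifferential operator with symbol $b(x,x)\,\widehat{|\cdot|}(\xi)=c\,b(x,x)|\xi|^{-4}$. Its $\Psi^*\Psi$ has symbol of order $-8$ in $\R^3$, and Birman--Solomyak asymptotics give $\l_k\sim C k^{-8/3}$ with the constant $\big(\int |b(x,x)|^{3/8}\,dx\big)^{8/3}$. Summing over pairs $j<k$ and over the other variables produces $H(x)$ with its factor $1/8$; the factor arises from the $\tfrac12$ in the cusp together with the symmetrization of the pair contributions. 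The constants $\frac13(2/\pi)^{5/4}$ follow from the Fourier constant of $|x|$, namely $-8\pi|\xi|^{-4}$, together with the Weyl volume of $\{|\xi|^{-8}>\l\}$. For $\nabla\Psi$ the symbol has order $-3$, which gives $s_k\sim k^{-1}$ and the power $1/2$. An asymptotic orthogonality lemma handles the different pairs: the singular supports are disjoint in the relevant variables, so their spectral contributions add in the quasi-norm sense. Stability of asymptotic coefficients under $o(k^{-4/3})$ perturbations (Ky Fan/Birman--Solomyak) then combines this with the first step. Finally I remove the cut-offs by a limiting argument.

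The main obstacle is the first step: proving sharp $o$-estimates for the remainder $R_j$ with only the weak decay $\4\,\rho\4_{1,3/8}<\infty$, instead of exponential decay, including near the nucleus and near multiple coalescences, where the regularity of $\phi$ degenerates. My approach there is to localize to lattice cubes, apply the Birman--Solomyak estimate on each cube using elliptic $\plainL2$ bounds of the derivatives in terms of $\|\psi\|_{\plainL2}$ on a slightly larger set (hence in terms of $\rho$), and sum the contributions with the $q=3/8$ quasi-triangle inequality.
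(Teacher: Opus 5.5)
Your overall architecture is the paper's: factorization, pair-cusp expansion, weighted Birman--Solomyak bounds with lattice quasi-norms, reduction of the Gram operator to a pseudodifferential operator with frozen coefficients, negligible cross terms, and stability of $\SfG_p,\sg_p$. The genuine gap is the step you flag yourself, and your proposed fix does not close it.

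You want the remainder $R_j$ in a global decomposition $\psi=\sum_k c\,|x-x_k|\eta_{jk}+R_j$ to be one order smoother uniformly, including near the nucleus and where another electron $x_l$ is close to $x_k$. Take the natural choice of $\eta_{jk}$: $\psi$ with the factor $e^{G(x-x_k)}$ removed, i.e.\ $\phi_{j,k}$ in \eqref{eq:phijk}. Then $\eta_{jk}$ still carries the cusp at $x=x_l$, so its $x$-derivatives near $x_k$ are only $O(|x_k-x_l|^{1-|m|})$. Moreover, the term $|x-x_l|\eta_{jl}$ leaves a residual cusp at $x=x_k$ with coefficient of size $|x_k-x_l|$. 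Thus $R_j(\hat\bx,\cdot)$ is not uniformly in $\plainN{s}_2$ for any $s>5/2$, and its contribution cannot be discarded by smoothness. Localizing to lattice cubes only controls decay at infinity (this is Lemma~\ref{lem:tocompact}), not this local degeneration. The same objection applies to your ``finite-rank-like'' nuclear term: its coefficient depends on $x$ through the other cusps whenever electrons approach the nucleus.

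The missing idea is that no regularity is needed on these sets. Corollary~\ref{cor:homo} gives $\|b\,\iop(T)\,a\|_{p,\infty}\lesssim\|A\|_{\infty}\|b\|_{\plainL2}\,\4\,a\4_{\,r,p}$ using only the crude bounds \eqref{eq:imit}. Apply it with $b$ supported where $|x_l-x_s|<8\d$ for some $0\le l<s\le N-1$, and with $a=\t(|\cdot|/\d)$ near the nucleus; both quasi-norms tend to $0$ as $\d\to0$. By Corollary~\ref{cor:zero1} it then suffices to treat the truncated kernel $T_\d$ of Lemma~\ref{lem:separate}. On its support only the pair $(x,x_k)$ can coalesce, and \eqref{eq:singlet} holds with $\d$-dependent but finite constants. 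A global decomposition might instead be rescued with $x$-independent coefficients $\eta_{jk}=\psi|_{x=x_k}$, which create no cross-cusps. But then the uniform order-two bounds for $R_j$ near multiple coalescences must still be verified, and your proposal does not do this.

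Two further inputs are needed even on the good set. First, $\phi\in C^{1,\theta}$ is far from enough: the remainder must lie in $\plainN{s}_2$ with $s>5/2$. That requires the bounds \eqref{eq:regphi} for all $m$ (Theorem~\ref{thm:regphi}) together with the membership test Theorem~\ref{thm:nik}. Second, for $\SK$ one has $p=1$, so the sum over cubes cannot use \eqref{eq:ptriangle}, whose constant $(1-p)^{-1}$ blows up. It needs the orthogonality version, Proposition~\ref{prop:orthog}.

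Your constants are also wrong. For $\CH=-\Delta+V$ the electron--electron cusp is $\frac14|x_j-x_k|$, not $\frac12|x_j-x_k|$, since $\Delta_{\bx}\frac14|x_j-x_k|=|x_j-x_k|^{-1}$. Your nuclear coefficient $-\frac Z2$ already uses this normalization, so your two coefficients are inconsistent. The factor $\frac18$ in $H$ equals $2\cdot(\frac14)^2$: the squared cusp coefficient, times $2$ because each unordered pair appears twice in $\sum_j\sum_{k\ne j}$ (Lemma~\ref{lem:Hnew}). With $\frac12$ you would get $\frac12$ in place of $\frac18$, making $A$ and $B$ too large by factors $4^{3/8}$ and $2$.

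The pair contributions also do not add at the level of $\SfG_{3/4}$ or of quasi-norms. Once the cross terms $\iop(\CM_{j,k})^*\iop(\CM_{j,l})$, $k\ne l$, are shown negligible (Lemma~\ref{lem:cross_as}), $\sum_k\iop(\CM_{j,k})^*\iop(\CM_{j,k})$ is a single pseudodifferential operator whose symbol is the sum. That is why the sum sits inside $H^{3/8}$. Finally, the frozen-coefficient constant should involve $|b(x,x)|^{3/4}=(|b(x,x)|^2)^{3/8}$, not $|b(x,x)|^{3/8}$.
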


We show in  Sect. \ref{subsect:ascoeff} that the coefficients $A$ and $B$ 
are finite under the 
conditions imposed on the one-particle density in parts  
\eqref{item:m} and \eqref{item:gm} respectively.

Suppose now that $\psi$ is totally antisymmetric, i.e.~for all 
$x, y\in\R^3$, the function $\psi$ satisfies the relation 
\begin{align}\label{eq:antisym}
\psi(\tilde\bx_{j, k}, y, x) = 
- \psi(\tilde\bx_{j, k}, x, y),\quad \textup{for all}\quad 1\le j<k\le N.
\end{align}
Clearly, in this case both coefficients $A$ and 
$B$ in \eqref{eq:coeffAB} equal zero, which means that the eigenvalues of $\SG$ and $\SK$ have a faster decay rate than in Theorem~\ref{thm:maingk}. The relation~\eqref{eq:antisym} holds, e.g.~if one assumes that all $N$  particles are spinless fermions. On the other hand, 
the fermionic nature of particles may 
manifest itself differently if we introduce 
the spin variable. 
In this case the antisymmetry of the \textit{full} eigenfunction comes either from the spatial 
component $\psi(\bx)$ or from the spin component. 
If particles $j$ and $k$ 
are in a \textit{singlet} configuration, then the spin component is antisymmetric, whereas    
the function $\psi(\bx)$ itself is symmetric under the permutation $j\leftrightarrow k$, 
see \cite[Subsect.~3.3.1]{HaKlKoTe2012}. Then  
$\psi(\tilde\bx_{j, k}, x, x)$ is not identically zero,  
so both $A$ and $B$ are positive. 
If particles $j$ and $k$ 
are in a \textit{triplet} configuration, 
the antisymmetry under permutation $j\leftrightarrow k$ is carried by the spatial component 
$\psi(\bx)$, see \cite[Subsect. 3.3.2]{HaKlKoTe2012}, 
and then 
we have \eqref{eq:antisym} for the pair $j, k$. 
Thus we can say that the total antisymmetry 
\eqref{eq:antisym} is tantamount to all pairs of particles being in a triplet configuration. 
In this case 
the new asymptotic coefficients depend on the derivative 
\begin{align}\label{eq:sv}
\sv(\hat\bx, x):=\nabla_x\psi(\hat\bx, x), \, \hat\bx = \hat\bx_N.
\end{align}    
Since $\psi$ is Lipschitz, we have $\sv\in\plainL\infty(\R^{3N})$. 
As we shall see later in Remark \ref{rem:grad}, 
$\sv(\hat\bx, x)$ is in fact continuous 
on $\R^{3N}\setminus\{\bx = (\hat\bx, 0)\}$. 
Let $\tilde\bx = \tilde\bx_{N-1, N} = (x_1, x_2, \dots, x_{N-2})$, and 
introduce the function $E(x, \om)$, $x\in\R^3$, $\om\in\mathbb S^2$:
\begin{align}\label{eq:E}
\begin{lcases}
E(x, \om) =  &\ 
\frac{N(N-1)}{24^2} 
\int\limits_{\R^{3N-6}}\, 
| \sv(\tilde\bx, x, x)\cdot\om |^2 \,d\tilde\bx, 
\ N\ge 3,\\
E(x, \om) =  &\ \frac{2}{24^2} 
\,   | \sv(x, x)\cdot\om |^2, \ N = 2.
\end{lcases}
\end{align}
The new asymptotic coefficients are given by
\begin{align}\label{eq:coeffABasym}
A_{\textup{asym}} = \bigg(\frac{1}{3\pi^6}\bigg)^{\frac{2}{5}} \,
 \int_{\R^{3}}\, \int_{\mathbb S^2}\, E(x, \om)^{\frac{3}{10}}\, d\om dx,\quad 
  B_{\textup{asym}} = \bigg(\frac{8}{3\pi^9}\bigg)^{\frac{1}{4}}\,
 \int_{\R^{3}}\, \int_{\mathbb S^2}\, E(x, \om)^{\frac{3}{8}}\, d\om dx. 
\end{align}  
 As in Theorem \ref{thm:maingk}, their finiteness is guaranteed 
 under appropriate conditions on the function $\rho(x)$.

\begin{thm}\label{thm:maingkasym} Suppose that $\psi\in D(\CH)$ is a normalized eigenfunction of $\CH$ which 
is totally antisymmetric, i.e. \eqref{eq:antisym} holds.
\begin{enumerate}
\item 
Suppose that  
$\4\,\rho\4_{\, 1, 3/10}<\infty$. 
Then the eigenvalues $\l_k(\SG), k = 1, 2, \dots,$ 
of the operator $\SG$  satisfy the asymptotic formula 
\begin{align}\label{eq:sgamma_asym}
\lim_{k\to \infty} k^{\frac{10}{3}} \,\l_k(\SG) = 
\big(A_{\textup{asym}}\big)^{\frac{10}{3}}.
\end{align}
\item 
Suppose that $\4\,\rho\4_{\, 1, 3/8}<\infty$. Then 
the eigenvalues $\l_k(\SK), k = 1, 2, \dots,$ 
of the operator $\SK$  satisfy the asymptotic formula 
\begin{align}\label{eq:skappa_asym}
\lim_{k\to \infty} k^{\frac{8}{3}} \,\l_k(\SK) = \big(B_{\textup{asym}}\big)^{\frac{8}{3}} .
\end{align}
\end{enumerate}
\end{thm}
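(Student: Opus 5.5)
Both operators factor as $\SG=\SPsi^*\SPsi$ and $\SK=\SPsi_1^*\SPsi_1$. Here $\SPsi:\plainL2(\R^3)\to\plainL2(\R^{3N-3})^N$ is the integral operator with kernel $\psi(\hat\bx_j,x)$, and $\SPsi_1$ is the one with kernel $\nabla_x\psi(\hat\bx_j,x)$. Hence $\l_k(\SG)=s_k(\SPsi)^2$ and $\l_k(\SK)=s_k(\SPsi_1)^2$. The antisymmetric asymptotics \eqref{eq:sgamma_asym}, \eqref{eq:skappa_asym} are therefore equivalent to $s_k(\SPsi)\sim A_{\rm asym}^{5/3}k^{-5/3}$ and $s_k(\SPsi_1)\sim B_{\rm asym}^{4/3}k^{-4/3}$. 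By antisymmetry it suffices to treat a single $j=N$ and multiply by $N$, which is where the factor $N(N-1)$ in $E$ comes from.

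The plan is to isolate the leading singularity of $\psi$ near the pair coalescence sets $x=x_l$, $l=1,\dots,N-1$, where $x$ is the free variable. For that I will use the regularity results for $\psi$ quoted in the introduction: locally near $x_l=x$, away from other coalescences and from the nucleus,
\[
\psi=F_1+|x-x_l|\,F_2 ,
\]
with $F_1,F_2$ real analytic (Fournais--Hoffmann-Ostenhof--Sørensen type representation). In the antisymmetric case $F_2$ vanishes on the diagonal. The first nonsmooth term is then of the form
\[
\tfrac14|x-x_l|\,(x-x_l)\cdot \sv(\tilde\bx,x_l,x_l),
\]
as forced by Kato's cusp condition applied to the odd part. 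The numerical factor $1/24$ will come out of this computation.

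The kernel then decomposes as a sum of model kernels of the form
\[
a(\tilde\bx,x_l)\,|x-x_l|\,\big((x-x_l)\cdot\om\big)\,b(x),
\]
plus a smoother remainder, where $a,b$ are cut-offs. Each model term is a pseudodifferential-type operator (in the variable $x-x_l$) with homogeneous symbol of order $-5$ in the $\xi$-variable, because the Fourier transform of $|t|t_i$ is $c\,\xi_i|\xi|^{-5}$. The Birman–Solomyak results on singular values of operators with homogeneous symbols, integrated over the parameters $\tilde\bx$, then give $s_k\sim C k^{-5/3}$ for $\SPsi$. The constant is $\int\!\int |\text{symbol}|^{3/5}$, which rewritten in $x$ and $\om$ produces $\int\int E^{3/10}\,d\om\,dx$, with the explicit constant $(3\pi^6)^{-2/5}$ coming from the Fourier constant. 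For $\SPsi_1$ I take the gradient, which lowers the order to $-4$ and the exponent to $k^{-4/3}$, giving the $E^{3/8}$ integral.

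To make this rigorous I will:
\begin{enumerate}
\item localize with a partition of unity into unit cubes $\CC_n$ in $x$ together with cut-offs separating the different coalescences and the nucleus;
\item show that all remainder terms give singular values that are $o(k^{-5/3})$ (respectively $o(k^{-4/3})$), using Birman–Solomyak–type estimates for integral operators with kernels in weighted Sobolev classes, where the weights are controlled by derivative bounds on $\psi$ near (but not at) the coalescence points;
\item sum the local contributions. Here the condition $\4\,\rho\4_{1,3/10}<\infty$ (respectively $\4\,\rho\4_{1,3/8}<\infty$) is exactly what makes the quasi-norm summation of cube estimates converge, since $(s_k\sim k^{-5/3})$ corresponds to the $\mathbf S_{3/5,\infty}$ quasinorm, i.e. $\l_k$ in $\mathbf S_{3/10,\infty}$.
\end{enumerate}
Asymptotic additivity for orthogonal-type pieces, via the Ky Fan/Birman–Solomyak lemma on asymptotically orthogonal operators, then assembles the constants.

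I expect the main obstacle to be the remainder estimates near higher coalescences, namely three particles meeting or a particle meeting the nucleus, where the expansion above degenerates. My first approach is to bound these regions by cut-offs of shrinking size $\e$. I would show that their contribution to $\limsup k^{5/3}s_k$ is $O(\e^{\delta})$, using the a priori derivative bounds $|\p^\a\psi|\lesssim \lambda^{1-|\a|}$, where $\lambda$ is the distance to the singular set. I would then let $\e\to0$ after $k\to\infty$.
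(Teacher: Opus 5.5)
\paragraph{Overall assessment.} Your architecture matches the paper's. You factorize through $\iop(\SPsi)$, isolate the $|x-x_l|(x-x_l)$ singularity at each clean pair coalescence, and reduce to pseudodifferential operators with homogeneous symbols via $A^*A$, integrating the symbol over $\tilde\bx$. You then discard cross terms between different partners $l$ by asymptotic orthogonality and sum cube estimates using the lattice quasi-norm of $\rho$.

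\paragraph{The gap: the regularity input.} You propose to control all discarded pieces with the bounds $|\partial^\alpha\psi|\lesssim\lambda^{1-|\alpha|}$. These are the general, non-antisymmetric bounds. Fed into the Nikol'skii/Birman--Solomyak estimates (Corollary~\ref{cor:homo} with exponent $1$, resp.\ $0$ for $\nabla_x\psi$), they only give membership in $\BS_{3/4,\infty}$ (resp.\ $\BS_{1,\infty}$). That is exactly the rate of Theorem~\ref{thm:maingk}. So they cannot show that the regions near triple coalescences, the nucleus or infinity contribute $O(\varepsilon^\delta)$ to $\limsup k^{5/3}s_k$. They do not even show $\iop(\SPsi)\in\BS_{3/5,\infty}$, which your summation step (3) presupposes.

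What is needed is a bound that exploits antisymmetry and holds uniformly on $\R^{3N}\setminus\Sigma$, including higher coalescences:
\[
|\partial_x^m(e^{-F_1(x)}\psi)|\lesssim\big(1+\lambda(\bx)^{2-|m|}\big)M(\bx;\psi).
\]
This is Theorem~\ref{thm:varphider}. The paper obtains it in three steps: write $\psi=e^F\phi$; get $|\partial_x^m\phi|\lesssim 1+\lambda^{2-|m|}$ from scaled elliptic estimates resting on the $\mathsf C^{1,1}$ result for $e^{-F_3}\phi$; and use antisymmetry to get $|\phi|\lesssim\tilde\lambda$, which tames the term $(\partial_x^m e^{F_2})\phi$.

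Antisymmetry does nothing for the electron--nucleus cusp. The function $\psi(\hat\bx,x)$ has a genuine $|x|$-singularity at $x=0$, so cutting off near the nucleus cannot produce a small $\BS_{3/5,\infty}$ quasi-norm. The fix is to pull the $x$-only factor $e^{F_1(x)}$ out as a bounded multiplier on the input side: $\iop(\SPsi)=\sqrt N\,\iop(e^{-F_1}\psi)\,e^{F_1}$. For $\SK$, write $\nabla_x\psi=e^{F_1}\nabla_x(e^{-F_1}\psi)+\psi\nabla F_1$; the last term lies in $\BS_{3/5,\infty}$ and is negligible.

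\paragraph{A related weakness: the local remainder.} The real-analytic representation $\psi=\xi+|x-x_l|\eta$ is only qualitative. It lives on an open set $\Omega_{j,k}$ with no derivative bounds uniform in $\tilde\bx$. So it does not by itself give an estimate such as $|\partial_x^m L_k|\lesssim 1+|x-x_k|^{2+\mu-|m|}$ on a full neighbourhood of the clean diagonal. You could use it on compact subsets of $\Omega_{j,k}$ and treat the rest as a small error, but that again needs the global second-order bound above. The paper proves the quantitative remainder estimate directly, using the extra Jastrow factor $e^{|t|/(4\sqrt2)}$ in the relative coordinate; the choice $\alpha=1/2$ makes $-\Delta\eta$ Lipschitz.

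\paragraph{Two constant slips.} For $\mathcal H=-\Delta+V$ the $p$-wave cusp coefficient is $\tfrac18$, not $\tfrac14$. The leading term is therefore $\tfrac18|x-x_l|(x-x_l)\cdot\sv=\tfrac1{24}\nabla_x|x-x_l|^3\cdot\sv$, which is consistent with your factor $1/24$. Also, the Fourier transform of $|t|t_i$ is $c\,\xi_i|\xi|^{-6}$, which is homogeneous of order $-5$ as you state, not $c\,\xi_i|\xi|^{-5}$.
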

  
As in the case of Theorem \ref{thm:maingk}, the finiteness of the coefficients    
 $A_{\textup{asym}}$ and $B_{\textup{asym}}$ is established in Sect. \ref{subsect:ascoeff}.
  
A few remarks are in order.   
\begin{remark} 
\begin{enumerate}
\item 
Theorems \ref{thm:maingk} and \ref{thm:maingkasym} extend to the case of a 
molecule with several nuclei whose positions are fixed. 
The modifications are straightforward.
\item 
To satisfy the conditions of Theorems \ref{thm:maingk} and \ref{thm:maingkasym} it suffices 
to assume a polynomial decay of the one-particle density: $\rho(x)\le C(1+|x|)^{-\b}$ 
with a suitable $\b >0$. Indeed, this assumption ensures that 
$\4\,\rho\4_{\, 1, q}<\infty$ for $q\b > 3$. Thus, the conditions of Theorem 
\ref{thm:maingk} are satisfied for $\b > 8$ (Part \eqref{item:m}) and 
$\b >6$  (Part \eqref{item:gm}). Similarly for Theorem \ref{thm:maingkasym}.
On the other hand, as we have mentioned previously, the eigenfunction $\psi$ satisfies \eqref{eq:exp} in all 
``physically sensible" cases, e.g. for 
the eigenfunctions associated with eigenvalues below the essential spectrum. This leads to the 
exponential bound for the one-particle density: $\rho(x)\le C e^{-c_1|x|},\, x\in\R^3$, with some 
positive constants $C$ and $c_1$.  
Therefore the lattice quasi-norms in Theorems \ref{thm:maingk} 
and \ref{thm:maingkasym} are automatically finite. The conditions on the lattice norms make it clear ``how much" of the exponential decay is required for the main results.

\item 
Theorems \ref{thm:maingk} and \ref{thm:maingkasym} show that the spectral asymptotics 
depend only on the behaviour of the eigenfunction $\psi$ near the pair coalescence points 
$x_j=x_k$, $j, k = 1, 2, \dots, N$, $j\not = k$.  
Neither the points  
$x_j = 0, j = 1, 2, \dots, N$, nor 
the coalescence points of higher orders (e.g. $x_j = x_k = x_l$ with pair-wise distinct $j, k, l$)   
contribute to the asymptotics. 
\item 
Sharp upper bounds for the eigenvalues $\l_k(\SG)$ and $\l_k(\SK)$ were obtained in \cite{Sobolev2025}. For the general case considered in Theorem \ref{thm:maingk} it was shown that 
\begin{align*}
k^{\frac{8}{3}}\,\l_k(\SG)\le C\, \4\,\rho\4_{\, 1, 3/8},\quad 
k^{2}\,\l_k(\SG)\le C \, \4\,\rho\4_{\, 1, 1/2}.
\end{align*}
Under the assumption \eqref{eq:antisym}, it was shown that 
\begin{align*}
k^{\frac{10}{3}}\,\l_k(\SG)\le C\, \4\,\rho\4_{\, 1, 3/10},\quad 
k^{\frac{8}{3}}\,\l_k(\SG)\le C \, \4\,\rho\4_{\, 1, 3/8}.
\end{align*}
The constants $C>0$ in these bounds do not depend on the eigenfunction $\psi$. 
\end{enumerate}
\end{remark}

\subsection{Outline of the proof}
The proof follows the plan of \cite{Sobolev2022} and \cite{Sobolev2022a}. 
It splits into three steps that are briefly summarized below.  

\textit{Step 1: factorization.} 
First we represent the operators $\SG$ and $\SK$ as 
the products 
\begin{align}\label{eq:fctr}
\SG = \iop(\SPsi)^* \iop(\SPsi),\quad \SK = \iop(\SV)^*\, \iop(\SV),
\end{align}
where $\iop(\SPsi):\plainL2(\R^3)\mapsto \plainL2(\R^{3N-3}; \mathbb C^N)$ 
and 
$\iop(\SV):\plainL2(\R^3)\mapsto \plainL2(\R^{3N-3}; \mathbb C^{3N})$
are integral operators with vector-valued kernels $\SPsi(\hat\bx, x)$ and $\SV(\hat\bx, x)$
that are defined in Sect.~\ref{subsect:fact}. 
These representations imply that 
$\l_k(\SG) = s_k(\iop(\SPsi))^2$ and $\l_k(\SK) = s_k(\iop(\SV))^2$, $k=1, 2 ,\dots,$
where $s_k(T)$ are the singular values of the operator $T$. 
Therefore the asymptotic formulas \eqref{eq:sgamma} and \eqref{eq:skappa} take the form 
\begin{align}\label{eq:main1}
\lim_{k\to\infty}k\, s_k(\iop(\SPsi))^{\frac{3}{4}} = A,\quad 
\lim_{k\to\infty}k\, s_k(\iop(\SV)) = B,
\end{align}
and in a similar way one can interpret \eqref{eq:sgamma_asym} and 
\eqref{eq:skappa_asym}. 
If $\g(x, y)$ and $\varkappa(x, y)$ are given by 
\eqref{eq:fb}, then 
\begin{align}\label{eq:ferbos}
\SPsi(\hat\bx, x) = \sqrt N\, \psi(\hat\bx, x),\quad 
\SV(\hat\bx, x) = \sqrt N\, \nabla_x\, \psi(\hat\bx, x).
\end{align}
The precise definitions of $\SPsi$  and $\SV$ for the general case are given in Sect.~\ref{subsect:fact}.

For simplicity, in Steps 2 and 3 we limit our attention to 
the operator $\iop(\SPsi)$ as defined in \eqref{eq:ferbos}. 
Afterwards we briefly comment on the case of the 
operator $\iop(\SV)$.

\textit{Step 2: estimates for singular values.} 
The asymptotic analysis of the operator $\iop(\SPsi)$ 
begins with effective bounds for its 
singular values. By effective bounds we mean bounds for the 
operator $b\, \iop(\SPsi)\, a$ with weights 
$a = a(x), x\in\R^3$, $b = b(\hat\bx), \hat\bx\in \R^{3N-3}$,  
of the form 
\begin{align}\label{eq:efb}
s_k(b\, \iop(\SPsi)\, a)\le C(a, b) k^{-\frac{4}{3}},
\end{align}
where the factor $C(a, b)$ depends explicitly on some 
integral norms of $a$ and $b$.  
A general theory of spectral estimates for integral operators was developed 
by M.~Birman and M.~Solomyak in \cite{BS1977}. This monograph provides explicit sharp bounds 
for norms (or quasinorms) of integral operators in classes of compact operators 
in terms of the norms (or quasinorms) of their kernels in various function spaces. In particlular, 
for the bounds of the type \eqref{eq:efb} (i.e.~individual bounds for every singular value $s_k$), 
the appropriate function class is a certain \textit{Nikol'skii-Besov} space. 
Thus a substantial effort in our study 
is dedicated to estimating the Nikol'ski-Besov (quasi)norms 
for $\psi(\hat\bx, x)$ as a function of the 
variable $x$.  In this analysis we rely on the pointwise bounds for the 
function $\psi(\hat\bx, x)$ and its derivatives $\p_x^m\psi(\hat\bx, x)$ that were obtained 
by S.~Fournais and T.\O.~S\o rensen 
in the recent paper \cite{FS2021} and developed by P.~Hearnshaw and A.~Sobolev in \cite{HearnSob2023}. In the totally antisymmetric case 
the function $\psi$ possesses an enhanced smoothness, and we derive appropriate pointwise 
bounds for $\psi$ and its derivatives adapting the results of \cite{FHOS2005} and \cite{FS2021}.  
 
\textit{Step 3: asymptotics.}  
The asymptotic behavior of $s_n(\iop(\SPsi))$ as $n\to\infty$, 
is determined by 
the singularities of the function $\psi$ at the 
pair coalescence points $x_k = x_l$, $k\not = l$. 
To isolate the relevant singularities 
we use the canonical representation of the form 
\begin{align}\label{eq:isolate}
\psi(\bx) = e^{F(\bx)} \, \phi(\bx), 
\end{align}
where the function $F$ is chosen in such a way that the exponential factor contains the 
main singularities of $\psi$, whereas the factor $\phi$ is more regular than $\psi$. 
Due to the explicit form of the potential 
\eqref{eq:potential} such a function $F$ can be written explicitly, 
see Sect.~\ref{sect:regpsi} for details. 
In the literature the function $e^F$ is called a \textit{Jastrow factor}. 
To outline the main idea of the proof, 
we assume that the system consists of two particles, 
i.e. that $N=2$. 
Under this assumption the problem retains all its crucial features, 
but permits to avoid some tedious technical details. 
For $N=2$ we have $\bx = (t, x)\in \R^3\times\R^3$, 
and the operator $\iop(\SPsi)$, as defined in \eqref{eq:ferbos},  
acts from $\plainL2(\R^3)$ into $\plainL2(\R^3)$. 

Assume that both particles are separated from the nucleus, 
that is $|x|>\varepsilon$, $|t| > \varepsilon$ with some $\varepsilon>0$. Under this condition the main singularity of $e^F$ is $\frac{1}{4}|x-t|$, so that in the neighbourhood of the diagonal $x = t$ 
the function $\psi$ can be rewritten as 
\begin{align}\label{eq:split}
\psi(t, x) = \bigg(e^{F(t, x)} - \frac{1}{4}|x-t|\bigg) \, \phi(t, x) 
+ \frac{1}{4}|x-t|\, \phi(t, x),
\end{align}
where the first kernel on the right-hand side is smoother than the second one. 
Thus, in line with the results of \cite{BS1977}, the first kernel  
gives an asymptotic contribution of order 
$o\big(k^{-4/3}\big)$. The asymptotics \eqref{eq:main1} is produced by the second kernel in  \eqref{eq:split} since it contains the factor $|x-t|$ which 
is a homogeneous function of order one. 
Integral operators with homogeneous kernels 
represent a special case of pseudo-differential operators with asymptotically homogeneous symbols, 
which have been extensively studied by
M.~Birman and M.~Solomyak in \cite{BS1970, BS1977_1} and \cite{BS1979}, 
see also \cite{BS1977}. In fact, \cite{BS1977_1} and \cite{BS1979} also allow 
anisotropically homogeneous symbols, but we do not need these generalizations.  
Their results imply that a 
kernel of the form $Q(x-t)$ where $Q(x)$ is homogeneous of order $\a > -3$, 
leads to the eigenvalue 
asymptotics of order $k^{-1-\a/3}$. In our case $Q(x) = |x|$, and 
\eqref{eq:main1} follows from the Birman-Solomyak theory. 

For the operator $\iop(\SV)$, instead of the kernel $\psi(t, x)$ 
we study the vector-valued kernel $\nabla_x\psi(t, x)$. Differentiating 
\eqref{eq:split} we conclude that the leading singularity of $\nabla_x\psi(t, x)$ is the 
vector-valued kernel 
\begin{align*}
\frac{1}{4}\, \frac{x-t}{|x-t|}\, \phi(t, x). 
\end{align*}
Here the homogeneous factor has order zero, 
and hence the Birman-Solomyak theory leads to the asymptotics \eqref{eq:main1} 
for the operator $\iop(\SV)$.

Under the conditions of Theorem~\ref{thm:maingkasym}, i.e.~in the antisymmetric case, 
we also use a representation of the form \eqref{eq:isolate}, but with a different Jastrow factor $F$. 
Omitting the details we just mention that now 
the leading singularity of the function $\psi(t, x)$ 
(resp. $\nabla_x\psi(t, x)$) is homogeneous 
of order two (resp.~one), which leads to the asymptotics 
\eqref{eq:sgamma_asym} and \eqref{eq:skappa_asym}. 
This case requires the Birman-Solomyak asymptotics for matrix-valued symbols.   

For $N\ge 3$ the proof needs to be modified to incorporate representations of the form 
\eqref{eq:split} in the neighbourhood of each pair coalescence points 
$x_j = x_k$, \ $j, k = 1, 2, \dots, N$,\ $j \not = k$, see \eqref{eq:repr}.  
As a result, the proof requires an 
extra step involving the study of an 
 integral operator $\CM$ whose structure mimics that of the operators $\iop(\SPsi)$ 
 or $\iop(\SV)$ in the neighbourhood of the coalescence points. 
In order to make it adaptable for the use in different  situations, 
the model operator is allowed to have 
arbitrary order of homogeneity at the pair coalescence points.   
The study of the operator $\CM$ reduces to certain pseudo-differential operator 
with an asymptotically homogeneous symbol, whereupon the Birman-Solomyak results 
\cite{BS1977_1, BS1979} yield 
the spectral asymptotic formula for $\CM$. Application of this formula to 
the operators  $\iop(\SPsi)$ and $\iop(\SV)$ completes the proof of Theorems 
\ref{thm:maingk} and \ref{thm:maingkasym}.

\subsection{Plan of the paper}
The paper is organized as follows. 
Section \ref{sect:reg} contains some general results on local 
regularity for the scalar elliptic equation of the form \eqref{eq:mod} on a 
ball of variable radius $\ell$. The main focus is on the regularity bounds with explicit  
dependence on $\ell$. The main tool is the scale analysis suggested in \cite{FS2021} 
and supplemented by \cite{HearnSob2023}. 
The conclusions of Section \ref{sect:reg} are applied to the study of the eigenfunction $\psi$ 
in Sect.~\ref{sect:regpsi}. The main outcome of this section are bounds for the 
function $\psi$ and its derivatives of all orders with explicit dependence on the 
distance to the coalescence points. The whole of Sect. \ref{sect:regasym} is devoted to the analysis of regularity for  totally antisymmetric functions $\psi$. 
In Sect.~\ref{sect:compact} 
we put together some general information on  
compact operators with power-like spectral behaviour. Most of this information 
can be found in \cite[Chapter 11]{BS}. 
The focus of Sect.~\ref{sect:besov} is on the Nikol'skii-Besov spaces. Here 
we establish an elementary but very useful test of membership in these spaces.  
This test becomes useful in Sect.~\ref{subsect:BS} due to the general results of \cite{BS1977} 
on sharp spectral bounds for integral operators  
in terms of the norms (or quasinorms) of their kernels in 
the Nikol'skii-Besov spaces. They allow us to derive spectral 
estimates for some auxiliary integral operators thereby preparing the grounds for the study of 
the operators $\iop(\SPsi)$ and $\iop(\SV)$. In Sect.~\ref{subsect:symbols} we provide 
the asymptotic formula for pseudodifferential operators with homogeneous symbols 
obtained in \cite{BS1977_1}.
In Sect.~\ref{sect:model} the results gathered in Sect.~\ref{sect:intop} are applied to the 
study of the model operator operator $\CM$ mentioned earlier. 
The proof of the main results begins in Sect.~\ref{sect:factor}. First, in Sect.~\ref{subsect:ascoeff} we check that the coefficients \eqref{eq:coeffAB} and 
\eqref{eq:coeffABasym} are finite under the conditions of Theorems \ref{thm:maingk} and 
\ref{thm:maingkasym} respectively. 
In Sect.  \ref{subsect:fact} we detail the factorization \eqref{eq:fctr} 
which allows us to recast the main theorems 
in terms of the operators $\iop(\SPsi)$ and $\iop(\SV)$, see Theorems 
\ref{thm:recast} and \ref{thm:recastasym}. The rest of Sect. \ref{sect:factor} is 
taken up by technical lemmata designed to reduce the operators of interest to 
the model operator discussed in Sect. \ref{sect:model}. 
Sect. \ref{sect:prfrecast} and \ref{sect:prfrecastasym} contain the proofs 
of Theorems \ref{thm:recast} and \ref{thm:recastasym} respectively. These complete 
the proof of Theorems \ref{thm:maingk} and \ref{thm:maingkasym}. 
In the Appendix we give 
an elementary extension result for Sobolev spaces established 
in \cite{Sobolev2022a}, 
 which is used in 
Sect. \ref{subsect:mbr}.

\subsection{Notational conventions} 
We conclude the introduction with some general notational conventions.  

\textit{Coordinates.} 
As mentioned earlier, we use the following standard notation for the coordinates: 
$\bx = (x_1, x_2, \dots, x_N)$,\ where $x_j\in \R^3$, $j = 1, 2, \dots, N$. 
The vector $\bx$ is often represented in the form  
\begin{align*}
\bx = (\hat\bx_j, x_j) \quad \textup{with}\quad   
\hat\bx_j = (x_1, x_2, \dots, x_{j-1}, x_{j+1},\dots, x_N)\in\R^{3N-3}, 
\end{align*}
for arbitrary $j = 1, 2, \dots, N$. Most frequently we use this notation with $j=N$, and  
write $\hat\bx = \hat\bx_N$, so that $\bx = (\hat\bx, x_N)$. 
In order to write 
formulas in a more compact and unified way, we sometimes use the notation 
$x_0 = 0$. 

In the space $\R^d, d\ge 1,$ 
the notation $|x|$ stands for the Euclidean norm. 

For $N\ge 3$ it is also useful to introduce the notation 
for $\bx$ with $x_j$ and $x_k$ taken out: 
\begin{align}\label{eq:xtilde}
\tilde\bx_{k, j} = \tilde\bx_{j, k} 
= (x_1, \dots, x_{j-1}, x_{j+1}, \dots, x_{k-1}, x_{k+1},\dots, x_N), \quad \textup{for}\  j <k.
\end{align}
In this case we write $\bx = (\tilde\bx_{j, k}, x_j, x_k)$, $j < k$. For any $j\le N-1$ 
the vector $\hat\bx$ can be represented as $\hat\bx = (\tilde\bx_{j, N}, x_j)$.

For $R>0$ denote by $B(x, R) = \{z\in \R^d: |x-z|<R\}\subset\R^d$ the ball of radius $R>0$ centered at $x\in\R^d$. 
In particular, 
$B(\hat\bx, R) = \{\hat\by\in\R^{3N-3}: |\hat\bx-\hat\by|<R\}$.

\textit{Derivatives.} 
Let $\mathbb N_0 = \mathbb N\cup\{0\}$.
If 
\[ 
x = (x^{(1)}, x^{(2)},\dots,  x^{(d)})\in \R^d\quad \textup{ and}\quad  
m = (m^{(1)}, m^{(2)},\dots,  m^{(d)})\in\mathbb N_0^d,
\]
 then 
the derivative $\p_x^m$ is defined in the standard way:
\begin{align*}
\p_x^m = \p_{x^{(1)}}^{m^{(1)}}
\p_{x^{(2)}}^{m^{(2)}} \cdots 
\p_{x^{(d)}}^{m^{(d)}}.
\end{align*} 
The symbols $\plainW{l, p}(\Om)$, $\plainW{l, p}_{\textup{loc}}(\Om)$, $l = 0, 1, \dots$, 
$1\le p\le \infty$, 
 stand for the standard Sobolev spaces on the open set $\Om$, 
 and  $\plainC{l, \mu}(\overline{\Om})$, 
$l = 0, 1, \dots$, $\mu\in (0, 1]$,   
 stands for the standard H\"older space on $\Om$.

\textit{Bounds.} 
For two non-negative numbers (or functions) 
$X$ and $Y$ depending on some parameters, 
we write $X\lesssim Y$ (or $Y\gtrsim X$) if $X\le C Y$ with 
some positive constant $C$ independent of those parameters. 
If $X\lesssim Y\lesssim X$, then we write $X\asymp Y$. 
 To avoid confusion we may comment on the nature of 
(implicit) constants in the bounds. 
We also use the notation $X\wedge Y = \min\{X, Y\}$. 

\textit{Cut-off functions.} 
We systematically use the following smooth cut-off functions. Let 
\begin{align}\label{eq:sco}
\t, \z\in \plainC\infty([0, \infty)), 
\quad \z = 1-\t, 
\end{align}
be functions such that $0\le \t\le 1$ and 
\begin{align}\label{eq:sco1} 
\t(t) = 0,\quad \textup{if}\quad t>1;\ \quad
\t(t) = 1,\quad \textup{if}\quad 0\le t<\frac{1}{2}. \ 
\end{align}

\textit{Integral operators.} 
The notation $\iop(\CK)$ is used for the integral operator with kernel $\CK$, 
e.g. $\SG = \iop(\g)$.  
The functional spaces, where $\iop(\CK)$ acts are obvious from the context.

\section{Regularity estimates}\label{sect:reg}

\subsection{$\plainC{1, \e}$-regularity for elliptic equations}
In what follows we rely on the well-known $\plainC{1, \e}$-regularity bounds 
for solutions of second order elliptic equations on bounded domains.  
This type of regularity is discussed e.g. in \cite[Ch. 3]{LadUra1968} and  
\cite[Ch. 8]{GilTru2001}. 
In this paper we do not need the most general form of the equation.  
For our purposes it suffices to consider the equation 
\begin{align}\label{eq:mod}
\big(-\Delta + \ba(x)\cdot\nabla + b(x)\big) u = g, 
\end{align}
on an open ball $B\subset \R^d$, where 
all the coefficients are $\plainL\infty(B)$-functions. 
The proposition below provides some convenient sup-norm bounds for the 
weak solution and its first derivatives. Since the proof is quite short, we 
provide it for the sake of completeness. It can be found in \cite{HearnSob2023}, and 
it is similar to the argument in \cite[Proposition A.2]{FS2021}.

\begin{prop}\label{prop:reg} 
Let $B_R = B(x_0, R)\subset \R^d$ for some $x_0\in\R^d$ and $R>0$. 
Suppose that $u\in \plainW{1, 2}(B_R)$ is a 
weak solution of the equation \eqref{eq:mod}, 
where $\ba, b, g\in\plainL\infty(B_R)$, and 
\begin{align*}
\|\ba\|_{\plainL\infty(B_R)} + \|b\|_{\plainL\infty(B_R)}
\le M, 
\end{align*}
with some constant $M>0$. 
Then for any $r \in (0, R)$ the function $u$ belongs 
to $\plainW{2, 2}(B_r)\cap\plainC{1,\mu}(\overline{B_r})$ with an arbitrary 
$\mu\in [0, 1)$,  and 
\begin{align}\label{eq:regc1}
\|u\|_{\plainC{1, \mu}(\overline{B_r})}\lesssim \|u\|_{\plainL2(B_R)}
+ \|g\|_{\plainL\infty(B_R)},
\end{align}
with an implicit constant that depends only on the constant 
$M$, index $\mu$, dimension $d$ and the radii $r$ and $R$. 
\end{prop}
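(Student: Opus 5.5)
Move the lower-order terms to the right-hand side: $-\Delta u = f := g - \ba\cdot\nabla u - b u$. Then bootstrap with standard Calder\'on--Zygmund $\plainW{2,p}$ interior estimates for the Laplacian, and finish with the Sobolev/Morrey embedding $\plainW{2,p}\subset \plainC{1,\mu}$ for $p>d$. The whole argument runs on a finite chain of nested balls
\[
B_r = B_{r_K}\subset B_{r_{K-1}}\subset\dots\subset B_{r_0}=B_R,\qquad r_j = R - j(R-r)/K,
\]
where $K$ depends only on $d$ and $\mu$. All constants therefore depend only on $M,\mu,d,r,R$.

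\emph{Step 1: start the bootstrap.} Because $u\in\plainW{1,2}(B_R)$ is a weak solution, $f\in\plainL2(B_R)$ and
\[
\|f\|_{\plainL2}\le \|g\|_{\plainL2}+M\|u\|_{\plainW{1,2}}.
\]
The Caccioppoli inequality, obtained by testing the equation with $\eta^2u$ for a cutoff $\eta$ and absorbing the drift term $\ba\cdot\nabla u$ by Cauchy--Schwarz, gives
\[
\|\nabla u\|_{\plainL2(B_{r_1})}\lesssim \|u\|_{\plainL2(B_R)}+\|g\|_{\plainL\infty(B_R)}.
\]
Interior $\plainL2$ elliptic regularity for $\Delta$ (via difference quotients, or the standard $\plainW{2,2}$ interior estimate in \cite[Ch.~8]{GilTru2001}) then yields $u\in\plainW{2,2}(B_{r_2})$ with the same bound on the right. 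This also proves the claimed $\plainW{2,2}(B_r)$ membership.

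\emph{Step 2: iterate.} Suppose $u\in\plainW{2,p_j}(B_{r_j})$. By Sobolev embedding, $\nabla u\in\plainL{p_{j+1}}(B_{r_j})$ with $1/p_{j+1}=1/p_j-1/d$, or any finite exponent once $p_j\ge d$. Hence $f\in\plainL{p_{j+1}}(B_{r_j})$. The interior Calder\'on--Zygmund estimate
\[
\|u\|_{\plainW{2,q}(B_{r_{j+1}})}\lesssim \|\Delta u\|_{\plainL q(B_{r_j})}+\|u\|_{\plainL q(B_{r_j})}
\]
(\cite[Theorem 9.11]{GilTru2001}) upgrades this to $u\in\plainW{2,p_{j+1}}(B_{r_{j+1}})$. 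The lower-order term $\|u\|_{\plainL q}$ is likewise controlled by the embedding at the previous step. After finitely many steps, a number depending only on $d$ and $\mu$, we reach $p> d/(1-\mu)$. Morrey's embedding then gives $u\in\plainC{1,\mu}(\overline{B_r})$, with norm bounded by the chain of constants times $\|u\|_{\plainL2(B_R)}+\|g\|_{\plainL\infty(B_R)}$, which is \eqref{eq:regc1}. Since the embedding requires $1-d/p=\mu$ with $p<\infty$, this is exactly why $\mu<1$ is needed.

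\emph{Main obstacle.} The main difficulty is bookkeeping rather than any single estimate. One must ensure that every constant depends on the coefficients only through $M$, which holds because each step uses only $\|\ba\|_\infty,\|b\|_\infty$. One must also ensure the number of iterations is uniform, which holds because the exponent gain $1/d$ per step is fixed. A subtler point is that Step 1 needs the weak formulation with the drift term handled by absorption, since $\ba$ is merely bounded and the operator is not in divergence form. I would first test with $\eta^2 u$ and use $|\int \eta^2 u\,\ba\cdot\nabla u|\le \tfrac12\|\eta\nabla u\|^2+CM^2\|u\|^2$. If that caused trouble, I would instead treat $\ba\cdot\nabla u$ simply as an $\plainL2$ right-hand side from the outset, since $u\in\plainW{1,2}$ already.
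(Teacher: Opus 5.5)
Your proof is correct and follows essentially the same route as the paper: a bootstrap with a finite number of steps over nested balls. Each step combines interior Calder\'on--Zygmund $\plainW{2,p}$ estimates with Sobolev embeddings to raise the integrability exponent, and the argument closes with Morrey's embedding once $p>d(1-\mu)^{-1}$. The differences are inessential. You move $\ba\cdot\nabla u+bu$ to the right-hand side and use the Laplacian's estimates together with a Caccioppoli bound, whereas the paper applies \cite[Theorem 9.11]{GilTru2001} to the full operator and uses the more conservative exponent step $q=p(1+d^{-1})$.
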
 

\begin{proof} 
The inclusion $u\in \plainW{2, 2}(B_r)$ is a direct consequence of the standard interior regularity result given in, for example, \cite[Theorem 8.8]{GilTru2001}.

In order to prove that the weak solution $u\in \plainW{1, 2}(B_R)$ has 
the $\plainC1$-regularity in $B_r$ we repeatedly apply the following 
elementary fact.

Assume that $u\in \plainW{1, p}(B_\rho)$ with some $p \in( 1, \infty)$ and $\rho \le R$. Then for any 
$\nu < \rho$  the following is true:
\begin{enumerate}
\item 
If $p\le d$, then $u\in \plainW{1, q}(B_\nu)$ with $q = p(1+d^{-1})$ and 
\begin{align}\label{eq:step}
\|u\|_{\plainW{1,q}(B_\nu)}\lesssim &\ \|u\|_{\plainL{p}(B_\rho)} + \|g\|_{\plainL{p}(B_\rho)}\notag\\
\lesssim &\ \|u\|_{\plainW{1,p}(B_\rho)} + \|g\|_{\plainL{\infty}(B_\rho)}. 
\end{align}
\item 
If $p > d$, then $u\in \plainC{1, \e}(\overline{B_\nu})$, $\e = 1 - dp^{-1}$, and 
\begin{align}\label{eq:regc11}
\|u\|_{\plainC{1, \epsilon}(\overline{B_\nu})}\lesssim &\ \|u\|_{\plainL{p}(B_\rho)} 
+ \|g\|_{\plainL{p}(B_\rho)}\notag\\
\lesssim &\ \|u\|_{\plainW{1,p}(B_\rho)} + \|g\|_{\plainL{\infty}(B_\rho)}.
\end{align}
\end{enumerate}

Indeed, under the assumption $u\in \plainW{1, p}(B_\rho)$, by interior $\plainL{p}$-estimates (see, e.g. \cite[Theorem 9.11]{GilTru2001}) we have $u\in\plainW{2, p}(B_\nu)$ and the standard bound 
holds:
\begin{align}\label{eq:gt}
\|u\|_{\plainW{2,p}(B_\nu)}\lesssim \|u\|_{\plainL{p}(B_\rho)} + \|g\|_{\plainL{p}(B_\rho)}.
\end{align}
If $p < d$, then we use the bounded embedding $\plainW{2,p}\subset \plainW{1, q}$,\ for all 
$q\in[p, p^*]$, $p^* = dp(d-p)^{-1}$, see e.g. 
\cite[Theorem 6, Sect. 5.7]{Evans1998}. In particular, the value 
$q = p(1+d^{-1})$ belongs to the interval $[p, p^*]$, which proves  
\eqref{eq:step}. 
If $p = d$, then $\plainW{2,p}\subset \plainW{1, q}$ for all $q\in [p, \infty)$, and hence 
for $q = p(1+d^{-1})$ in particular. Hence \eqref{eq:step} holds again.

If $p >d$, then we use the embedding 
$\plainW{2,p}\subset \plainC{1, \e}$, see \cite{Evans1998}, so that 
\eqref{eq:gt} leads to \eqref{eq:regc11}.

Let us proceed with the proof of \eqref{eq:regc1}. 
If $d = 1$, then the solution $u\in\plainW{1, 2}(R)$ satisfies \eqref{eq:regc11} with 
$p=2$, which immediately implies \eqref{eq:regc1}.
Suppose that $d\ge 2$, and define the sequence 
\begin{align*}
q_n = 2(1+d^{-1})^n, \ n = 0, 1, \dots. 
\end{align*}
Let $k\ge 1$ be the index such that $q_{k-1}\le d$ and $q_{k}\ge d(1-\mu)^{-1}>d$.  
Pick finitely many numbers $r_n>0$ such that 
\begin{align*}
r < r_k<r_{k-1}<\dots < r_1 < r_0  = R.
\end{align*} 
Since $u\in\plainW{1,2}(B_R)$ we can apply the bound \eqref{eq:step} with $\nu = r_1, \rho = R$ 
and $p = q_0 = 2$, $q = q_1$. Repeating this step successively for $\rho =r_n, \nu =r_{n+1}$ and 
$p = q_{n}$, $q = q_{n+1}$  
for all $n = 1, \dots, k$, we arrive at the bound 
\begin{align*}
\|u\|_{\plainW{1,q_{k}}(B_{r_k})} 
\lesssim &\ 
\|u\|_{\plainW{1, q_{k-1}}(B_{r_{k-1}})} + \|g\|_{\plainL{\infty}(B_{r_{k-1}})}
\lesssim \dots\\
\lesssim  &\ \|u\|_{\plainW{1, q_{\scalel{1}}}(B_{r_{\scalel{1}}})} 
+ \|g\|_{\plainL{\infty}(B_{r_{\scalel{1}}})}
\lesssim 
\|u\|_{\plainL{2}(B_R)} + \|g\|_{\plainL{\infty}(B_R)}.
\end{align*}
As $q_k \ge d(1-\mu)^{-1}> d$, we can now use \eqref{eq:regc11} which gives
\begin{align*}
\|u\|_{\plainC{1, \e}(\overline{B_r})} 
\lesssim &\ 
\|u\|_{\plainW{1,q_{k}}(B_{r_k})} +  \|g\|_{\plainL{\infty}(B_{r_k})}\\
\lesssim &\ 
\|u\|_{\plainL{2}(B_R)} + \|g\|_{\plainL{\infty}(B_R)},
 \end{align*}
 where $\e = 1 - dq_k^{-1}\ge \mu$. 
This completes the proof of \eqref{eq:regc1}. 
\end{proof}

Higher order smoothness. 
Let $x = (z, t)$ where $z\in \R^n$ and $t\in \R^l$, and $n + l = d$. 
Assume that for some $\bk\in \mathbb N_0^d$, we have that 
all the derivatives $\p_z^\bm g, 0\le \bm\le \bk,$ are bounded on $B_R:=B(x_0, R)$ and  
\begin{align}\label{eq:highreg}
|\p_z^\bm\, \ba(z, t)| + |\p_z^\bm b(z, t)| \le M,\quad \textup{for all}\quad 0\le \bm\le \bk, 
\end{align}
and a.e. $x\in B_R$. Instead of the equation \eqref{eq:mod} we consider 
a similar equation depending on the parameter 
 $\ell\in (0, 1]$:
\begin{align}\label{eq:modmod}
-\Delta u + \ell\, \ba\cdot\nabla u + \ell^2 b\, u = g.
\end{align}

\begin{thm}\label{thm:modmod}
Suppose that the condition \eqref{eq:highreg} is satisfied. 
Let $u\in \plainW{1, 2}(B_R)$ be a weak solution of the equation \eqref{eq:modmod}.  
Then for any $r \in (0, R)$,  
and all $\bold 0\le \bm\le\bk$ 
the function $\p_z^\bm u$ belongs to 
$\plainW{2,2}(B_r)\cap \plainC{1, \mu}(B_r)$ for an arbitrary $\mu\in [0, 1)$ 
and 
\begin{align}\label{eq:dd0}
\|\p_z^\bm \, u\|_{\plainC{1, \mu}(\overline{B_r})}\lesssim 
\|u\|_{\plainL2(B_R)} + \sum_{\bold0 \le \bq\le \bm}\ \|\p_z^\bq\, g\|_{\plainL\infty(B_R)}.
\end{align}
If $|\bm|\ge1$, then for any $\nu\in (r, R)$,
\begin{align}\label{eq:dd1}
\|\p_z^\bm \, u\|_{\plainC{1, \mu}(\overline{B_r})}\lesssim 
\ell^2\,\|u\|_{\plainL\infty(B_\nu) }  + \|\nabla u\|_{\plainL\infty(B_\nu) }
+ \sum_{0 < \bq\le \bm}\ \|\p_z^\bq\, g\|_{\plainL\infty(B_R)}. 
\end{align} 
If $|\bm|\ge2$, then 
\begin{align}\label{eq:dd2}
\|\p_z^\bm \, u\|_{\plainC{1, \mu}(\overline{B_r})}\lesssim 
\ell^2\,\|u\|_{\plainL\infty(B_\nu) }&  + \ell\,\|\nabla u\|_{\plainL\infty(B_\nu) }\notag \\
&\ + \max_{\bl\le\bk: |\bl|=2}\|\p_z^\bl\, u\|_{\plainL\infty(B_\nu)}
+ \sum_{0 < \bq\le \bm}\ \|\p_z^\bq\, g\|_{\plainL\infty(B_R)}.
\end{align}
\end{thm}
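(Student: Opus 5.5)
The plan is induction on $|\bm|$, using difference quotients in the $z$-directions and Proposition \ref{prop:reg} as the engine at each step, with a shrinking family of radii $r<\dots<r_1<R$. Take a unit vector $e$ in the $z$-subspace. Formally, differentiating \eqref{eq:modmod} gives that $v=\p_z^\bm u$ solves
\begin{align*}
-\Delta v + \ell\,\ba\cdot\nabla v + \ell^2 b\, v = \p_z^\bm g - \sum_{0<\bq\le \bm}\binom{\bm}{\bq}\Big(\ell\,(\p_z^\bq \ba)\cdot\nabla \p_z^{\bm-\bq}u + \ell^2(\p_z^\bq b)\,\p_z^{\bm-\bq}u\Big) =: g_\bm.
\end{align*}
Since only $z$-derivatives fall on the coefficients, the bounds \eqref{eq:highreg} control them. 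Because $\ell\le 1$, the coefficients $\ell\ba$ and $\ell^2 b$ are bounded by $M$ uniformly in $\ell$. Hence Proposition \ref{prop:reg} applies with constants independent of $\ell$.

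\textbf{Justifying the differentiation.} I would first use difference quotients $\d_h v=(v(\cdot+he)-v)/h$. These satisfy an analogous equation whose right-hand side involves $\d_h g$, difference quotients of the coefficients (bounded by the Lipschitz bound from \eqref{eq:highreg}), and $\nabla u$, $u$ on a slightly larger ball. By the induction hypothesis those quantities are already bounded in $\plainL\infty$, so the right-hand side is bounded uniformly in $h$. Proposition \ref{prop:reg} then bounds $\d_h v$ in $\plainC{1,\mu}$ uniformly in $h$. Passing $h\to0$, via Arzel\`a–Ascoli or weak compactness in $\plainW{1,2}$, gives $\p_z^{\bm+e}u\in\plainW{2,2}\cap\plainC{1,\mu}$ together with the bound.

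\textbf{Proof of \eqref{eq:dd0}.} Apply \eqref{eq:regc1} to $v$ on the pair $(r_{j+1},r_j)$. The base case $\bm=0$ is exactly Proposition \ref{prop:reg}. For the step, bound $\|g_\bm\|_\infty$ by $\sum_{\bq\le\bm}\|\p_z^\bq g\|_\infty$ plus $\|\p_z^{\bm-\bq}u\|_{\plainC1}$, which are controlled by the induction hypothesis on the previous larger ball. Also use $\|v\|_{\plainL2(B_{r_j})}\lesssim \|v\|_{\plainL\infty}$, which the induction hypothesis bounds as well.

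\textbf{Proofs of \eqref{eq:dd1} and \eqref{eq:dd2}.} These need more care: the point is to avoid $\|u\|_{\plainL2}$ and to keep track of the powers of $\ell$. For $|\bm|=1$, $v=\p_z u$, and $g_\bm = \p_z g - \ell\,\p_z\ba\cdot\nabla u - \ell^2 \p_z b\, u$, so
\[
\|g_\bm\|_\infty\lesssim \|\p_z g\|_\infty+\|\nabla u\|_\infty+\ell^2\|u\|_\infty .
\]
The $\plainL2$ term in \eqref{eq:regc1} is $\|\p_z u\|_{\plainL2(B_\nu)}\lesssim\|\nabla u\|_{\plainL\infty(B_\nu)}$. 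This gives \eqref{eq:dd1} for $|\bm|=1$ on an intermediate ball. For higher $|\bm|$, induct: the lower-order terms in $g_\bm$ are $\p_z^{\bm-\bq}u$ with $|\bm-\bq|\ge1$ (estimated by the previous step) and $\ell^2\p_z^\bq b\,u$.

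For \eqref{eq:dd2} the same scheme starts from $|\bm|=2$, with $v=\p_z^\bl u$, $|\bl|=2$. Its $\plainL2$ norm is bounded by $\max\|\p_z^\bl u\|_\infty$. The source contains $\ell\,\p_z\ba\cdot\nabla\p_z u$, which I would control by a $\plainC1$ bound of $\p_z u$ from \eqref{eq:dd1}. That would reintroduce $\|\nabla u\|$ with only a factor $\ell$, which matches the stated $\ell\|\nabla u\|$. The term $\ell\,\p_z^2\ba\cdot\nabla u$ gives $\ell\|\nabla u\|$ directly, and the $u$ terms carry $\ell^2$.

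\textbf{Main obstacle.} The step I expect to be hardest is precisely this bookkeeping for \eqref{eq:dd2}. The $\ell\,\p_z\ba\cdot\nabla\p_z u$ term must only cost $\ell(\ell^2\|u\|+\|\nabla u\|+\dots)$ and must not produce an unweighted $\|\nabla u\|$. If it does produce one, I would instead bound $\nabla\p_z u$ by interpolating between $\p_z^2 u$ and the $\plainC{1,\mu}$ bounds.
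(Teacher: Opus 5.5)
Your proposal is correct and follows essentially the paper's route: induction on $|\bm|$, applying Proposition~\ref{prop:reg} with $\ell$-uniform constants to the $z$-differentiated equation while tracking the powers of $\ell$. The differences are minor: the paper justifies the differentiation by using $\p_z^{\bm}u\in\plainW{2,2}$ for $|\bm|\le s$ to get $\p_z^{\bn}u\in\plainW{1,2}$ for $|\bn|=s+1$ and then testing the original weak equation against $\p_z^{\bn}\eta$, rather than using difference quotients, and it derives \eqref{eq:dd0} from \eqref{eq:dd1} and \eqref{eq:regc1}. The obstacle you anticipate for \eqref{eq:dd2} does not arise, because the term $\ell\,\p_z\ba\cdot\nabla\p_z u$ already carries the factor $\ell$, so \eqref{eq:dd1} bounds it by $\ell\big(\ell^2\|u\|_{\plainL\infty}+\|\nabla u\|_{\plainL\infty}+\dots\big)$ exactly as in the paper, and no interpolation is needed.
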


\begin{proof} 
The bound \eqref{eq:dd0} is a direct consequence of \eqref{eq:dd1} and \eqref{eq:regc1}. 

Proof of \eqref{eq:dd1}. 
We begin with a formal manipulation assuming 
that all the cluster derivatives $u_\bm$ exist and are as smooth as necessary. 
Applying the operator $\p_z^{\bm}$ to the equation \eqref{eq:modmod} 
we obtain the following equation for the function 
$u_\bm := \p_z^\bm u$:
\begin{align}\label{eq:cluster}
-\Delta u_\bm +  \ell\ \ba\cdot\nabla u_\bm  + \ell^2 b\, u_\bm = G_\bm,
\end{align}
with 
\begin{align*}
G_\bm := -  \ell \sum\limits_{\bold0\le \bq <\bm} 
{\bm\choose \bq}\big(\p_z^{\bm-\bq} \ba\big)\cdot \nabla u_{\bq}
-\ell^2 
\sum\limits_{\bold0\le \bq <\bm} 
{\bm\choose \bq}\big(\p_z^{\bm-\bq} b\big)\, u_{\bq} 
+  g_\bm, 
\end{align*}
where $g_\bm = \p_z^\bm g$. 
It follows from \eqref{eq:highreg} that 
\begin{align}\label{eq:prq1}
\|G_\bm\|_{\plainL\infty(B_\rho)}
\lesssim  &\ \sum\limits_
{\bold0\le \bq < \bm} T_{\bq}(\l, \rho) 
+ \|g_\bm\|_{\plainL\infty(B_\rho)},\notag \\[0.2cm] 
&\qquad T_{\bq}(\l, \rho) = \ell\,\|\nabla u_\bq\|_{\plainL\infty(B_\rho)} 
+ \ell^2 \|u_\bq\|_{\plainL\infty(B_\rho)},
\end{align}
for all $\rho < R$. 

Now we need to justify the above formal calculations. First note that 
since $u$ is a weak solution of \eqref{eq:modmod}, by Proposition \ref{prop:reg} it belongs to 
$\plainW{2, 2}(B_\rho)\cap\plainC1(\overline{B_\rho})$ for all $\rho <R$, so that 
$T_\bold0(\ell, \rho) = 
\ell\|\nabla u\|_{\plainL\infty(B_\rho)} + \ell^2 \|u\|_{\plainL\infty(B_\rho)}$ is finite, 
and as a result, for $|\bm| = 1$ we have 
\begin{align*}
\|G_{\bm}\|_{\plainL\infty(B_\rho)}
\lesssim &\ T_{\bold0}(\ell, \rho) + \| g_\bm\|_{\plainL\infty(B_{\rho})}\notag\\
\lesssim &\ 
\ell\|\nabla u\|_{\plainL\infty(B_{\rho})} + \ell^2\, \|u\|_{\plainL\infty(B_{\rho})} 
+ \| g_\bm\|_{\plainL\infty(B_\rho)},\quad 
0<\rho <R.
\end{align*} 
By Proposition \ref{prop:reg}, 
for all $\bm$ such that $|\bm|=1$ and all $r < \rho$, we conclude that 
$u_{\bm} \in \plainW{2, 2}(B_r)\cap\plainC1(\overline{B_r})$ and 
\begin{align*}
\|u_\bm\|_{\plainC{1, \mu}(\overline{B_{r}})}
\lesssim &\ \|u_\bm\|_{\plainL\infty(B_\rho)} + \|G_\bm\|_{\plainL\infty(B_\rho)}\\
\lesssim &\ \|\nabla u\|_{\plainL\infty(B_\rho)} + \|G_\bm\|_{\plainL\infty(B_\rho)}
\lesssim  \|\nabla u\|_{\plainL\infty(B_{\rho})} 
+ \ell^2\, \|u\|_{\plainL\infty(B_{\rho})} + \| g_\bm\|_{\plainL\infty(B_\rho)}.
\end{align*}
This proves \eqref{eq:dd1} for $|\bm|=1$.  

Using the above observation as the induction base, 
we now prove  
that $u_\bm$ is indeed well-defined for all $\bm\le\bk$
 and that it 
is a weak solution of 
the equation \eqref{eq:modmod}. 
To provide the induction step 
assume that for some $s\le |\bk|-1$ and all $\bm\le\bk$ such that $|\bm|\le s$, 
the function $u_\bm$ is a weak solution 
of \eqref{eq:modmod} in $B_\rho$  
for all $\rho < R$ 
(and hence belongs to 
$\plainW{2, 2}(B_r)\cap\plainC{1, \mu}(\overline{B_r}), r < \rho,$ by Proposition \ref{prop:reg}), 
and that it satisfies 
\eqref{eq:dd1}. 
Let us prove that the same is true for
the function $u_\bn$ for all $\bn\le\bk: |\bn|= s+1$.

First note that for all $\rho <\nu < R$ 
the function $G_{\bn}$ satisfies the bound 
\begin{align}\label{eq:prq2}
\|G_{\bn}\|_{\plainL\infty(B_\rho)}\lesssim 
\ell\|\nabla u\|_{\plainL\infty(B_{\nu})} + \ell^2\, \|u\|_{\plainL\infty(B_{\nu})}
+ \sum_{\bold0< \bq\le \bn} \|g_\bq\|_{\plainL\infty(B_{\nu})}.
\end{align}  
Indeed, in the bound \eqref{eq:prq1},  
for the term $T_{\bold0}(\ell, \rho)$ we use the exact formula (second line in \eqref{eq:prq1}).  
For $T_\bq(\ell, \rho)$ with $ \bold0 <\bq<\bn$, we use 
\eqref{eq:dd1} to obtain
that 
\begin{align*}
T_\bq(\ell, \rho) = & \ell\,\|\nabla u_\bq\|_{\plainL\infty(B_\rho)} 
+ \ell^2 \,\|u_\bq\|_{\plainL\infty(B_\rho)}\\
\lesssim  &\  \ell \,\|u_\bq\|_{\plainC1(\overline{B_{\rho}})} 
 \lesssim  \ell\,\|\nabla u\|_{\plainL\infty(B_{\nu})} 
+ \ell^3\, \|u\|_{\plainL\infty(B_{\nu})} 
+ 
\ell\, \sum_{\bold0 < \bq'\le \bq} \|g_{\bq'}\|_{\plainL\infty(B_{\nu})},
\end{align*}
which leads to \eqref{eq:prq2}. 
Furthermore, as $u_\bm\in \plainW{2, 2}(B_\rho)$ for all 
$|\bm|\le s$, we have  
$u_{\bn}\in \plainW{1, 2}(B_\rho)$. Now,  
 integrating the equation \eqref{eq:modmod} 
 against the function $(-1)^{s+1}\p_z^{\bn} \eta$, 
 with an arbitrary $\eta\in\plainC\infty_0(B_\rho)$ we obtain 
 \begin{align*}
0 = (-1)^{s+1}\int \nabla u\cdot \nabla\ \p_z^{\bn} \eta\, d x
 &\ + (-1)^{s+1}  \ell \int \ba \cdot \nabla u \,\p_z^{\bn} \eta\, dx\\ 
 + &\ (-1)^{s+1} \ell^2\int b\, u\, \p_z^{\bn} \eta\, dx - 
 (-1)^{s+1} \ell^2\int g\, \p_z^{\bn} \eta\, dx
  \end{align*}
 \begin{align*}
= \int \nabla u_\bn\cdot \nabla \eta\, dx
 +    \ell &\ \int \ba\cdot \nabla u_\bn \, \eta\, dx\\ 
 + &\ \ell^2\,\int b\, u_\bn  \eta\, dx 
 -  \int G_{\bn}\, \eta\, dx, 
 \end{align*}
and hence $u_{\bn}$ is a weak solution of \eqref{eq:cluster} in $B_\rho$. 
Since the coefficients on the left-hand side 
of \eqref{eq:cluster} 
are bounded uniformly in $\ell\in (0, 1]$, by  Proposition \ref{prop:reg}, 
$u_\bn\in\plainW{2, 2}(B_r)\cap\plainC{1, \mu}(\overline{B_r})$ for all $r<\rho$ and 
\begin{align}\label{eq:n2}
\|u_{\bn}\|_{\plainC{1, \mu}(\overline{B_r})}
\lesssim &\ \|u_{\bn}\|_{\plainL\infty(B_\rho)}
+ \|G_{\bn}\|_{\plainL\infty(B_\rho)}\notag\\
\lesssim &\ \|u_{\bn}\|_{\plainL\infty(B_\rho)} + 
\ell\|\nabla u \|_{\plainL\infty(B_{\nu})} + \ell^2\, \|u\|_{\plainL\infty(B_{\nu})}
+  \sum\limits_{\bold0 < \bq\le \bn}
\|g_\bq\|_{\plainL\infty(B_{\nu})}, 
\end{align}
where we have also used \eqref{eq:prq2}. 
Let $\bm\le \bk, |\bm|=s$, be such that $|\bn-\bm|=1$. Therefore, by \eqref{eq:dd1},  
\begin{align*}
\|u_\bn\|_{\plainL\infty(B_\rho)}\lesssim \|\nabla u_\bm\|_{\plainL\infty(B_\rho)}
\lesssim \|\nabla u\|_{\plainL\infty(B_\nu)} + \ell^2\, \|u\|_{\plainL\infty(B_\nu)} 
+ \sum\limits_{\bold0 < \bq\le \bm}
 \|g_\bq\|_{\plainL\infty(B_{\nu})}.
\end{align*}
Together with \eqref{eq:n2} this leads to 
\begin{align*}
\|u_{\bn}\|_{\plainC{1, \mu}(\overline{B_r})}
\lesssim \|\nabla u\|_{\plainL\infty(B_\nu)}
+ \ell^2\|u\|_{\plainL\infty(B_\nu)} 
+ \sum\limits_{\bold0< \bq\le \bn} 
\|g_\bq\|_{\plainL\infty(B_{\nu})},
\end{align*}
as required. 
It remains to conclude that by induction, the bound \eqref{eq:dd1}  
holds for all $\bm\le \bk$.  
  
The proof of \eqref{eq:dd2} is also conducted by induction.  
 Assume that $|\bm|=2$. From 
 \eqref{eq:n2} we get that 
\begin{align}\label{eq:m2}
\|u_{\bm}\|_{\plainC{1,\mu}(\overline{B_r})}
\lesssim &\ 
\|u_{\bm}\|_{\plainL\infty(B_\rho)} + \ell\,\|\nabla u\|_{\plainL\infty(B_\nu)}
+ \ell^2\,\|u\|_{\plainL\infty(B_\nu)} 
+ \sum\limits_{\bold0< \bq\le \bm} 
\|g_\bq\|_{\plainL\infty(B_{\nu})}
\notag\\
\lesssim &\ 
\max_{\bl\le \bk: |\bl|=2} \|\p_z^{\bl} u\|_{\plainL\infty(B_\nu)} + \ell \|\nabla u\|_{\plainL\infty(B_\nu)}
+ \ell^2 \|u\|_{\plainL2(B_\nu)} 
+ \sum\limits_{\bold0< \bq\le \bm} 
\|g_\bq\|_{\plainL\infty(B_{\nu})},
 \end{align}
for all $r < \rho < \nu<R$, which gives \eqref{eq:dd2}. 
The bound \eqref{eq:m2} serves as the induction base. Let us now provide the induction step. 
Suppose that \eqref{eq:dd2} holds for all $\bm\le\bk$ such that $2\le |\bm|\le s$ 
with some $s = 2, 3, \dots, |\bk|-1$. Let us prove that it holds for 
all  $\bn\le \bk, |\bn|=s+1$. 
Let $\bm, |\bm|=s$, be such that $|\bn-\bm|=1$. 
Thus \eqref{eq:dd2} for $u_\bm$ implies that  
\begin{align*}
\|u_\bn\|_{\plainL\infty(B_\rho)}\lesssim &\ \|\nabla u_\bm\|_{\plainL\infty(B_\rho)} \\
\lesssim &\ 
\max_{\bl\le\bk: |\bl|=2} \|\p_z^{\bl} u\|_{\plainL\infty(B_\nu)} 
+ \ell \|\nabla u\|_{\plainL\infty(B_\nu)}
+ \ell^2 \|u\|_{\plainL2(B_\nu)}
+ \sum\limits_{\bold0< \bq\le \bm}  
 \|g_\bq\|_{\plainL\infty(B_{\nu})}.
\end{align*}
Substituting this inequality in \eqref{eq:n2}, we obtain \eqref{eq:dd2} for $u_\bn$. 

Consequently, \eqref{eq:dd2} holds for all $\bm\le \bk$, $|\bm|\ge 2$. 
\end{proof}

\subsection{Dependence on the radius}
Our next step is to establish the dependence of the 
bounds for the solution of \eqref{eq:mod} on the radius of the ball. 
Precisely, consider a weak solution $u$ of the equation \eqref{eq:mod} in the ball 
$B(x_0, R\ell)$ with some $x_0\in\R^d$, $R >0$, $\ell\in (0, 1]$. 
Suppose that for some $\bk\in \mathbb N_0^{d}$,  
the coefficients $\ba$ and $b$ in \eqref{eq:mod} 
satisfy the bounds 
\begin{align}\label{eq:cd}
|\p_z^{\bm} \ba(x)| + 
|\p_z^{\bm} b(x)|\lesssim \ell^{-|\bm|}, \quad x\in B(x_0, R\ell),
\end{align}
for all $\bm\le \bk$, with 
constants potentially depending on $\bk$, $R$ and $x_0$, but not on $\ell$. 
We also assume that derivatives $\p_z^\bm g$ are bounded 
on $B(x_0, R\ell)$ for all $\bm\le \bk$ 
(we do not assume that they are bounded uniformly in $\ell$).
In the next theorem we obtain bounds for the derivatives $\p_z^{\bm} u$ 
with explicit dependence on the parameter $\ell\in (0, 1]$.

\begin{thm}\label{thm:reg2}  
Assume the conditions \eqref{eq:cd}, and let $u$ be a weak solution of the equation 
\eqref{eq:mod} in $B(x_0, R\ell)$.
Then for all $\bm\le \bk$ and all $r<R$ 
the derivatives $\p_z^{\bm} u$ belong to 
$\plainC1\big(\overline{B(x_0, r\ell)}\big)$. 
Furthermore, if $|\bm|+k\ge 1$, where $k = 0, 1$, then for all $\nu\in (r, R)$ we have 
\begin{align}\label{eq:reg2}
\|\nabla^k \p_z^{\bm} u\|_{\plainL\infty(B(x_0, r\ell))}
\lesssim \ell^{1-|\bm|-k}&\, \bigg[\ell\|u\|_{\plainL\infty(B(x_0, \nu\ell))}
+ \|\nabla u\|_{\plainL\infty(B(x_0, \nu\ell))} \notag\\
&\ \qquad+ \ell\sum_{\bold0< \bq\le \bm}
\ell^{|\bq|} \, \|\p_z^\bq g\|_{\plainL\infty(B(x_0, R\ell))}\bigg].
\end{align}
If $|\bm|\ge 2$, then also  
\begin{align}\label{eq:d2ell}
\|\p_z^{\bm} u\|_{\plainL\infty(B(x_0, r\ell))}
\lesssim 
\ell^{2-|\bm|}&\,\bigg[
\max_{\bl\le\bk: |\bl|=2} \|\p_z^{\bl} u\|_{\plainL\infty(B(x_0, \nu \ell))}
+ \|\nabla u\|_{\plainL\infty(B(x_0, \nu\ell))}\notag\\
&\ \qquad+  \|u\|_{\plainL\infty(B(x_0, \nu\ell))} 
+ \sum_{\bold0 < \bq\le \bm}
\ell^{|\bq|} \, \|\p_z^\bq g\|_{\plainL\infty(B(x_0, R\ell))}\bigg].
\end{align}
The implicit constants in \eqref{eq:reg2} and \eqref{eq:d2ell} may depend 
on the constants $r, \nu, R$, order $\bk$ of the derivative, and the 
constants in \eqref{eq:cd}. In particular, 
if the constants in \eqref{eq:cd} are independent of $x_0$, then 
so are the constants in \eqref{eq:reg2} and \eqref{eq:d2ell}.
\end{thm}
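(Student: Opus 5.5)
The plan is to rescale the ball $B(x_0, R\ell)$ to the fixed ball $B(0,R)$ and then read off everything from Theorem \ref{thm:modmod}. Define
\[
v(y) = u(x_0+\ell y),\quad y\in B(0,R),
\]
and similarly $\tilde\ba(y) = \ba(x_0+\ell y)$, $\tilde b(y) = b(x_0+\ell y)$, $\tilde g(y)=\ell^2 g(x_0+\ell y)$. Since $\Delta_y v = \ell^2(\Delta u)(x_0+\ell y)$ and $\nabla_y v = \ell(\nabla u)(x_0+\ell y)$, the equation \eqref{eq:mod} becomes
\[
-\Delta v + \ell\,\tilde\ba\cdot\nabla v + \ell^2\tilde b\, v = \tilde g,
\]
which is exactly \eqref{eq:modmod}. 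We write $y=(z',t')$ with the same splitting as $x=(z,t)$. Then $\p_{z'}^\bm\tilde\ba(y) = \ell^{|\bm|}(\p_z^\bm\ba)(x_0+\ell y)$, and this is $\lesssim 1$ by \eqref{eq:cd}; the same holds for $\tilde b$. Hence \eqref{eq:highreg} holds on $B(0,R)$ with a constant $M$ independent of $\ell$ (and of $x_0$, if the constants in \eqref{eq:cd} are). Moreover $\p_{z'}^\bq\tilde g = \ell^{2+|\bq|}(\p_z^\bq g)(x_0+\ell\,\cdot)$ is bounded. Theorem \ref{thm:modmod} therefore applies, and it gives the membership $\p_z^\bm u\in\plainC1$ on $\overline{B(x_0,r\ell)}$.

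To prove \eqref{eq:reg2}, apply \eqref{eq:dd1} to $v$ when $|\bm|\ge1$. For the derivatives, $\|\nabla^k\p_{z'}^\bm v\|_{\plainL\infty(B_r)} = \ell^{|\bm|+k}\|\nabla^k\p_z^\bm u\|_{\plainL\infty(B(x_0,r\ell))}$, and for the right-hand side, $\ell^2\|v\|_{\infty}=\ell^2\|u\|_\infty$, $\|\nabla v\|_\infty = \ell\|\nabla u\|_\infty$, and $\|\p^\bq_{z'}\tilde g\|_\infty = \ell^{2+|\bq|}\|\p_z^\bq g\|_\infty$. Dividing by $\ell^{|\bm|+k}$ and factoring out $\ell$ gives exactly
\[
\ell^{1-|\bm|-k}\Big[\ell\|u\|_\infty + \|\nabla u\|_\infty + \ell\sum_{\bold0<\bq\le\bm}\ell^{|\bq|}\|\p_z^\bq g\|_\infty\Big],
\]
since the $\plainC{1,\mu}$ norm controls both $k=0$ and $k=1$. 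The remaining case $\bm=\bold0$, $k=1$ is not covered by \eqref{eq:dd1} as stated. Here the claimed bound reads $\|\nabla u\|_{\plainL\infty(B(x_0,r\ell))}\lesssim \ell\|u\|_\infty+\|\nabla u\|_{\plainL\infty(B(x_0,\nu\ell))}$, which is trivial because $r<\nu$.

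To prove \eqref{eq:d2ell}, apply \eqref{eq:dd2} to $v$ in the same way. The term $\max_{|\bl|=2}\|\p_{z'}^\bl v\|$ becomes $\ell^2\max\|\p^\bl_z u\|$. The terms $\ell\|\nabla v\|$ and $\ell^2\|v\|$ become $\ell^2\|\nabla u\|$ and $\ell^2\|u\|$, and the $\tilde g$ terms become $\ell^{2+|\bq|}\|\p_z^\bq g\|$. Dividing by $\ell^{|\bm|}$ yields the factor $\ell^{2-|\bm|}$ in front of the bracket.

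The main point to watch, and the only real obstacle, is the bookkeeping of constants. Every implicit constant in Theorem \ref{thm:modmod} depends only on $M$, $r$, $\nu$, $R$, $\bk$ and $d$, and the scaling makes $M$ uniform in $\ell\in(0,1]$. This uniformity in $\ell$ is precisely what is needed; note that $\ell\le1$ is used so that the factor $\ell$ in front of $\tilde\ba$ and $\ell^2$ in front of $\tilde b$ match the form \eqref{eq:modmod}. One should also make sure that $u$ being a weak $\plainW{1,2}$ solution transfers to $v$, which is immediate under an affine change of variables.
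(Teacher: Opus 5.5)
Your proposal is correct and is essentially the paper's own proof: rescale via $u(x_0+\ell y)$ so that \eqref{eq:mod} takes the form \eqref{eq:modmod} on $B(0,R)$ with right-hand side $\ell^2 g(x_0+\ell y)$ and with rescaled coefficients whose $z$-derivatives are bounded uniformly in $\ell$ by \eqref{eq:cd}; then apply \eqref{eq:dd1} and \eqref{eq:dd2} and scale back, the case $|\bm|=0$, $k=1$ being trivial. Your bookkeeping of the powers of $\ell$ is accurate, and so is your tracking of how the constants depend on the data (uniformly in $\ell\in(0,1]$, and in $x_0$ when the constants in \eqref{eq:cd} are).
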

 
\begin{proof}
The idea of the proof follows an argument in \cite{FS2021}. Namely, we rescale the 
problem by introducing the function 
\begin{align*}
w(y) = u(x_0 + \ell y),\quad y\in B(0, R),
\end{align*}
and the coefficients 
\begin{align*}
\bal(y) =  \ba(x_0+\ell y),\quad 
\b(y) = b(x_0+\ell y), \quad \om(y) = g(x_0+ \ell y).
\end{align*}
After the scaling the equation \eqref{eq:mod} takes the form
\begin{align}\label{eq:mod_scale}
-\Delta w + \ell\ \bal\cdot\nabla w
+  \ell^2 \beta w = \ell^2 \om.
\end{align}
Note that by \eqref{eq:cd},
\begin{align*}
|\p_z^{\bm}\bal(y)| + 
|\p_z^{\bm}\b(y)|\lesssim 1, \quad y\in B(0, R)
\end{align*}
for all $\bm\le\bk$.

The equation \eqref{eq:mod_scale} has the form \eqref{eq:modmod} with  
$\ell^2 \om$ instead of $g$. Therefore we can apply Theorem 
\ref{thm:modmod} which leads to the following bounds for 
$\p_z^\bm w$:
\begin{align}\label{eq:d1}
\|\p_z^\bm w\|_{\plainC1(\overline{B_{r}})}
\lesssim \ell^2\, \|w\|_{\plainL\infty(B_{\nu})} + \|\nabla w\|_{\plainL\infty(B_{\nu})} 
+ \ell^2 \sum_{\bold0<\bq\le \bm}\, \|\p_z^\bq \om\|_{\plainL\infty(B_{\nu})}.
\end{align}
If $|\bm|\ge 2$, then 
\begin{align}\label{eq:d2}
\|\p_z^\bm w\|_{\plainC1(\overline{B_r})}
\lesssim 
\max_{\bl\le\bk: |\bl|=2} \|\p_z^{\bl} w\|_{\plainL\infty(B_\nu)}
&\ + \ell \|\nabla w\|_{\plainL\infty(B_\nu)}\notag\\
&\ + \ell^2 \|w\|_{\plainL\infty(B_\nu)} + 
\ell^2 \sum_{\bold0<\bq\le \bm}\, \|\p_z^\bq \om\|_{\plainL\infty(B_{\nu})}.
\end{align}
Since
\begin{align*}
 \nabla^k w_{\bm}(y) = \ell^{|\bm|+k}\big(\nabla^k 
 \p_z^{\bm} u\big)( x_0+\ell y), 
\,  
\om_\bm(y) = \ell^{|\bm|} \, \p_z^\bm g(x_0 + \ell y),\quad 
k = 0, 1,\, \bm\in\mathbb N_0^{3M},
\end{align*}
the bound \eqref{eq:d1} for $|\bm|\ge 1$ rewrites as \eqref{eq:reg2}. If $\bm = \bold0$ 
and $|k| = 1$, then \eqref{eq:reg2} is trivial. 

The bound \eqref{eq:d2ell} is obtained from \eqref{eq:d2} in the same way. 
\end{proof} 
 
Now we consider the equation \eqref{eq:mod} on an open set $\Om\subset \R^d$. Let  
\begin{align}\label{eq:dc}
\dist(x) = \dist_\Om(x):= \inf_{z\notin\Om} \, |x-z|
\end{align}
 be the distance from $x\in \R^d$ 
 to the complement of $\Om$. The function $\dist(x)$ is Lipschitz with Lipschitz constant one:
 \begin{align*}
 |\dist(x) - \dist(y)|\le |x-y|,\quad \textup{for all}\quad x, y\in\Om.
 \end{align*}
Instead of the condition \eqref{eq:cd} we assume that 
\begin{align}\label{eq:cdscale}
|\p^\bm_z \ba(x)| + |\p^\bm_z b(x)|\lesssim \ell(x)^{-|\bm|},\ 
\ell(x) = \min\{1, \dist(x)\}, \quad x\in\Om.
\end{align} 
for all $\bm\le\bk$, with some $\bk\in \mathbb N_0^d$. We also assume that for all $\bm\le\bk$ the derivatives 
$\p^\bm_z g$ are bounded uniformly on every compact subset of $\Om$.

\begin{thm}\label{thm:dist}
Assume the condition \eqref{eq:cdscale}, and let $u\in\plainW{1,2}(\Om)$ be a weak solution of \eqref{eq:mod} in $\Om$. 
Then for all $\bm\le \bk$  
the derivatives $\p_z^{\bm} u$ belong to 
$\plainC1(\Om)$. 
Furthermore, if $|\bm|+k\ge 1$, where $k = 0, 1$, then for all $x\in\Om$ and any $R<1$ we have 
\begin{align}\label{eq:reg2dist}
|\nabla^k \p_z^{\bm} u(x)|
\lesssim \ell(x)^{1-|\bm|-k}&\, \bigg[\ell(x)\|u\|_{\plainL\infty(B(x, R\ell(x)))}
+ \|\nabla u\|_{\plainL\infty(B(x, R\ell(x)))} \notag\\
&\ \qquad+ \ell(x)\sum_{\bold0< \bq\le \bm}
\ell(x)^{|\bq|} \, \|\p_z^\bq g\|_{\plainL\infty(B(x, R\ell(x)))}\bigg].
\end{align}
If $|\bm|\ge 2$, then also  
\begin{align}\label{eq:d2elldist}
|\p_z^{\bm} u(x)|
\lesssim 
\ell(x)^{2-|\bm|}&\,\bigg[
\max_{\bl\le\bk: |\bl|=2} \|\p_z^{\bl} u\|_{\plainL\infty(B(x, R\ell(x)))}
+ \|\nabla u\|_{\plainL\infty(B(x, R\ell(x)))}\notag\\
&\ \qquad+  \|u\|_{\plainL\infty(B(x, R\ell(x)))} 
+ \sum_{\bold0 < \bq\le \bm}
\ell(x)^{|\bq|} \, \|\p_z^\bq g\|_{\plainL\infty(B(x, R\ell(x)))}\bigg].
\end{align}
The implicit constants in \eqref{eq:reg2dist} and \eqref{eq:d2elldist} may depend 
on the constants $R$, order $\bm$ of the derivative, and the 
constants in \eqref{eq:cdscale}.   
\end{thm}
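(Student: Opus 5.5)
The plan is to reduce Theorem \ref{thm:dist} pointwise to Theorem \ref{thm:reg2}, applied on a small ball around each point $x\in\Om$ whose radius is comparable to $\ell(x)$. Fix $x_0\in\Om$, put $\ell:=\ell(x_0)=\min\{1,\dist(x_0)\}\in(0,1]$, and fix $R<1$. The ball $B(x_0,R\ell)$ lies in $\Om$, since $R\ell<\dist(x_0)$.

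The key observation is that the distance function is $1$-Lipschitz. Hence for $x\in B(x_0,R\ell)$ we have
\begin{align*}
\dist(x)\ge \dist(x_0)-R\ell\ge (1-R)\,\ell,
\end{align*}
and therefore $\ell(x)\ge \min\{1,(1-R)\ell\}=(1-R)\ell$. Condition \eqref{eq:cdscale} then gives
\begin{align*}
|\p_z^\bm\ba(x)|+|\p_z^\bm b(x)|\lesssim (1-R)^{-|\bm|}\,\ell^{-|\bm|}\quad\text{on } B(x_0,R\ell),
\end{align*}
which is exactly \eqref{eq:cd}. The constants depend on $R$ and $\bk$ but not on $x_0$ or $\ell$. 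Moreover, $\overline{B(x_0,R\ell)}$ is a compact subset of $\Om$, so the derivatives $\p_z^\bm g$ are bounded there, as Theorem \ref{thm:reg2} requires. The restriction of $u\in\plainW{1,2}(\Om)$ is a weak solution on this ball.

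We now invoke Theorem \ref{thm:reg2} on $B(x_0,R\ell)$ with some $r<\nu<R$, say $r=R/3$ and $\nu=2R/3$. This gives $\p_z^\bm u\in\plainC1(\overline{B(x_0,r\ell)})$. Since $x_0$ is arbitrary, $\p_z^\bm u\in\plainC1(\Om)$. Evaluating \eqref{eq:reg2} and \eqref{eq:d2ell} at the centre $x=x_0$ bounds the left-hand side by sup-norms over $B(x_0,\nu\ell)$ and $B(x_0,R\ell)$. Both balls are contained in $B(x_0,R\ell(x_0))$, so enlarging the norms to that ball yields \eqref{eq:reg2dist} and \eqref{eq:d2elldist}.

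The only point needing care, and the main potential obstacle, is uniformity: the implicit constants must not depend on $x_0$. The last sentence of Theorem \ref{thm:reg2} handles this, since the constants in \eqref{eq:cd} obtained above are independent of $x_0$. A minor subtlety is that the supremum in Theorem \ref{thm:reg2} is over a ball of radius $r\ell$ rather than at a point, but a pointwise value at the centre is trivially dominated by it. The case $\bm=\bold0$, $k=1$ is trivial, as in the proof of Theorem \ref{thm:reg2}.
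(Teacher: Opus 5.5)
Your proposal is correct and follows essentially the same route as the paper's proof. You use the $1$-Lipschitz property of the capped distance to verify \eqref{eq:cd} on $B(x_0,R\ell(x_0))$ with constants independent of $x_0$, apply Theorem \ref{thm:reg2} with intermediate radii ($r=R/3$, $\nu=2R/3$ where the paper takes $r=R/4$, $\nu=R/2$), and evaluate at the centre. Note that only the lower bound $\ell(x)\ge(1-R)\ell(x_0)$ is actually needed, as you observed.
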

 
\begin{proof} 
As the function $\dist(x)$, the capped distance $\ell(x)$ is also Lipschitz:  
$|\ell(x)-\ell(y)|\le |x-y|,\, x, y\in\Om$. Fix a point $x_0\in\Om$ and denote 
$\ell_0 := \ell(x_0)$. Therefore, for every $x\in B(x_0, R\ell_0)$ we have 
\begin{align*}
(1-R)\ell_0\le \ell(x)\le (1+R)\ell_0,
\end{align*}
i.e. $\ell(x)\asymp \ell_0$, $x\in B(x_0, R\ell_0)$, 
so that $\ba$ and $b$ satisfy the bound \eqref{eq:cd} with $\ell$ replaced with 
$\ell_0$. 
Use Theorem \ref{thm:reg2} for the ball 
$B(x_0, R\ell_0)$ and with $\nu = R/2, r = R/4$. Then 
the bound \eqref{eq:reg2} yields
\begin{align*}
|\nabla^k \p_z^{\bm} u(x_0)|\le &\ 
\|\nabla^k \p_z^{\bm} u\|_{\plainL\infty(B(x_0, R\ell_0/4))}\\
\lesssim &\ \ell_0^{1-|\bm|-k}\, \bigg[\ell_0\|u\|_{\plainL\infty(B(x_0, R\ell_0/2))}
+ \|\nabla u\|_{\plainL\infty(B(x_0, R\ell_0/2))} \notag\\
&\ \qquad+ \ell_0\sum_{\bold0< \bq\le \bm}
\ell_0^{|\bq|} \, \|\p_z^\bq g\|_{\plainL\infty(B(x_0, R\ell_0))}\bigg].
\end{align*}
This immediately entails the estimate \eqref{eq:reg2dist}. 

The bound \eqref{eq:d2elldist} is derived from 
\eqref{eq:d2ell} in the same way.    
\end{proof} 

\section{Regularity of $\psi$}\label{sect:regpsi}
 
\subsection{Bounds for first and second derivatives}  
Now we can apply the results of the previous section to the study of the 
function $\psi$. 
To get more precise results on the regularity of $\psi$ we use 
the representation  $\psi = e^F \phi$ with a 
function $F$ that satisfies 
$F, \nabla F\in \plainL\infty_{ {\rm loc}}(\R^{3N})$ and 
that captures the main singularity of the solution, so that 
$\phi$ is more regular than $\psi$. 
In the mathematical physics literature the function $e^F$ is often called 
a \textit{Jastrow factor}, see e.g. \cite{HaKlKoTe2012}.  
After the substitution the equation \eqref{eq:eigen} rewrites as 
\begin{align}\label{eq:jastrow}
-\Delta \phi - 2\nabla F\cdot \nabla\phi + (V- \Delta F - |\nabla F|^2-E)\phi = 0. 
\end{align}
More precisely, if $\psi\in \plainW{1, 2}_{\tiny {\rm loc}}(\R^{3N})$ is a weak 
solution of \eqref{eq:eigen}, 
then we also have $\phi\in \plainW{1, 2}_{\tiny {\rm loc}}(\R^{3N})$ 
and $\phi$ is a weak solution of \eqref{eq:jastrow}.  
 If one chooses $F$ to be 
\begin{align}\label{eq:oldF}
\sum_{j=1}^N \bigg(-\frac{Z}{2}|x_j| 
+ \frac{1}{4}\sum_{j<k\le N}|x_j-x_k|\bigg),
\end{align} 
as in \cite{Leray1984}, 
then $\nabla F$ is bounded,  
$\Delta F = V$, and hence all the coefficients in \eqref{eq:jastrow} are uniformly bounded. 
The  function $F$ itself however is not bounded. To remedy this, 
we take the following regularized variant of \eqref{eq:oldF}. 
Let the functions $\t, \z$ be as specified in \eqref{eq:sco} and \eqref{eq:sco1}.
From now on we assume that $F$ is given by 
\begin{align}\label{eq:jascut}
F(\bx) = -\frac{Z}{2}\sum_{j=1}^N |x_j|\t(|x_j|) 
+ \frac{1}{4}\sum_{1\le j<k\le N}|x_j-x_k| \t(|x_j-x_k|), 
\end{align} 
so that 
\begin{align}\label{eq:Finf}
F,\, \nabla F,\, \nabla^k (V-\Delta F)\in \plainL\infty(\R^{3N}),\quad \textup{for all}
\quad k = 0, 1, \dots.
\end{align} 
A different regularization (still satisfying \eqref{eq:Finf}) of the Jastrow factor was used in 
\cite{HOS2001}, \cite{FHOS2002}, \cite{FHOS2004} and \cite{FS2021}. 

Using the factorization 
\begin{align}\label{eq:phidef}
\psi(\bx) = e^{F(\bx)} \phi(\bx),
\end{align}
 it was shown in \cite[Theorem 1.2]{HOS2001} 
that for any positive $r, R$ such that $r < R$ the bound holds:
\begin{align*}
\|\nabla\psi\|_{\plainL\infty(B(\bx_0, r))} \lesssim \|\psi\|_{\plainL\infty(B(\bx_0, R))}. 
\end{align*}
The next fact  was proved in 
\cite[Proposition A.2]{FS2021}, see \cite[Lemma 4.1]{HearnSob2023} for a shorter proof: 

\begin{prop} \label{prop:grpsi}
Let $\psi\in \plainW{1, 2}_{\tiny {\rm loc}}(\R^{3N})$ be a weak solution of \eqref{eq:eigen}. 
Then $\psi \in \plainW{1,\infty}_{\tiny {\rm loc}}(\R^{3N})$ and for all numbers 
$r>0$ and $R>0$ such that $r < R$ we have  
\begin{align}\label{eq:grphi}
\|\phi\|_{\plainL\infty(B(\bx_0, r))} + \|\nabla\phi\|_{\plainL\infty(B(\bx_0, r))} 
\lesssim \|\psi\|_{\plainL2(B(\bx_0, R))},
\end{align}
\begin{align}\label{eq:grpsi}
\|\psi\|_{\plainL\infty(B(\bx_0, r))} + \|\nabla\psi\|_{\plainL\infty(B(\bx_0, r))} 
\lesssim \|\psi\|_{\plainL2(B(\bx_0, R))},
\end{align}
where the implicit constant does not depend on $\psi$ and $\bx_0\in \R^{3N}$, 
but depends on $r$ and $R$. 
\end{prop}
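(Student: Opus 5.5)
The plan is to use the Jastrow factorization \eqref{eq:phidef} with $F$ from \eqref{eq:jascut}, show that $\phi$ solves an elliptic equation of the form \eqref{eq:mod} with bounded coefficients, and then apply Proposition \ref{prop:reg} on balls of fixed radius.

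First we check that $\phi = e^{-F}\psi$ is a weak solution of \eqref{eq:jastrow}. Since $F,\nabla F\in\plainL\infty$ by \eqref{eq:Finf}, the map $\psi\mapsto e^{-F}\psi$ preserves $\plainW{1,2}_{\rm loc}$. Let $\eta\in\plainC\infty_0$ be a test function. We test the weak form of \eqref{eq:eigen} against $e^{-F}\eta$, which is a legitimate Lipschitz, compactly supported test function. Expanding the gradients then gives the weak form of \eqref{eq:jastrow}. The delicate point is the term $\Delta F$, which does not exist as a bounded function on its own. We handle it by writing $(V-\Delta F)$ as a single bounded function, which is the content of \eqref{eq:Finf}. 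Concretely, with the regularized $F$ one has $\Delta\big(\tfrac14|x_j-x_k|\t\big)=\tfrac{1}{2|x_j-x_k|}\t+\text{bounded}$, using that $\Delta_{x_j}+\Delta_{x_k}$ applied to $|x_j-x_k|$ gives $4/|x_j-x_k|$. Similarly $\Delta\big(-\tfrac Z2|x_j|\t\big)=-\tfrac{Z}{|x_j|}\t+\text{bounded}$. Hence the singular parts cancel against $V$ in the distributional sense, and the weak formulation contains only $\plainL\infty$ coefficients: $\ba=-2\nabla F$ and $b=V-\Delta F-|\nabla F|^2-E$.

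Next we apply Proposition \ref{prop:reg} to $\phi$ on $B(\bx_0,R)$ in dimension $d=3N$, with $g=0$ and $M$ depending only on $N,Z,E$ and $\t$, but not on $\bx_0$. Taking $\mu=0$, this gives
\begin{align*}
\|\phi\|_{\plainL\infty(B(\bx_0,r))}+\|\nabla\phi\|_{\plainL\infty(B(\bx_0,r))}\lesssim\|\phi\|_{\plainL2(B(\bx_0,R))}\lesssim\|\psi\|_{\plainL2(B(\bx_0,R))},
\end{align*}
where the last step uses $e^{-F}\in\plainL\infty$. This is \eqref{eq:grphi}. The constant is independent of $\bx_0$ because $M$ is. Finally, $\psi=e^F\phi$ and $\nabla\psi=e^F(\phi\nabla F+\nabla\phi)$, with $e^F$ and $\nabla F$ bounded. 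This yields \eqref{eq:grpsi} and, in particular, $\psi\in\plainW{1,\infty}_{\rm loc}$.

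The main obstacle is the justification of the weak formulation, namely the cancellation of $V$ and $\Delta F$. I would verify it by computing $\Delta F$ distributionally. Away from the coalescence sets the formula is classical. Near them, the Coulomb singularities $1/|x_j-x_k|$ and $1/|x_j|$ are locally integrable in $\R^{3N}$ (codimension 3), and $\Delta|y|=2/|y|$ holds in $\mathcal D'(\R^3)$ with no delta contribution. Therefore the identity $V-\Delta F\in\plainL\infty$ holds as distributions, and the integration by parts against $e^{-F}\eta$ is valid.
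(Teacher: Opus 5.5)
Your proposal is correct and takes the route the paper relies on (the paper gives no proof here and cites Fournais--S\o rensen and Hearnshaw--Sobolev, whose argument is this one): factor $\psi=e^F\phi$ with the regularized $F$ of \eqref{eq:jascut}, note that \eqref{eq:jastrow} has coefficients bounded uniformly in $\bx_0$ by \eqref{eq:Finf}, apply Proposition \ref{prop:reg} to $\phi$, and return to $\psi$ using the boundedness of $F$ and $\nabla F$. There is one arithmetic slip: since $(\Delta_{x_j}+\Delta_{x_k})|x_j-x_k|=4/|x_j-x_k|$, you get $\Delta\big(\tfrac14|x_j-x_k|\t\big)=\tfrac{1}{|x_j-x_k|}\t+\text{bounded}$, not $\tfrac{1}{2|x_j-x_k|}\t$, and this is exactly the coefficient needed to cancel the repulsion term in $V$.
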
 
 
As was shown in \cite{HOS2001} (see also Proposition \ref{prop:reg}), 
$\phi \in \plainC{1,\mu}_{\tiny{\rm loc}}(\R^{3N})$ for 
all $\mu\in (0,1)$.  
This conclusion was improved in \cite{FHOS2005} by extracting yet another Jastrow factor.  Precisely, let 
\begin{align*}
F_3(\bx) = \frac{2-\pi}{12\pi} Z \sum_{1\le j < k\le N} 
x_j\cdot x_k \ln\big(|x_j|^2 + |x_k|^2\big) \t(|x_j|)\t(|x_k|). 
\end{align*}
The function $F_3$ belongs to $\plainW{2,p}(\R^{3N})$ for all 
$p < \infty$, but not for $p=\infty$. However, $F_3$ is $\plainC\infty$ on the 
domain
\begin{align}\label{eq:away}
P_r =\{\bx\in\R^{3N}:  |x_j|>2r, \quad \textup{for all}\quad j = 1, 2, \dots, N\},
\end{align}
for any $r >0$.
We use the following regularity result established in \cite[Remark 1.6]{FHOS2005}:

\begin{prop}\label{prop:f3}
The function 
\begin{align*}
\tilde\phi(\bx) := e^{-F_3(\bx)} \, \phi(\bx) =  e^{-F(\bx) - F_3(\bx)} \psi(\bx)
\end{align*}
belongs to $\plainW{2, \infty}(\R^{3N})$ and for each pair $r, R$, $0 < r < R$, and each $\bx_0\in\R^{3N}$ we have the bound  
\begin{align*}
\|\tilde\phi\|_{\plainW{2, \infty}(B(\bx_0, r))}\lesssim \| \psi\|_{\plainL2(B(\bx_0, R))}, 
\end{align*}
with a constant independent of $\psi$ and $\bx_0$. 
\end{prop}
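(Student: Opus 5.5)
The plan is to reproduce the argument of \cite[Remark 1.6]{FHOS2005}, adapted to the regularized Jastrow factor \eqref{eq:jascut}, and to work on balls of fixed size so that the constants do not depend on $\bx_0$. The substitution $\psi = e^{F+F_3}\tilde\phi$ turns \eqref{eq:eigen} into
\begin{align*}
-\Delta\tilde\phi - 2\nabla(F+F_3)\cdot\nabla\tilde\phi + \big(V-\Delta F - \Delta F_3 - |\nabla (F+F_3)|^2 - E\big)\tilde\phi = 0.
\end{align*}
By \eqref{eq:Finf}, $V-\Delta F$ is bounded. We have $\nabla F_3\in\plainL\infty$: its worst part is $x_k\ln(|x_j|^2+|x_k|^2)$, which is bounded near the origin and cut off by $\t$. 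However, $\Delta F_3$ is only logarithmically singular, at points where two electrons meet the nucleus. The specific coefficient $\frac{2-\pi}{12\pi}Z$ is designed so that $\Delta F_3$ cancels the worst non-smooth part of the source term generated by the cross terms $\nabla F\cdot\nabla F$ (those involving $\frac{x_j}{|x_j|}\cdot\frac{x_k}{|x_k|}$ for the nuclear pair) and by $\Delta F$ near three-body coalescence at the nucleus.

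\textbf{Step 1.} First apply Proposition~\ref{prop:grpsi}: $\phi$ and $\nabla\phi$ are bounded on $B(\bx_0,r')$ by $\|\psi\|_{\plainL2(B(\bx_0,R))}$ for any $r<r'<R$. Since $F_3$ is Lipschitz and bounded on compacts, $\tilde\phi$ and $\nabla\tilde\phi$ obey the same bounds. Choose the intermediate radii so that each later step loses only a fixed fraction of the ball.

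\textbf{Step 2.} Rewrite the equation as $-\Delta\tilde\phi = G$, where
\begin{align*}
G = 2\nabla(F+F_3)\cdot\nabla\tilde\phi - \big(V-\Delta F - \Delta F_3 - |\nabla (F+F_3)|^2 - E\big)\tilde\phi.
\end{align*}
The goal is to show that $G$ lies in a space $X$ such that $\Delta^{-1}X\subset\plainW{2,\infty}$ locally. Plain $\plainL\infty$ does not suffice; a Dini/Hölder-type condition is needed, or else a decomposition of $G$ into a Hölder part plus explicit terms whose Newtonian potentials are computed by hand to lie in $\plainW{2,\infty}$. The natural choice is to split $G$ into three parts.
\begin{itemize}
\item A $\plainC{0,\a}$ part. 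It has $\plainC{2,\a}$ potential by Schauder estimates.
\item Terms of the form $h(\bx)\,\frac{x_j}{|x_j|}$ or $h\,\frac{x_j-x_k}{|x_j-x_k|}$, with $h$ Hölder. Homogeneous functions of degree zero times smooth functions have potentials of the form $|x|^2\times(\text{bounded angular part})$ plus a remainder, and these lie in $\plainW{2,\infty}$ provided the angular part has no spherical-harmonic component that produces a logarithm.
\item The logarithmic terms. These are exactly the ones whose log-producing components are cancelled by $\Delta F_3$.
\end{itemize}

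\textbf{Step 3.} Iterate: first deduce $\tilde\phi\in\plainC{1,\a}$ from Proposition~\ref{prop:reg}, so that $\nabla\tilde\phi$ is Hölder. Then expand $\nabla F\cdot\nabla\tilde\phi$ near each singular set into (degree-zero homogeneous)$\times$(Hölder) terms, and localize with the cut-offs $\t,\z$.

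\textbf{Main obstacle.} The hardest step is Step~2 at the three-body points $x_j=x_k=0$. There one must verify that, after subtracting $\Delta F_3$, the remaining degree-zero homogeneous source terms are orthogonal to the harmonics that would create $|x|^2\log|x|$ terms in the potential. The first thing to try is to compute the degree-2 harmonic projection of $\frac{x_j}{|x_j|}\cdot\frac{x_k}{|x_k|}$ on $\mathbb S^5$ and match it against $\Delta$ of $x_j\cdot x_k\ln(|x_j|^2+|x_k|^2)$; this matching fixes the constant $\frac{2-\pi}{12\pi}$.

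Finally, uniformity in $\bx_0$ follows because all coefficients have bounds independent of $\bx_0$, and all estimates are local and controlled by $\|\psi\|_{\plainL2(B(\bx_0,R))}$ through Step~1.
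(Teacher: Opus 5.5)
The paper does not prove this proposition; it imports it from \cite[Remark 1.6]{FHOS2005}. Your outline follows the general philosophy of that work: substitute $\psi=e^{F+F_3}\tilde\phi$, split the source into a H\"older part and degree-zero homogeneous pieces, and use $F_3$ to remove the resonant degree-two spherical harmonic at $x_j=x_k=0$. As written, however, it is not a proof, because everything that makes the statement true is deferred.

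\textbf{The Step~2 lemma is unproved in the form needed.} You claim that the Newtonian potential of (degree-zero homogeneous)$\times$(H\"older) lies in $\plainW{2,\infty}$ when the resonant component vanishes. Your heuristic for this assumes an isolated singular point and a smooth angular profile. Here the singular sets are intersecting linear subspaces ($x_j=0$, $x_j=x_k$, $x_j=x_k=0$, $x_j=x_k=x_l$, \dots), and the angular profiles on $\mathbb S^5$ are themselves discontinuous. Moreover, the H\"older factors $\tilde\phi,\nabla\tilde\phi$ must be frozen at a base point moving along the singular set, with bounds uniform in that point. This is the technical core, and it is not addressed.

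\textbf{Electron triple coalescence is omitted.} At $x_j=x_k=x_l$, $|\nabla F|^2$ contains $\frac18(\cos A+\cos B+\cos C)$ for the triangle $x_j,x_k,x_l$. You must show that its resonant component vanishes, otherwise $F_3$ would need three-electron terms. It does vanish, by permutation symmetry: in Jacobi coordinates the rotation-invariant degree-two harmonics carry the two-dimensional irreducible representation of $S_3$, which has no nonzero invariant vector.

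\textbf{The mechanism you describe is wrong in ways that would derail your proposed computation.}
\begin{itemize}
\item $\Delta F_3$ is not logarithmically singular. Since $x_j\cdot x_k$ is harmonic on $\R^6$,
\[
\Delta\big(x_j\cdot x_k\ln(|x_j|^2+|x_k|^2)\big)=\frac{16\,x_j\cdot x_k}{|x_j|^2+|x_k|^2}.
\]
This is a bounded degree-zero function whose angular profile is exactly the resonant harmonic. Only the mixed $x_j$--$x_k$ second derivatives of $F_3$ carry a logarithm. Boundedness of $\Delta F_3$ is also what lets Proposition~\ref{prop:reg} apply in your Step~3, which your claim would contradict.
\item Nothing singular comes from $\Delta F$, since $V-\Delta F$ is smooth by \eqref{eq:Finf}.
\item $\frac{x_j}{|x_j|}\cdot\frac{x_k}{|x_k|}$ does not occur in $|\nabla F|^2$. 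The term to be cancelled is $-\frac Z4\,g_{jk}$, where $g_{jk}=\frac{x_j}{|x_j|}\cdot\frac{x_j-x_k}{|x_j-x_k|}+\frac{x_k}{|x_k|}\cdot\frac{x_k-x_j}{|x_k-x_j|}$. This is the sum of the cosines of the angles at $x_j$ and $x_k$ in the triangle with vertices $0,x_j,x_k$.
\end{itemize}
Projecting $g_{jk}$ onto the resonant harmonic on $\mathbb S^5$ gives $\frac{16(2-\pi)}{3\pi}\,x_j\cdot x_k$. The condition $16C_0=\frac Z4\cdot\frac{16(2-\pi)}{3\pi}$ then yields exactly $C_0=\frac{2-\pi}{12\pi}Z$. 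Projecting your function $\frac{x_j\cdot x_k}{|x_j||x_k|}$ instead gives $\frac{64}{9\pi}\,x_j\cdot x_k$, which has the opposite sign and cannot produce the constant $\frac{2-\pi}{12\pi}$.
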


One cannot claim that 
$\phi\in \plainW{2, \infty}$.  
Nevertheless, the following, slightly weaker fact is true.  

\begin{lem} \label{lem:f3}
For each $j = 1, 2, \dots, N$, the derivative $\p_{x_j}^m \phi(\bx)$, $|m|=2$, 
$m\in \mathbb N_0^3$,   
is globally bounded, and 
for each pair $r, R$, $0 < r < R$, and every $\bx \in\R^{3N}$ we have the bound  
\begin{align}\label{eq:nabla2phi}
\|\p_{x_j}^m\phi\|_{\plainL\infty(B(\bx, r))}\lesssim \|\psi\|_{\plainL2(B(\bx, R))}, 
\end{align}
with a constant independent of $\psi$ and $\bx$. 

Furthermore, suppose that $\bx\in P_r$.  
Then for all $\bm\in \mathbb N_0^{3N}$ such that $|\bm|=2$,  
we have 
\begin{align}\label{eq:deraway}
\|\p_{\bx}^{\bm}\, \phi\|_{\plainL\infty(B(\bx, r))}
\lesssim \|\psi\|_{\plainL2(B(\bx, R))}, 
\end{align}
where the constant is independent of $\psi$ and $\bx\in P_r$.
\end{lem}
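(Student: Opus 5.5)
We write $\phi = e^{F_3}\tilde\phi$ and differentiate by the Leibniz rule. Everything then reduces to two facts. First, Proposition \ref{prop:f3} controls $\tilde\phi$ in $\plainW{2,\infty}$. Second, the derivatives of $F_3$ that actually appear are bounded: only pure $x_j$-derivatives of order at most two in the first claim, and all derivatives of order at most two on $P_r$ in the second. For $|m|=2$,
\begin{align*}
\p_{x_j}^m\phi = e^{F_3}\Big(\p_{x_j}^m\tilde\phi + \sum_{0<q\le m}\tbinom{m}{q}\,\big(\p_{x_j}^{q}\ldots\big)\Big),
\end{align*}
where the remaining terms are products of derivatives of $F_3$ of total order $\le 2$ with $\tilde\phi$ or $\nabla\tilde\phi$. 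Since $\|\tilde\phi\|_{\plainW{2,\infty}(B(\bx,r))}\lesssim\|\psi\|_{\plainL2(B(\bx,R))}$, it suffices to show that $F_3$, $\nabla_{x_j}F_3$ and $\p^m_{x_j}F_3$ with $|m|=2$ are globally bounded.

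The key computation is a pairwise analysis of $F_3$. Its terms are $f(x_j,x_k)=x_j\cdot x_k\,\ln(|x_j|^2+|x_k|^2)\,\t(|x_j|)\t(|x_k|)$. The cut-offs restrict to $|x_j|,|x_k|\le1$, and there $f$ and its first derivatives are bounded, since they behave like $\rho^2\ln\rho$ and $\rho\ln\rho$ with $\rho=(|x_j|^2+|x_k|^2)^{1/2}$. The only dangerous second derivatives are the mixed ones $\p_{x_j}\p_{x_k}$. These contain $\delta\ln\rho^2$, coming from differentiating $x_j\cdot x_k$ once in each variable, which is unbounded.

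For pure derivatives in $x_j$, the factor $x_j\cdot x_k$ is linear in $x_j$, so $\p_{x_j}^2$ never falls twice on it. Write $L=\ln(|x_j|^2+|x_k|^2)$. The possible terms are $(x_j\cdot x_k)\,\p^2_{x_j}L\sim |x_j||x_k|/\rho^2$, which is bounded, and $x_k\,\p_{x_j}L\sim |x_k||x_j|/\rho^2$, also bounded. Terms with derivatives on the cut-off are smooth and bounded on their support. Hence $\p^m_{x_j}F_3\in\plainL\infty$, and \eqref{eq:nabla2phi} follows. Global boundedness of $\p^m_{x_j}\phi$ then follows by taking the supremum over $\bx$ and using $\|\psi\|_{\plainL2}=1$.

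For \eqref{eq:deraway} we use that $\bx\in P_r$ gives $|x_j|>2r$ for all $j$. On $B(\bx,r)$ we then have $|y_j|>r$, so $\rho\ge r$ for every pair, and every term of $F_3$ is $\plainC\infty$ there with derivatives bounded by constants depending on $r$. In fact we only need derivatives up to order two. The same Leibniz expansion, now with arbitrary $\bm$, $|\bm|=2$, yields the bound.

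The step needing the most care is checking that no mixed $x_j x_k$ derivative of the logarithmic factor enters the first claim. The cut-off factors $\t(|x_j|)\t(|x_k|)$ cannot produce any singularity, since their derivatives are supported away from the origin.
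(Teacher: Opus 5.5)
Your proposal is correct and follows essentially the same route as the paper. You factor $\phi = e^{F_3}\tilde\phi$, invoke Proposition~\ref{prop:f3} for $\tilde\phi$, and check two things: the pure $x_j$-derivatives of $F_3$ up to order two are globally bounded, and on $P_r$ all derivatives up to order two are bounded by constants depending on $r$. Your explicit pairwise computation, which isolates the $\delta_{ab}\ln(|x_j|^2+|x_k|^2)$ term in the mixed derivatives as the only unbounded contribution, simply fills in what the paper dismisses as ``by inspection''.
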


\begin{proof} 
By inspection, derivatives of the function $F_3(\bx)$ with respect to $x_j$ of orders 
one and two are uniformly bounded. Therefore, by Proposition \ref{prop:f3},  
$\p_{x_j}^m \phi = 
\p_{x_j}^m\big(e^{F_3}\tilde\phi\big)\in \plainL{\infty}(\R^{3N})$, as claimed, and   
the bound \eqref{eq:nabla2phi} holds.  

Under the assumption $\bx\in P_r$  
the derivatives $\p_{\bx}^{\bm}\,F_3(\bx)$, 
with $|\bm|\le 2$ are bounded by 
$|\ln r|\wedge 1$. Now the bound \eqref{eq:deraway} 
for the function $\phi = e^{F_3}\tilde\phi$  
follows from Proposition \ref{prop:f3} again.
\end{proof}

\subsection{Higher order derivatives} \label{subsect:hod}
Relying on the above estimates and on the results of Sect. \ref{sect:reg}, now we shall 
establish some bounds for higher order derivatives. 
Originally, such bounds were obtained 
by S. Fournais and T.\O. S\o rensen in \cite{FS2021}. 

Introduce the coalescence sets
\begin{align}\label{eq:sigmaj}
\Sigma_j = \bigg\{\bx\in\R^{3N}: |x_j| 
\prod_{k\not = j}|x_k-x_j| = 0\bigg\},\quad 
\Pi = \bigcup_{j=1}^N\, \Sigma_j.
\end{align}
Then 
\begin{align}\label{eq:dq}
\dc_j(\bx) = \min\{|x_j|, \frac{1}{\sqrt 2}
|x_j-x_k|: \ &\ k\not = j\},\ j = 1, 2, \dots, N, \notag\\
&\  \textup{and} 
\quad \DC(\bx) = \min_{1\le j\le N} \, \dc_j(\bx), 
\end{align}
define the distances from $\bx$ to $\Sigma_j$ and to $\Pi$ respectively. As in Sect. \ref{sect:reg} 
introduce also the capped distances 
\begin{align}\label{eq:lj}
\l_j(\bx) := \min\{1, \dc_j(\bx)\},\quad \L(\bx) = \min_{1\le j\le N}\, \l_j(\bx).  
\end{align}
Along with the  distances $\l_j(\bx)$ and $\L(\bx)$, introduce also 
\begin{align}\label{eq:tildedist}
\tilde\l_j(\bx) = \min\{1, \frac{1}{\sqrt 2} \, |x_j-x_k|: \ k\not = j\},
\quad \tilde\L(\bx) = \min_{1\le j\le N}  \, \tilde\l_j(\bx),
\end{align}
so that $\l_j(\bx) = \tilde\l_j(\bx)\wedge |x_j|$.
From now on we use the following notation:
\begin{align}\label{eq:locl2}
M(\bx, R) = M(\bx, R; \psi) = \|\psi\|_{\plainL2(B(\bx, R))},\quad \bx\in \R^{3N},\ R>0.
\end{align}
The following proposition is a consequence of \cite[Theorem 1.1]{FS2021} and Proposition~\ref{prop:grpsi}.  

\begin{prop}\label{prop:FS} 
For all $\bx\notin\Sigma_j$, $j = 1, 2, \dots, N$,  and any $R>0$ 
we have the bound 
\begin{align}\label{eq:FS}
|\p_{x_j}^{m}\psi(\bx)|\lesssim \big(1+\l_j(\bx)^{1-|m|}\big)\,  M(\bx, R; \psi), 
\, \quad \textup{for all}\quad m\in\mathbb N_0^{3},
\end{align}
with constants independent of $\psi$.  

Furthermore, for all $\bx\notin \Pi$, and any 
$R>0$ we have the bound
\begin{align*}
|\p_{\bx}^{\bm}\psi(\bx)|
\lesssim \big(1+ \L(\bx)^{1-|\bm|}\big)\,  M(\bx, R; \psi), 
\, \quad \textup{for all}\quad \bm\in\mathbb N_0^{3N},
\end{align*}
with constants independent of $\psi$.  
\end{prop}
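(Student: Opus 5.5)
We deduce Proposition \ref{prop:FS} from Theorem \ref{thm:dist}, applied to the Jastrow-transformed function $\phi$ of \eqref{eq:phidef}, together with Proposition \ref{prop:grpsi} and Lemma \ref{lem:f3}. (Alternatively, the first bound is essentially \cite[Theorem 1.1]{FS2021}; the plan below only records how it would follow from the tools already at hand.)

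\emph{First bound, \eqref{eq:FS}.} Fix $j$ and take $\Om = \R^{3N}\setminus\Sigma_j$. Split $\bx = (z,t)$ with $z = x_j\in\R^3$, so that $\p_z^m = \p_{x_j}^m$. In \eqref{eq:jastrow} the coefficients are $\ba = -2\nabla F$ and $b = V-\Delta F-|\nabla F|^2-E$.

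The key point is that these coefficients depend on $x_j$ only through the terms $|x_j|\t(|x_j|)$ and $|x_j-x_k|\t(|x_j-x_k|)$, which are singular only on $\Sigma_j$. Hence
\[
|\p_{x_j}^m\ba| + |\p_{x_j}^m b|\lesssim \l_j(\bx)^{-|m|},
\]
where $\l_j$ is the capped distance $\min\{1,\dist(\bx,\Sigma_j)\}$. This is exactly condition \eqref{eq:cdscale} for $\Om$, with $g=0$.

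Theorem \ref{thm:dist} with $k=0$, $|m|\ge 1$, then gives
\[
|\p_{x_j}^m\phi(\bx)|\lesssim \l_j^{1-|m|}\Big(\|\phi\|_{\plainL\infty(B)}+\|\nabla\phi\|_{\plainL\infty(B)}\Big),
\]
where $B=B(\bx,R'\l_j(\bx))$. Since $\l_j\le 1$, we have $B\subset B(\bx, R/2)$ once $R'\le R/2$, so \eqref{eq:grphi} bounds the bracket by $M(\bx,R)$. For $R\ge 2$ we instead replace the ball by a smaller one and use monotonicity of $M$.

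To return to $\psi = e^F\phi$ we use Leibniz's rule. Every $x_j$-derivative of $F$ of order $p$ is bounded by $\l_j^{1-p}$ on $B$, because these derivatives are homogeneous of degree $1-p$ near $\Sigma_j$ and $F$, $\nabla F$ are bounded. Hence each term of $\p^m_{x_j}(e^F\phi)$ is bounded by $\l_j^{1-|m|}M$ times powers of $\l_j\le1$. Since $\l_j$ is bounded below by a multiple of $\l_j(\bx)$ on $B$, this gives \eqref{eq:FS} for $|m|\ge1$. The case $m=0$ is \eqref{eq:grpsi}.

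\emph{Second bound.} Take $\Om=\R^{3N}\setminus\Pi$, all of $\bx$ as the variable $z$, and $\L$ as the capped distance. Every derivative of $F$ in all variables is now bounded by $\L^{1-|\bm|}$, so \eqref{eq:cdscale} holds and the same argument applies verbatim.

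\emph{Main obstacle.} The only real checks are that the distances $\dc_j$ and $\DC$ in \eqref{eq:dq} really are the Euclidean distances to $\Sigma_j$ and $\Pi$, which explains the $1/\sqrt2$ normalisation, and that each singular term of $F$ has derivatives controlled by this distance. The cutoff $\t$ only adds bounded smooth terms. One subtlety to handle is that for $x_j$-derivatives the terms $|x_k-x_l|$ with $k,l\ne j$ are singular away from $\Sigma_j$. They do not depend on $x_j$, however, so they contribute nothing to $\p_{x_j}^m$ of the coefficients.
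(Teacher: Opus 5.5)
Your argument is correct, but it takes a different route from the paper. The paper gives no argument of its own for Proposition~\ref{prop:FS}: it quotes \cite[Theorem 1.1]{FS2021}, which gives derivative bounds for $\psi$ in terms of a local norm of $\psi$, and then converts the right-hand side to $M(\bx,R;\psi)$ via Proposition~\ref{prop:grpsi}.

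You instead rebuild the result from the paper's own tools:
\begin{enumerate}
\item Theorem~\ref{thm:dist} applied to $\phi$ on $\R^{3N}\setminus\Sigma_j$ (resp.\ $\R^{3N}\setminus\Pi$), with the hypothesis \eqref{eq:cdscale} supplied by \eqref{eq:Finf} and Lemma~\ref{lem:nablajas};
\item the first-order estimate \eqref{eq:reg2dist}, which needs only $\|\phi\|_{\plainL\infty}$ and $\|\nabla\phi\|_{\plainL\infty}$, and these are controlled by \eqref{eq:grphi};
\item Leibniz's rule with \eqref{eq:ef}, \eqref{eq:ef1}.
\end{enumerate}
This is the template of the paper's proofs of Theorem~\ref{thm:regphi} and Theorem~\ref{thm:varphider}, run with \eqref{eq:reg2dist} in place of \eqref{eq:d2elldist}.

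Your version buys two things. First, self-containedness: the only external input is Proposition~\ref{prop:grpsi}. In particular Lemma~\ref{lem:f3} (the second-derivative result of \cite{FHOS2005}), which you list at the start, is never used and is not needed. Second, transparency: it shows that the $\l_j^{1-|m|}$ singularity of $\psi$ comes entirely from the Jastrow factor $e^F$, since $\phi$ itself satisfies this bound and in fact the better one of Theorem~\ref{thm:regphi}. The paper's citation buys brevity and credits the original source.

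Two cosmetic points. The case distinction $R\ge 2$ is unnecessary: take the radius in Theorem~\ref{thm:dist} to be any $R'<\min\{1,R\}$. Then $B(\bx,R'\l_j(\bx))\subset B(\bx,R')$ because $\l_j\le 1$, and \eqref{eq:grphi} with $r=R'$ finishes. Also, the Leibniz step only uses pointwise bounds at $\bx$, so the remark that $\l_j$ is comparable to $\l_j(\bx)$ on the ball is superfluous.
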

 
We need however, a more precise, new bound for the function $\phi$ defined in \eqref{eq:phidef}.   

\begin{thm}\label{thm:regphi} 
For all 
$\bx\notin\Sigma_j$, $j = 1, 2, \dots, N$, and any $R>0$ 
we have the bound 
\begin{align}\label{eq:regphi}
|\p_{x_j}^m \,\phi(\bx)|
\lesssim \big(1+ \l_j(\bx)^{2-|m|}\big) M(\bx, R; \psi),\quad m\in\mathbb N_0^{3},
\end{align}
with a constant independent of $\phi$, but dependent on $R$. 

Furthermore, let $\bx\in P_r\setminus\Pi$ with an arbitrary $r>0$.  
Then  
\begin{align}\label{eq:secondmix}
|\p_{\bx}^{\bm}\phi(\bx)|
\lesssim \big(1+ \tilde\L(\bx)^{2-|\bm|}\big)\,  M(\bx, R; \psi), 
\, \quad \textup{for all}\quad \bm\in\mathbb N_0^{3N}.
\end{align}
The constants in \eqref{eq:secondmix} are 
independent of $\phi$, but dependent on $r$ and $R$.  
\end{thm}

The main tool in the proof is Theorem 
\ref{thm:dist} applied to equation \eqref{eq:jastrow} (with the function 
\eqref{eq:jascut}), which is satisfied by the function $\phi$. In order to check that 
the coefficients in the equation \eqref{eq:jastrow} satisfy 
the conditions of Theorem \ref{thm:dist}, we start with the following property of 
the Jastrow factor.

\begin{lem}\label{lem:nablajas}
Let $F$ be the function \eqref{eq:jascut}. Then for all 
$\bx\notin\Sigma_j$ and all 
$m\in\mathbb N_0^3$ we have
\begin{align}\label{eq:nablajas}
\big|\p_{x_j}^m \nabla F(\bx)\big| 
+ \big|\p_{x_j}^m |\nabla F(\bx)|^2\big|\lesssim \dc_j(\bx)^{-|m|}.
\end{align}
If $|m|\ge 1$, then  
\begin{align}\label{eq:ef}
\big|\p_{x_j}^m e^{F(\bx)}\big| \lesssim  \dc_j(\bx)^{1-|m|}.
\end{align} 
Furthermore, for all $\bm\in \mathbb N_0^{3N}$ and all $\bx\notin\Pi$, we have 
\begin{align}\label{eq:nablajas1}
\big|\p_{\bx}^\bm \nabla F(\bx)\big| 
+ \big|\p_{\bx}^\bm |\nabla F(\bx)|^2\big|\lesssim \DC(\bx)^{-|\bm|}.
\end{align}
If $|\bm|\ge 1$, then  
\begin{align}\label{eq:ef1}
\big|\p_{\bx}^\bm e^{F(\bx)}\big| \lesssim  \DC(\bx)^{1-|\bm|}.
\end{align} 

\end{lem}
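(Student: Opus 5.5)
The argument is a direct computation with the explicit formula \eqref{eq:jascut}. Write $F = \sum_{l} f_l(x_l) + \sum_{l<k} g(x_l-x_k)$ with
\[
f(x) = -\tfrac{Z}{2}|x|\t(|x|), \qquad g(x) = \tfrac14 |x|\t(|x|).
\]
Both $f$ and $g$ are smooth on $\R^3\setminus\{0\}$. Both are globally Lipschitz, with bounded gradient. Since $\t(|x|) = 0$ for $|x|>1$, each vanishes identically outside the unit ball. Near the origin they coincide with multiples of $|x|$, which is homogeneous of degree one. Hence for every multi-index $m$ with $|m|\ge 1$,
\begin{align*}
|\p^m f(x)| + |\p^m g(x)| \lesssim |x|^{1-|m|}, \quad x\ne 0 .
\end{align*}
Moreover the derivatives vanish for $|x|>1$. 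For $|x|\le 1$ we also have $|x|^{1-|m|}\lesssim |x|^{-|m|+1}$, so no further adjustment is needed.

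\emph{Bounds \eqref{eq:nablajas} and \eqref{eq:nablajas1}.} Apply $\p_{x_j}^m$ to $\nabla F$. Only terms involving $x_j$ survive, namely $f(x_j)$ and $g(x_j-x_k)$ for $k\ne j$, and each of these is differentiated $|m|+1$ times in total. Each resulting term is bounded by $|x_j|^{-|m|}$ or $|x_j-x_k|^{-|m|}$. All of these are $\lesssim \dc_j(\bx)^{-|m|}$ by the definition \eqref{eq:dq}; the factor $1/\sqrt2$ only affects constants. For terms of $\nabla F$ not involving $x_j$, the derivative $\p_{x_j}^m$ with $|m|\ge1$ kills them, while for $m=0$ they are bounded since $\nabla F\in\plainL\infty$.

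For $|\nabla F|^2$ we use the Leibniz rule:
\[
\p^m_{x_j}|\nabla F|^2 = \sum_{q\le m}\binom{m}{q}\,\p^q_{x_j}\nabla F\cdot \p^{m-q}_{x_j}\nabla F .
\]
Each product is bounded by $\dc_j^{-|q|}\dc_j^{-|m-q|}=\dc_j^{-|m|}$.

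The mixed bound \eqref{eq:nablajas1} is obtained identically. The derivative $\p_\bx^\bm$ distributes over the variables of each term $f(x_l)$ or $g(x_l-x_k)$. Each such term is bounded by a power of $|x_l|$ or $|x_l-x_k|$, and these are all $\ge \DC(\bx)$ up to constants.

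\emph{Bounds \eqref{eq:ef} and \eqref{eq:ef1}.} Use the Fa\`a di Bruno formula. For $|m|\ge1$, $\p^m_{x_j}e^F$ equals $e^F$ times a linear combination of products
\[
\p^{m_1}_{x_j}F\cdots\p^{m_s}_{x_j}F, \qquad m_1+\dots+m_s=m,\ |m_i|\ge1 .
\]
The factor $e^F$ is bounded by \eqref{eq:Finf}. Writing $\p^{m_i}F = \p^{m_i - e}\nabla F$ for a unit vector $e$, the bound \eqref{eq:nablajas} gives $|\p^{m_i}_{x_j}F|\lesssim \dc_j^{1-|m_i|}$. A product is therefore bounded by $\dc_j^{\,s-|m|}$.

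This exponent can be positive, so the only point needing care is that $\dc_j$ may be large. Here we use that $\p^{m_i}_{x_j}F$ vanishes near $x_j$-sets where all relevant distances exceed $1$. More simply, we have the uniform bounds $|\p^{m_i}F|\lesssim \min\{1,\dc_j\}^{1-|m_i|}$, which follow because derivatives of order $\ge2$ of $f,g$ vanish for $|x|>1$. With these, the product is bounded by $\min\{1,\dc_j\}^{s-|m|}\le \min\{1,\dc_j\}^{1-|m|}$ since $s\ge1$.

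Finally, $\min\{1,\dc_j\}^{1-|m|}\lesssim \dc_j^{1-|m|}$ holds when $\dc_j\le1$. When $\dc_j>1$ it holds for $|m|=1$. For $|m|\ge2$ and $\dc_j>1$, every $\p^{m_i}F$ with $|m_i|\ge2$ vanishes identically, since all arguments have modulus $\ge \sqrt2>1$. The only surviving term is then $(\p F)^{m}$-type with $s=|m|$, which is bounded by a constant.

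The stated inequality $\lesssim\dc_j^{1-|m|}$ for large $\dc_j$ therefore requires reading it as $\lesssim \l_j^{1-|m|}$-type. This is the one genuine subtlety, and I expect it to be the main obstacle. The statement is presumably used only for $\dc_j\lesssim 1$, or else the capped distance is intended; I would verify and record this. The mixed case \eqref{eq:ef1} follows the same way with $\DC$ in place of $\dc_j$.
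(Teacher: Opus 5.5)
Your proofs of \eqref{eq:nablajas} and \eqref{eq:nablajas1}, and your Fa\`a di Bruno reduction of \eqref{eq:ef} to products bounded by $\dc_j(\bx)^{\,s-|m|}$, match the paper's argument. The final step, however, contains an error. It leads you to the incorrect conclusion that \eqref{eq:ef} and \eqref{eq:ef1} fail for large distances and must be weakened to the capped distance $\l_j$.

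The error is this. For $|m|\ge 2$ and $\dc_j(\bx)>1$, you say that the term made only of first-order factors $\p_{x_j}^{m_i}F$, $|m_i|=1$, survives and is merely bounded by a constant. In fact it is zero. The derivative $\p_{x_j}F$ involves only $\nabla f(x_j)$ and $\nabla g(x_j-x_k)$ with $k\ne j$. If $\dc_j(\bx)>1$, then $|x_j|>1$ and $|x_j-x_k|>\sqrt2$ for all $k\ne j$. As you observed at the start, $f$ and $g$ vanish identically on the open set $\{|x|>1\}$, so these gradients vanish as well. (Only $|x_j|>1$ is guaranteed here, not $|x_j|\ge\sqrt2$, but $>1$ is enough.)

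Every term in the Fa\`a di Bruno expansion of $\p_{x_j}^m e^{F}$ with $|m|\ge1$ contains at least one factor $\p_{x_j}^{m_i}F$ with $|m_i|\ge 1$. Hence $\p_{x_j}^m e^F\equiv 0$ on $\{\dc_j>1\}$, so this derivative is supported in $\{\dc_j\le 1\}$. On that set your bound $\dc_j^{\,s-|m|}\le \dc_j^{\,1-|m|}$ (using $s\ge1$) applies. Off it, the inequality holds trivially because the left-hand side is zero. Therefore \eqref{eq:ef} holds exactly as stated, with the uncapped $\dc_j$. This support observation is precisely the step the paper uses: because of the cut-off $\t$, the derivative is supported where $\dc_j(\bx)\lesssim 1$.

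The same reasoning gives \eqref{eq:ef1}. If $\DC(\bx)>1$, every summand of $F$ vanishes in a neighbourhood of $\bx$, so $\p_\bx^{\bm}e^F(\bx)=0$ for $|\bm|\ge1$. You should therefore drop the remark that the statement needs reinterpretation; your proof is complete once this vanishing of the first-order factors is added.
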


\begin{proof} We start with the following elementary observation. 
Let $g \in\plainC\infty(\R^3\setminus \{0\})$ be such that 
\begin{align*}
|\p^m g(x)|\lesssim |x|^{-|m|}, \quad \textup{for all}\quad m\in \mathbb N_0^3. 
\end{align*}  
Then for $j, k = 1, 2, \dots, N$, $j\not = k$, we have 
\begin{align}\label{eq:ediff}
|\p_{x_j}^m g(x_j)| + |\p_{x_j}^m g(x_j-x_k)|\lesssim \dc_j(\bx)^{-|m|},
\end{align}
for all $\bx\notin \Sigma_j$.
Taking the function $g(x) = \nabla \big(|x| \t(x)\big)$, we obtain from \eqref{eq:ediff} 
the required bound \eqref{eq:nablajas} 
for $\nabla F$. 
For $|\nabla F|^2$ we use the Leibniz rule, which leads again to \eqref{eq:nablajas}.

To prove the bound \eqref{eq:ef} observe that the derivative $\p_{x_j}^m e^F, |m|\ge1,$ 
can be written as a sum of finitely many terms of the form 
\begin{align}\label{eq:dj}
 \p_{x_j}^{k_1}F \, 
 \p_{x_j}^{k_2}F \cdots 
 \p_{x_j}^{k_s}F \, e^{F},
\end{align}
where $1 \le |k_j|$, $\sum_{j=1}^s |k_j| = |m|$, $1\le s\le |m|$.
By \eqref{eq:nablajas} each such term can be estimated by 
\begin{align*}
\dc_j(\bx)^{(1-|k_1|) + (1-|k_2|) + \cdots (1-|k_s|) } 
= \dc_j(\bx)^{s - |m|}.
\end{align*}
Due to the presence of the cut-off $\t$ in the definition \eqref{eq:jascut}, the derivative 
$\p_{x_j}^m e^F$ is supported on the set where $\dc_j(\bx)\lesssim 1$. 
Therefore \eqref{eq:dj} 
does not exceed $\dc_j(\bx)^{1-|m|}$, as claimed.

The proof of \eqref{eq:nablajas1} and \eqref{eq:ef1} requires only obvious modifications. 
\end{proof}

\begin{proof}[Proof of Theorem \ref{thm:regphi}]
Since $\phi$ satisfies the equation \eqref{eq:jastrow}, we can apply Theorem \ref{thm:dist} 
with 
\begin{align*}
\ba = -2\nabla F,\quad b = V-\Delta F - |\nabla F|^2 - E,\quad g = 0,
\end{align*}
$z = x_j$ and the domain $\Om = \R^{3N}\setminus \Sigma_j$. 
Then by \eqref{eq:Finf} and \eqref{eq:nablajas}, the coefficients in \eqref{eq:jastrow} satisfy  
\eqref{eq:cdscale} 
for all $\bm = m\in\mathbb N_0^3$,  
with $\ell(\bx) = \l_j(\bx) = \min\{1, \dc_j(\bx)\}$ for all $\bx\in \Om$.
For $0\le |m|\le 1$ the bound \eqref{eq:regphi} holds due to Proposition \ref{prop:grpsi}. For 
$|m|\ge 2$ we use  \eqref{eq:d2elldist} with $z=x_j$ to obtain for all $\bx\notin\Sigma_j$ and arbitrary $R < 1$ that 
\begin{align*}
|\p_{x_j}^m \phi(\bx)|\lesssim \l_j(\bx)^{2-|m|} 
\big(&  \max_{l:|l|=2}\, \|\p_{x_j}^l \phi\|_{\plainL\infty(B(\bx, \l_j(\bx)R/2))} \\
&\ +  \|\nabla \phi\|_{\plainL\infty(B(\bx, \l_j(\bx) R/2))} 
+ \|\phi\|_{\plainL\infty(B(\bx, \l_j(\bx) R/2))} 
  \big), 
\end{align*}
with an implicit constant depending on $R$. 
By \eqref{eq:grphi} and \eqref{eq:nabla2phi} the bracket on the 
right-hand side is bounded by $M(\bx, R; \psi)$, as claimed.

Suppose now that $\bx\in P_r$. 
Now we apply Theorem \ref{thm:dist} again, but choose the domain $\Om$ differently: 
$\Om = \R^{3N}\setminus\Pi$. 
Then by \eqref{eq:Finf} and \eqref{eq:nablajas1} again, 
the coefficients in \eqref{eq:jastrow} satisfy \eqref{eq:cdscale} with 
$\ell(\bx) = \L(\bx)$ for all $\bx\in \Om$.
Using \eqref{eq:d2elldist} with $z = \bx$ and arbitrary $R<1$ we obtain that 
\begin{align*}
|\p_\bx^\bm \phi(\bx)|\lesssim \L(\bx)^{2-|\bm|} 
\big(&  \max_{\bl:|\bl|=2}\, \|\p_{\bx}^\bl \phi\|_{\plainL\infty(B(\bx, \L(\bx)R/2))} \\
&\ +  \|\nabla \phi\|_{\plainL\infty(B(\bx, \L(\bx) R/2))} 
+ \|\phi\|_{\plainL\infty(B(\bx, \L(\bx) R/2))} 
  \big), 
\end{align*}
for all $\bx\in \Om$.  
By \eqref{eq:grphi} the last two terms on the right-hand side 
are bounded by $M(\bx, R; \psi)$. 
As far as the first term is concerned, observe that for $\bx\in P_r$ and arbitrary  
$\by = (y_1, y_2, \dots, y_N)\in B(\bx, \L(\bx)R/2)$, for any $k = 1, 2, \dots, N$,  we have 
\begin{align*}
|y_k|\ge |x_k|-|\bx-\by|> |x_k| - 
\L(\bx) R/2\ge |x_k|\big(1 - R/2\big)>r/2.
\end{align*}
Therefore, by \eqref{eq:deraway}, this first term is also bounded from above by $M(\bx, R; \psi)$. 
Recall that $\L(\bx)\asymp\tilde\L(\bx)$ for $\bx\in P_r$, 
with constants depending on $r$. Now \eqref{eq:secondmix} follows, which completes 
the proof of Theorem \ref{thm:regphi}. 
\end{proof}

\subsection{Real analyticity of $\psi(\bx)$} \label{subsect:ra}
Here we briefly mention some further regularity results that have no direct bearing 
on our line of reasoning but have been used in \cite{Sobolev2022} and \cite{Sobolev2022a} 
to derive the asymptotics \eqref{eq:bolg}. 

Along with the standard 
notation $\bx = (x_1, x_2, \dots, x_N)\in\R^{3N}$ 
below we use the notation $x_0 = 0$. 
Thus, unless otherwise stated, the indices labelling the particles run from $0$ to $N$. 
For each pair $j, k: j\not = k$,  
denote 
\begin{align*}
\SfS_{j,k} =  
\bigcap_{\substack{l \not = s\\
(l, s)\not = (j, k)}}  
\{\bx\in\R^{3N}: x_l\not = x_s\}.
\end{align*}
In words, the set $\SfS_{j, k}$ includes 
the coalescence point $x_j = x_k$, but excludes all the other coalescence points. 
Our main focus will be on the properties of the function $\psi$ 
near the ``diagonal" set 
\begin{align*}
\SfS^{(\rm d)}_{j,k} = \{\bx\in\SfS_{j,k}: x_j = x_k\}.
\end{align*}
The sets introduced above are obviously symmetric with respect to permutations of 
indices, e.g. $\SfS_{j,k} = \SfS_{k,j}$, $\SfS^{(\rm d)}_{j, k}=\SfS^{(\rm d)}_{k,j}$.
Observe also that the sets $\SfS_{j,k}$, 
$\SfS_{j,k}^{(\dc)}$ are of full measure in $\R^{3N}$ and $\R^{3N-3}$ respectively, 
and that they are connected.   

The following property is a consequence of \cite[Theorem 1.4]{FHOS2009}.

\begin{prop}\label{prop:repr}
For each pair of indices $j, k = 0, 1, \dots, N$ such that $j\not = k$,  there exists 
an open connected set $\Om_{j,k} = \Om_{k, j}\subset \R^{3N}$, such that 
\begin{align*}
\SfS_{j,k}^{(d)}\subset\Om_{j,k}\subset \SfS_{j,k},
\end{align*}
and two uniquely defined functions $\xi_{j, k}, \eta_{j, k}$, real analytic on $\Om_{j, k}$, such that 
for all $\bx\in \Om_{j, k}$  
the following representation holds:
\begin{align}\label{eq:repr}
 \psi(\bx) = \xi_{j, k}(\bx) + |x_j-x_k| \eta_{j, k}(\bx).
\end{align} 
\end{prop}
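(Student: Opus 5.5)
The statement is quoted from \cite[Theorem 1.4]{FHOS2009}, so the plan is to reduce it to that result rather than reprove analyticity from scratch. I will first fix a pair $j\neq k$ (with $x_0=0$ allowed) and introduce local relative coordinates near $\SfS^{(\rm d)}_{j,k}$: $r = x_j - x_k$ together with the remaining $3N-3$ variables, say the centre $x_k$ and all $x_l$, $l\neq j,k$ (if one index is $0$, then $r=x_j$ and the others are unchanged). On $\SfS_{j,k}$ every other pair is separated, so the potential $V$ splits as
\begin{align*}
V(\bx) = \frac{c_{j,k}}{|r|} + W(\bx),
\end{align*}
where $c_{j,k}=1$ or $-Z$, and $W$ is real analytic on $\SfS_{j,k}$. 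The eigenvalue equation then becomes $-\Delta\psi + c_{j,k}|r|^{-1}\psi = (E-W)\psi$, with a single Coulomb singularity along the codimension-3 plane $\{r=0\}$ and analytic coefficients elsewhere.

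The core step is the local statement from \cite{FHOS2009}. Near each point $\bx_0\in\SfS^{(\rm d)}_{j,k}$ there are functions $\xi,\eta$, real analytic in a neighbourhood $U_{\bx_0}$, with $\psi = \xi + |r|\,\eta$ on $U_{\bx_0}$. The mechanism I would invoke, or sketch if needed, goes as follows. Write $\psi = \xi+|r|\eta$ and insert it into the equation. Use $\Delta(|r|\eta) = |r|\Delta\eta + 2\frac{r}{|r|}\cdot\nabla_r\eta + \frac{2}{|r|}\eta$. Separating the parts even and odd in $|r|$, after formally treating $|r|$ as an independent analytic variable, gives an elliptic system for $(\xi,\eta)$. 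The $|r|^{-1}$ terms cancel precisely because of the Kato cusp condition $2\eta = c_{j,k}\xi$ on $r=0$. Existence and analyticity of $(\xi,\eta)$ then follow from a Cauchy–Kovalevskaya/elliptic analytic regularity argument, which is where the real work of \cite{FHOS2009} lies. Away from $\{r=0\}$ but inside $\SfS_{j,k}$, $\psi$ is already real analytic, by elliptic regularity with analytic coefficients.

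The remaining steps are gluing and uniqueness. Uniqueness of $(\xi,\eta)$ on a connected open set containing points of $\{r=0\}$ holds because $|r|$ is not real analytic there. Indeed, if $\xi_1+|r|\eta_1=\xi_2+|r|\eta_2$, then $|r|(\eta_1-\eta_2)$ is analytic. Restricting to a line through $r=0$ in the $r$-variables shows that $\eta_1-\eta_2$ vanishes on $\{r=0\}$, and iterating on its Taylor expansion gives $\eta_1=\eta_2$ near $r=0$, hence everywhere by analytic continuation; then $\xi_1=\xi_2$. I would therefore take $\Om_{j,k}$ as the union of the neighbourhoods $U_{\bx_0}$, shrunk so as to lie in $\SfS_{j,k}$. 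Local uniqueness lets the local pairs agree on overlaps and patch into global $\xi_{j,k},\eta_{j,k}$. Connectedness of $\Om_{j,k}$ follows from connectedness of $\SfS^{(\rm d)}_{j,k}$, provided each $U_{\bx_0}$ is chosen connected and meeting the diagonal.

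The main obstacle is the local analytic decomposition itself. The cancellation of singular terms and the analytic solvability of the resulting system are nontrivial, and I would simply cite \cite[Theorem 1.4]{FHOS2009} for this step.
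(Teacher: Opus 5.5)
Your route is the paper's: the paper gives no argument beyond citing \cite[Theorem 1.4]{FHOS2009}. You cite it in the same way for the local decomposition near each point of $\SfS^{(\rm d)}_{j,k}$ and then add uniqueness and gluing. The uniqueness argument is fine. The gluing step, however, has a gap as written.

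Local uniqueness only applies on a connected set that meets the diagonal $\{x_j=x_k\}$ (resp.\ $\{x_j=0\}$). Off the diagonal $|x_j-x_k|$ is itself real analytic. So on any region that does not touch the diagonal, every pair $(\xi+|x_j-x_k|f,\ \eta-f)$ with $f$ analytic is an equally valid decomposition. Two neighbourhoods $U_{\bx_0}$, $U_{\bx_1}$ that are merely connected and meet the diagonal can overlap in a component lying away from it. There the two local pairs need not coincide, and the sentence ``local uniqueness lets the local pairs agree on overlaps'' is not justified.

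The repair is to take the $U_{\bx_0}$ to be Euclidean balls centred at points of $\SfS^{(\rm d)}_{j,k}$. Each ball should be small enough to lie in $\SfS_{j,k}$ and in the neighbourhood supplied by \cite{FHOS2009}. Let $\sigma$ be the orthogonal reflection that swaps $x_j$ and $x_k$, or that sends $x_j\mapsto -x_j$ when the other index is $0$. It fixes the diagonal pointwise, so it preserves every such ball. Any nonempty overlap of two balls is therefore convex and $\sigma$-invariant. It thus contains the midpoint of some $p$ and $\sigma p$, which lies in $\{x_j=x_k\}\cap\SfS_{j,k}=\SfS^{(\rm d)}_{j,k}$. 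Uniqueness near that point, followed by analytic continuation over the convex overlap, gives agreement of the two local pairs on the whole overlap. The pairs then glue, and connectedness and global uniqueness on $\Om_{j,k}$ follow as you describe.

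A minor point in your heuristic sketch: with $\CH=-\Delta+V$ one has $\Delta_{\bx}|x_j-x_k|=4/|x_j-x_k|$. The electron--electron cusp condition is therefore $4\eta=\xi$, matching the Jastrow term $\frac14|x_j-x_k|$, not $2\eta=\xi$. Your relation $2\eta=c_{j,k}\xi$ is correct only for the nucleus.
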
 

Due to the uniqueness of functions $\xi_{j, k}, \eta_{j, k}$, we have the symmetry 
$\xi_{j, k} = \xi_{k, j}$, $\eta_{j, k} = \eta_{k, j}$ for all $j\not = k$
 
Note that the representation \eqref{eq:repr} pinpoints the main singularity (i.e. $|x_j-x_k|$) 
at the pair coalescence point, which allows one to use the asymptotic results for homogeneous kernels, 
discussed in the Introduction. The formula \eqref{eq:repr} played a central part in 
\cite{Sobolev2022} and \cite{Sobolev2022a}, and led to the asymptotic relations
\eqref{eq:bolg} with coefficients $A$ and $B$ as some combinations of integrals of the 
functions $\eta_{j, k}$. The advantage of the expressions \eqref{eq:coeffAB} is in their explicit dependence on the function $\psi$.

\section{Regularity for totally antisymmetric functions}\label{sect:regasym}
Assume now that the function $\psi(\bx)$ is totally antisymmetric, i.e.~it 
satisfies \eqref{eq:antisym}. As we shall see below, the function $\psi$ possesses better smoothness 
at the coalescence points $x_j = x_k$, compared to \eqref{eq:FS}.   
As both quantities of interest, $\gamma(x, y)$ and $\vark(x, y)$ 
are defined now by the simplified formulas \eqref{eq:fb}, we will be concerned with the 
regularity of the function $\psi$ in variable $x: = x_N$ only. In this section we write 
\begin{align*}
\Sigma: = \Sigma_N,\quad \dc(\bx) := \dc_N(\bx),\quad \l(\bx) := \l_N(\bx),
\end{align*}
see \eqref{eq:sigmaj}, \eqref{eq:dq} and \eqref{eq:lj} for the definitions of 
$\Sigma_j$, $\dc_j$ and $\l_j$.

\subsection{Enhanced smoothness: a general result} 
Represent the function \eqref{eq:jascut} in the form
\begin{align}\label{eq:f1}
\begin{lcases}
F(\bx) = &\ F_1(x) + F_2(\bx),\\
F_1(x) = &\ - \frac{Z}{2} |x|\t(x),\\
F_2(\bx) = &\  - \frac{Z}{2}\sum_{l=1}^{N-1} |x_l|\t(x_l) + 
\frac{1}{4}\sum_{1\le l<k\le N}|x_l-x_k| \t(x_l-x_k),
\end{lcases}
\end{align} 
and define 
\begin{align}\label{eq:tildepsi}
\tilde\psi(\bx) = e^{-F_1(x)}\psi(\bx) = e^{F_2(\bx)}\phi(\bx),
\end{align}
see \eqref{eq:phidef} for the definition of $\phi$. 
Thus in $\tilde\psi(\hat\bx, x)$ 
all singularities of $\psi(\hat\bx, x)$ are left untouched apart from the one at $x = 0$.
Let $M(\bx, R; \psi)$ be as defined in \eqref{eq:locl2}. 
Due to the condition \eqref{eq:antisym} the function $\tilde\psi$ 
possesses an enhanced smoothness. The required bound is proved in the next theorem. 
Recall the notation \eqref{eq:away} for the set $P_r, r>0$.

\begin{thm}\label{thm:varphider} 
Suppose that \eqref{eq:antisym} is satisfied. Then 
for all 
 $\bx\notin\Sigma$ and for any $R>0$ we have 
\begin{align}\label{eq:varphider}
|\p_{x}^m \tilde\psi(\bx)|
\lesssim (1+\l(\bx)^{2-|m|})\, M(\bx, R; \psi),
\end{align}
for all $m\in\mathbb N_0^3$, where the constants potentially depend on $R$. 

Let $r>0$ be arbitrary.
Then $\psi\in \plainC{{1},{1}}(P_r)$ and for all $\bx\in P_r\setminus \Pi$ we have 
\begin{align}\label{eq:varphider1}
|\p_{\bx}^{\bm} \psi(\bx)|
\lesssim (1+\tilde\L(\bx)^{2-|m|})\, M(\bx, R; \psi),
\end{align}
for all $\bm\in\mathbb N_0^{3N}$, where the constants potentially depend on $r$ and $R$. 
\end{thm}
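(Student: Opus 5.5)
The plan is to avoid a new elliptic regularity argument and derive \eqref{eq:varphider} from the bounds for $\phi$ already proved in Theorem \ref{thm:regphi}, via the factorization $\tilde\psi = e^{F_2}\phi$ of \eqref{eq:tildepsi}. The only new input is the antisymmetry \eqref{eq:antisym}. Since $F$ in \eqref{eq:jascut} is symmetric under every transposition $x_j\leftrightarrow x_k$, the function $\phi = e^{-F}\psi$ is again totally antisymmetric. Hence $\phi(\hat\bx, x)$ vanishes on every hyperplane $x = x_k$, $k\le N-1$.

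Combining this with \eqref{eq:grphi}, I intend to prove the vanishing bound
\begin{align*}
|\phi(\bx)|\lesssim |x-x_k|\, M(\bx, R;\psi)\quad\text{whenever } |x-x_k|\lesssim 1.
\end{align*}
To get it, integrate $\nabla_x\phi$ along the segment from $x_k$ to $x$, with all other variables fixed; this segment lies in a ball of radius comparable to one around $\bx$. More generally, for $|b|\le 1$ the function $\p_x^b\phi$ is odd under reflection in the pair $(x, x_k)$ only modulo derivatives in $x_k$, so I will use only the statement for $b=0$.

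Next, expand $\p_x^m\tilde\psi$ by the Leibniz rule into terms $\p_x^a e^{F_2}\,\p_x^b\phi$ with $a+b=m$. For $\p_x^a e^{F_2}$ I use the analogue of \eqref{eq:ef} for $F_2$. The only $x$-dependent part of $F_2$ is $\tfrac14\sum_{k<N}|x-x_k|\t(|x-x_k|)$, because the terms $|x_l|$ with $l\le N-1$ do not involve $x$. Thus $|\p_x^a e^{F_2}|\lesssim \tilde\l_N(\bx)^{1-|a|}$ for $|a|\ge 1$, where $\tilde\l_N(\bx)\ge\l(\bx)$. I then distinguish three types of term.
\begin{enumerate}
\item If $|a|\ge1$ and $|b|\ge1$, Theorem \ref{thm:regphi} gives $|\p_x^b\phi|\lesssim(1+\l^{2-|b|})M$. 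The product is then bounded by $\l^{1-|a|}(1+\l^{2-|b|})M\lesssim(1+\l^{2-|m|})M$, up to the harmless case where $|b|\le 1$, for which $\l^{1-|a|}\le \l^{2-|m|}$.
\item If $a=0$, the term is $e^{F_2}\p_x^m\phi$, which is bounded directly by \eqref{eq:regphi}.
\item If $b=0$ and $|a|=|m|\ge 2$, this is the critical term, since the naive bound is only $\l^{1-|m|}$.
\end{enumerate}

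For the critical term, I write $e^{F_2} = e^{G_k}e^{F_2-G_k}$, where $G_k = \tfrac14|x-x_k|\t(|x-x_k|)$ and $k$ is the index attaining $\min_{l}|x-x_l|$. Derivatives of $e^{F_2-G_k}$ in $x$ are controlled by powers of the distances to the other particles $x_l$, which are $\gtrsim\tilde\l_N$. Therefore every term in which $\p_x^a$ falls entirely on $e^{F_2-G_k}$, or splits between the two factors, is at least as good as in the previous step. The genuinely singular piece is $\p_x^a e^{G_k}\,\phi\lesssim |x-x_k|^{1-|a|}|\phi|$, and by the vanishing bound this is $\lesssim |x-x_k|^{2-|a|}M$. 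When $|x-x_k|\ge1$ the cutoff makes everything bounded. This yields \eqref{eq:varphider}; the case $|m|\le 1$ follows simply from Proposition \ref{prop:grpsi}. Note also that $\l(\bx)$ includes $|x|$, while $\tilde\psi$ has no cusp at $x=0$; the residual dependence on $|x|$ enters only through $\phi$ and is already accounted for by \eqref{eq:regphi}.

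For the second claim, fix $\bx\in P_r$ and use $\psi = e^{F}\phi$. On $P_r$ the nuclear part of $F$ is smooth with bounded derivatives, so only the pair terms $G_{jk}=\tfrac14|x_j-x_k|\t(|x_j-x_k|)$ are singular. I repeat the Leibniz argument with the mixed bounds \eqref{eq:secondmix} for $\phi$ and \eqref{eq:ef1}, localised to the pair $(j,k)$ nearest to coalescence. The vanishing bound $|\phi|\lesssim|x_j-x_k|M$ again comes from antisymmetry in that pair, by integrating along the segment in which $x_j$ is moved to $x_k$ with all other variables fixed; on $P_r$ this segment stays inside $P_{r/2}$. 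This gives \eqref{eq:varphider1}. Taking $|\bm|=2$ shows that second derivatives are bounded on $P_r\setminus\Pi$, and since $\Pi\cap P_r$ is a null set, $\psi\in\plainC{1,1}(P_r)$ follows: $\nabla\psi$ is Lipschitz because it is continuous and its distributional derivatives are bounded.

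The main obstacle is the bookkeeping when several particles are simultaneously close, for example $x_k$ and $x_l$ both near $x$. In that case the singular factors from different pairs are comparable, and the vanishing of $\phi$ at $x=x_k$ alone does not obviously compensate derivatives of $e^{G_l}$. My first attempt will be to note that such a term carries at least one derivative on each of two different factors. The exponents then add as $(1-|a_1|)+(1-|a_2|)=2-|a|$, so no vanishing of $\phi$ is needed there. Only the term with all derivatives on a single $e^{G_k}$ requires the vanishing of $\phi$ at $x=x_k$, and for that term I choose $k$ according to the factor being differentiated rather than the nearest particle.
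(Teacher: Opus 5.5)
Your proposal is correct and follows the paper's proof: the Leibniz expansion of $\tilde\psi=e^{F_2}\phi$ (resp.\ $\psi=e^{F}\phi$ on $P_r$, with \eqref{eq:secondmix}), Theorem~\ref{thm:regphi} for the terms with derivatives on $\phi$, and antisymmetry of $\phi$ giving $|\phi(\bx)|\lesssim\tilde\l(\bx)\,M(\bx,R;\psi)$ to control the term with all derivatives on the exponential. The splitting $e^{F_2}=e^{G_k}e^{F_2-G_k}$ and the ``several particles close'' bookkeeping are unnecessary, because the bound $|\p_x^a e^{F_2}|\lesssim\tilde\l(\bx)^{1-|a|}$ that you already quote, multiplied by $|\phi|\lesssim\tilde\l(\bx)M$, gives $\tilde\l(\bx)^{2-|a|}M$ at once, which is exactly how the paper finishes; also, for the $\plainC{1,1}$ claim the property of $\Pi\cap P_r$ you need is that it has codimension three, so that bounded second derivatives off it extend across it, not merely that it is null, since continuity of $\nabla\psi$ across $\Pi$ is part of what has to be shown.
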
 
 
\begin{proof}  Recall the notation \eqref{eq:tildedist}:
\begin{align*}
\tilde\l(\bx): = \tilde\l_N(\bx)
= \min\{1, \frac{1}{\sqrt 2}
|x-x_k|: \ 1\le k\le N-1\},
\end{align*}
so that $\l(\bx) = \tilde\l (\bx)\wedge|x|$. 
Arguing as in the proof of Lemma \ref{lem:nablajas}, we obtain that 
\begin{align}\label{eq:expf2}
|\p_x^k e^{F_2(\bx)}|\lesssim \tilde\l(\bx)^{1-|k|}\le \l(\bx)^{1-|k|},
\ |k|\ge 1.
\end{align} 
 Now we can estimate the derivatives of the function $\tilde\psi$ from \eqref{eq:tildepsi}. 
If $m = 0$, then by \eqref{eq:grphi},
\begin{align*}
|\tilde\psi(\bx)|\lesssim |\psi(\bx)|\lesssim  M(\bx, R; \psi),
\end{align*} 
 i.e. \eqref{eq:varphider} holds. If $|m|=1$, then by \eqref{eq:grphi} 
 and \eqref{eq:expf2}, 
 \begin{align*}
 |\p_x^m \bigl(e^{F_2(\bx)} \phi(\bx)\bigr)|\lesssim |e^{F_2(\bx)} \p_x^m \phi(\bx)|
 + |(\p_x^m e^{F_2(\bx)}) \phi(\bx)|\lesssim M(\bx, R; \psi),
 \end{align*}
 again. 
 
Suppose that $|m|\ge 2$. By the Leibniz rule,  
\begin{align}\label{eq:bep}
|\p_x^m \big(e^{F_2(\bx)} \phi(\bx)\big)|
\lesssim &\ \sum_{0\le k\le m}|\p_x^k e^{F_2(\bx)}|\,  |\p_x^{m-k} \phi(\bx)|\notag\\
= &\ 
\sum_{0 < k< m}|\p_x^k e^{F_2(\bx)}|\,  |\p_x^{m-k} \phi(\bx)| 
+ e^{F_2(\bx)}\,  |\p_x^{m} \phi(\bx)| + |\p_x^m e^{F_2(\bx)}| |\phi(\bx)|.
\end{align}
Due to Theorem \ref{thm:regphi} and the bound \eqref{eq:expf2}, 
the first term is bounded by  
\begin{align*}
 M(\bx, R; \psi)\,\sum_{0< k< m} \l(\bx)^{1-|k|} 
(1+\l(\bx)^{2-|m|+|k|}) 
\lesssim M(\bx, R; \psi)\,  
\l(\bx)^{2-|m|}.
\end{align*}
The second term in \eqref{eq:bep} satisfies the same bound due to \eqref{eq:regphi} again.

To estimate the last term in \eqref{eq:bep} we use the property \eqref{eq:antisym}. Suppose that 
$\tilde\l(\bx) < \frac{R}{4}\wedge 1$. 
Therefore, for any such $\bx$ we can find $k \in \{1, 2, \dots, N-1\},$ such that 
 $|x-x_k| = \sqrt2\, \tilde\l(\bx)$.
Therefore, 
\begin{align*}
|\phi(\bx)|  = &\ |\phi(\tilde\bx_{k, N}, x_k, x) - \phi(\tilde\bx_{k, N}, x_k, x_k)|\\
\lesssim &\ \|\nabla_x\phi\|_{\plainL\infty(B(\bx, R/2))} \tilde\l(\bx)
\lesssim \tilde\l(\bx)\, M(\bx, R; \psi),
\end{align*}
where we have used \eqref{eq:grphi} again. 
If $\tilde\l(\bx) \ge \frac{R}{4}\wedge 1$, then 
because of \eqref{eq:grphi} we get 
\begin{align*}
|\phi(\bx)|\lesssim M(\bx, R; \psi)
\lesssim \tilde\l(\bx)\, M(\bx, R; \psi),  
\end{align*}
as before. Together with \eqref{eq:expf2} this gives the estimate
\begin{align*}
|(\p_x^m e^{F_2(\bx)}) \phi(\bx)|
\lesssim \tilde\l(\bx)^{2-|m|} M(\bx,R; \psi)
\le \l(\bx)^{2-|m|} M(\bx,R; \psi),
\end{align*}
where we have taken into account that $|m|\ge 2$. 
Thus all the terms on the right-hand of 
\eqref{eq:bep} are bounded by $(1+ \l(\bx)^{2-|m|}) M(\bx,R; \psi)$ for all $m\in\mathbb N_0^3$, which completes the proof of \eqref{eq:varphider}.

The proof of \eqref{eq:varphider1} follows the same plan. Omitting the details we note that arguing as in the proof of Lemma \ref{lem:nablajas} again, 
under the assumption $\bx\in P_r\setminus\Pi$ we obtain that 
\begin{align*}
|\p_{\bx}^{\bk} e^{F(\bx)}|\lesssim \tilde\L(\bx)^{1-|\bk|},\ |\bk|\ge 1,
\end{align*} 
with constants depending on $r$. Using this together with \eqref{eq:secondmix}, and 
following the same steps for the function $\psi(\bx) = e^{F(\bx)}\, \phi(\bx)$, we arrive at \eqref{eq:varphider1}.
\end{proof} 
 
 \begin{rem}\label{rem:grad}
 It follows from Theorem \ref{thm:varphider} that under the condition 
\eqref{eq:antisym}  
 the gradient $\nabla_x \tilde\psi$ is continuous 
 on $\R^{3N}$. Therefore, 
\[
\sv(\hat\bx, x) = \nabla_x\psi(\hat\bx, x) = \nabla_x \big( e^{-F_1(x)}\tilde\psi(\hat\bx, x)\big)
\] 
is continuous on $\R^{3N}\setminus\{\bx = (\hat\bx, 0)\}$.  
 \end{rem}

\subsection{Two-particle coalescence point} 
\label{subsect:two}
As before we assume that the condition \eqref{eq:antisym} is satisfied. 
Our purpose in this section is to study the smoothness of $\psi$ in the neighbourhood 
of a two-particle coalescence point under the assumption that these two particles 
are separated from the other particles and from the nucleus.  
Assume that the two particles in question are particles $N$ and $k$, for some $1\le k\le N-1$. 
For convenience further on we denote 
$\tilde\bx = \tilde\bx_{k, N}$ and $y = x_k$, $x = x_N$, 
so that, in line with the notational conventions adopted in the Introduction, 
$\bx = (\tilde\bx, y, x)$. 
Let $\d>0$ and $\varepsilon>0$ be fixed, and define 
\begin{align}\label{eq:wed}
\SW_k(\varepsilon, \d) = 
\{ (\tilde\bx, y, x): |x_s-x_l|> \varepsilon, &\ \, 0\le s < l\le N-1,s\not = k, l\not =k, \notag\\ 
\bigg|x_j - \frac{x+y}{2}\bigg|> \varepsilon, 0\le &\ j\le N-1, j\not = k,
\    \textup{and}\, \, 
|x-y|<\d \},\\[0.5cm]
\SW^\circ_k(\varepsilon, \d) = \SW_k(\varepsilon, \d)\cap&\ \{\bx: x \neq y\}.\notag
\end{align}
Here we have denoted $x_0 = 0$.                                                                                                                                                                                        
Since $k$ is fixed, we usually 
write $\SW(\varepsilon, \d)$ instead of $\SW_k(\varepsilon, \d)$. 
Note that 
$\SW(\varepsilon_1, \d_1)\subset \SW(\varepsilon_2, \d_2)$ for any $\varepsilon_1 > \varepsilon_2$ 
and $\d_1 < \d_2$. 
The bounds that we obtain in what follows, will depend on the constants $\d$  and $\varepsilon$ 
but we do not reflect this in the notation. 

By definition, for any $\bx\in\SW(\varepsilon, \d)$  
the particles $s, l$ with $0\le s < l\le N-1$, $s\not = k, l\not = k$,  
are separated from each other. Furthermore, 
we automatically have that 
$|x-x_j|> \varepsilon - \d/2$ and $|y-x_j| > \varepsilon - \d/2$ for all 
$j = 0, 1, \dots, N-1, j\not = k$. 
We will always assume that $\varepsilon>\d/2$, so that 
for any $\bx\in\SW(\varepsilon, \d)$ the particles 
$k$ and $N$ are separated from the other ones and from the nucleus. Because of this, 
\begin{align*}
\l_N(\bx)\asymp \l_k(\bx)\asymp\L(\bx)\asymp\tilde\L(\bx)\asymp |x-y|.
\end{align*}
Thus it follows from \eqref{eq:varphider1} that 
\begin{align}\label{eq:psitwohigh}
|\p_{\bx}^{\bm} \psi(\tilde\bx, y, x)| 
\lesssim 1+ |x-y|^{2-|\bm|}, 
\quad \textup{for all}\quad \bx\in\SW^\circ(\varepsilon, \d),\, \bm\in \mathbb N_0^{3N},
\end{align}
with constants that may depend on $\varepsilon$ and $\d$. Note that 
we have estimated the factor $M(\bx, R; \psi)$ by the norm of $\psi$ which equals $1$, see 
\eqref{eq:normalized}. 
In order to explore further the structure of the function $\psi$ 
on the domain $\SW(\varepsilon, \d)$ 
we introduce a natural orthogonal change of variables:
\begin{align}\label{eq:coord}
z = \frac{1}{\sqrt 2}(x+y),\quad  t = \frac{1}{\sqrt 2}(x-y),
\end{align}
and write $\bx = (\tilde\bx, z, t)$. 
To highlight the dependence on the new variables, we denote 
\begin{align}\label{eq:psichi}
\chi(\tilde\bx, z, t) = \psi\bigg(\tilde\bx, \frac{z-t}{\sqrt2}, \frac{z+t}{\sqrt2}\bigg).
\end{align}
In the new variables the definition \eqref{eq:wed} rewrites as 
\begin{align*}
\SW_k(\varepsilon, \d) = 
\{ (\tilde\bx, y, x): |x_s-x_l|> \varepsilon, &\ \, 0\le s < l\le N-1, s\not = k, l\not = k,\\ 
\bigg|x_j - \frac{z}{\sqrt 2}\bigg|> \varepsilon, 0\le &\ j\le N-1, j\not = k, 
\    \textup{and}\, \, 
\sqrt2\, |t|<\d \}.
\end{align*}  
The use of the same notation as in \eqref{eq:wed} will not cause confusion.  
 The antisymmetry condition 
\eqref{eq:antisym} rewrites as $\chi(\tilde \bx, z, t) = - \chi(\tilde \bx, z, -t)$.
Since $\Delta_{x, y} = \Delta_{z, t}$, the Schr\"odinger 
equation $-\Delta\psi + (V(\bx) - E)\psi = 0$ 
in the new variables rewrites as 
\begin{align*}
-\Delta \chi + \frac{1}{\sqrt2\,|t|} \chi + U \chi = 0,\quad  
\quad U(\tilde\bx, z,t)= V(\tilde\bx, z, t) - E - \frac{1}{\sqrt2\,|t|},
\end{align*}
so that $U$ is $\plainC\infty$  
with derivatives of all orders in $\plainL\infty(\SW(\varepsilon, \d))$. 
By \eqref{eq:psitwohigh}, 
\begin{align}\label{eq:psinew}
|\p_{\bx}^{\bm} \chi(\tilde\bx, z, t)| 
\lesssim 1+ |t|^{2-|\bm|}, 
\end{align}
for all $\bx\in \SW^\circ(\varepsilon, \d)$ and all $\bm\in \mathbb N_0^{3N}$.
 
Now, for an arbitrary $\a >0$  
we introduce a new family of Jastrow factors $K_\a$ and a new function $\xi_\a$:\footnote{We will see later in \eqref{eq:optimalityOfAlpha} that it is advantageous to choose $\alpha = \frac{1}{2}$, but we find it illustrative to keep the choice free up to that point.}
\begin{align}
\label{eq:psik} 
\chi(\tilde\bx, z, t) = e^{K_\a(t)} \xi_\a(\tilde\bx, z, t),
\quad K_\a(t) = \frac{\a}{2\sqrt2}\, |t|. 
\end{align} 
Arguing as in the proof of Theorem \ref{thm:varphider} 
we derive from \eqref{eq:psinew} the bound
 \begin{align}\label{eq:xinew}
|\p_{\bx}^{\bm} \xi_\a(\tilde\bx, z, t)| 
\lesssim 1+ |t|^{2-|\bm|},
\end{align}
for all $\bx\in \SW^\circ(\varepsilon, \d)$ 
and all $\bm\in \mathbb N_0^{3N}$. The advantage of introducing the 
new variables $z, t$ is that the function $\xi_\a$ exhibits very different regularity 
in $z$ and in $t$. Indeed, since 
\begin{align*}
\nabla K_\a(t) = \frac{\a}{2\sqrt2} \, \frac{t}{|t|},\quad 
\Delta K_\a (t) = \frac{\a}{\sqrt2\, |t|},
\end{align*}
the function $\xi_\a$ satisfies the equation 
\begin{align}
- \Delta \xi_\a -  \frac{\a}{\sqrt2 \, |t|} t\cdot \nabla_t \xi_\a 
+ \bigg(\frac{1-\a}{\sqrt2\, |t|} - \frac{\a^2}{8} + U\bigg) \xi_\a = 0. \label{eq:xieq}
\end{align}
The properties of $\xi_\a$ are summarized in the following lemma.

\begin{lem}\label{lem:xiz} 
For all $\a>0$ the functions $\xi_\a$ and $\nabla_\bx\xi_\a$ are 
$\plainC\infty$ in the variable $(\tilde\bx, z)$ and 
\begin{align}\label{eq:xiz}
|\p_{\tilde\bx}^\bn \p_z^m \xi_\a(\tilde\bx, z, t)| 
+ |\p_{\tilde\bx}^\bn \p_z^m \nabla_\bx \xi_\a(\tilde\bx, z, t)|\lesssim 1,
\quad m\in\mathbb N_0^3, \quad \bn\in \mathbb N_0^{3N-6},
\end{align}
 for all $\bx\in\SW(\varepsilon, \d)$.
\end{lem}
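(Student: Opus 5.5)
The natural route is to exploit the fact that the singular coefficients in \eqref{eq:xieq} depend only on $t$. Tangential derivatives in $(\tilde\bx, z)$ commute with the principal part, so we can use the machinery of Theorem \ref{thm:modmod} with $z$-variable $(\tilde\bx, z)$ and \emph{no} rescaling, i.e.\ $\ell = 1$.

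The first step is to write \eqref{eq:xieq} in the form \eqref{eq:mod} on a fixed ball centred at a point of $\SW(\varepsilon,\d)$, including points with $t=0$. The coefficients are
\begin{align*}
\ba = -\frac{\a}{\sqrt2}\,\frac{t}{|t|},\qquad b = \frac{1-\a}{\sqrt2\,|t|} - \frac{\a^2}{8} + U .
\end{align*}
The drift $\ba$ is bounded and independent of $(\tilde\bx,z)$, so all its tangential derivatives vanish. The function $U$ is smooth with bounded derivatives on $\SW(\varepsilon,\d)$. The potential term $(1-\a)/(\sqrt2|t|)$ is the obstacle, because it is unbounded and Proposition \ref{prop:reg} requires $b\in\plainL\infty$.

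I would handle this term by moving it to the right-hand side as $g = -\frac{1-\a}{\sqrt2|t|}\xi_\a$. The vanishing of $\chi$ at $t=0$, which follows from antisymmetry, makes this harmless. Indeed, $\chi(\tilde\bx,z,0)=0$, so by \eqref{eq:xinew} with $|\bm|\le 1$ (or directly from the Lipschitz bound on $\chi$) we get $|\xi_\a|\lesssim |t|$, and hence $g$ is bounded.

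Moreover, $\p_{\tilde\bx}^\bn\p_z^m\xi_\a$ also vanishes at $t=0$, since it inherits antisymmetry in $t$ from $\chi$ (note that $K_\a$ is even). Its $t$-gradient is bounded, so
\[
\Big|\p_{\tilde\bx}^\bn\p_z^m\,\frac{\xi_\a}{|t|}\Big| \le \sup|\nabla_t\p_{\tilde\bx}^\bn\p_z^m\xi_\a| .
\]
Thus the tangential derivatives of $g$ are controlled by first derivatives of tangential derivatives of $\xi_\a$. This is exactly the quantity that Theorem \ref{thm:modmod} bounds, but it is one order lower.

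I would therefore run an induction on $|\bn|+|m|$, redoing the differentiation argument of Theorem \ref{thm:modmod} directly. Set $w = \p_{\tilde\bx}^\bn\p_z^m\xi_\a$. Differentiating \eqref{eq:xieq}, $w$ solves
\[
-\Delta w + \ba\cdot\nabla w + b_0\, w = G,\qquad b_0 = -\frac{\a^2}{8}+U ,
\]
with right-hand side
\[
G = -\frac{1-\a}{\sqrt2|t|}\,w \;-\; \sum \binom{\cdot}{\cdot}(\p^{\cdot} U)\,(\text{lower tangential derivatives of } \xi_\a).
\]

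The difficulty is that the first term of $G$ involves $\nabla_t w$, which is exactly what we want to estimate. To avoid this circularity I would keep the singular term on the left, but treat it as a Hardy-type form-bounded perturbation applied only to functions vanishing on $\{t=0\}$. Alternatively, and preferably, I would use \eqref{eq:xinew}, which already gives
\[
|\p_{\tilde\bx}^\bn\p_z^m \nabla_t\xi_\a|\lesssim 1+|t|^{1-|\bn|-|m|}
\]
away from $t=0$.

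The bounds at $t\neq0$ need to be upgraded to bounds uniform up to $t=0$. The plan for this is as follows.

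First, the induction hypothesis gives $w_{\rm lower}$ and $\nabla w_{\rm lower}$ bounded. Consequently $w$ itself, being a tangential derivative of a $\plainC1$-in-tangential-direction function, is bounded by a difference-quotient argument. One checks that the weak form of the equation passes to difference quotients in $(\tilde\bx,z)$; this is legitimate because the coefficients do not depend on those variables except through $U$.

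Second, the difference quotients $D_h w$ solve the same type of equation with bounded data, except for the term $|t|^{-1}D_h w$. Since $D_h w$ is odd in $t$, this term is dominated by Hardy's inequality in $t\in\R^3$, namely $\int|t|^{-2}|f|^2\le 4\int|\nabla_t f|^2$, and gives uniform $\plainW{1,2}$ bounds.

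Third, having $w\in\plainW{1,2}$ with $G$ bounded up to the singular term, I would apply Proposition \ref{prop:reg} in $\plainL{p}$ form. The term $|t|^{-1}w$ lies in $\plainL{p}$ for all $p<3$, and for larger $p$ one uses $|w|\lesssim|t|\,\|\nabla_t w\|_\infty$ in a bootstrap. This yields $w\in\plainC{1,\mu}$ with a bound depending only on lower-order quantities, which proves \eqref{eq:xiz}.

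The main obstacle is this closing of the bootstrap for the singular term $\frac{1-\a}{\sqrt2|t|}w$ uniformly up to $t=0$. I expect odd symmetry plus Hardy's inequality, followed by the $\plainL{p}$-iteration of Proposition \ref{prop:reg} with $p<3$ at the first step, to handle it.
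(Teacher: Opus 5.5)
Your proposal leaves the decisive step unproved, and the mechanism you suggest for it is circular. The $\plainL{p}$-iteration with $p<3$ stops well short of $\plainC{1,\mu}$, because the ambient dimension is $3N\ge 6$ and Proposition \ref{prop:reg} needs $p>3N$. To get past $p=3$ you propose bounding $|t|^{-1}|w|$ by $\|\nabla_t w\|_{\plainL\infty}$. But boundedness of $\nabla_t w=\p^\bn_{\tilde\bx}\p^m_z\nabla_t\xi_\a$ up to $t=0$ is exactly what \eqref{eq:xiz} asserts, whereas \eqref{eq:xinew} only gives $1+|t|^{1-|\bn|-|m|}$, which blows up at $t=0$. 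For the same reason, the estimate $\|w\|_{\plainC{1,\mu}(B_r)}\lesssim\|w\|_{\plainL2(B_R)}+\|\nabla_t w\|_{\plainL\infty(B_R)}+\dots$ puts the unknown on the right, on a larger ball and with a non-small constant, so it does not close. A repair may be possible with an $\plainL{p}$-Hardy inequality in $t$ for $p>3$, valid for functions vanishing on $\{t=0\}$: $\| |t|^{-1}w\|_{\plainL{p}}\lesssim\|\nabla_t w\|_{\plainL{p}}$. This would let the bootstrap close in $\plainW{1,p}$, provided you skip the borderline $p=3$. That argument is not in your proposal and would need care.

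The idea you are missing makes all of this unnecessary. Since $K_\a$ depends only on $t$, you have $\xi_\a=e^{K_1-K_\a}\xi_1$, where $e^{K_1-K_\a}=e^{(1-\a)|t|/(2\sqrt2)}$ is a function of $t$ alone. It is bounded and Lipschitz on $|t|<\d/\sqrt2$. Tangential derivatives in $(\tilde\bx,z)$ pass through this factor, and $\nabla_\bx$ of it is bounded, so it suffices to treat $\a=1$. For $\a=1$ the singular potential $\frac{1-\a}{\sqrt2|t|}$ in \eqref{eq:xieq} vanishes identically; $K_1$ is just Kato's cusp factor $\frac14|x-y|$. What remains is a bounded drift $-\frac{1}{\sqrt2}\frac{t}{|t|}$, independent of $(\tilde\bx,z)$, and a bounded smooth coefficient $-\frac18+U$. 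Apply Theorem \ref{thm:modmod} with $\ell=1$, $g=0$, tangential variables $(\tilde\bx,z)$, on balls $B(\bx,r)\subset\SW(\varepsilon_1,\d_1)$. This yields \eqref{eq:xiz} directly, with no use of antisymmetry, Hardy inequalities or difference quotients.
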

 
\begin{proof}  
Since $\xi_\a = e^{-K_\a + K_1} \xi_1$  and 
the Jastrow factor $K_\a$ does not depend on $z$ or $\tilde\bx$, 
it suffices to prove the required bounds for the value $\a = 1$. 

Observe that for $r = (\varepsilon-\d/2)/2$ the parameters 
$\varepsilon_1 = \varepsilon-r/\sqrt2$ and $\delta_1 = \d + r\sqrt2$ satisfy the 
relation $\varepsilon_1>\d_1/2$. 
One checks directly that for every $\bx\in\SW(\varepsilon, \d)$ 
the inclusion $B(\bx, r)\subset \SW(\varepsilon_1, \d_1)$ holds.
Now we use Theorem \ref{thm:modmod}. Since 
the coefficients in \eqref{eq:xieq} and 
their derivatives w.r.t.~$(\tilde\bx, z)$ of any order are uniformly bounded 
on $\SW^\circ(\varepsilon_1, \d_1)$, 
the condition \eqref{eq:highreg} is satisfied on $B(\bx, r), \bx\in\SW(\varepsilon, \d)$, 
for all $\bm = (\bn, m)$ where $\bn\in\mathbb N_0^{3N-6}$, $m\in\mathbb N_0^3$,  
with constants independent of $\bx$.  
Thus, by Theorem \ref{thm:modmod} (see \eqref{eq:dd0}), 
\begin{align*}
|\p_{\tilde\bx}^\bn \p_z^m  \xi_1(\tilde\bx, z, t)| 
+ |\p_{\tilde\bx}^\bn \p_z^m  \nabla_\bx \xi_1(\tilde\bx, z, t)| 
\lesssim  
\|\xi_1\|_{\plainL2(B(\bx, r))}
 \lesssim \|\psi\|_{\plainL2(B(\bx, r))}^2\lesssim 1, 
\end{align*}
for all $\bx\in \SW(\varepsilon, \d)$.
This proves the bound 
\eqref{eq:xiz} 
for $\xi_1$, and therefore, for $\xi_\a$ with an arbitrary 
$\a>0$. 
\end{proof} 

 \subsection{Detailed structure of $\xi_{\a}$}
Remembering that $\xi_\a(\tilde\bx, z, 0) = 0,$ we can write  
\begin{align}
\label{eq:xia}
\xi_\a(\tilde\bx, z, t)  = \xi^\circ_\a(\tilde\bx, z, t) + \eta_\a(\tilde\bx, z, t),
\end{align}
where 
$\xi^\circ_\a(\tilde\bx, z, t):= t\cdot \nabla_t\xi_\a(\tilde\bx, z, 0)$. 
Due to \eqref{eq:xinew},   
\begin{align}\label{eq:regeta}
|\p_{\bx}^\bm\,\eta_\a(\tilde\bx, z, t)|
\lesssim  1+ |t|^{2-|\bm|},
\quad \bm\in \mathbb N_0^{3N},
\end{align} 
for all $\SW^\circ(\varepsilon, \d)$.  
Here are some properties of $\eta$:

\begin{lem} \label{lem:eta}
Suppose that $\varepsilon>\d$. Then 
for all $\bx\in\SW(\varepsilon, \d)$ and any $R>0$ we have
\begin{align}\label{eq:etader}
|\eta_\a(\tilde\bx, z, t)|\lesssim |t|^2, \quad 
|\nabla_\bx \eta_\a(\tilde\bx, z, t)|\lesssim |t|,
\end{align} 
and for all $\bx\in\SW^\circ(\varepsilon, \d)$, 
\begin{align}\label{eq:regeta1}
|\p_{\bx}^\bm\, \eta_\a|\lesssim |t|^{2-|\bm|},\quad \bm\in\mathbb N_0^{3N}.
\end{align}
\end{lem}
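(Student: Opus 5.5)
The plan is to obtain the pointwise bounds \eqref{eq:etader} by Taylor expansion of $\xi_\a$ in the variable $t$, using the $\plainC{1,1}$-type control that comes from \eqref{eq:xinew} and Lemma \ref{lem:xiz}, and then to upgrade \eqref{eq:regeta} to \eqref{eq:regeta1} using \eqref{eq:etader}.

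First, the vanishing of $\eta_\a$ at $t=0$. By antisymmetry, $\xi_\a(\tilde\bx, z, -t) = -\xi_\a(\tilde\bx,z,t)$, since $K_\a$ is even in $t$. Hence $\xi_\a(\tilde\bx,z,0)=0$. By Lemma \ref{lem:xiz}, $\nabla_t\xi_\a$ is continuous and bounded on $\SW(\varepsilon,\d)$, so $\xi^\circ_\a$ is well defined and $\eta_\a(\tilde\bx,z,0)=0$, $\nabla_t\eta_\a(\tilde\bx,z,0)=0$. The assumption $\varepsilon>\d$ guarantees that, for $\bx=(\tilde\bx,z,t)\in\SW(\varepsilon,\d)$, the whole segment $\{(\tilde\bx,z,st): s\in[0,1]\}$ lies in $\SW(\varepsilon,\d)$, since the constraints on $\tilde\bx$ and $z$ do not involve $t$. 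Moreover, small balls around it lie in a slightly larger $\SW(\varepsilon_1,\d_1)$ where all the previous bounds still hold.

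Next, the gradient bounds. For $\nabla_t\eta_\a(\tilde\bx,z,t)=\nabla_t\xi_\a(\tilde\bx,z,t)-\nabla_t\xi_\a(\tilde\bx,z,0)$, I would integrate $\p_s\nabla_t\xi_\a(\tilde\bx,z,st)$ over $s\in(0,1)$. By \eqref{eq:xinew} with $|\bm|=2$, the second derivatives $\nabla_t^2\xi_\a$ are bounded by a constant on $\SW^\circ$, and the segment meets $t=0$ only at an endpoint. This gives $|\nabla_t\eta_\a|\lesssim|t|$. Integrating once more along the segment gives $|\eta_\a|\lesssim|t|^2$.

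For derivatives in $(\tilde\bx,z)$, I would use the formula
\[
\p_{(\tilde\bx,z)}\eta_\a=\p_{(\tilde\bx,z)}\xi_\a-t\cdot\nabla_t\p_{(\tilde\bx,z)}\xi_\a(\tilde\bx,z,0).
\]
Since $\p_{(\tilde\bx,z)}\xi_\a$ is again odd in $t$ and vanishes at $t=0$, the same segment argument applies. It needs the bound on the mixed derivatives $\nabla_t^2\p_{(\tilde\bx,z)}\xi_\a$, which \eqref{eq:xinew} gives only as $1+|t|^{-1}$. That is not sufficient on its own, so this is the main obstacle.

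To overcome it, I would instead use that $\p_{(\tilde\bx,z)}\xi_\a$ satisfies equation \eqref{eq:xieq} differentiated in $(\tilde\bx,z)$. Its right-hand side involves only derivatives of $U$ times $(\tilde\bx,z)$-derivatives of $\xi_\a$, which are bounded by Lemma \ref{lem:xiz}. Repeating the argument behind \eqref{eq:xinew}, that is, Theorem \ref{thm:dist} applied in the $t$-variable with $\ell\asymp|t|$, then gives the bound $1+|t|^{2-|\bm|}$ for $\p_{(\tilde\bx,z)}\xi_\a$ as well. In particular the second $t$-derivatives are bounded, and the segment argument yields $|\nabla_{(\tilde\bx,z)}\eta_\a|\lesssim|t|$. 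Alternatively, one can estimate the gradient of the $(\tilde\bx,z)$-derivative at $t=0$ by Lemma \ref{lem:xiz} and interpolate.

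Finally, \eqref{eq:regeta1}. For $|\bm|\ge 2$ it coincides with \eqref{eq:regeta} up to constants, because $|t|<\d$ gives $1\lesssim |t|^{2-|\bm|}$. For $|\bm|\le1$ it is exactly \eqref{eq:etader}.
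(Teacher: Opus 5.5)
Your treatment of $\eta_\a$ and of $\nabla_t\eta_\a$ is the paper's argument: integrate along $s\mapsto(\tilde\bx,z,st)$, using the bounded second derivatives from \eqref{eq:regeta}. Your deduction of \eqref{eq:regeta1} also matches the paper. The gap is in the $(\tilde\bx,z)$-part of the gradient bound. There you identify an obstacle that does not exist, and then remove it with a claim that is not justified.

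\textbf{The obstacle is not there.} Only $|\nabla_{(\tilde\bx,z)}\eta_\a|\lesssim|t|$ is required, i.e.\ one order of vanishing, not two. Since $\xi_\a(\tilde\bx,z,0)=0$ for all $(\tilde\bx,z)$, also $\nabla_{(\tilde\bx,z)}\eta_\a(\tilde\bx,z,0)=0$. Hence
\[
\nabla_{(\tilde\bx,z)}\eta_\a(\tilde\bx,z,t)=\int_0^1 (t\cdot\nabla_t)\,\nabla_{(\tilde\bx,z)}\eta_\a(\tilde\bx,z,st)\,ds .
\]
This formula involves only the mixed second derivatives $\nabla_t\nabla_{(\tilde\bx,z)}\eta_\a$. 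These are bounded on the whole segment by Lemma \ref{lem:xiz}, or by \eqref{eq:regeta} with $|\bm|=2$, which covers mixed derivatives too. This is exactly the paper's proof: it notes $\nabla_\bx\eta_\a(\tilde\bx,z,0)=0$ for the full gradient and integrates $(h\cdot\nabla_t)\nabla_\bx\eta_\a$ once. The third derivatives $\nabla_t^2\nabla_{(\tilde\bx,z)}\xi_\a$ would matter only for an $O(|t|^2)$ bound on $\nabla_{(\tilde\bx,z)}\eta_\a$, which the lemma does not claim. Your closing ``alternative'' points in the right direction. As written, however, it is not yet an argument: it speaks of a bound ``at $t=0$'' plus ``interpolation'', whereas what is needed is the mean-value estimate above with bounds along the segment.

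\textbf{The fix you propose is unsupported.} Take $w=\p_{(\tilde\bx,z)}\xi_\a$, after reducing to $\a=1$ as in Lemma \ref{lem:xiz}. The reduction is needed because for $\a\neq1$ the zeroth-order coefficient $(1-\a)/(\sqrt2|t|)$ in \eqref{eq:xieq} is unbounded, which violates \eqref{eq:cdscale}. Theorem \ref{thm:dist} with $\ell\asymp|t|$ then yields, through \eqref{eq:reg2dist} and the input $\|\nabla w\|_{\plainL\infty}\lesssim1$, only $|\p^\bm_\bx w|\lesssim|t|^{1-|\bm|}$. That is no improvement on \eqref{eq:xinew}. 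The rate $|t|^{2-|\bm|}$ would come from \eqref{eq:d2elldist}. But its right-hand side already contains $\max_{|\bl|=2}\|\p^\bl w\|_{\plainL\infty}$, which is precisely the third-order bound you are trying to prove, so the argument is circular.

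Moreover, \eqref{eq:xinew} is not a consequence of Theorem \ref{thm:dist} alone. Its second-order input is Lemma \ref{lem:f3}, i.e.\ the $\plainW{2,\infty}$ result of Proposition \ref{prop:f3}, combined with the antisymmetry argument in Theorem \ref{thm:varphider}. Nothing analogous is available for $w$ at this stage. Drop this detour and use the one-step estimate above.
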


\begin{proof}  
Let us fix a point $\bx = (\tilde\bx, z, t)\in \SW(\varepsilon, \d)$. 
An argument  
similar to the one in the proof of Lemma \ref{lem:xiz}, 
shows that 
\begin{align*}
B\big((\tilde\bx, z, 0), |t|\big)\subset B_\d:= B\big((\tilde\bx, z, 0), \d/\sqrt2\big)
\subset \SW(\varepsilon-\d/2, \d). 
\end{align*}
Since $\varepsilon_1:=\varepsilon - \d/2 >\d/2$, 
the function $\eta_\a$ satisfies \eqref{eq:regeta} on 
$B\big((\tilde\bx, z, 0), \d/\sqrt2\big)$. 
By the definition \eqref{eq:xia}, we have $\eta_\a(\tilde\bx, z, 0) = 0$ 
and $\nabla_\bx\eta_\a(\tilde\bx, z, 0) = 0$. Therefore, 
by the Fundamental Theorem of Calculus, and by \eqref{eq:regeta}, for all 
$h:|h|< \d/\sqrt2$, we have 
\begin{align*}
|\eta_\a(\tilde\bx, z, h)| = &\ \bigg|\int_0^1\,\int_0^{s_1} \, (h\cdot\nabla_t)^2 
\eta_\a(\tilde\bx, z, s_2 h)\, ds_2 \,ds_1\bigg|\\
\lesssim &\ \frac{|h|^2}{2} \, 
\max_{\bl:|\bl|=2}\|\p_\bx^\bl \eta_\a\|_{\plainL\infty(B_\d)}
\lesssim |h|^2, 
\end{align*}
which gives the first bound in \eqref{eq:etader}. By the same argument,
\begin{align*}
|\nabla_\bx \eta_\a(\tilde\bx, z, h)| 
= &\ \bigg|\int_0^1  \,  (h\cdot \nabla_t) \nabla_\bx \eta_\a(\tilde\bx, z, sh) \, ds\bigg|\\
\lesssim &\  |h| \max_{\bl:|\bl|=2}\|\p_\bx^\bl \eta_\a\|_{\plainL\infty(B((\tilde\bx, z, 0), |h|)}\lesssim |h|,
\end{align*}
which gives the second bound in \eqref{eq:etader}. 
The estimate \eqref{eq:regeta1} is a consequence of \eqref{eq:regeta} and \eqref{eq:etader}. 
\end{proof}

Substitute $\xi_\a = \xi^\circ_\a +\eta_\a$  into \eqref{eq:xieq}:
\begin{align*}
- \Delta (\xi^\circ_\a + \eta_\a) -  &\ \frac{\a}{\sqrt2 \, |t|} t\cdot \nabla_t \xi^\circ_\a 
+ \frac{1-\a}{\sqrt2\, |t|} \xi^\circ_\a +   \bigg(- \frac{\a^2}{8} + U\bigg) \xi^\circ_\a  \\
&\ -  \frac{\a}{\sqrt2 \, |t|} t\cdot \nabla_t \eta_\a 
 + \bigg(\frac{1-\a}{\sqrt2\, |t|} - \frac{\a^2}{8} +  U\bigg) \eta_\a
 = 0.
\end{align*}
The second and the third term together give
\begin{align}\label{eq:optimalityOfAlpha}
\frac{1-2\a}{\sqrt2 \, |t|}\,  t\cdot \nabla_t \xi_\a(\tilde\bx, z, 0). 
\end{align}
 This term vanishes if $\a = 1/2$. \textbf{ From now on we always assume that}
 \begin{align}
 \a = 1/2,
 \end{align}
\textbf{and omit $\a$ from the notation, i.e. we write $\eta := \eta_{1/2},\xi^\circ := \xi^\circ_{1/2}$}. 
 Then the above equation takes the form
\begin{align}\label{eq:eqetag}
-\Delta \eta = G,
\end{align}
where $G = G(\tilde\bx, z, t)$ is given by 
\begin{align}\label{eq:defG}
G = \Delta\xi^\circ + \bigg( \frac{1}{32} - U\bigg) \xi^\circ 
 +  \bigg(\frac{1}{2\sqrt2 \, |t|} t\cdot \nabla_t \eta - 
\frac{1}{2\sqrt2\, |t|}\eta\bigg)  + \bigg(\frac{1}{32} - U\bigg) \eta. 
\end{align}
We use the equation \eqref{eq:eqetag} to find 
estimates for $\eta$ that are more precise than \eqref{eq:etader}. To this end 
we shall use Theorem \ref{thm:dist}. In order to apply it, 
we first prove the following bound for $G$.

\begin{lem}
\label{lem:regG}
Let $\varepsilon>\d$, and let the functions  $\xi^\circ$, $\eta$ and $G$ be as defined above. 
Then $G$ is Lipschitz on $\SW(\varepsilon, \d)$, 
and
\begin{align}\label{eq:regG}
 |\p_{\bx}^\bm G(\tilde\bx, z, t)| \lesssim 1+|t|^{1-|\bm|},
\end{align}
for all $\bx\in\SW^\circ(\varepsilon, \d)$ and all $\bm\in\mathbb N_0^{3N}$. 
\end{lem}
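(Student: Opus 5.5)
I would bound each of the four groups of terms in \eqref{eq:defG} separately and then add. The groups are $\Delta\xi^\circ$, $(\frac1{32}-U)\xi^\circ$, the singular combination
\[
Q:=\frac{1}{2\sqrt2\,|t|}\big(t\cdot\nabla_t\eta-\eta\big),
\]
and $(\frac1{32}-U)\eta$. The target for each is a bound of the form $1+|t|^{1-|\bm|}$.

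\textbf{The regular terms.} Recall that $\xi^\circ(\tilde\bx,z,t)=t\cdot\nabla_t\xi(\tilde\bx,z,0)$, and write $\bv(\tilde\bx,z)=\nabla_t\xi(\tilde\bx,z,0)$. By Lemma \ref{lem:xiz} (the bound on $\nabla_\bx\xi$ taken at $t=0$), all $(\tilde\bx,z)$-derivatives of $\bv$ are bounded on $\SW(\varepsilon,\d)$. Hence $\xi^\circ$ is linear in $t$ with smooth bounded coefficients.

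It follows that $\Delta\xi^\circ=t\cdot(\Delta_{\tilde\bx,z}\bv)$ is smooth with all derivatives bounded on the bounded set $|t|<\d/\sqrt2$. Since $U$ is $\plainC\infty$ with bounded derivatives, the term $(\frac1{32}-U)\xi^\circ$ is smooth with bounded derivatives as well. For the last term, Leibniz together with \eqref{eq:regeta} and \eqref{eq:regeta1} gives
\[
|\p^\bm_\bx((\tfrac1{32}-U)\eta)|\lesssim 1+|t|^{2-|\bm|}\lesssim 1+|t|^{1-|\bm|}.
\]
All three terms are Lipschitz, the last one because \eqref{eq:etader} bounds $\nabla\eta$.

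\textbf{The singular term (main obstacle).} The term $Q$ is divided by $|t|$, so bounds of the form $|t|^{2-|\bm|}$ for $\eta$ alone lose a power. The saving structure is that $t\cdot\nabla_t\eta-\eta$ is small when $\eta$ is close to $1$-homogeneous in $t$; in fact it is $O(|t|^2)$. By \eqref{eq:regeta1}, $|t\cdot\nabla_t\eta|+|\eta|\lesssim|t|^2$, so $|Q|\lesssim|t|$.

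For derivatives, set $h=t\cdot\nabla_t\eta-\eta$. Each $\p^\bm_\bx h$ is a finite sum of terms $t^{(i)}\p^{\bm'}\eta$ with $|\bm'|=|\bm|+1$, plus terms $\p^{\bm''}\eta$ with $|\bm''|=|\bm|$. By \eqref{eq:regeta1} every such term is $\lesssim|t|^{2-|\bm|}$.

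Next, $\p^\bk(|t|^{-1})\lesssim|t|^{-1-|\bk|}$. Leibniz then gives
\[
|\p_\bx^\bm Q|\lesssim\sum_{\bk\le\bm}|t|^{-1-|\bk|}\,|t|^{2-|\bm|+|\bk|}=|t|^{1-|\bm|}
\]
on $\SW^\circ$.

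Lipschitz continuity of $Q$ across $t=0$ is the delicate point. On $\SW^\circ$ the gradient is bounded by the $|\bm|=1$ estimate, and $Q$ extends continuously by $0$ to $t=0$. The segment between two points may cross the codimension-$3$ set $\{t=0\}$, but it can be perturbed to avoid it, or one can pass to the limit using continuity. This gives a uniform Lipschitz bound on each convex slice, and hence on $\SW(\varepsilon,\d)$, whose $t$-slices are balls.

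Summing the four estimates gives \eqref{eq:regG} on $\SW^\circ(\varepsilon,\d)$ and the Lipschitz property of $G$ on $\SW(\varepsilon,\d)$. The constants depend only on $\varepsilon$ and $\d$.
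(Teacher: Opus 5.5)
Your proposal is correct and follows the paper's proof. The $\xi^\circ$-terms are bounded via Lemma \ref{lem:xiz}, and the $\eta$-terms via the Leibniz rule and \eqref{eq:regeta1}. The one cosmetic difference is that you group $t\cdot\nabla_t\eta-\eta$ before dividing by $|t|$, whereas the paper estimates $\frac{t}{|t|}\cdot\nabla_t\eta$ and $\frac{1}{|t|}\eta$ separately; as your own bound shows, no cancellation is needed, since each piece is $O(|t|^2)$.

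Your explicit argument for the Lipschitz property goes beyond the paper, which leaves it implicit in the $|\bm|=1$ case of \eqref{eq:regG}. One correction: $\SW(\varepsilon,\d)$ is not convex, so what your argument actually gives is a uniform Lipschitz bound on balls, i.e.\ $\nabla G\in\plainL{\infty}$. That is exactly what is used afterwards.
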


\begin{proof} 
First we estimate the first two terms in \eqref{eq:defG}. By \eqref{eq:xiz}, 
\begin{align*}
\left|\p_{\bx}^{\bm} \biggl[\Delta\xi^\circ 
+ \bigg(\frac{1}{32} - U\bigg)\xi^{\circ}\biggr]\right|
\lesssim 1,\quad \bx\in \SW(\varepsilon, \d),
\end{align*}
for all $\bm\in \mathbb N_0^{3N}$. 

The terms containing $\eta$ are estimated with the help of \eqref{eq:regeta1}. 
Since $U$ is $\plainC\infty\cap\plainL\infty$ on $\SW(\varepsilon, \d)$, the derivatives of
the very last term in \eqref{eq:defG} are bounded by $|t|^{2-|\bm|}$. 
Estimate the components of the second bracket in \eqref{eq:defG}  for all $\bm$:
\begin{align*}
\bigg|\p_{\bx}^{\bm} \biggl(\frac{t}{|t|}\cdot\nabla_t \eta\biggr)\bigg|\lesssim &\ 
\sum_{\bq=\bold0}^\bm 
 \, \bigg|\p_{\bx}^{\bm-\bq} \frac{t}{|t|}\bigg| \, \big|\p_{\bx}^\bq\big(\nabla_t\eta\big)\big|\\
\lesssim  &\ 
\sum_{\bq=\bold0}^\bm \, |t|^{-|m|+|\bq|} \, |t|^{1-|\bq|} 
\lesssim |t|^{1-|\bm|}.
\end{align*} 
In the same way,
\begin{align*}
\bigg|\p_{\bx}^{\bm} \biggl(\frac{1}{|t|}\,\eta\biggr)\bigg|\lesssim &\ 
\sum_{\bq=\bold0}^\bm 
 \, \bigg|\p_{\bx}^{\bm-\bq} \frac{1}{|t|}\bigg| \, \big|\p_{\bx}^\bq\eta \big|\\
\lesssim  &\ 
\sum_{\bq=\bold0}^\bm \, |t|^{-|m|+|\bq|-1} \, |t|^{2-|\bq|} 
\lesssim |t|^{1-|\bm|}.
\end{align*} 
This completes the proof of \eqref{eq:regG}.
\end{proof}
 
Our objective is to prove the following theorem for the function $\eta$.
 
\begin{thm} 
Let $\varepsilon>\d$. 
Then for any $\mu\in [0, 1)$, and all $\bx\in \SW^\circ(\varepsilon, \d)$, we have 
\begin{align}\label{eq:ulteta}
|\p_\bx^\bm\,\eta(\tilde \bx, z, t)|\lesssim |t|^\mu \big(1+ |t|^{2-|\bm|}\big),
\quad \bm\in\mathbb N_0^{3N}, 
\end{align}  
with constants potentially depending on $\bm$ and $\mu$.
\end{thm}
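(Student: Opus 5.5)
The plan is to exploit the Poisson equation $-\Delta\eta = G$ from \eqref{eq:eqetag}. By Lemma \ref{lem:regG}, the right-hand side $G$ is only Lipschitz, but its derivatives obey the improved scaling $|t|^{1-|\bm|}$. This is one power of $|t|$ better than the scaling $|t|^{-|\bm|}$ that the equation for $\eta$ would naively require. We will feed this gain into Theorem \ref{thm:dist} with $\Om = \SW^\circ(\varepsilon', \d')$ for slightly larger parameters. The capped distance $\ell(\bx)$ to the complement is comparable to $|t|$ near the diagonal; it is also comparable to $\d'-\sqrt2|t|$ near the outer boundary, but there $|t|\asymp 1$ and we may simply use \eqref{eq:regeta1}. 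So it suffices to treat $|t|$ small.

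\textbf{Step 1: H\"older regularity of $\eta$ across $t=0$.} Since $G$ is Lipschitz, hence bounded, on $\SW(\varepsilon,\d)$, the equation $-\Delta\eta = G$ holds weakly on a full neighbourhood of the diagonal. Indeed, $\eta$ is $\plainW{2,\infty}$ there by \eqref{eq:regeta} with $|\bm|=2$, so there is no distributional mass on $\{t=0\}$. Interior Schauder theory for the Laplacian with a $\plainC{0,1}$ (in particular $\plainC{0,\mu}$) right-hand side then gives $\eta\in\plainC{2,\mu}$ locally uniformly on $\SW(\varepsilon,\d)$ for every $\mu<1$; this could alternatively be done by the $\plainL{p}$ route as in Proposition \ref{prop:reg}, taking $p$ large. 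Since $\eta$, $\nabla_\bx\eta$, and moreover $\p_\bx^\bl\eta$ with $|\bl|=2$ all behave appropriately at $t=0$, we need that second derivatives vanish on the diagonal. This is the delicate point. We intend to check it using the antisymmetry $\eta(\tilde\bx,z,-t)=-\eta(\tilde\bx,z,t)$, which follows from that of $\chi$ and of $\xi^\circ$, since $K$ is even. Antisymmetry forces $\p_t^2\eta$, $\p_z^2\eta$ and the mixed $(\tilde\bx,z)$-derivatives to be odd in $t$, hence zero at $t=0$. It leaves $\p_t\p_z\eta(\cdot,0)$ and $\p_t\p_{\tilde\bx}\eta(\cdot,0)$, but these vanish because $\nabla_t\eta(\tilde\bx,z,0)=0$ identically in $(\tilde\bx,z)$. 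Hence all second derivatives of $\eta$ vanish on $t=0$, and the $\plainC{2,\mu}$ bound yields
\begin{align*}
|\p_\bx^\bl\eta(\tilde\bx,z,t)|\lesssim |t|^\mu,\quad |\bl|=2,\qquad |\nabla\eta|\lesssim|t|^{1+\mu},\qquad |\eta|\lesssim |t|^{2+\mu}.
\end{align*}
This gives \eqref{eq:ulteta} for $|\bm|\le 2$.

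\textbf{Step 2: higher derivatives.} For $|\bm|\ge3$ we apply \eqref{eq:d2elldist} of Theorem \ref{thm:dist} to $-\Delta\eta = G$, with $\ba=0$, $b=0$, $z=\bx$ and $\ell(\bx)\asymp|t|$. On $B(\bx, R\ell(\bx))$ we have $|t'|\asymp|t|$, so Step 1 bounds the second derivatives, the gradient and $\eta$ itself by $|t|^\mu$. Lemma \ref{lem:regG} gives $\ell^{|\bq|}|\p^\bq G|\lesssim |t|^{|\bq|}|t|^{1-|\bq|} = |t|\le|t|^\mu$ for $\bq>0$. Hence $|\p^\bm_\bx\eta|\lesssim |t|^{2-|\bm|}\,|t|^\mu$, which is \eqref{eq:ulteta}.

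The main obstacle is Step 1: establishing uniform $\plainC{2,\mu}$ regularity of $\eta$ across the coalescence set and the vanishing of its full Hessian there. If the Schauder argument for the antisymmetric extension proves awkward, the fallback is to use the weak-solution property of the difference quotients of $\eta$ in $(\tilde\bx,z)$ together with Theorem \ref{thm:modmod}.
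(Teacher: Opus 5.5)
Your proposal is correct and follows essentially the same route as the paper. Both arguments obtain a uniform $\plainC{2,\mu}$ bound for $\eta$ across $t=0$ from the Lipschitz right-hand side $G$; the paper does this via Theorem \ref{thm:modmod} rather than Schauder, which is your listed alternative. Both then show the Hessian vanishes on the diagonal using the definition of $\xi^\circ$ together with antisymmetry, and finish with \eqref{eq:d2elldist} using $\ell(\bx)\asymp|t|$ for $|\bm|>2$. The only difference is cosmetic: you derive the sharper bounds $|\nabla\eta|\lesssim|t|^{1+\mu}$ and $|\eta|\lesssim|t|^{2+\mu}$ by Taylor expansion, whereas the paper simply quotes \eqref{eq:etader} for $|\bm|\le 1$.
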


\begin{proof} 
Denote 
\begin{align*}
r = \frac{\varepsilon-\d}{4}. 
\end{align*}
Arguing as in the proof of Lemma \ref{lem:xiz}, we conclude that 
the parameters 
$\varepsilon_1 = \varepsilon-r/\sqrt2$ and $\delta_1 = \d + r\sqrt2$ satisfy the 
relation $\varepsilon_1>\d_1$, and that for each $\bx\in\SW(\varepsilon, \d)$ 
the inclusion $B(\bx, r)\subset \SW(\varepsilon_1, \d_1)$ holds,

\underline{Step 1.} 
First we prove that $\eta\in \plainC{2, \mu}\big(\overline{B(\bx, r/2)}\big)$ 
for all $\mu\in [0, 1)$ 
and every $\bx\in \SW(\varepsilon, \d)$, and that 
\begin{align}\label{eq:etahol}
\|\eta\|_{\plainC{2, \mu}\big(\overline{B(\bx, r/2)}\big)}\lesssim 1.
\end{align}
To this end we use Theorem \ref{thm:modmod} with the balls 
$B(\bx, r/2)$ and $B(\bx, r)$. 
By Lemma~\ref{lem:regG} the function $G(\bx)$ and its first order derivatives 
are uniformly bounded for a.e. $\bx\in\SW(\varepsilon_1, \d_1)$.
Thus by Theorem  \ref{thm:modmod} (see \eqref{eq:dd0}) and by \eqref{eq:regeta},
\begin{align*}
\| \eta\|_{\plainC{2, \mu}(\overline{B(\bx, r/2)})}\lesssim 
\| \eta\|_{\plainL2(B(\bx, r))}  
+ \|G\|_{\plainL\infty(B(\bx, r))} 
+ \|\nabla G\|_{\plainL\infty(B(\bx, r))}
\lesssim 1,
\end{align*}
which is the bound \eqref{eq:etahol}. 

\underline{Step 2: proof of \eqref{eq:ulteta} for $|\bm|= 0, 1$.}
The bound \eqref{eq:ulteta} holds thanks to \eqref{eq:etader}. 

\underline{Step 3: proof of \eqref{eq:ulteta} for $|\bm|=2$.}
First we prove that 
\begin{align}\label{eq:zerod}
\p_\bx^{\bm} \eta(\tilde\bx, z, 0) = 0,\quad \bx\in \SW(\varepsilon, \d).
\end{align} 
If $\bm$ is such that at most one derivative falls on $t$, then \eqref{eq:zerod} 
holds because of the definition \eqref{eq:xia}. 
If both derivatives $\p_\bx^{\bm}$ fall on the variable $t$, then \eqref{eq:zerod} holds because of the antisymmetry \eqref{eq:antisym}. 
To prove \eqref{eq:ulteta} we assume without loss of generality that $|t| < r/2$. 
Now, in view of \eqref{eq:etahol} and \eqref{eq:zerod}, 
for $|\bm|= 2$ we have 
\begin{align*}
|\p_\bx^\bm \eta(\tilde\bx, z, t)|
= |\p_\bx^\bm \eta(\tilde\bx, z, t) - \p_\bx^\bm \eta(\tilde\bx, z, 0)|
\lesssim |t|^\mu,\quad \bx\in \SW(\varepsilon, \d).
\end{align*}
This proves \eqref{eq:ulteta} for $|\bm|=2$.

\underline{Step 4: proof of \eqref{eq:ulteta} for $|\bm|>2$.}
Let $\varepsilon_1, \d_1$ and $r$ be as defined in Step 1. 
Now we use Theorem \ref{thm:dist} with $\Om = \SW^\circ(\varepsilon_1, \d_1)$. 
Let $\dist_\Om(\bx)$ be as defined in \eqref{eq:dc} 
and let $\ell(\bx) = \min\{1, \dist_\Om(\bx)\}$.  
By \eqref{eq:d2elldist}, for any $\rho<1$ and all $\bx\in\Om$ we have 
\begin{align}\label{eq:w1}
|\p_\bx^{\bm} \eta(\bx)|
\lesssim 
\ell(\bx)^{2-|\bm|}&\,\bigg[
\max_{ |\bl|=2} \|\p_\bx^{\bl} \eta\|_{\plainL\infty(B(\bx, \rho\ell(\bx)))}
+ \|\nabla_\bx \eta\|_{\plainL\infty(B(\bx, \rho\ell(\bx)))}\notag\\
&\ \qquad+  \|\eta\|_{\plainL\infty(B(\bx, \rho\ell(\bx)))} 
+ \sum_{0<\bq\le\bm} 
\ell^{|\bq|} \|\p_\bx^{\bq}\, G\|_{\plainL\infty(B(\bx, \rho\ell(\bx)))}\bigg].
\end{align}
We do not use this estimate for all $\bx\in\Om$, but only for $\bx\in\SW^\circ(\varepsilon, \d)$. 
Under this condition we have $\rho_1|t| \le \ell(\bx)\le |t|$ with some constant $\rho_1<1 $ depending on 
$\varepsilon, \varepsilon_1, \d, \d_1$, and hence \eqref{eq:w1} rewrites as 
\begin{align*}
|\p_\bx^{\bm} \eta(\bx)|
\lesssim 
|t|^{2-|\bm|}&\,\bigg[
\max_{ |\bl|=2} \|\p_\bx^{\bl} \eta\|_{\plainL\infty(B(\bx, \rho|t|))}
+ \|\nabla_\bx \eta\|_{\plainL\infty(B(\bx, \rho|t|))}\notag\\
&\ \qquad+  \|\eta\|_{\plainL\infty(B(\bx, \rho|t|))} 
+ \sum_{0<\bq\le\bm} 
|t|^{|\bq|} \|\p_\bx^{\bq}\, G\|_{\plainL\infty(B(\bx, \rho|t|))}\bigg].
\end{align*}
In this estimate we pick $\rho = \rho_1/2$ 
which ensures that $\rho|t| \le \ell(\bx)/2$, and hence $B(\bx, \rho|t|)\subset \Om$. 
Now we can use the bound \eqref{eq:ulteta} for $|\bm|\le 2$ that was proved in Steps 2 and 3. 
Therefore, the terms containing $\eta$ and its derivatives 
do not exceed $|t|^\mu$. Moreover, 
because of \eqref{eq:regG}, the term containing the 
derivatives of $G$ is estimated by $|t|$. 
Putting these bounds together we arrive at 
\eqref{eq:ulteta} 
for all $|\bm| >2$.
This completes the proof. 
\end{proof}

 \subsection{Isolating the leading singularity of $\psi$}
We now write the wavefunction within $\SW(\varepsilon, \d), \varepsilon>\d$, 
in a form where we isolate the singularity which 
will give rise to the eigenvalue asymptotics for the operators $\SG$ and $\SK$. 
Using (\ref{eq:psik}) and (\ref{eq:xia}) 
with $\a = 1/2$, we can write
\begin{align}\label{eq:psising}
\chi(\tilde\bx, z, t)  &= e^{K(t)}\big( t\cdot \nabla_t\xi(\tilde\bx, z, 0) + \eta(\tilde\bx, z, t)\big)\notag\\
&= \frac{|t|t}{4\sqrt{2}} \cdot \nabla_t\xi(\tilde\bx, z, 0) + L(\tilde\bx, z, t)
\end{align}
for a remainder
\begin{align}
\label{eq:l_def}
L(\tilde\bx, z, t) = \big(e^{K(t)}-K(t)\big)  t\cdot \nabla_t\xi(\tilde\bx, z, 0) + e^{K(t)}\eta(\tilde\bx, z, t).
\end{align}

\begin{lem}
Let $\bx \in \SW^\circ(\varepsilon, \d)$. Then 
for any $\mu \in [0,1)$ and all $\bm \in \mathbb N_0^{3N}$,
\begin{align}
\label{eq:ztphi}
|\partial_\bx^{\bm}L(\bx)| 
\lesssim 1+ |t|^{2+\mu-|\bm|},
\end{align}
with constants depending on $\varepsilon$ and $\d$, but independent of $\psi$.
\end{lem}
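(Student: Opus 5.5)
The remainder $L$ in \eqref{eq:l_def} is a sum of two terms, and I will bound each one separately with the Leibniz rule.

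Set $K = K_{1/2}$, so that $K(t) = |t|/(4\sqrt2)$. Note that $\frac{|t|t}{4\sqrt2}\cdot\nabla_t\xi(\tilde\bx,z,0) = K(t)\,t\cdot\nabla_t\xi(\tilde\bx,z,0)$, which explains the first term of \eqref{eq:l_def}.

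\textbf{First term.} Write $f(t) = e^{K(t)} - K(t) = 1 + R(t)$ with $R(t) = \sum_{n\ge 2} K(t)^n/n!$, and let $a(\tilde\bx,z,t) = t\cdot\nabla_t\xi(\tilde\bx,z,0)$.

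The part $1\cdot a$ is smooth. By Lemma \ref{lem:xiz}, $\nabla_t\xi(\tilde\bx,z,0)$ has bounded derivatives of all orders in $(\tilde\bx,z)$, and $a$ is linear in $t$. Hence all derivatives of $a$ are bounded on $\SW(\varepsilon,\d)$ (note $|t|<\d$), which fits within the bound $1$.

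For the part $R\,a$, each term $K^n$ is homogeneous of degree $n$ in $t$ and smooth away from $t=0$. So on the bounded set $|t|<\d$ we have $|\p_t^k R(t)| \lesssim |t|^{2-|k|}$, with a convergent sum because of the factorials. The function $a$ satisfies $|\p^{\bq} a| \lesssim |t|^{1-|\bq|}$ for $|\bq|\le 1$ and $|\p^{\bq}a| \lesssim 1$ in general; it vanishes to first order at $t=0$ and is linear in $t$. Applying the Leibniz rule to $\p_\bx^{\bm}(R\,a)$, where derivatives in $(\tilde\bx,z)$ fall only on $a$, each product is bounded by $|t|^{2-|k|}\,|t|^{1-|\bq|}$ or by $|t|^{2-|k|}$ when $|\bq|\ge 2$. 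In both cases the product is $\lesssim 1 + |t|^{3-|\bm|}$, since at most one $t$-derivative of $a$ is nonzero. Because $|t|<\d$, we have $|t|^{3-|\bm|} \lesssim |t|^{2+\mu-|\bm|}$, so this part is acceptable.

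\textbf{Second term.} For $e^{K}\eta$, use $|\p_t^k e^{K(t)}| \lesssim 1 + |t|^{1-|k|}$, as in \eqref{eq:expf2}. By \eqref{eq:ulteta}, $|\p^{\bq}\eta| \lesssim |t|^{\mu}(1+|t|^{2-|\bq|})$. The Leibniz rule gives terms of the form $(1+|t|^{1-|k|})\,|t|^{\mu}(1+|t|^{2-|\bq|})$ with $|k|+|\bq| = |\bm|$. When $k=0$ the term is $\lesssim 1 + |t|^{2+\mu-|\bm|}$. When $|k|\ge 1$ the term is $\lesssim |t|^{1-|k|+\mu}(1+|t|^{2-|\bq|})$. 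This gives $|t|^{3+\mu-|\bm|}$ when $|\bq|\ge 2$, and $|t|^{1+\mu-|k|}$ when $|\bq|\le 1$. In the latter case $|k| \ge |\bm|-1$, so $|t|^{1+\mu-|k|} \lesssim 1 + |t|^{2+\mu-|\bm|}$. All of these are bounded by $1 + |t|^{2+\mu-|\bm|}$ on $|t|<\d$.

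\textbf{Main obstacle.} The main difficulty is the bookkeeping of exponents in the case where derivatives split between $e^{K}$ and $\eta$. The saving comes from the extra factor $|t|^\mu$ in \eqref{eq:ulteta} combined with the one-order gain of $e^K - 1$. No further regularity input beyond \eqref{eq:ulteta} and Lemma \ref{lem:xiz} should be needed. The constants depend only on $\varepsilon$ and $\d$, because $\|\psi\|=1$ was used in those results.
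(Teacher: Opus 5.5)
Your overall route matches the paper's: the Leibniz rule applied to each of the two terms of $L$, using Lemma~\ref{lem:xiz} for the first term and the $\eta$-bounds for the second. Your treatment of the first term, via $e^{K}-K=1+R$, is correct.

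\textbf{The gap} is in the second term, in the Leibniz summand where every derivative falls on $e^{K}$, i.e.\ $\bq=0$ and $|k|=|\bm|$. Using only \eqref{eq:ulteta}, you get $|\eta|\lesssim |t|^{\mu}(1+|t|^{2})\asymp |t|^{\mu}$. So this summand is only bounded by $|t|^{1-|\bm|}\,|t|^{\mu}=|t|^{1+\mu-|\bm|}$. For $|\bm|\ge 2$ this is not $\lesssim 1+|t|^{2+\mu-|\bm|}$ as $t\to 0$: it is off by a factor $|t|^{-1}$, and for $|\bm|=2$ it equals $|t|^{\mu-1}$, which is unbounded. Your justification ``$|k|\ge |\bm|-1$, so $|t|^{1+\mu-|k|}\lesssim 1+|t|^{2+\mu-|\bm|}$'' runs in the wrong direction. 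It holds only when $|k|\le|\bm|-1$, i.e.\ for $|\bq|=1$, and fails for $\bq=0$.

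\textbf{The fix.} You need Lemma~\ref{lem:eta}, namely \eqref{eq:etader}: $|\eta|\lesssim |t|^{2}$ and $|\nabla_\bx\eta|\lesssim |t|$. At orders $0$ and $1$ this is much stronger than \eqref{eq:ulteta}. With it, the critical summand satisfies
$|\p_t^{k}e^{K}|\,|\eta|\lesssim |t|^{1-|\bm|}|t|^{2}=|t|^{3-|\bm|}\le \d^{1-\mu}|t|^{2+\mu-|\bm|}$,
which is exactly how the paper treats the term $|\p^{\bm}e^{K}|\,|\eta|$. Your closing remark that nothing beyond \eqref{eq:ulteta} and Lemma~\ref{lem:xiz} is needed is therefore wrong. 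Once \eqref{eq:etader} is used for the undifferentiated $\eta$, your bookkeeping goes through.
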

\begin{proof}
As a preliminary, we recall the following elementary fact: 
for any $k \in \mathbb N_0^3$, $|k| \ge 1$, one has 
\begin{gather}
\label{eq:expbnd}
\partial_x^{k}e^{|x|} \lesssim |x|^{1-|k|},\\
\label{eq:expbnd1}
\partial_x^{k}\big(e^{|x|}- |x|\big) \lesssim |x|^{2-|k|}, 
\end{gather}
for $x \in \mathbb R^3$ such that $|x| \lesssim 1$.

We now show the bound (\ref{eq:ztphi}) for each term 
of (\ref{eq:l_def}). Since $K(t) = \frac{1}{4\sqrt2}|t|$, it follows from \eqref{eq:expbnd1} that 
\begin{align*}
\big|\p_t^{m} \big(\big(e^{K(t)}-K(t)\big) t\big) \big| \lesssim 1 +|t|^{3-|m|},
\quad m\in\mathbb N_0^3.
\end{align*}
Together with the bound \eqref{eq:xiz} this leads to \eqref{eq:ztphi} for the first term in 
\eqref{eq:l_def}. 
 
Consider now the second term in \eqref{eq:l_def}. If $|\bm| = 0$ or $|\bm|=1$, then \eqref{eq:ztphi} holds because of \eqref{eq:etader} and \eqref{eq:expbnd}.
So let $|\bm| \ge 2$. By the Leibniz rule
\begin{align}\label{eq:etak}
|\p_{z, t}^\bm \big(e^{K(t)}\, \eta(\tilde\bx, z, t)\big)|
\lesssim &\ \sum_{0\le \bk\le \bm} \, |\p_{z, t}^{\bk}\, e^{K(t)}|\  
|\p_{z, t}^{\bm-\bk}\, \eta(\tilde\bx, z, t)|\notag\\
= &\ 
\sum_{0< \bk < \bm} \, |\p_{z, t}^{\bk}\, e^{K(t)}|\  
|\p_{z, t}^{\bm-\bk}\, \eta(\tilde\bx, z, t)|
+ 
|e^{K(t)}|\  
|\p_{z, t}^{\bm}\, \eta(\tilde\bx, z, t)|\notag\\
&\ \qquad + |\p_{z, t}\, e^{K(t)}|\  
|\eta(\tilde\bx, z, t)|.
\end{align}
Due to \eqref{eq:ulteta}, the sum on the right-hand side is bounded by 
\begin{align*}
\sum_{0< \bk < \bm} \, |t|^{1-|\bk|}\ |t|^\mu (1+ |t|^{2-|\bm|+|\bk|}) 
\lesssim |t|^{2-|\bm|+\mu} + |t|^{3-|\bm|+\mu},
\end{align*}
and hence it satisfies \eqref{eq:ztphi}. 
The second term on the right-hand side of \eqref{eq:etak} 
satisfies \eqref{eq:ztphi} due to \eqref{eq:ulteta}. We estimate 
the last term using \eqref{eq:etader} and \eqref{eq:expbnd}:
\begin{align*}
|\p_{z, t}^{\bm}\, e^{K(t)}|\  
|\eta(\tilde\bx, z, t)|\lesssim |t|^{1-|\bm|} |t|^2 
= |t|^{3-|\bm|}. 
\end{align*}
This completes the proof of \eqref{eq:ztphi}. 
\end{proof}

\subsection{Return to the variables $(\hat{\bx}, x)$}
We are now in a position to rewrite the above results in the initial variables 
$\bx = (x_1, x_2, \dots, x_{N-1}, x)$. 
For $\bx\in \SW_k(\varepsilon, \d)$ it will be convenient to define  
\begin{align*} 
 \sff_k(\bx) = \sv\bigg(\tilde\bx_{k, N}, \frac{x+x_k}{2}, \frac{x+x_k}{2}\bigg),
\quad k = 1, 2, \dots, N-1,
\end{align*}
where we have used our usual notation $\sv(\hat\bx, x) = \nabla_x\psi(\hat\bx, x)$, 
see \eqref{eq:sv}. 
Recall that $\sv(\hat\bx, x)$ is continuous on $\R^{3N}\setminus\{\bx = (\hat\bx, 0)\}$, 
see Remark \ref{rem:grad}.

\begin{lem} Let $\varepsilon>\d$. 
Then 
for all $\bx\in\SW_k(\varepsilon, \d)$,
$k = 1, 2, \dots, N-1$, the function $\sff_k(\hat\bx, \,\cdot\,)$ 
is infinitely differentiable w.r.t. $x$ and 
\begin{align}\label{eq:alsmooth}
|\p_x^{m}\, \sff_k(\hat\bx, x)|\lesssim 1, \quad m\in\mathbb N_0^{3}.
\end{align}
Furthermore, for all $\bx\in \SW_k(\varepsilon, \d)$ we have 
\begin{align}
\psi(\bx) = &\ 
\frac{1}{24}  \big(\nabla_x \, |x-x_k|^3\big)\cdot\sff_k(\bx) 
+ L_k(\bx),\label{eq:triplet}\\[0.3cm]
\nabla_x\psi(\bx) = &\ 
\frac{1}{24}  \,\big(\nabla_x\nabla_x^T\, |x-x_k|^3\big) 
\,\sff_k(\bx)
 + L_k^{(1)}(\bx),\label{eq:triplet1}
\end{align}
where the error terms 
$L_k$ and $L_k^{(1)}$ for all 
$\bx\in \SW_k^\circ(\varepsilon, \d)$ satisfy the following bounds with 
an arbitrary $\mu\in [0, 1)$:
\begin{align}\label{eq:lsmooth}
|\p_x^{m} L_k(\hat\bx, x)|\lesssim 1+ |x-x_k|^{2+\mu-|m|},\quad m\in \mathbb N_0^{3},
\end{align}
and 
\begin{align}\label{eq:l1smooth}
|\p_x^{m}  L_k^{(1)}(\hat\bx, x)|
\lesssim 1+ |x-x_k|^{1+\mu-|m|},\quad m\in \mathbb N_0^{3}.
\end{align}
\end{lem}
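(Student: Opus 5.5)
The plan is to transfer the representation \eqref{eq:psising} from the variables $(\tilde\bx, z, t)$ back to $(\hat\bx, x)$, with $y = x_k$, $z = (x+x_k)/\sqrt2$ and $t = (x-x_k)/\sqrt2$. The one point needing an argument is identifying the vector $a(\tilde\bx, z) := \nabla_t\xi(\tilde\bx, z, 0)$ with $\sqrt2\,\sff_k$.

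\emph{Step 1: identification of $a$.} By Lemma \ref{lem:xiz} (applied with $\a = 1/2$), $\nabla_\bx\xi$ is bounded on $\SW(\varepsilon,\d)$. By Lemma \ref{lem:eta}, $\nabla\eta = O(|t|)$. Also $K(t) = O(|t|)$ and $e^{K(0)}=1$. Hence $\chi = e^{K}(t\cdot a + \eta)$ is differentiable in $t$ at $t=0$ with $\nabla_t\chi(\tilde\bx, z, 0) = a(\tilde\bx, z)$. On the other hand $\nabla_t\chi = (\nabla_x\psi - \nabla_y\psi)/\sqrt2$. Differentiating the antisymmetry relation \eqref{eq:antisym} gives $\nabla_y\psi = -\nabla_x\psi$ at $x=y$. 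Here the continuity of $\sv$ from Remark \ref{rem:grad} is used, and it applies because $x = y = z/\sqrt2$ stays away from the nucleus on $\SW(\varepsilon,\d)$. We conclude
\[
a(\tilde\bx, z) = \sqrt2\,\sv\big(\tilde\bx, z/\sqrt2, z/\sqrt2\big), \qquad \text{so that} \qquad \sff_k(\bx) = a\big(\tilde\bx, (x+x_k)/\sqrt2\big)/\sqrt2 .
\]
Since $a$ is $\plainC\infty$ in $z$ with all derivatives bounded, again by Lemma \ref{lem:xiz}, and $z$ depends linearly on $x$, this gives \eqref{eq:alsmooth}.

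\emph{Step 2: formula \eqref{eq:triplet}.} In \eqref{eq:psising} the leading term is $\frac{|t|\,t}{4\sqrt2}\cdot a$. Substituting $|t|\,t = |x-x_k|(x-x_k)/2$ and $a = \sqrt2\,\sff_k$, it becomes $\frac18|x-x_k|(x-x_k)\cdot\sff_k$. Since $\nabla_x|x-x_k|^3 = 3|x-x_k|(x-x_k)$, this equals $\frac1{24}\big(\nabla_x|x-x_k|^3\big)\cdot\sff_k$. We then set $L_k(\bx) := L(\tilde\bx, z, t)$. Because $(z,t)$ is an orthogonal linear change of variables, each $\p_x^m$ is a combination of $\p_{z,t}$-derivatives of order $|m|$, and $|t|\asymp|x-x_k|$. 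Therefore \eqref{eq:ztphi} yields \eqref{eq:lsmooth}. On $\SW\setminus\SW^\circ$ (that is, at $t=0$) both sides vanish, so the identity holds on all of $\SW_k(\varepsilon,\d)$.

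\emph{Step 3: formula \eqref{eq:triplet1}.} Differentiate \eqref{eq:triplet} in $x$. This produces $\frac1{24}\big(\nabla_x\nabla_x^T|x-x_k|^3\big)\sff_k$ plus the remainder
\[
L_k^{(1)} := \tfrac1{24}\,(\nabla_x\sff_k)^T\,\nabla_x|x-x_k|^3 + \nabla_x L_k .
\]
For the first piece, $\nabla_x|x-x_k|^3$ is homogeneous of degree $2$, so its $m$-th derivatives are $O(|x-x_k|^{2-|m|})$. Combined with \eqref{eq:alsmooth} via the Leibniz rule, the first piece has derivatives bounded by $1+|x-x_k|^{2-|m|}$, which is dominated by $1+|x-x_k|^{1+\mu-|m|}$ for $|x-x_k|<\d$ small. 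The term $\nabla_x L_k$ satisfies \eqref{eq:l1smooth} directly by \eqref{eq:lsmooth}.

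The main obstacle is Step 1. It requires justifying differentiability of $\chi$ at $t=0$ and the use of antisymmetry to kill $\nabla_y\psi - (-\nabla_x\psi)$ at coalescence. Everything else is bookkeeping with Leibniz's rule and the change of variables.
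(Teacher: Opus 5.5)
Your proposal is correct and follows the paper's proof essentially step by step. You identify $\nabla_t\xi(\tilde\bx,z,0)$ with $\sqrt2\,\sff_k$ via the antisymmetry \eqref{eq:antisym}, read off \eqref{eq:alsmooth} from Lemma \ref{lem:xiz}, transfer \eqref{eq:psising} and \eqref{eq:ztphi} back to $(\hat\bx,x)$, and differentiate to obtain $L_k^{(1)}$ by the same formula as the paper. The only minor difference is that you get $\nabla_t\chi(\tilde\bx,z,0)=\nabla_t\xi(\tilde\bx,z,0)$ from the expansion $\chi=t\cdot\nabla_t\xi(\tilde\bx,z,0)+O(|t|^2)$, whereas the paper takes the limit $t\to0$ in $\nabla_t\chi=\nabla_tK\,\chi+e^{K}\nabla_t\xi$.
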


\begin{proof}
Let $\bx\in \SW_k(\varepsilon, \d)$.  
Recall the variables \eqref{eq:coord} 
that we used in Sect. \ref{subsect:two}:
\begin{align*}
z = \frac{x+x_k}{\sqrt2},\quad t = \frac{x-x_k}{\sqrt2}.
\end{align*}
First we prove that 
\begin{align}\label{eq:xibal}
\sff_k(\bx) = \frac{1}{\sqrt2}\,\nabla_t\, \xi(\tilde\bx_{k, N}, z, 0).
\end{align}
Indeed, by definition \eqref{eq:psichi}, 
\begin{align*}
\nabla_t\,\chi(\tilde\bx_{k, N}, z, t)
= \frac{1}{\sqrt2}\big(\nabla_x  - \nabla_{x_k}\big)
\, \psi\bigg(\tilde\bx_{k, N}, \frac{z-t}{\sqrt2}, \frac{z+t}{\sqrt2}\bigg).
\end{align*}
Due to the antisymmetry \eqref{eq:antisym},
\begin{align*}
\nabla_{x_k}\, \psi(\tilde\bx_{k, N}, x_k, x) = - \sv(\tilde\bx_{k, N}, x, x_k),
\end{align*}
and hence, for $t = 0$  we have 
\begin{align*}
\nabla_t\, \chi(\tilde\bx_{k, N}, z, 0)
= \sqrt2\, \sv\bigg(\tilde\bx_{k, N}, \frac{z}{\sqrt2}, \frac{z}{\sqrt2}\bigg)
= \sqrt2 \,\sff_k(\bx). 
\end{align*}
Now, differentiating the expression \eqref{eq:psik} (with $\a = 1/2$) for $t\not = 0$ we get
\begin{align*}
\nabla_t \chi(\tilde\bx, z, t) = \nabla_t K(t) \,\chi(\tilde\bx, z, t) + e^{K(t)}\nabla_t\xi(\tilde\bx, z, t).
\end{align*}
In the limit $t\to 0$ this gives 
\begin{align*}
\nabla_t\chi(\tilde\bx, z, 0) = \nabla_t\xi(\tilde\bx, z, 0),
\end{align*}
which immediately leads to \eqref{eq:xibal}. 

By Lemma \ref{lem:xiz}, the equality \eqref{eq:xibal} means that 
the vector-function 
$\sff_k$ is infinitely differentiable with respect to $x$ and \eqref{eq:alsmooth} holds. 
Furthermore, \eqref{eq:xibal}, \eqref{eq:psising} and \eqref{eq:ztphi} imply that 
\begin{align*}
\psi(\bx) 
= \frac{1}{8} |x-x_k| (x-x_k)\cdot\sff_k(\bx)
+ L_k(\bx),\quad \bx\in \SW_k(\varepsilon, \d),
\end{align*}
where $L_k$ satisfies \eqref{eq:lsmooth}. Since $|x-x_k| (x-x_k) =\frac{1}{3}\nabla_x|x-x_k|^3$, 
\eqref{eq:triplet} follows.

Proof of \eqref{eq:triplet1}. Differentiate \eqref{eq:triplet}:
\begin{align*}
\nabla_x\psi(\bx) = &\ \frac{1}{24}\big(\nabla_x\nabla_x^T\, |x-x_k|^3\big) \sff_k(\bx) 
+ L_k^{(1)}(\bx),\\
 L_k^{(1)}(\bx) = &\ 
\frac{1}{24}\big(\nabla_x\sff_k^T(\bx) \big)
\, \big(\nabla_x\,|x-x_k|^3\big) + \nabla_x L_k(\bx).
\end{align*}
Due to \eqref{eq:alsmooth} and \eqref{eq:lsmooth} the remainder $L_k^{(1)}$ satisfies \eqref{eq:l1smooth}.
\end{proof}
 
\section{Compact operators} \label{sect:compact} 
Here we provide necessary information about compact operators. 
Most of the listed facts can be found in \cite[Chapter 11]{BS}. 
Let $\CH$ and $\mathcal G$ be separable Hilbert spaces.  
Let $A:\CH\to\mathcal G$ be a compact operator. 
If $\CH = \mathcal G$ and $A=A^*\ge 0$, then $\l_k(A)$, $k= 1, 2, \dots$, 
denote the positive eigenvalues of $A$ 
numbered in descending order counting multiplicity. 
For arbitrary spaces $\CH$, $\mathcal G$ and compact $A$, by $s_k(A) >0$, 
$k= 1, 2, \dots$, we denote the singular values of 
$A$ defined by $s_k(A)^2 = \l_k(A^*A) = \l_k(AA^*)$.   
Denote by 
\begin{align*}
n(s, A)= \#\{k: s_k(A) >s\}
\end{align*}
the counting function of the singular numbers. 
If $s_k(A)\lesssim k^{-1/p}, k = 1, 2, \dots$, 
with some $p >0$, then we say that $A\in \BS_{p, \infty}$ and denote
\begin{align*}
\| A\|_{p, \infty} = \sup_k s_k(A) k^{\frac{1}{p}},
\end{align*}
so that $\|A^*A\|_{p/2, \infty} = \|A A^*\|_{p/2, \infty} = \|A\|_{p, \infty}^2$.
The class $\BS_{p, \infty}$ is a complete linear space with the quasi-norm $\|A\|_{p, \infty}$. 
For all $p>0$ the functional $\|A\|_{p, \infty}$ 
satisfies the following ``triangle" inequality for  
operators $A_1, A_2\in\BS_{p, \infty}$:
\begin{align}\label{eq:triangle}
\|A_1+A_2\|_{\scalet{p}, \scalel{\infty}}^{{\frac{\scalel{p}}{\scalet{p+1}}}}
\le \|A_1\|_{\scalet{p}, \scalel{\infty}}^{{\frac{\scalel{p}}{\scalet{p+1}}}}
+ \|A_2\|_{\scalet{p, \infty}}^{{\frac{\scalel{p}}{\scalet{p+1}}}}.
\end{align}  
This inequality allows us to estimate quasi-norms of ``block-vector" operators. 
Let $A_j\in\BS_{p, \infty}$ 
be a finite collection of compact operators. Define the operator $\BA:\CH\to\oplus_j \mathcal G$ by  
$\BA = \{A_j\}_{j}$.
Since 
\begin{align*}
\BA^* \BA = \sum_j A_j^*A_j,  
\end{align*}
and $A_j^*A_j\in\BS_{q, \infty}$, $q = p/2$, 
by \eqref{eq:triangle} we have 
\begin{align}\label{eq:blockvec}
\|\BA\|_{p, \infty}^{{\frac{\scalel{2p}}{\scalet{p+2}}}} 
= \|\BA^*\BA\|_{q, \infty}^{{\frac{\scalel{q}}{\scalet{q+1}}}}
\le  \sum_j\|A_j^*A_j\|_{q, \infty}^{{\frac{\scalel{q}}{\scalet{q+1}}}} = 
\sum_j\|A_j\|_{p, \infty}^{{\frac{\scalel{2p}}{\scalet{p+2}}}}.
\end{align}
Consequently, in order to estimate the singular values of $\BA$ it suffices to 
estimate those of its components $A_j$. We use this fact throughout the paper. 
We point out a useful multiplicative inequality: if $A_1\in \BS_{p_1, \infty}$, 
$A_2\in \BS_{p_2, \infty}$ with some $p_1>0, p_2 >0$, then for $p^{-1} = p_1^{-1} +p_2^{-1}$ 
we have  
\begin{align}\label{eq:multipl}
\| A_1\, A_2\|_{p, \infty}\le \bigg(\frac{p_1}{p}\bigg)^{\frac{1}{p_1}}\, 
\bigg(\frac{p_2}{p}\bigg)^{\frac{1}{p_2}}\, \|A_1\|_{p_1, \infty}\, \|A_2\|_{p_2, \infty},
\end{align}  
see \cite[Chapter 11, Theorem 6.9]{BS}.  
 
For $p\in (0, 1)$ it is often more convenient to use the following ``triangle" inequality for  
arbitrarily many operators $A_j\in\BS_{p, \infty}$, $j = 1, 2, \dots$:
\begin{align}\label{eq:ptriangle}
\big\|\sum_{j} A_j\big\|_{p, \infty}^p\le (1-p)^{-1}
\sum_j \|A_j\|_{p, \infty}^p,
\end{align}
see 
\cite[Lemmata 7.5, 7.6]{AJPR2002}, \cite[\S 1]{BS1977} and references therein. Under 
the additional assumption
\begin{align}\label{eq:orthog}
A_k A_j^* = 0\quad \textup{or}\quad A_k^*A_j = 0,\quad j\not = k, 
\end{align}   
the inequality of the form \eqref{eq:ptriangle} extends to all $p\in (0, 2)$: 

\begin{prop}\cite[Lemma 1.1]{BS1977}\label{prop:orthog}
Let $p\in (0, 2)$ and let $A_j\in \BS_{p, \infty}$, $j= 1, 2, \dots$, 
be a family of operators satisfying 
\eqref{eq:orthog}. 
Then for the operator $A = \sum_j A_j$ we have the inequality
\begin{align}\label{eq:pnorm}
\|A\|_{\scalet{p}, \scalel{\infty}}^{\scalet{p}}\le 
\frac{2}{2-p}\,
\sum_j \|A_j\|_{\scalet{p}, \scalel{\infty}}^{\scalet{p}}. 
\end{align} 
\end{prop}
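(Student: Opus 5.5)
This is Proposition \ref{prop:orthog}: for $p\in(0,2)$, $A=\sum_j A_j$ with the orthogonality \eqref{eq:orthog}, we want $\|A\|_{p,\infty}^p\le \frac{2}{2-p}\sum_j\|A_j\|_{p,\infty}^p$. By symmetry (pass to adjoints if needed) we may assume $A_k^*A_j=0$ for $j\neq k$. The plan is to reduce to a positive operator and then estimate its counting function $n(s,\cdot)$ directly, by cutting each $A_j$ at a level depending on $s$.

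\textbf{Reduction.} Under $A_k^*A_j=0$ the cross terms vanish and
\begin{align*}
A^*A=\sum_j A_j^*A_j .
\end{align*}
So $s_k(A)^2=\l_k\big(\sum_j A_j^*A_j\big)$. Set $T_j=A_j^*A_j\ge 0$, so that $\l_k(T_j)=s_k(A_j)^2$. Normalise by writing $a_j=\|A_j\|_{p,\infty}^p$, so that $n(s,A_j)\le a_j s^{-p}$.

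\textbf{Splitting at level $s$.} Fix $s>0$. Using the spectral decomposition, split $T_j=T_j'+T_j''$, where $T_j'$ keeps the eigenvalues of $T_j$ that are $>s^2$ (finite rank, with $\operatorname{rank}T_j'=n(s,A_j)$) and $T_j''$ keeps the rest ($\|T_j''\|\le s^2$). Put $T'=\sum_j T_j'$ and $T''=\sum_j T_j''$. The rank of $T'$ is at most $\sum_j n(s,A_j)\le s^{-p}\sum_j a_j$.

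For $T''$ I use the trace, but only above level $t$: the trace of $T''$ can be unbounded, so I would work with the trace of the spectral truncation $\min(T_j,s^2)$ instead. Its trace is
\begin{align*}
\sum_k \min\big(s_k(A_j)^2,s^2\big)\le \int_0^{s^2} n\big(\sqrt\tau,A_j\big)\,d\tau \le a_j\int_0^{s^2}\tau^{-p/2}\,d\tau = a_j\,\frac{s^{2-p}}{1-p/2}.
\end{align*}
This is finite precisely because $p<2$, and this is where the constant $\frac{2}{2-p}$ comes from.

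\textbf{Combining.} For positive operators we have $\l_{m+n-1}(X+Y)\le \l_m(X)+\l_n(Y)$, and the number of eigenvalues of $T''$ exceeding $\tau$ is at most $\tau^{-1}\tr T''$. Counting eigenvalues of $A^*A$ above a level $\sigma^2>s^2$, I get
\begin{align*}
n(\sigma,A)\le \operatorname{rank}T'+\frac{\tr T''}{\sigma^2-s^2}.
\end{align*}

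\textbf{Main obstacle.} This is the constant optimisation. The crude combination gives an estimate of the form $\|A\|^p\le C(p)\sum_j a_j$, but with $C(p)$ worse than $\frac{2}{2-p}$. To obtain the sharp constant I would avoid the additive split and use Ky Fan / Weyl directly: the sum of the top $k$ eigenvalues of $\sum_j T_j$ is at most
\begin{align*}
\sum_j \sup_{\operatorname{rank}P\le k}\tr(PT_jP).
\end{align*}
I would then bound each term by $\int$-type estimates of $\min(\lambda, k\text{-tail})$. Alternatively, with $\sigma=s$, the bound $k\,\l_k(A^*A)\le \sum_j\sum_i \min(s_i(A_j)^2,\lambda)$ with $\lambda=\l_k(A^*A)$, together with the truncated-trace computation above, yields
\begin{align*}
k\lambda\le \frac{2}{2-p}\,\lambda^{1-p/2}\sum_j a_j .
\end{align*}
This gives $k\,\lambda^{p/2}\le \frac{2}{2-p}\sum_j a_j$, which is exactly the claim. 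The key inequality to verify is $k\,\l_k\big(\sum T_j\big)\le \sum_j \tr\min\big(T_j,\l_k\big)$. It follows from $\l_k(X)\,k\le \tr\min(X,\l_k)$ applied to $X=\sum_j T_j$, combined with concavity of $X\mapsto\tr\min(X,\lambda)$ being subadditive on positive operators (operator concavity of $x\mapsto\min(x,\lambda)$ is the delicate point; I would instead use the variational formula $\tr\min(X,\lambda)=\inf_{X=Y+Z,\,Y,Z\ge0}\big(\lambda\operatorname{rank}$-free form $\tr Z+\lambda\,\tr\ldots\big)$, or simply Ky Fan with spectral truncations).
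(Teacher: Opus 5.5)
Your reduction to $A^*A=\sum_jA_j^*A_j$ is the paper's, up to taking adjoints. The chain in your last paragraph is also the right one. But its decisive step,
\[
k\,\l_k\Big(\sum_j T_j\Big)\le\sum_j\tr\min\big(T_j,\l\big),\qquad \l=\l_k\Big(\sum_j T_j\Big),
\]
is not established, and the reasons you give do not work.

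\begin{itemize}
\item The function $x\mapsto\min(x,\l)$ is not operator concave. A nonnegative operator concave function on $[0,\infty)$ is operator monotone, hence real-analytic on $(0,\infty)$, and $\min(x,\l)$ is not.
\item Concavity of $X\mapsto\tr\min(X,\l)$ is true, but it does not imply subadditivity on the positive cone. For example, $X\mapsto\l_{\min}(X)$ on $2\times 2$ matrices is concave and vanishes at $0$, yet $\l_{\min}(\mathrm{diag}(1,0))+\l_{\min}(\mathrm{diag}(0,1))=0<1=\l_{\min}(I)$.
\item The subadditivity you need does hold (it is an instance of Rotfel'd's trace inequality), but it needs its own argument. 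Your ``variational formula'' is never actually stated.
\item A Ky Fan bound through sums of the top $k$ eigenvalues cannot replace it. When $2/p>1$, $\sum_{i\le k}\l_i(T_j)$ stays bounded in $k$, so this route only gives $\l_k\lesssim k^{-1}$.
\end{itemize}

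The repair is already contained in your splitting. Cut at the level $s^2=\l$, and use that $T'=\sum_jT_j'$ has finite rank $r\le\sum_j\operatorname{rank}T_j'$, instead of Weyl's inequality with a gap $\sigma^2-s^2$. For $k>r$, minimax gives $\l_k(T'+T'')\le\l_{k-r}(T'')\le (k-r)^{-1}\tr T''$. Hence in all cases (trivially when $k\le r$)
\[
k\l\le r\l+\tr T''\le\sum_j\big(\l\operatorname{rank}T_j'+\tr T_j''\big)=\sum_j\sum_i\min\big(s_i(A_j)^2,\l\big)=\sum_j\int_0^{\l}n(\sqrt\tau,A_j)\,d\tau .
\]
Bound rank and trace \emph{jointly} by this integral. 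Your ``Combining'' step bounds them separately, and that is where it loses the constant. The joint bound gives $k\l\le\frac{2}{2-p}\l^{1-p/2}\sum_j\|A_j\|^p_{p,\infty}$, which is the claim.

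With this fix your proof is correct and self-contained. The paper instead stops right after the reduction and applies the quasi-triangle inequality \eqref{eq:ptriangle}, with $q=p/2<1$, to the positive operators $A_jA_j^*$, using $\|A_jA_j^*\|_{q,\infty}=\|A_j\|^2_{p,\infty}$. Your repaired argument is essentially the standard proof of \eqref{eq:ptriangle} specialised to positive summands. It removes the dependence on that citation at the cost of some length.
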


\begin{proof} 
It suffices to conduct the proof under the first condition in 
\eqref{eq:orthog} only, so that  
\begin{align*}
A A^* = \sum_j A_j A_j^*. 
\end{align*}
Since $A_j A_j^*\in\BS_{q, \infty}, q = p/2<1$, 
and $\|A_j A_j^* \|_{q, \infty} = \|A_j\|_{p, \infty}^2$, 
the inequality \eqref{eq:ptriangle} leads to the bound 
\begin{align*}
\|A\|_{p, \infty}^p = \|A A^*\|_{q, \infty}^q \le (1-q)^{-1} 
\sum_j \|A_j A_j^*\|_{q, \infty}^q
= (1-p/2)^{-1}\sum_j \|A_j\|_{p, \infty}^p. 
\end{align*}
This inequality coincides with \eqref{eq:pnorm}.
\end{proof}

Let us now introduce the functionals describing the asymptotic behaviour of singular values. 
For $A\in \BS_{p, \infty}$ the following quantities are finite:
\begin{align}\label{eq:limsupinf}
\begin{lcases}
\SfG_p(A) = & 
\big(\limsup\limits_{k\to\infty} k^{\frac{1}{p}}s_k(A)\big)^{p},\\
\sg_p(A) = &  
\big(\liminf\limits_{k\to\infty} k^{\frac{1}{p}}s_k(A)\big)^{p},
\end{lcases}
\end{align}
and they clearly satisfy the inequalities
\begin{align} \label{eq:gnorm}
\sg_p(A)\le \SfG_p(A)\le \|A\|_{p, \infty}^p.
\end{align}
Note that if 
$A\in\BS_{p, \infty}$, then 
$\SfG_q(A) = 0$ for all $q > p$. 
Observe that 
\begin{align}\label{eq:double}
\sg_{p/2}(A A^*) = \sg_{p/2}(A^* A) = \sg_{p}(A),\quad 
\SfG_{p/2}(A A^*) = \SfG_{p/2}(A^* A) = \SfG_{p}(A).
\end{align}
If $\SfG_p(A) = \sg_p(A)$, then the singular values of $A$ satisfy the asymptotic formula
\begin{align*} 
s_k(A) = \big(\SfG_p(A)\big)^{\frac{1}{p}} k^{-\frac{1}{p}} + o(k^{-\frac{1}{p}}),\ k\to\infty.
\end{align*}
The functionals $\sg_p(A)$, $\SfG_p(A)$ 
also satisfy the inequalities of the type \eqref{eq:triangle}:
\begin{align}\label{eq:trianglep}
\begin{lcases}
\SfG_p(A_1 + A_2)^{{\frac{\scalel{1}}{\scalet{p+1}}}}
\le & \SfG_p(A_1)^{{\frac{\scalel{1}}{\scalet{p+1}}}}
+ \SfG_p(A_2)^{{\frac{\scalel{1}}{\scalet{p+1}}}},\\[0.3cm]
\sg_p(A_1+A_2)^{{\frac{\scalel{1}}{\scalet{p+1}}}}
\le & \sg_p(A_1)^{{\frac{\scalel{1}}{\scalet{p+1}}}}
+ \SfG_p(A_2)^{{\frac{\scalel{1}}{\scalet{p+1}}}}.
\end{lcases}
\end{align}
It follows from \eqref{eq:trianglep} that 
the functionals $\SfG_p$ and $\sg_p$ are continuous on $\BS_{p, \infty}$:
\begin{align*}
\big|
\SfG_p(A_1)^{{\frac{\scalel{1}}{\scalet{p+1}}}} - \SfG_p(A_2)^{{\frac{\scalel{1}}{\scalet{p+1}}}}
\big|\le &\ \SfG_p(A_1 - A_2)^{{\frac{\scalel{1}}{\scalet{p+1}}}},\\
\big|\sg_p(A_1)^{{\frac{\scalel{1}}{\scalet{p+1}}}} - \sg_p(A_2)^{{\frac{\scalel{1}}{\scalet{p+1}}}}
\big|\le &\ \SfG_p(A_1 - A_2)^{{\frac{\scalel{1}}{\scalet{p+1}}}}.
\end{align*} 
We need the following two corollaries of this fact:

\begin{cor}\label{cor:zero}
Suppose that $\SfG_p(A_1-A_2) = 0$. Then 
\begin{align*}
\SfG_p(A_1) = \SfG_p(A_2),\quad \sg_p(A_1) = \sg_p(A_2).
\end{align*}
\end{cor}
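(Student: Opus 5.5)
The statement is an immediate consequence of the continuity estimates displayed just before it, which in turn follow from the ``triangle'' inequalities \eqref{eq:trianglep}. So the plan is to derive those two continuity inequalities and then specialise to $\SfG_p(A_1-A_2)=0$.

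\emph{Step 1: continuity inequality for $\SfG_p$.} Write $A_1 = A_2 + (A_1-A_2)$ and apply the first line of \eqref{eq:trianglep}:
\begin{align*}
\SfG_p(A_1)^{\frac{1}{p+1}}\le \SfG_p(A_2)^{\frac{1}{p+1}} + \SfG_p(A_1-A_2)^{\frac{1}{p+1}}.
\end{align*}
Exchange the roles of $A_1$ and $A_2$, and use $\SfG_p(A_2-A_1)=\SfG_p(A_1-A_2)$. This holds because $s_k(-B)=s_k(B)$. Together these give the first continuity bound.

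\emph{Step 2: continuity inequality for $\sg_p$.} Apply the second line of \eqref{eq:trianglep} in the same way: write $A_1 = A_2+(A_1-A_2)$ and $A_2 = A_1 + (A_2-A_1)$. The perturbation is always controlled by $\SfG_p$ of the difference, so both inequalities involve only $\SfG_p(A_1-A_2)$. This yields the second bound.

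\emph{Step 3: conclusion.} Now assume $\SfG_p(A_1-A_2)=0$. Both right-hand sides vanish, so $\SfG_p(A_1)^{1/(p+1)}=\SfG_p(A_2)^{1/(p+1)}$, and likewise for $\sg_p$. All these quantities are non-negative and finite, since $A_1,A_2\in\BS_{p,\infty}$ implicitly, by \eqref{eq:gnorm}. Taking the $(p+1)$-th power therefore gives both equalities.

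There is no real obstacle here. The only substantive input is \eqref{eq:trianglep} itself, which the paper takes as known; it follows from Ky Fan's inequality $s_{k+l-1}(A_1+A_2)\le s_k(A_1)+s_l(A_2)$ with an optimised split $k\approx \theta n$, $l\approx (1-\theta)n$. One should also note that finiteness of $\SfG_p(A_2)$ is needed to subtract, which is guaranteed by membership in $\BS_{p,\infty}$.
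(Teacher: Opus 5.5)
Your proposal is correct and follows the paper's own route: the corollary is read off from the two continuity inequalities, which come from \eqref{eq:trianglep} by writing $A_1 = A_2 + (A_1-A_2)$ and $A_2 = A_1 + (A_2-A_1)$. Your finiteness caveat is not actually needed: when $\SfG_p(A_1-A_2)=0$ the two inequalities directly give $\SfG_p(A_1)\le\SfG_p(A_2)\le\SfG_p(A_1)$ (and likewise for $\sg_p$), with no subtraction involved.
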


The next corollary is more general:

\begin{cor}\label{cor:zero1}
Suppose that $A\in\BS_{p, \infty}$ and that for every $\nu>0$ there exists an operator 
$A_\nu\in \BS_{p, \infty}$ such that $\SfG_p(A - A_\nu)\to 0$, 
$\nu\to 0$. Then the functionals 
$\SfG_p(A_\nu), \sg_p(A_\nu)$ have limits as $\nu\to 0$ and 
\begin{align*}
\lim_{\nu\to 0} \SfG_p(A_\nu) = \SfG_p(A),\quad 
\lim_{\nu\to 0} \sg_p(A_\nu) = \sg_p(A).
\end{align*}
\end{cor}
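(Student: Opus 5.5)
The claim to prove is Corollary \ref{cor:zero1}. The plan is to derive it directly from the continuity inequalities for $\SfG_p$ and $\sg_p$ stated just before Corollary \ref{cor:zero}. Write $\kappa = 1/(p+1)$ for the exponent appearing in \eqref{eq:trianglep}.

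First I apply the first continuity inequality with $A_1 = A$ and $A_2 = A_\nu$. Both operators lie in $\BS_{p,\infty}$, so all quantities involved are finite by \eqref{eq:gnorm}. This gives
\begin{align*}
\big|\SfG_p(A)^{\kappa} - \SfG_p(A_\nu)^{\kappa}\big|\le \SfG_p(A - A_\nu)^{\kappa}.
\end{align*}
By hypothesis $\SfG_p(A-A_\nu)\to 0$ as $\nu\to 0$. Since $\kappa>0$, the right-hand side also tends to zero. Hence $\SfG_p(A_\nu)^\kappa\to \SfG_p(A)^\kappa$.

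Next I use that $t\mapsto t^{1/\kappa} = t^{p+1}$ is continuous on $[0,\infty)$. This upgrades the previous convergence to $\SfG_p(A_\nu)\to\SfG_p(A)$. In particular the limit exists, which is exactly the first claim.

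The same two steps, now using the second continuity inequality
\begin{align*}
\big|\sg_p(A)^{\kappa} - \sg_p(A_\nu)^{\kappa}\big|\le \SfG_p(A - A_\nu)^{\kappa},
\end{align*}
give $\sg_p(A_\nu)\to\sg_p(A)$.

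There is no real obstacle here. The only point to check is that the continuity inequalities are legitimately derived from \eqref{eq:trianglep}. To get the $\sg_p$ version, one applies the second line of \eqref{eq:trianglep} twice: once with $A = A_\nu + (A-A_\nu)$, and once with $A_\nu = A + (A_\nu - A)$. One also uses $\SfG_p(-B) = \SfG_p(B)$. I take these inequalities as given, as stated in the paper.
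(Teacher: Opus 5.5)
Your proof is correct and is the argument the paper intends: the corollary is presented as an immediate consequence of the two continuity inequalities derived from \eqref{eq:trianglep}, applied with $A_1 = A$, $A_2 = A_\nu$, followed by raising to the power $p+1$. Your remark on obtaining the $\sg_p$ inequality also correctly fills in a step the paper leaves implicit: apply the second line of \eqref{eq:trianglep} in both directions and use $\SfG_p(-B)=\SfG_p(B)$.
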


\section{Nikol'skii-Besov spaces}\label{sect:besov}

This section is a preparation for the study of singular values of integral operators 
conducted in Section \ref{sect:intop}. In Section  \ref{sect:intop} 
we rely on the paper \cite{BS1977} where   
the membership of integral operators 
in classes $\BS_{p, \infty}$ with some $p>0$ is described 
in terms of the smoothness of their integral kernels. 
More specifically, we need the bounds in  
$\BS_{p, \infty}$ under the assumption that the integral kernel 
lies in a certain \emph{Nikol'skii-Besov space}.   
Thus we start with  
basic facts about Besov 
spaces. More details can be found 
in \cite[Chapter 4]{Nikol1975},\cite[Sect. 5.3]{Burenkov1998}, 
\cite[Sect. V.5]{Stein1970} and \cite[Ch. 4]{Triebel1978}. 
 
\subsection{Besov spaces} \label{subsect:nikol} 
For a function $u = u(x)$, $x\in \R^d$, and arbitrary $l = 0, 1, 2, \dots,$ 
define the finite difference  
\begin{align}\label{eq:findif}
\Delta_h^{(l)} u (x) = \sum_{j=0}^l (-1)^{j+l} {l \choose j} u(x+jh),
\end{align}
and the \emph{$\plainL{q}$-modulus of smoothness of order} $l$:
\begin{align}\label{eq:ms}
\om_q^{(l)}(u; t) = \sup_{|h|\le t} \|\Delta_h^{(l)}u\|_{\plainL{q}},\quad t >0.
\end{align}
The function $u$ is said to belong to the Besov space $\plainB^{s}_{q, \infty}(\R^d)$,  
$s>0$, $q\in [1, \infty]$, if 
for some $l > s$ we have 
\begin{align}\label{eq:besov}
\|u\|_{\plainB^{s}_{q, \infty}} := \|u\|_{\plainL{q}} + \sup_{t>0} t^{-s}\om_q^{(l)}(u; t) < \infty.
\end{align}
This quantity defines a norm on 
$\plainB^{s}_{q, \infty}(\R^d)$. 
Different values of $l > s$ lead to equivalent norms \eqref{eq:besov}, 
see \cite[Sect. 5.3]{Burenkov1998}. 
The space $\plainB^{s}_{q, \infty}(\R^d)$ is just one representative of the whole scale 
of Besov spaces $\plainB^{s}_{q, r}(\R^d)$, $0 <r \le \infty$. In this paper we need only 
$r = \infty$. 
Sometimes in the literature the space $\plainB^{s}_{q, \infty}$ is called 
the \emph{Nikol'skii space} or \emph{Nikol'skii-Besov space}, 
see e.g. \cite{BS1977}, \cite{Burenkov1998}. We use the notation 
$\plainN{s}_{q}$ for it and write  
$\|u\|_{\plainN{s}_q}$ for the norm \eqref{eq:besov}. 

Note that 
\begin{align}\label{eq:iter}
\Delta_h^{(l)} u(x) = \Delta_h^{(l-m)} (\Delta_h^{(m)} u)(x)
= \sum_{j=0}^{l-m}(-1)^{j+l-m}{l-m \choose j} \Delta^{(m)}_hu(x+jh),
\end{align}
for all $m < l$.
Thus, under the assumption 
$u\in \plainW{l, 1}_{\textup{loc}}(\R^d)$, 
iterating the identity
\begin{align*}
\Delta_h^{(1)} u(x) = u(x+h)- u(x) = \int_0^1 h\cdot\nabla u(x+sh) ds,
\end{align*}
we obtain the formula
\begin{align}\label{eq:fintodiff}
\Delta_h^{(l)} u(x) = \int_0^1 \int_0^1 \cdots \int_{0}^1 
(h\cdot \nabla)^l u\big( x+ \sum_{j=1}^l s_j h\big)\, ds_1 ds_2\dots ds_l.
\end{align}
For a domain $\Om\subset\R^d$ the space $\plainN{s}_q(\Om)$ is defined as the restriction 
of $\plainN{s}_q(\R^d)$ to $\Om$. The corresponding norm of the function 
$u\in \plainN{s}_q(\Om)$ is defined as 
\begin{align}\label{eq:restr}
\|u\|_{\plainN{s}_q(\Om)} = \inf \|g\|_{\plainN{s}_q(\R^d)},
\end{align}
where the infimum is taken over all functions $g\in \plainN{s}_q(\R^d)$ such 
that $u=g$ for a.e. $x\in\Om$. In other words, a function $u$ belongs to 
$\plainN{s}_q(\Om)$ if it has an extension $g\in \plainN{s}_q(\R^d)$. 
Here we have adopted the definitions from \cite[Sect. 4.2.1]{Triebel1978}. 
They will be sufficient for our purposes.  

There are also intrinsic (i.e. not based on restriction) definitions of the Nikol'skii and Besov 
spaces on domains, see for example \cite[Sect. 4.3]{Nikol1975} or \cite[Section 7.32]{Adams_Fournier_2003}. 
For domains 
with ``good" extension properties the intrinsic and restriction definitions are equivalent, see 
\cite[Sect. 7.32]{Adams_Fournier_2003}.  
The collection of such domains includes 
bounded domains satisfying the cone condition (see \cite[Sect. 4.2.3]{Triebel1978}), 
and in particular, bounded Lipschitz domains.   
 
\subsection{A test of membership in $\SN^s_q$}\label{subsect:mbr}
We are interested in functions that fall in the appropriate spaces $\plainN{s}_q$  
in a natural way. 
Let $u\in \plainC\infty(\R^d\setminus X)$, where $X = \{a_1, a_2, \dots, a_N\}$ is a finite collection of points $a_k\in \R^d$, $k = 1, 2, \dots, N$. 
Assume that  
for some $\a \in\R$ and some $A\ge 0$ we have 
\begin{align}\label{eq:sing}
|\p_x^j u(x)|\lesssim A\big(1+\sum_{k=1}^N \big(|x-a_k|\wedge 1\big)^{\a-|j|}\big),
\quad \textup{for all} \quad x\notin X\quad \textup{and}
\quad j \in \mathbb N_0^d,
\end{align} 
where, as explained in the Introduction, 
$a\wedge b = \min(a, b)$ for any two non-negative $a$ and $b$. 
A good example that ``almost" satisfies this condition 
with $X = \{0\}$ is the class of homogeneous functions, i.e.   
functions 
$\Phi\in\plainC\infty(\R^d\setminus\{0\} )$ such that $\Phi(tx) = t^{\a} \Phi(x)$ for all
$t >0$ and all $x\in \R^d, x\not = 0$, with some $ \a\in\R$. As follows from 
\cite[Lemma 10.1]{BS1977}, if $\a > -d/q$, then $\Phi\in \plainN{s}_q(B)$, $s = \a + d/q$, for 
any ball $B\in\R^d$. 
In the next lemma we show the same for functions satisfying \eqref{eq:sing}.  
This result is elementary but we have not been able to find it in the literature.
It will serve as a basis for our study of integral operators.
  
\begin{thm}  \label{thm:nik} 
Suppose that $d\ge 2$ and that the function $u$ satisfies \eqref{eq:sing}.  
If $\a > -d/q$ with some $1\le q\le d$, then 
for all $x_0\in\R^d$ and $R>0$ the function $u$ belongs to $\plainN{s}_q(B)$,  $B = B(x_0, R)$, 
with $s = \a + d/q$.
Moreover, $\|u\|_{\plainN{s}_q(B)}\lesssim A$ where the implicit constant 
does not depend on $x_0$ and the set $X$, but may depend on the radius $R$.  
\end{thm}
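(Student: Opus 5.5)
We prove the bound directly from the definition \eqref{eq:besov}, after extending $u$ from $B$ to all of $\R^d$ by a cut-off. Let $\varsigma(x) = \t\big(|x-x_0|/(4R)\big)$ with $\t$ from \eqref{eq:sco1}. Then $\varsigma=1$ on $B(x_0,2R)\supset B$, and $\varsigma$ is supported in $B(x_0,4R)$. Set $g=\varsigma u$. By the Leibniz rule $g$ still satisfies \eqref{eq:sing}, with a constant depending on $R$ only, and $g$ vanishes outside $B(x_0,4R)$. By \eqref{eq:restr} it suffices to show $\|g\|_{\plainN{s}_q(\R^d)}\lesssim A$.

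Fix an integer $l>s$, and let $m$ be the largest integer with $m<s$, so that $m\ge 0$ and $\a-m>-d/q$. First, $\|g\|_{\plainL q}\lesssim A$, because $(|x-a_k|\wedge1)^{\a q}$ is integrable on bounded sets when $\a q>-d$. For the same reason the differences satisfy $\|\Delta^{(l)}_h g\|_{\plainL q}\lesssim \|g\|_{\plainL q}\lesssim A\lesssim A|h|^s$ whenever $|h|\ge 1$. So the whole matter is to prove
\begin{align*}
\|\Delta^{(l)}_h g\|_{\plainL q}\lesssim A|h|^s,\qquad 0<|h|<1.
\end{align*}

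To do this, split $\R^d$ into near regions $U_k=\{x:|x-a_k|<4l|h|\}$, $k=1,\dots,N$, and the far region $U_0=\R^d\setminus\bigcup_k U_k$.

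On $U_0$ we use \eqref{eq:fintodiff} of order $l$. Every point $y=x+\sigma h$ with $0\le\sigma\le l$ satisfies $|y-a_k|\asymp|x-a_k|$ for all $k$. Hence
\begin{align*}
|\Delta_h^{(l)}g(x)|\lesssim A|h|^l\Big(1+\sum_k(|x-a_k|\wedge1)^{\a-l}\Big).
\end{align*}
We integrate the $q$-th power over $U_0\cap B(x_0,5R)$. Since $(\a-l)q+d<0$ (because $l>s$), each term contributes
\begin{align*}
\int_{|x-a_k|>4l|h|}|x-a_k|^{(\a-l)q}\,dx\lesssim|h|^{(\a-l)q+d}.
\end{align*}
The constant $1$ and the region $|x-a_k|>1$ contribute $|h|^{lq}\le|h|^{sq}$. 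Altogether this gives $A|h|^s$ on $U_0$.

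On $U_k$ we use \eqref{eq:iter}: we write $\Delta^{(l)}_h$ as a combination of shifts $\Delta^{(m)}_h g(\cdot+jh)$, $0\le j\le l-m$. It then suffices to bound $\|\Delta^{(m)}_h g\|_{\plainL q(\tilde U_k)}$ on the enlarged ball $\tilde U_k=B(a_k,6l|h|)$.
\begin{itemize}
\item If $m=0$, we estimate $|g|$ directly by $A\big(1+\sum_{k'}(|x-a_{k'}|\wedge 1)^{\a}\big)$.
\item If $m\ge1$, we use \eqref{eq:fintodiff} of order $m$ together with Minkowski's inequality for the averaging integrals in $s_1,\dots,s_m$. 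This reduces the estimate to $|h|^m\|\nabla^m g\|_{\plainL q(B(a_k,8l|h|))}$.
\end{itemize}
Since $\a-m>-d/q$, the singular factor gives $\big(\int_{|x-a_k|<8l|h|}|x-a_k|^{(\a-m)q}dx\big)^{1/q}\lesssim|h|^{\a-m+d/q}$. The result is $A|h|^{m}|h|^{s-m}=A|h|^s$. The other points $a_{k'}$ lying inside $\tilde U_k$ produce singular terms of the same type and obey the same bound. The bounded part gives $|h|^{m+d/q}\le|h|^s$, since $m+d/q\ge s$ is equivalent to $\a\le m$. If instead $\a>m$, that bounded term is controlled by the same integral as before.

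Summing the contributions of $U_0,U_1,\dots,U_N$ yields the claim. The constants depend on $d,q,\a,l,R$ and on the number $N$ of points, but not on their location or on $x_0$.

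The main obstacle is the near region. There a single derivative order cannot work for every $\a$: for $\a\ge0$ a crude sup-bound only gives $|h|^{d/q}$, while order $l$ derivatives are not $q$-integrable near $a_k$. The device that resolves this is the choice of an intermediate order $m<s$ via \eqref{eq:iter}. This should be checked carefully, in particular in the borderline case $s$ integer with $m=s-1$, where $|h|^{m}$ must still combine with $|h|^{\a-m+d/q}$ into exactly $|h|^s$.
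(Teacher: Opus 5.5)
There is a genuine gap in the near region when $m\ge1$, and it sits exactly where the hypothesis $d\ge2$ must be used; your argument never uses $d\ge2$.

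Your architecture is otherwise the paper's: cut-off to a bounded set, balls of radius $\sim l|h|$ around the $a_k$ against a far region, order-$l$ integration along segments far away, and \eqref{eq:iter} to pass to differences of an intermediate order $m$ near the points. The exponent bookkeeping is also right. Your $m$ satisfies $m\ge s-1\ge\a$ because $q\le d$, so the case ``$\a>m$'' never occurs. The same fact $\a-m\le0$ handles the points $a_{k'}$ lying outside your enlarged ball, which you only treat implicitly.

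The problem is that you apply \eqref{eq:fintodiff} of order $m$ at points $x$ whose segment $[x,x+mh]$ runs through $a_k$. That identity holds only if $g\in\plainW{m,1}_{\textup{loc}}(\R^d)$ and its distributional derivatives coincide with the classical ones on $\R^d\setminus X$. Otherwise $\nabla^m g$ can carry a part concentrated at $a_k$ that \eqref{eq:sing} does not see. For $d=1$ the statement is in fact false:
\begin{itemize}
\item $u=\operatorname{sign}x$ satisfies \eqref{eq:sing} with $\a=1$, $q=1$.
\item Yet $\|\Delta^{(3)}_h g\|_{\plainL1}\ge 2|h|$ for every extension $g$ of $u|_{(-R,R)}$ and all small $h$, so $u\notin\plainN{2}_1(B)$.
\item In your scheme ($s=2$, $m=1$) the order-one formula would give $\Delta^{(1)}_h u=0$ near $0$. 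In fact $|\Delta^{(1)}_h u|=2$ on an interval of length $|h|$.
\end{itemize}

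The paper closes this gap with Remark~\ref{rem:remove} and Lemma~\ref{lem:remove}. Its $m$ from \eqref{eq:mit} gives $\a-m>-1$, hence $\nabla^j u\in\plainL{d}_{\textup{loc}}$ for $j\le m$. Point singularities of $\plainW{m,p}$ functions are removable for $p\ge d/(d-1)$, which covers $p=d$ since $d\ge2$.

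With your larger $m$ you only know $\nabla^m g\in\plainL{q}_{\textup{loc}}$, and $q$ may be below $d/(d-1)$, so the lemma cannot be quoted as stated. You have two ways to repair this.
\begin{enumerate}
\item Take the paper's $m$. It also satisfies $\a\le m<s$ and $q(\a-m)>-d$, so your computations go through unchanged.
\item Verify removability directly from the pointwise bounds. Cut off at scale $\varepsilon$ around $a_k$ in the integration by parts for $\nabla^{j-1}g$, $1\le j\le m$. The error terms are at most $\int_{B(a_k,\varepsilon)}|\nabla^{j}g|\,dx+\varepsilon^{-1}\int_{B(a_k,\varepsilon)}|\nabla^{j-1}g|\,dx\lesssim o(1)+\varepsilon^{d-1}+\varepsilon^{\a-j+d}$. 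This tends to $0$ because $\a-j>-d$, and, for the bounded part of \eqref{eq:sing}, because $d\ge2$.
\end{enumerate}
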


We precede the proof with the following useful remark.

\begin{rem}\label{rem:remove}
Define the integer $m = m(\a)$: 
\begin{align}\label{eq:mit}
\begin{lcases}
m = &\  0, \quad \textup{if} \quad \a\le 0,\\
m = &\  [\a]+1, \quad \textup{if}\quad \a>0, \a\notin \mathbb Z,\\
m = &\  \a, \quad \textup{if}\quad \a>0, \a\in\Z.
\end{lcases}
\end{align}
Under the assumption $d\ge 2$ the potential singularities of the function $u$ 
at the points $a_k, k= 1, 2, \dots, N$, 
are ``removable" in the following sense. 
If $\a >0$ (i.e. $m\ge 1$) the function $u$ clearly belongs to 
$\plainW{m, d}_{\textup{loc}}(\R^d\setminus X)$. Thus it follows from 
Lemma \ref{lem:remove} that for $d\ge 2$ the function $u$ also belongs to 
$\plainW{m, d}_{\textup{loc}}(\R^d)$. 
\end{rem}

\begin{proof}[Proof of Theorem \ref{thm:nik}] 
In view of the definition \eqref{eq:restr} we may assume that 
$u$ is supported on the ball $B(x_0, 2R)$ and shall prove that 
$\|u\|_{\plainN{s}_q(\R^d)}\lesssim A$. 
Since $\a > -d/q$, we have $u\in\plainL{q}(\R^d)$ and $\|u\|_{\plainL{q}}\lesssim A$. 
It remains to 
estimate the modulus of smoothness  \eqref{eq:ms}.

Let 
\begin{align}\label{eq:el}
s = \a + d/q\quad \textup{and} \quad l = [s]+1.
\end{align} 
We shall show that 
\begin{align}\label{eq:ms2}
\|\Delta_h^{(l)} u \|_{\plainL2}\lesssim A\,
(1\wedge|h|^s),\quad s = \a+\frac{d}{2},
\end{align} 
for all $h\in\R^d$, 
with an implicit constant independent of $x_0$ and of the set $X$. 
If $|h|\ge 1$, then by the definition \eqref{eq:findif}, 
\begin{align*}
\|\Delta_h^{(l)} u \|_{\plainL{q}}\le 2^l\|u\|_{\plainL{q}}\lesssim A,
\end{align*}
so that it remains to prove \eqref{eq:ms2} for $|h|\le 1$.

Denote by $\chi_k(x; r), r>0,$ the indicator of the ball $B(a_k, r)$, and define 
$\kappa_k(x; r)= 1-\chi_k(x; r)$. 
The first step is to prove that 
\begin{align}\label{eq:inside}
\|\chi_k(\ \cdot\ ; 2l|h|) \Delta_h^{(l)} u \|_{\plainL{q}}\lesssim A\, |h|^s,\quad 
k = 1, 2, \dots, N.
\end{align}
For the finite difference $\Delta_h^{(l)} u$ we use the representation \eqref{eq:iter} 
where $m$ is as defined in~\eqref{eq:mit}. 
In view \eqref{eq:iter}, we have 
\begin{align*}
|\chi_k(x ; 2l|h|)\, \Delta_h^{(l)} u(x) |
\lesssim \sum_{j=0}^{l-m}|\chi_k(x ; 2l|h|)\,  \Delta_h^{(m)} u(x+jh)|.
\end{align*}
Estimate $\plainL{q}$-norms of each term on the right-hand side individually. 
For $m = 0$ we have 
\begin{align}\label{eq:deltam0}
\|\chi_k(\, \cdot\, ; 2l|h|)\,  \Delta_h^{(0)} & u(\ \cdot\ + jh)\|_{\plainL{q}}^q\notag\\
\le &\ 
\int_{\R^d}
\chi_k(x ; 2l|h|)\, \big|u\big(x+ jh\big)\big|^q dx\notag\\
\le &\  \int_{\R^d} \chi_k(x ; 4l|h|)\,
\big|u(x)\big|^q dx\,,  
\end{align}
for all $j = 0, 1, \dots, l$. 

If $m\ge 1$, then in view of Remark 
\ref{rem:remove} 
we have $u\in \plainW{m, d}_{\textup{loc}}(\R^d)$. Therefore, using 
\eqref{eq:fintodiff}, for arbitrary $q \in [1, d]$ we can estimate: 
\begin{align}\label{eq:deltam}
\|\chi_k(\, \cdot\, ; & 2l|h|)\,  \Delta_h^{(m)}   u(\ \cdot\ + jh)\|_{\plainL{q}}^q\notag\\
\le &\ |h|^{qm}  \int_0^1 \int_0^1 \cdots \int_{0}^1\int_{\R^d}
\chi_k(x ; 2l|h|)\, \big|\nabla^m 
u\big(x+ jh + \sum_{n=1}^m s_n h\big)\big|^q dx\, ds_1 ds_2 \dots ds_n\notag\\
\le &\ |h|^{qm}  \int_{\R^d} \chi_k(x ; 4l|h|)\,
\big|\nabla^m u(x)\big|^q dx\,,  
\end{align}
for all $j = 0, 1, \dots, l-m$, where we have used the notation
\begin{align*}
|\nabla^m u| = \big[\sum_{|j|=m} |\p^j u|^2\big]^{\frac{1}{2}}.
\end{align*}
Now use \eqref{eq:sing}, so that the right-hand side 
of \eqref{eq:deltam0} (for $m=0$) and \eqref{eq:deltam} (for $m\ge 1$) is bounded by 
\begin{align}\label{eq:polar}
|h|^{qm}\,A^q\,\sum_{n=1}^N\int\limits_{|x-a_k|<4l|h|} \big(1+&\ |x-a_n|^{q(\a-m)}\big)\,  dx
\notag\\[0.2cm]
\lesssim &\ |h|^{qm+d}\,A^q + |h|^{qm}\,A^q\,\sum_{n=1}^N\int\limits_{|x-a_k|<4l|h|} |x-a_n|^{q(\a-m)}\,dx.
\end{align}
Since $\a-m>-1$ and $d\ge q$, each integral in the above sum is finite.  
To estimate each of the integrals consider two cases. If $a_n\in B(a_k, 5l|h|)$, then 
\begin{align*}
\int\limits_{|x-a_k|<4l|h|} |x-a_n|^{q(\a-m)}\,dx 
\lesssim \int\limits_{|x-a_n|<10l|h|} |x-a_n|^{q(\a-m)}  dx 
\lesssim |h|^{q\a-qm+d}. 
\end{align*}
If $a_n\notin B(a_k, 5l|h|)$, then $|x - a_n|\ge l|h|$ for $x\in B(a_k, 4l|h|)$, 
and as $\a-m \le 0$, we have again
\begin{align*}
\int\limits_{|x-a_k|<4l|h|} |x-a_n|^{q(\a-m)}\, dx 
\lesssim 
(l|h|)^{q(\a-m)}\,
\int\limits_{|x-a_k|<4l|h|} dx
\lesssim |h|^{q\a-qm+d}. 
\end{align*}
Substituting the last two bounds in \eqref{eq:polar} we conclude that 
the right-hand side of \eqref{eq:deltam} is bounded from above 
by $A^q\,|h|^{q\a+d}= A^q\,|h|^{qs}$, which leads to 
\eqref{eq:inside}. 

Now, denote 
\begin{align*}
\kappa(x; r) = \prod_k \kappa_k(x; r),\, \kappa_k(x; r) = 1 - \chi_k(x; r),
\end{align*}
and prove that  
\begin{align}\label{eq:outside}
\|\kappa(\ \cdot\ ; 2l|h|)\Delta_h^{(l)}u\|_{\plainL{q}}\lesssim A\,|h|^{s},
\end{align}
where the number $l$ is defined in \eqref{eq:el}. 
The function $u$ is $\plainC\infty$ on the support of the function $\kappa$. Therefore, 
similarly to \eqref{eq:deltam}, 
\begin{align*}
\|\kappa(\ \cdot\ ; 2l|h|)\, \Delta_h^{(l)} & u\|_{\plainL{q}}^q\\
\le &\ |h|^{ql}  \int_0^1 \int_0^1 \cdots \int_{0}^1\int_{\R^d}
\kappa(x; 2l|h|)\,\big|\nabla^l 
u\big(x+ \sum_{n=1}^l s_n h\big)\big|^q dx\, ds_1 ds_2 \dots ds_n\\
\le &\ |h|^{ql}  \int 
\kappa(x; l|h|)
\, \big|\nabla^l u(x)\big|^q dx\,. 
\end{align*}
Since $ql > q\a+d$, by \eqref{eq:fintodiff}, the right-hand side does not exceed 
\begin{align*}
|h|^{q l}\, A^q\,\sum_{k=1}^N  \int\limits_{\substack{|x-a_k|> l|h|\\ x\in B(x_0, 2R)}} 
\big(1+|x-a_k|^{q(\a-l)}\big) \, dx\lesssim \, 
|h|^{ql}A^q\big(1 + |h|^{q\a-ql + d}\big)
\lesssim 
A^q\,|h|^{qs}.
\end{align*}
This implies \eqref{eq:outside}. 

Using the inequality $1\le \sum_{k=1}^N \chi_k(x; r) + \kappa(x; r)$, we can estimate
\begin{align*}
\|\Delta_h^{(l)}u\|_{\plainL{q}}^q
\le \sum_{k=1}^N\|\chi_k(\ \cdot\ ; 2l|h|)\, \Delta_h^{(l)}u\|_{\plainL{q}}^q 
+ \|\kappa(\ \cdot\ ; 2l|h|)\Delta_h^{(l)}u\|_{\plainL{q}}^q,
\end{align*}
and hence putting \eqref{eq:inside} and \eqref{eq:outside} together we obtain \eqref{eq:ms2}. 
Along with the fact that $\|u\|_{\plainL{q}}\lesssim A$ this implies that 
$\|u\|_{\plainN{s}_q}\lesssim A$ as required. 
\end{proof}

\section{Singular values of integral and pseudodifferential operators}\label{sect:intop}

We begin with Sect.\ref{subsect:BS} where we derive estimates for singular values of integral operators from 
the results of \cite{BS1977} relying on Theorem \ref{thm:nik}. In Sect.~\ref{subsect:symbols} we collect the known results on spectral asymptotics of pseudo-differential operators with asymptotically homogeneous symbols. In Sect.~\ref{subsect:iop} we 
use these results to obtain the asymptotics for integral operators with homogeneous kernels.

\subsection{Bounds for singular values of integral operators}\label{subsect:BS}
Let $X\subset \R^d$, be a bounded Lipschitz domain.
Let $T(t, x)$, $t\in \R^m$, $x\in X$, be some complex-valued function. We estimate 
singular values of the operator $\iop(T) \, a$ with some weight $a = a(x), x\in X$, 
using \cite[Corollaries 4.1 and 4.5, Theorems 4.4, 4.7]{BS1977} 
which we state here in the form convenient for 
our purposes. 

\begin{prop}\label{prop:BS} 
Let $X\subset \R^d$ be a bounded Lipschitz domain. 
Assume that the kernel $T(t,x), t\in \R^m,\, x\in X,$ is such that 
$T(t, \, \cdot\, )\in \plainN{s}_2(X)$ with some $s>0$, 
for a.e. $t\in \R^m$. Assume that 
$a\in\plainL{r}(X)$, where 
\begin{align*}
\begin{lcases}
r = &\ 2, \quad \textup{if} \quad 2s >d,\\
r > &\ d s^{-1} \quad\textup{is arbitrary}, \quad \textup{if}\quad 2s \le d. 
\end{lcases}
\end{align*}
Then $\iop(T)\, a\in\BS_{q, \infty}$ where $q^{-1} = 2^{-1} + s d^{-1}$ and 
\begin{align*}
\|\iop(T)\, a\|_{q, \infty}\lesssim \biggl[\int_{\R^m}  
\|T(t,\ \cdot\ )\|_{\plainN{s}_2(X)}^2\, dt \biggr]^{\frac{1}{2}}
\|a\|_{\plainL{r}(X)},
\end{align*}
under the assumption that the right-hand side is finite.
\end{prop}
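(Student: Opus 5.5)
The statement is a repackaging of results of \cite{BS1977}, so I would not prove the spectral estimate from scratch. I would reduce it to the standard Birman--Solomyak theorem for integral operators whose kernels are smooth in one variable. The plan has three steps.

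\emph{Step 1: reduction to a Hilbert--Schmidt-type family.} View $\iop(T)$ as the operator $f\mapsto \int_X T(\cdot, x) f(x)\,dx$ from $\plainL2(X)$ to $\plainL2(\R^m)$. Its adjoint $\iop(T)^*$ maps $\plainL2(\R^m)$ into functions on $X$, and for each $t$ the function $T(t,\cdot)$ lies in $\plainN{s}_2(X)$. The operator $\iop(T)^*$ can therefore be written as $\SfS\circ J$. Here $J:\plainL2(\R^m)\to \plainN{s}_2(X)$ is defined by $g\mapsto \int \overline{T(t,\cdot)} g(t)\,dt$. By Cauchy--Schwarz in $t$, $J$ is bounded with norm at most $\big[\int\|T(t,\cdot)\|^2_{\plainN{s}_2(X)}dt\big]^{1/2}$. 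The second factor is $\SfS = a\cdot\iota$, where $\iota$ is the embedding of $\plainN{s}_2(X)$ into $\plainL2(X)$ followed by multiplication by $\bar a$. (Measurability in $t$ is routine; one may first assume $T$ is a finite sum $\sum \phi_j(t)u_j(x)$ and then pass to the limit using completeness of $\BS_{q,\infty}$.)

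\emph{Step 2: singular values of the weighted embedding.} The heart of the argument is the estimate
\[
s_k(\bar a\,\iota)\lesssim \|a\|_{\plainL{r}(X)}\,k^{-s/d}
\]
for the embedding $\plainN{s}_2(X)\to\plainL2(X)$ with weight $a$, valid for $r$ as in the statement. Because $X$ is a bounded Lipschitz domain, any $u\in\plainN{s}_2(X)$ extends to $\R^d$ with comparable norm, so it suffices to work in a cube containing $X$. I would then carry out the Birman--Solomyak piecewise-polynomial approximation. First, subdivide the cube adaptively into $n$ subcubes $Q$, chosen so that the quantity $\|a\|^r_{\plainL{r}(Q)}|Q|^{\cdot}$ is equalised across cubes. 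Next, on each $Q$ approximate $u$ by polynomials of degree $<s$, using the local Besov--Whitney estimate $\|u-P_Q u\|_{\plainL{p}(Q)}\lesssim |Q|^{s/d-1/2+1/p}\|u\|_{\plainN{s}_2(Q)}$ with $p=2r/(r-2)$. Finally, apply H\"older with the weight. Together these give a rank-$Cn$ approximation with error $\lesssim n^{-s/d}\|a\|_{\plainL r}$. The threshold $r>d/s$ when $2s\le d$ is exactly what makes this H\"older step work: it requires the embedding $\plainN{s}_2\subset \plainL{p}$, and $p$ can be any finite value only when $2s\le d$. When $2s>d$ one has $\plainN{s}_2\subset\plainL\infty$, so $r=2$ suffices.

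\emph{Step 3: combining the factors.} The factor $J$ enters only through its operator norm. A bounded operator composed with an operator satisfying $s_k\lesssim k^{-s/d}$ would only give $\BS_{d/s,\infty}$, not the claimed class with $q^{-1}=2^{-1}+s/d$. The extra $2^{-1}$ must come from the Hilbert--Schmidt nature of $J$: the kernel is square-integrable in $t$ with values in $\plainN{s}_2$. To capture it, write $\iop(T)a = \sum_j (\cdot)$ over an orthonormal expansion of the $t$-dependence, or equivalently factor $J=\iota'\circ H$ with $H$ a Hilbert--Schmidt operator ($\BS_2$). Then the multiplicative inequality \eqref{eq:multipl}, applied with $p_1=2$ and $p_2=d/s$, gives membership in $\BS_{q,\infty}$ with the stated bound; \eqref{eq:multipl} holds in the $\BS_{p,\infty}$ scale, and $\BS_2\subset\BS_{2,\infty}$. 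Making this factorisation precise, so that the Hilbert--Schmidt norm equals $\big[\int\|T(t,\cdot)\|^2_{\plainN{s}_2}dt\big]^{1/2}$, and proving the weighted embedding estimate of Step 2 are the main obstacles. In practice I would cite \cite[Theorems 4.4, 4.7]{BS1977} for Step 2 and treat only Step 3 in detail.
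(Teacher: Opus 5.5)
The paper does not prove this proposition; it is quoted from \cite{BS1977}. Your Steps 1--2 are a fair outline, but Step 3 has a genuine gap, and it is not a technicality. The factorisation $J=\iota'\circ H$, with $H$ Hilbert--Schmidt from $\plainL2(\R^m)$ into a Hilbert space $\mathcal H$ and $\iota':\mathcal H\to\plainN{s}_2(X)$ bounded, does not exist in general, whatever the constants.

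Suppose it did. Then $(\iop(T)\,a)^*=BH$ with $B=\bar a\,\iota\,\iota'$, and $s_k(B)\lesssim k^{-s/d}$ by Step 2. Since $s_k(H)$ is non-increasing and square-summable, $k\,s_{2k}(H)^2\le\sum_{j=k+1}^{2k}s_j(H)^2\to0$, so $s_k(H)=o(k^{-1/2})$. Then $s_{2k-1}(BH)\le s_k(B)\,s_k(H)$ gives $s_k(\iop(T)\,a)=o(k^{-1/q})$, i.e.\ $\SfG_q(\iop(T)\,a)=0$.

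But the proposition is sharp. Take $T(t,x)=b(t)|t-x|$ with $a,b\in\plainC\infty_0(\R^3)$. Theorem \ref{thm:nik} shows the hypotheses hold with $d=3$, $s=5/2$, $r=2$, $q=3/4$, while Example \ref{ex:scal} gives $\SfG_{3/4}(\iop(T)\,a)>0$ whenever $ab\not\equiv0$. This is the model kernel behind Theorem \ref{thm:maingk}. So any argument that ends by applying \eqref{eq:multipl} to a fixed $\BS_2$ factor and a $\BS_{d/s,\infty}$ factor would prove something false, whether the factors come from factorising $J$ or from an orthonormal expansion in $t$. Passing through a Hilbert--Sobolev space of order $s-\epsilon$ does give a legitimate factorisation, but only the weaker class with $q^{-1}=2^{-1}+(s-\epsilon)d^{-1}$.

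What is missing is the Birman--Solomyak mechanism, in which the Hilbert--Schmidt norm is applied to a remainder that depends on the scale $n$, not to a fixed factor. Read Step 2 as a bound on approximation numbers, since $\plainN{s}_2$ is not a Hilbert space: there are linear maps $F_n:\plainN{s}_2(X)\to\plainL2(X)$ of rank at most $n$ with $\|au-F_nu\|_{\plainL2(X)}\lesssim n^{-s/d}\|a\|_{\plainL{r}(X)}\|u\|_{\plainN{s}_2(X)}$. Apply the \emph{same} $F_n$ to $T(t,\cdot)$ for every $t$. The operator $A_n$ with kernel $(F_nT(t,\cdot))(x)$ has rank at most $n$, and
\begin{align*}
\|\iop(T)\,a-A_n\|_{\BS_2}^2=\int_{\R^m}\|a\,T(t,\cdot)-F_nT(t,\cdot)\|_{\plainL2(X)}^2\,dt
\lesssim n^{-\frac{2s}{d}}\,\|a\|_{\plainL{r}(X)}^2\int_{\R^m}\|T(t,\cdot)\|_{\plainN{s}_2(X)}^2\,dt .
\end{align*}
By the Ky Fan inequality, $s_{2n}(\iop(T)\,a)\le s_n(\iop(T)\,a-A_n)\le n^{-1/2}\|\iop(T)\,a-A_n\|_{\BS_2}$. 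Hence $s_{2n}(\iop(T)\,a)\lesssim n^{-1/2-s/d}$, which is exactly the claimed $\BS_{q,\infty}$ bound. Because a fresh $\BS_2$ estimate is used for each $n$, this gives $O(k^{-1/q})$ but not $o(k^{-1/q})$, consistent with the sharpness above.

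A secondary point about Step 2: unlike Sobolev norms, $u\mapsto\|u\|^2_{\plainN{s}_2(Q)}$ is not superadditive over disjoint cubes. So the summation over your adapted partition after the H\"older step does not go through as written. The margin $s>d/r$, available in both cases of the proposition, has to be used to absorb this loss.
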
    

We apply this proposition to the kernel $T(t, x)$ that satisfies the 
bound \eqref{eq:sing} 
with some $\a>-d/2$ as a function of $x$. 
In this case we define $s = \a + d/2>0$, so that the conditions on the 
number $r$ in Proposition 
\ref{prop:BS} rewrite as follows  
\begin{align}\label{eq:r}
\begin{lcases}
r = &\ 2, \quad \textup{if} \quad \a >0,\\
r > &\ 2d(2\a+d)^{-1}, \quad \textup{if}\quad -d < 2\a \le 0. 
\end{lcases}
\end{align}
For $\ell >0$ below we denote 
\begin{align}\label{eq:cube}
\CC_n^{(\ell)} = [-\ell/2, \ell/2)^d + n, n\in\mathbb Z^d. 
\end{align}
The notation $\1_n$ is used for the 
indicator function of the cube $\CC_n = \CC_n^{(1)}$.  

In the next theorem we consider an operator from $\plainL2(\R^d)$ to $\plainL2(\R^m)$. 

\begin{thm} \label{thm:BSspace} 
Let $d\ge 2$ and $m\ge 1$. 
Let $z_k: \R^m\mapsto \R^d$,\, 
$k = 1, 2, \dots, N$, be some functions of $t\in\R^m$, and let $\a > -d/2$. 
Consider the kernel $T(t, x)$, $x\in\R^d, t\in\R^m,$ satisfying the condition 
\begin{align}\label{eq:kernelexp}
|\p_x^j T(t, x)| \le A(t, x) 
\bigg(1+\sum_{k=1}^N \big(1\wedge |x-z_k(t)|\big)^{\a-|j|}\bigg),\quad  
j \in \mathbb N_0^d, 
\end{align}    
for a.e. $t\in\R^m$ and a.e. $x\in \R^d$, where 
$A\in\plainL\infty_{\textup{\tiny loc} }(\R^m\times\R^d)$ 
is a non-negative function. 
Let  $a\in\plainL{r}_{\textup{\tiny loc}}(\R^d)$ 
where the parameter $r$ is as specified in \eqref{eq:r}.
Then the operator $\iop(T)\, a$ 
belongs to $\BS_{q, \infty}$ where $q^{-1} = 1+\a d^{-1}$ and 
\begin{align}\label{eq:piv}
\|\iop(T)\,a\|_{q, \infty}\lesssim 
\bigg[\sum_{n\in\Z^d}
\|A_n\|_{\plainL2(\R^m)}^q\,\|a\|_{\plainL{r}(\CC_n^{(1)})}^q
\bigg]^{\frac{1}{q}},\quad 
A_n(t) = \|A(t,\,\cdot\,)\|_{\plainL\infty(\CC_n^{(2)})},
\end{align}
under the assumption that the right-hand side of the above inequality is finite. 
\end{thm}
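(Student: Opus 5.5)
The plan is to localize in $x$ to unit cubes, treat each localized piece with Proposition \ref{prop:BS} combined with Theorem \ref{thm:nik}, and then reassemble the pieces with Proposition \ref{prop:orthog}. The pieces are orthogonal in the relevant sense because they act on functions supported on disjoint cubes.

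\emph{Decomposition.} Write $a=\sum_{n\in\Z^d} a\1_n$ and set $T_n:=\iop(T)\,a\1_n$. For $n\neq k$ the cubes $\CC_n$ and $\CC_k$ are disjoint, so $T_n^*T_k = \1_n \bar a\,\iop(T)^*\iop(T)\,a\1_k$. This need not vanish. Instead I will use the other condition in \eqref{eq:orthog}: $T_n T_k^* = \iop(T)\, a\1_n\1_k \bar a\, \iop(T)^* = 0$ for $n\ne k$. Note that $q^{-1}=1+\a/d$ with $\a>-d/2$ gives $q\in(0,2)$. Proposition \ref{prop:orthog} then yields
\begin{align*}
\|\iop(T)\,a\|_{q,\infty}^q\lesssim \sum_n \|T_n\|_{q,\infty}^q .
\end{align*}
It therefore suffices to show that $\|T_n\|_{q,\infty}\lesssim \|A_n\|_{\plainL2(\R^m)}\|a\|_{\plainL{r}(\CC_n)}$ with a constant independent of $n$.

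\emph{Single cube.} Fix $n$ and let $X$ be the open cube of side $3/2$ centred at $n$. It is a bounded Lipschitz domain containing $\CC_n$, and it sits inside the interior of $\CC_n^{(2)}$. Choose a cut-off $\varphi\in\plainC\infty_0$ with $\varphi=1$ on $X$ and $\supp\varphi\subset\CC_n^{(2)}$, obtained by translating a fixed function so that its derivative bounds are uniform in $n$. For a.e. $t$, the function $u_t(x):=\varphi(x)T(t,x)$ is $\plainC\infty$ away from the finite set $\{z_k(t)\}$; strictly, this smoothness must be part of the hypothesis, and I read \eqref{eq:kernelexp} as implying it. By the Leibniz rule and \eqref{eq:kernelexp}, $u_t$ satisfies \eqref{eq:sing} with constant $A_n(t)$. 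Theorem \ref{thm:nik} with $q=2$ (allowed since $d\ge2$ and $\a>-d/2$) then gives $\|T(t,\cdot)\|_{\plainN{s}_2(X)}\le \|u_t\|_{\plainN{s}_2(B)}\lesssim A_n(t)$, where $s=\a+d/2$ and $B$ is a ball of fixed radius containing $\supp\varphi$. The constant does not depend on $n$ or on the points $z_k(t)$, since Theorem \ref{thm:nik} is uniform in $x_0$ and $X$.

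Next apply Proposition \ref{prop:BS} on $X$ with weight $a\1_n$. The condition on $r$ in \eqref{eq:r} is exactly the one required there, and $q^{-1}=\tfrac12+s/d=1+\a/d$. This gives
\begin{align*}
\|T_n\|_{q,\infty}\lesssim \Big[\int \|T(t,\cdot)\|^2_{\plainN{s}_2(X)}dt\Big]^{1/2}\|a\|_{\plainL{r}(\CC_n)}\lesssim \|A_n\|_{\plainL2(\R^m)}\|a\|_{\plainL{r}(\CC_n)}.
\end{align*}
The implicit constant in Proposition \ref{prop:BS} does not depend on $n$, because all the domains $X$ are translates of one another. Summing over $n$ gives \eqref{eq:piv}.

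\emph{Main obstacle.} The key point is uniformity: the constants must not depend on $n$, on $t$, or on where the singular points $z_k(t)$ lie. Theorem \ref{thm:nik} was stated precisely with this uniformity, so the remaining care goes into checking the measurability needed to integrate over $t$. A secondary technical point is to confirm that the $\BS_{q,\infty}$ membership of each $T_n$ holds a priori, so that Proposition \ref{prop:orthog} applies; this is guaranteed once the right-hand side of \eqref{eq:piv} is finite. To pass to the infinite sum, I will apply the proposition to finite partial sums and use completeness of $\BS_{q,\infty}$.
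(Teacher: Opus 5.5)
Your proposal is correct and follows the paper's proof essentially step for step. You split $a$ over the unit cubes, bound each piece $\iop(T)\,a\1_n$ using a cut-off, Theorem \ref{thm:nik} (with $q=2$) and Proposition \ref{prop:BS}, with constants uniform in $n$ and in the points $z_k(t)$, and then reassemble the pieces through Proposition \ref{prop:orthog} using $\iop(T)a\1_n\,(\iop(T)a\1_k)^*=0$ for $n\neq k$; your small refinements (working on an open Lipschitz cube of side $3/2$ instead of $\CC_n^{(1)}$, and reaching the infinite sum via finite partial sums) only make explicit details the paper leaves implicit.
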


\begin{proof} 
Consider the 
kernel $W_n(t, x) = T(t, x) a(x)\1_n(x)$ 
and apply Proposition \ref{prop:BS} to each $\iop(W_n)$.  
Let $\eta\in \plainC\infty_0(\CC^{(2)}_0)$  
be a function such that $\eta(x) = 1$ if $x\in \CC^{(1)}_0$. 
In view of \eqref{eq:kernelexp}, the kernel $T(t, x)\eta_n(x)$ satisfies 
\begin{align*}
|\p_x^j \,\big(T(t, x)\eta_n(x)\big)|\le A_n(t) 
\bigg(1+\sum_{k=1}^N \big(1\wedge |x-z_k(t)|\big)^{\a-|j|}\bigg),\quad  
j \in \mathbb N_0^d, 
\end{align*}
for a.e. $t\in\R^m$. 
By Theorem \ref{thm:nik}, $T(t, \,\cdot\,) \eta_n(\,\cdot\,)\in \SN_2^s(\CC_n^{(2)})$, 
$s = \a+d/2$, for a.e. $t\in\R^m$, and hence $T(t, \,\cdot\,)\in \SN_2^s(\CC_n^{(1)})$, and 
\begin{align*}
\|T(t, \, \cdot\, )\|_{\plainN{s}_2(\CC_n^{(1)})}\lesssim A_n(t),
\end{align*}
with a constant independent of $n$. 
 Thus, by Proposition \ref{prop:BS}, the operator $\iop(W_n)$ belongs to 
$\BS_{q, \infty}$ with $1/q = 1/2 + s/d = 1+ \a/d$, and 
\begin{align}\label{eq:wn}
\|\iop( W_n)\|_{q, \infty}\lesssim \bigg[\int_{\R^m} A_n(t)^2 \,dt \bigg]^{\frac{1}{2}}
 \|a\|_{\plainL{r}(\CC_n^{(1)})} = \|A_n\|_{\plainL2(\R^m)}\|a\|_{\plainL{r}(\CC_n^{(1)})},
\end{align} 
with a constant independent of $n$. 
Now observe that $\iop(W_k)\iop(W_j)^* = 0$ for $k\not = j$. Furthermore, 
since $\a > -d/2$ we have $q < 2$, and hence by Proposition \ref{prop:orthog},
\begin{align*}
\|\iop(T)\, a\|_{q, \infty}^q\le 2(2-q)^{-1} \sum_{n\in\mathbb Z^n} \|\iop(W_n)\|_{q, \infty}^q.
\end{align*}
Using \eqref{eq:wn} we obtain the required bound \eqref{eq:piv}.
\end{proof}

\begin{cor}\label{cor:pivo}
Suppose that the kernel $T$ satisfies \eqref{eq:kernelexp} with a function 
$A\in \plainL\infty(\R^m\times\R^d)$. 
Let $b\in \plainL2(\R^m)$  and $a\in\plainL{r}_{\textup{\tiny loc}}(\R^d)$ be such that 
$\4\, a\4_{\,r, q}<\infty$, where $q^{-1} = 1+\a d^{-1}$. 
Then $b\, \iop(T)\, a$ belongs to $\BS_{q, \infty}$ and 
\begin{align}\label{eq:pivo}
\|b\, \iop(T)\,a\|_{q, \infty}\lesssim \|A\|_{\plainL\infty(\R^m\times\R^d)}\, 
\|b\|_{\plainL2(\R^m)}
\, \4\, a \4_{\,r, q}.
\end{align}
The constant in the bound does not depend on the functions $a, b$ and $A$.
\end{cor}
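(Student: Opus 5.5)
The plan is to reduce Corollary \ref{cor:pivo} directly to Theorem \ref{thm:BSspace} by absorbing the weight $b$ into the kernel. Consider the new kernel $\tilde T(t, x) = b(t)\, T(t, x)$, so that $b\,\iop(T)\,a = \iop(\tilde T)\, a$ as an operator from $\plainL2(\R^d)$ to $\plainL2(\R^m)$. Since $b$ does not depend on $x$, differentiating in $x$ gives
\begin{align*}
|\p_x^j \tilde T(t, x)|\le |b(t)|\,\|A\|_{\plainL\infty(\R^m\times\R^d)}
\bigg(1+\sum_{k=1}^N \big(1\wedge |x-z_k(t)|\big)^{\a-|j|}\bigg),
\end{align*}
for a.e. $t$ and $x$. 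Thus $\tilde T$ satisfies \eqref{eq:kernelexp} with the majorant $\tilde A(t,x) = |b(t)|\,\|A\|_{\plainL\infty}$.

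There is one technical point. Theorem \ref{thm:BSspace} asks for $\tilde A\in \plainL\infty_{\textup{loc}}$, whereas $b$ is only in $\plainL2$. However, the proof of Theorem \ref{thm:BSspace} uses only the quantities $A_n(t)=\|A(t,\cdot)\|_{\plainL\infty(\CC_n^{(2)})}$ and the finiteness of $\|A_n\|_{\plainL2(\R^m)}$; it applies Theorem \ref{thm:nik} for each fixed $t$ and then Proposition \ref{prop:BS}. So local boundedness in $t$ is never really needed. If one insists on the literal hypothesis, first replace $b$ by the truncations $b_M = b\,\1_{\{|b|\le M\}}$, obtain the bound uniformly in $M$, and pass to the limit $M\to\infty$. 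Here $b_M\,\iop(T)\,a\to b\,\iop(T)\,a$ in operator norm, which requires a preliminary bound showing $\iop(T)\,a$ is bounded on compact pieces. The singular values then satisfy the bound by lower semicontinuity, since $s_k$ is continuous in norm. I expect this justification to be the only mild obstacle, and I would handle it by the truncation argument.

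With this majorant, $\tilde A_n(t) = |b(t)|\,\|A\|_{\plainL\infty}$ for every $n$, so $\|\tilde A_n\|_{\plainL2(\R^m)} = \|A\|_{\plainL\infty}\|b\|_{\plainL2(\R^m)}$ independently of $n$. Substituting into \eqref{eq:piv} gives
\begin{align*}
\|b\,\iop(T)\,a\|_{q,\infty}\lesssim \|A\|_{\plainL\infty}\|b\|_{\plainL2}\bigg[\sum_{n\in\Z^d}\|a\|_{\plainL{r}(\CC_n)}^q\bigg]^{\frac1q}
= \|A\|_{\plainL\infty}\|b\|_{\plainL2}\,\4\, a\4_{\,r,q},
\end{align*}
which is \eqref{eq:pivo}. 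The implicit constant is that of Theorem \ref{thm:BSspace}, so it is independent of $a$, $b$ and $A$.
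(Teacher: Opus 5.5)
Your proposal is correct and is essentially the paper's proof: the paper also applies Theorem \ref{thm:BSspace} with $A$ replaced by $bA$ and inserts $\|bA_n\|_{\plainL2(\R^m)}\le \|A\|_{\plainL\infty(\R^m\times\R^d)}\|b\|_{\plainL2(\R^m)}$ into \eqref{eq:piv}. The paper does not comment on the $\plainL\infty_{\textup{loc}}$ hypothesis at all; your observation that the proof of Theorem \ref{thm:BSspace} only uses the finiteness of $\|A_n\|_{\plainL2(\R^m)}$ is the cleanest way to settle it, so the truncation detour is not needed.
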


\begin{proof} 
We use Theorem \ref{thm:BSspace} replacing the function $A$ by $b A$. Thus
\begin{align*}
\| b A_n\|_{\plainL2(\R^m)}\le \|A\|_{\plainL\infty(\R^m\times\R^d)} \|b\|_{\plainL2(\R^m)}.
\end{align*}
Now \eqref{eq:pivo} follows from \eqref{eq:piv}.
\end{proof}

The next corollary is adapted for the use with the operators $\iop(\SPsi)$ 
and $\iop(\SV)$ introduced in the introduction.
Now we use the previous results with $d = 3, m = 3N-3$, and instead of $(t, x)$ we write 
$(\hat\bx, x)$. We also use the notation $\l(\bx) = \l_N(\bx)$, where $\l_j$ is defined in \eqref{eq:dq}, \eqref{eq:lj}.

\begin{cor}\label{cor:homo}
Let the functions $a = a(x)$, $b= b(\hat\bx)$, $A = A(\hat\bx, x)$, 
$x\in\R^3, \hat\bx\in\R^{3N-3}$, 
be as in Corollary \ref{cor:pivo}, and let $\a > -3/2$. Assume that the kernel $T = T(\hat\bx, x)$ 
satisfies the bounds 
\begin{align}\label{eq:s}
|\p_x^j\, T(\hat\bx, x)|\lesssim A(\hat\bx, x)\big(1 + \l(\hat\bx, x)^{\a-|j|}\big),\quad j\in \mathbb N_0^3.
\end{align}
Then the operator $b\,\iop(T)\, a$ 
belongs to $\BS_{p, \infty}$ where $1/p = 1+ \a/3$, and 
\begin{align}\label{eq:homoprep}
\|b\,\iop(T) \, a\|_{p, \infty}\lesssim \|A\|_{\plainL\infty(\R^{3N})}\,
\|b\|_{\plainL2(\R^{3N-3})} 
\, \4\, a \4_{\,r, p},
\end{align}
with a constant independent of the functions $a, b$ and $A$.

\end{cor}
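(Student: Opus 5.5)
The plan is to reduce Corollary \ref{cor:homo} to Corollary \ref{cor:pivo} with $d = 3$, $m = 3N-3$, $t = \hat\bx$. The only task is to check that \eqref{eq:s} implies the bound \eqref{eq:kernelexp} for a suitable finite family of functions $z_k(\hat\bx)$. The exponent bookkeeping is then automatic: with $d=3$ we get $q^{-1} = 1+\a/3 = p^{-1}$, so $q=p$. The condition $\a>-3/2$ is exactly $\a>-d/2$, and $r$ is chosen according to \eqref{eq:r}. Hence \eqref{eq:pivo} becomes \eqref{eq:homoprep} once the kernel condition is verified.

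For the verification, take the $N$ functions $z_0(\hat\bx) = 0$ and $z_k(\hat\bx) = x_k$, $k=1,\dots,N-1$. These are the points at which $\l(\hat\bx, x) = \l_N(\hat\bx,x)$ degenerates. By \eqref{eq:dq} and \eqref{eq:lj},
\begin{align*}
\l(\hat\bx, x) = \min\Big\{1, |x|, \tfrac{1}{\sqrt2}|x-x_k|: 1\le k\le N-1\Big\}.
\end{align*}
Since $\l$ is a minimum, for a fixed $\hat\bx$ and $x\notin\{0, x_1,\dots,x_{N-1}\}$ it equals, up to the factor $1/\sqrt2$, one of the quantities $1$, $1\wedge|x|$, $1\wedge|x-x_k|$. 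Two cases then cover everything.
\begin{enumerate}
\item If $\a-|j|\le 0$, the power $\l^{\a-|j|}$ is the maximum of the corresponding powers, so
\begin{align*}
\l(\hat\bx,x)^{\a-|j|}\le 2^{|\a-|j||/2}\Big(1 + (1\wedge|x|)^{\a-|j|} + \sum_{k=1}^{N-1}(1\wedge|x-x_k|)^{\a-|j|}\Big).
\end{align*}
\item If $\a-|j|>0$, then $\l^{\a-|j|}\le 1$ and the bound is trivial.
\end{enumerate}
In both cases $1+\l^{\a-|j|}$ is dominated, up to a constant depending only on $\a$ and $j$, by the bracket in \eqref{eq:kernelexp}. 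So $T$ satisfies \eqref{eq:kernelexp} with the function $A$ replaced by $CA$.

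Corollary \ref{cor:pivo} requires \eqref{eq:kernelexp} only for a.e. $(\hat\bx, x)$. The exceptional set, where $x$ coincides with some $z_k(\hat\bx)$, has measure zero, so this is harmless. Applying Corollary \ref{cor:pivo} then gives
\begin{align*}
\|b\,\iop(T)\,a\|_{p,\infty}\lesssim \|A\|_{\plainL\infty}\,\|b\|_{\plainL2(\R^{3N-3})}\,\4\, a\4_{\,r,p},
\end{align*}
with constants independent of $a$, $b$ and $A$. Those constants depend only on $N$, $\a$ and the implicit constants in \eqref{eq:s}, and are uniform in $\hat\bx$ because Theorem \ref{thm:nik} is uniform in the set $X$.

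The only point needing care is the comparison between $\l$ and the sum of capped distances in the first case. It relies on the nonpositivity of the exponent, so that the power of a minimum equals the maximum of the powers; I expect no real obstacle there.
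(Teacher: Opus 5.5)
Your proposal is correct and follows the paper's route: the paper proves Corollary \ref{cor:homo} in one line as an immediate consequence of Corollary \ref{cor:pivo} with $d=3$, $m=3N-3$. Your comparison of $1+\l^{\a-|j|}$ with the capped distances to $z_0=0$, $z_k=x_k$ just makes explicit the step the paper leaves unstated. The one nuance is that the constant you absorb into $A$ depends on $j$; this is harmless because Theorem \ref{thm:nik} uses only derivatives up to a finite order.
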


\begin{proof} This is an immediate consequence of Corollary \ref{cor:pivo}.
\end{proof}

\begin{cor}\label{cor:product} 
Let the functions $a = a(x)$, $b= b(\hat\bx)$, $A = (\hat\bx, x)$, 
$x\in\R^3, \hat\bx\in\R^{3N-3}$, 
be as in Corollary \ref{cor:pivo}, and let $\a > -3/2$. 
Assume that the kernel $T = T(\hat\bx, x)$ 
is represented as a product $T = T_1 T_2$, where $T_1$ and $T_2$ 
satisfy the bounds 
\begin{align}\label{eq:t1}
|\p_x^j\, T_1(\hat\bx, x)|\lesssim A(\hat\bx, x)\, (|x-x_k|\wedge 1)^{\a-|j|},\quad j\in \mathbb N_0^3,
\end{align}
and
\begin{align}\label{eq:t2}
|\p_x^j\, T_2(\hat\bx, x)|\lesssim 1+ (|x-x_k|\wedge 1)^{1-|j|},\quad j\in \mathbb N_0^3,
\end{align}
for some fixed $k=1, 2, \dots, N-1$,  
Then 
$b\, \iop(T) \, a\in \BS_{p, \infty}$, $1/p = 1+\a/3$, and the bound \eqref{eq:homoprep} holds.  

If $T_2(\hat\bx, x_k) = 0$, and 
 $\4\, a\4_{\,r,q}<\infty$, where $1/q = 1+ (\a+1)/d$, 
then 
$b\, \iop(T) \, a\in \BS_{q, \infty}$.   
\end{cor}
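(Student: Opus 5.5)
The plan is to reduce both parts to Corollary \ref{cor:homo}, or more precisely to Corollary \ref{cor:pivo}, by checking that the product kernel $T = T_1T_2$ satisfies the derivative bounds \eqref{eq:kernelexp} in the variable $x$, with a single singular point $z_1(\hat\bx) = x_k$.

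\textbf{First part.} Apply the Leibniz rule:
\begin{align*}
\p_x^j T = \sum_{i\le j}\binom{j}{i}\,\p_x^{i}T_1\,\p_x^{j-i}T_2 .
\end{align*}
Write $\rho = |x-x_k|\wedge 1\le 1$. By \eqref{eq:t1} and \eqref{eq:t2}, each term is bounded by
\begin{align*}
A\,\rho^{\a-|i|}\big(1+\rho^{1-|j|+|i|}\big)\lesssim A\big(\rho^{\a-|i|}+\rho^{\a+1-|j|}\big)\lesssim A\,\rho^{\a-|j|},
\end{align*}
since $\rho\le 1$ and $|i|\le |j|$. Hence $T$ satisfies \eqref{eq:kernelexp} with $N=1$ and $z_1(\hat\bx)=x_k$. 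This is a measurable function of $t=\hat\bx$, and the proof of Theorem \ref{thm:BSspace} does not need the points $z_k$ to be anything more than measurable functions of $t$. Corollary \ref{cor:pivo} then gives membership in $\BS_{p,\infty}$ with $1/p = 1+\a/3$, together with the bound \eqref{eq:homoprep}. One could also go through Corollary \ref{cor:homo}, because $\l(\bx)\le |x-x_k|/\sqrt2$, but \eqref{eq:kernelexp} only needs the pointwise bound in terms of the distance to the point $x_k$. The order $\a$ is less than or equal to the order demanded by \eqref{eq:s}, so the direct route through Corollary \ref{cor:pivo} is cleaner.

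\textbf{Second part.} The extra gain comes from the vanishing $T_2(\hat\bx, x_k)=0$, which upgrades \eqref{eq:t2} at order zero. For $|x-x_k|\le 1$, the mean value theorem and the bound \eqref{eq:t2} with $|j|=1$ give
\begin{align*}
|T_2(\hat\bx,x)|\lesssim |x-x_k|,
\end{align*}
so $|T_2|\lesssim \rho$ there. For $|x-x_k|>1$, the bound \eqref{eq:t2} already gives $|T_2|\lesssim 1 = \rho$. Therefore
\begin{align*}
|\p_x^j T_2|\lesssim \rho^{1-|j|}\quad\text{for all } j,
\end{align*}
since for $|j|\ge1$ we have $1\le \rho^{1-|j|}$. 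Repeating the Leibniz computation gives
\begin{align*}
|\p_x^j T|\lesssim A\sum_i \rho^{\a-|i|}\rho^{1-|j|+|i|}\lesssim A\,\rho^{\a+1-|j|},
\end{align*}
which is \eqref{eq:kernelexp} with $\a$ replaced by $\a+1>-3/2$. Corollary \ref{cor:pivo} then yields $b\,\iop(T)\,a\in\BS_{q,\infty}$ with $1/q=1+(\a+1)/3$ (here $d=3$). The weight $a$ must satisfy $\4\,a\4_{\,r,q}<\infty$, with $r$ chosen according to \eqref{eq:r} for the new exponent $\a+1$.

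The only delicate point I expect is bookkeeping rather than analysis. One has to make sure the implicit constants are uniform in $\hat\bx$, which holds because the constant in Theorem \ref{thm:nik} is independent of the location of the singular point. One also has to check that the admissible $r$ in \eqref{eq:r} is the one attached to the relevant exponent ($\a$ in the first part, $\a+1$ in the second).
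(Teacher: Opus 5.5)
Your proposal is correct and follows essentially the same argument as the paper. You apply the Leibniz rule, bound every term by $A\,(|x-x_k|\wedge 1)^{\a-|j|}$, and in the second part use the mean value theorem with the vanishing of $T_2$ at $x=x_k$ to gain one extra power of $|x-x_k|\wedge 1$. The only difference is cosmetic: the paper checks the bound \eqref{eq:s} via $\lambda(\bx)\le |x-x_k|\wedge 1$ and invokes Corollary~\ref{cor:homo}, whereas you apply Corollary~\ref{cor:pivo} directly with the single singular point $z_1(\hat\bx)=x_k$, and Corollary~\ref{cor:homo} is itself an immediate consequence of that corollary.
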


\begin{proof} Assume without loss of generality that $\|A\|_{\plainL\infty(\R^{3N})} = 1$. 
Use the Leibnitz rule:
\begin{align}\label{eq:fr}
\p_x^j\, T(\hat\bx, x) 
= &\ \sum_{s = 0}^j \, {j\choose s}\ \p_x^s\, T_1(\hat\bx, x)\, \p_x^{j-s}\, 
T_2(\hat\bx, x)\notag\\
= &\ \sum_{s = 0}^{j-1} \, {j\choose s}\ \p_x^s\, T_1(\hat\bx, x)\, 
\p_x^{j-s}\, T_2(\hat\bx, x) 
+ \p_x^j\, T_1(\hat\bx, x)\, T_2(\hat\bx, x). 
\end{align}
By \eqref{eq:t1} and \eqref{eq:t2}, each term under the sum sign is bounded by 
\begin{align*}
(|x-x_k|\wedge 1)^{\a-|s|} \, (|x-x_k|\wedge 1)^{1-|j|+|s|}
= (|x-x_k|\wedge 1)^{\a + 1-|j|},
\end{align*}
whereas the term outside the sum is bounded by $(|x-x_k|\wedge 1)^{\a-|j|}$, 
so that together these two estimates yield
\begin{align*}
|\p_x^j\, T (\hat\bx, x)|\lesssim   (|x-x_k|\wedge 1)^{\a-|j|},\quad j\in \mathbb N_0^3.
\end{align*}
Hence \eqref{eq:s} also holds, whence \eqref{eq:homoprep}. 

If $T_2(\hat\bx, x_k) = 0$, then thanks to \eqref{eq:t2} we can write that
\begin{align*}
|T_2(\hat\bx, x)| = |T_2(\hat\bx, x) - T_2(\hat\bx, x_k)|
\lesssim |x-x_k|\wedge 1, 
\end{align*}
and therefore the term outside the sum in \eqref{eq:fr} 
is bounded by $(|x-x_k|\wedge 1)^{\a + 1-|j|}$, and as a consequence, 
\begin{align*}
|\p_x^j\, T (\hat\bx, x)|\lesssim   (|x-x_k|\wedge 1)^{\a+1-|j|},\quad j\in \mathbb N_0^3.
\end{align*}
It remains to use again Corollary \ref{cor:homo}.
\end{proof}

The asymptotics of singular values for operators 
with homogeneous symbols or kernels are studied in the next Section.

\subsection{Operators with asymptotically homogeneous symbols} \label{subsect:symbols}
In this section 
we consider pseudodifferential operators with asymptotically homogeneous matrix-valued symbols. 
Spectral asymptotics for such operators were studied in \cite{BS1977_1}, \cite{BS1979}, 
see also \cite{Ponge2023}. In fact, 
these papers allow for more general operators, but we need only a relatively simple 
special case of those results.  
Precisely, let $\CA(x), \CB(x), X(x, \xi)$, where $x, \xi\in\R^d$, be 
rectangular matrix-valued functions 
of matching dimensions, i.e. such that the product 
\begin{align*}
\CB(x) X(x, \xi)\CA(y) 
\end{align*}
is again a rectangular matrix.
We study the spectrum of the pseudodifferential operator $T:\plainL2(\R^d)\mapsto\plainL2(\R^d)$ 
defined by the formula 
\begin{align}\label{eq:pdo}
(Tu)(x) = \frac{1}{(2\pi)^{d}}\iint\CB(x) e^{i\xi(x-y)}
X(x, \xi) \CA(y) u(y) dy d\xi,
\end{align}
under the assumption that 
\begin{align*}
\CB\in \plainC{}_0(\R^d),
\quad \CA\in\plainC{}_0(\R^d). 
\end{align*}
We call the function $X(x, \xi)$ symbol and $\CA(x)$, $\CB(x)$ -- weights.
We often do not reflect the matrix nature of the functional spaces in the 
notation to avoid cumbersome formulas, and 
this should not cause confusion. 
Suppose that the symbol $X(x, \xi)$ is a bounded $\plainC\infty(\R^{2d})$-function which is 
asymptotically homogeneous of negative order in the variable $\xi$, i.e. 
there exists a matrix-valued function $X_\infty(x, \xi)$ such that 
$X_\infty(x, \,\cdot\,)\in\plainC\infty(\R^d\setminus\{0\})$ for a.e. $x$, and that for some 
$\tau >0$,
\begin{align}\label{eq:homo}
 X_\infty(x, t\xi) = t^{-\tau} X_\infty(x, \xi), \ \xi\not = 0, 
\end{align}
for all $t >0$, and 
\begin{align}\label{eq:Xas}
\sup_{x\in \R^d}\, |X(x, \xi) - X_\infty(x, \xi)| = o(|\xi|^{-\tau}),\quad |\xi|\to \infty,
\end{align}
where by $| T |$ we denote the Hilbert-Schmidt norm of the matrix $T$. 
Define the matrix-valued function
\begin{align*}
\CT_\infty(x, \xi) = \CB(x) X_\infty(x, \xi)\CA(x). 
\end{align*}
Recall that the functionals $\SfG_p$ and $\sg_p$ are 
defined in \eqref{eq:limsupinf}.

\begin{prop}\label{prop:homo}
Let the above conditions on 
$\CA, \CB, X$ be satisfied and let $p = d\tau^{-1}$. 
Then the operator \eqref{eq:pdo} 
is compact, it belongs to $\BS_{p, \infty}$ and satisfies the asymptotic formula 
\begin{align*}
\SfG_p(T) = \sg_p(T) = 
\frac{1}{ (2\pi)^d} \iint\, n(1, \CT_\infty(x, \xi)) \, dx d\xi,
\end{align*}
or, equivalently, 
\begin{align}\label{eq:pdoas}
\SfG_p(T) = \sg_p(T) = 
\frac{1}{d (2\pi)^d} \int\limits_{\R^d} \int\limits_{\mathbb S^{d-1}} 
\sum\limits_k \big[s_k\big( \CT_\infty(x, \om)\big)\big]^p\, d\om dx.
\end{align} 
\end{prop}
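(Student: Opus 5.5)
The result is essentially the Birman--Solomyak theorem from \cite{BS1977_1, BS1979}, so my plan is to reduce to the cases treated there and use the continuity of the functionals $\SfG_p,\sg_p$ (Corollaries \ref{cor:zero} and \ref{cor:zero1}) to remove all lower order effects.

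\emph{The two formulas agree.} Fix $x$ and $\xi = r\om$ with $r>0$ and $\om\in\mathbb S^{d-1}$. By \eqref{eq:homo}, $s_k(\CT_\infty(x,\xi)) = r^{-\tau}s_k(\CT_\infty(x,\om))$. Hence $n(1,\CT_\infty(x, r\om)) = \#\{k: r < s_k(\CT_\infty(x,\om))^{1/\tau}\}$. Integrating $r^{d-1}\,dr$ in polar coordinates gives
\begin{align*}
\frac{1}{d}\sum_k s_k\big(\CT_\infty(x,\om)\big)^{d/\tau} = \frac{1}{d}\sum_k s_k\big(\CT_\infty(x,\om)\big)^{p},
\end{align*}
and this identifies the two expressions for the asymptotic coefficient.

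\emph{Reduction of the symbol.} Let $\t$ be the cut-off from \eqref{eq:sco1}. I would replace $X$ by $\tilde X = \z(|\xi|/\rho)X_\infty$ with a large parameter $\rho$. The difference $X-\tilde X$ is a bounded symbol that is $o(|\xi|^{-\tau})$ uniformly in $x$, and the weights are compactly supported. The Birman--Solomyak estimates for operators $\CB\,\op(Y)\,\CA$ then bound $\SfG_p$ of this difference by $C\limsup_{|\xi|\to\infty}|\xi|^{\tau}\sup_x|Y(x,\xi)|$, plus a term contributing $\SfG_p = 0$ from the compactly supported part of the symbol. Corollary \ref{cor:zero1} then lets me work with $\tilde X$. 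Next I approximate the continuous compactly supported functions $\CA$, $\CB$, and the $x$-dependence of $X_\infty$ restricted to $|\om|=1$, uniformly by functions that are piecewise constant on a fine lattice of cubes $\CC^{(\ell)}_n$. The same weighted estimate, applied with small sup-norm errors, shows that $\SfG_p$ of the error tends to zero as $\ell\to0$. Here I would again use Corollary \ref{cor:zero1} together with the integrals on the right of \eqref{eq:pdoas}, which depend continuously on the data.

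\emph{Main step: frozen coefficients.} We are left with a finite sum $\sum_n \1_n\CB_n\,\op(\tilde X_n)\,\CA_n\1_n$ plus cross terms $\1_n\cdots\1_{n'}$ with $n\ne n'$. Each summand is a constant matrix times the truncated homogeneous Fourier multiplier, sandwiched between indicators of a cube. Its $\SfG_p=\sg_p$ equals $(2\pi)^{-d}|\CC_n^{(\ell)}|\int n(1,\CB_nX_{\infty,n}(\xi)\CA_n)\,d\xi$. This is a Weyl-type law, which I would prove by comparing $T^*T$ with the operator on a torus containing the cube, where the spectrum is explicit via Fourier series, and then using bracketing. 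The diagonal pieces act between mutually orthogonal subspaces, so their counting functions add asymptotically.

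\emph{Main obstacle.} The cross terms are the hard part: I must show they have $\SfG_p=0$. For non-adjacent cubes the kernel is smooth, which gives rapid decay of the singular values. For adjacent cubes the kernel is singular only along the common boundary. There I would use the estimates for kernels in $\plainN{s}_2$ from Proposition \ref{prop:BS}, with a thin boundary layer of width $\e$ whose contribution is $O(\e)$. Sending $\e\to0$, then $\ell\to0$, then $\rho\to\infty$ yields the formula.
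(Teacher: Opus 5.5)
Your route differs from the paper's. The paper gives no argument for Proposition~\ref{prop:homo} beyond quoting Theorem~2 and Remark~3 of \cite{BS1977_1}, while you reprove the Birman--Solomyak theorem by localization. Your polar-coordinate check that the two formulas agree is correct.

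For $p<2$ the architecture works. Every estimate you need is Proposition~\ref{prop:BS} applied to the kernel $y\mapsto\CB(x)\check Y(x,x-y)$, and that proposition requires smoothness only in the input variable.
\begin{itemize}
\item For $Y$ vanishing near $\xi=0$ it gives $\|\CB\,\op(Y)\,\CA\|_{p,\infty}\lesssim\sup_{x,\xi}|\xi|^{\tau}|Y(x,\xi)|$, with no regularity of $Y$ in $x$. This disposes of the symbol tail and of the piecewise-constant approximations.
\item With the weight $\1_{Q'_\e}$, where $Q'_\e=\{y\in Q':\dist(y,Q)<\e\}$, it gives $\|\1_Q h(D)\1_{Q'_\e}\|_{p,\infty}\lesssim|Q'_\e|^{1/r}\to0$ for adjacent cubes.
\end{itemize}
The Weyl law on a cube should rest on this same estimate, not on ``bracketing''. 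Bracketing has no analogue for the nonlocal operator $\1_Q h(D)\1_Q$. Instead, compare it with $h(D)$ on the torus $\R^d/\ell\Z^d$ of exactly the cube's size, whose singular values $s_j(h(k))$, $k\in 2\pi\ell^{-1}\Z^d$, are explicit. The two kernels differ by $\sum_{m\ne0}\check h(x-y+\ell m)$, which is singular only for $x,y$ near $\partial Q$. Your boundary-layer argument therefore shows that the difference has $\SfG_p=0$.

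The genuine gap is the range $p\ge2$, i.e.\ $\tau\le d/2$. There Proposition~\ref{prop:BS} is unavailable, since $s=\tau-d/2\le0$, and your reduction estimate is actually false. To see this, take:
\begin{itemize}
\item $\chi\in\plainC\infty_0(\R^d)$ with $\chi=1$ on $B(0,1)$;
\item $\sigma(\xi)=(\log(2+|\xi|))^{-1/2}$;
\item $Y(x,\xi)=\chi(x)\sigma(\xi)\z(|\xi|)|\xi|^{-\tau}e^{-ix\cdot\xi}$.
\end{itemize}
Then $\sup_x|Y(x,\xi)|=o(|\xi|^{-\tau})$. But for $x\in B(0,1)$,
\[
(\op(Y)v)(x)=(2\pi)^{-d}\int\sigma(\xi)\z(|\xi|)|\xi|^{-\tau}\hat v(\xi)\,d\xi,
\]
which is an unbounded functional on $\plainL2(B(0,1))$ when $\tau\le d/2$. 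So for $p\ge2$ you would need regularity of the symbol in $x$ together with Cwikel-type bounds.

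The same example, with $X=Y$ and $X_\infty=0$, shows that the proposition itself needs such extra structure in that range, given the hypotheses as paraphrased in the paper. Every application in the paper has $p<2$ (ultimately $\a>-3/2$ in Theorem~\ref{thm:model_as}). Your argument therefore covers what is actually used, provided you state the restriction $\tau>d/2$.
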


This proposition is a consequence of Theorem 2 from \cite{BS1977_1} and Remark 3 following 
this theorem.

\subsection{Integral operators} \label{subsect:iop}
We illustrate the utility of Proposition \ref{prop:homo} with the example of 
integral operators with homogeneous kernels. Let  
$\SQ\in\plainC\infty(\R^d\setminus\{0\})$ be a matrix-valued function such that 
\begin{align}\label{eq:homophi}
\SQ(tx) = t^{\a} \SQ(x),\quad x\not = 0, \a >-d, 
\end{align} 
for all $t >0$. Consider the integral operator $\iop(W)$ with the 
vector-valued kernel 
\begin{align}\label{eq:w}
W(x, y) = \CB(x) \SQ(x-y) \bbeta(x) a(y),
\end{align} 
where $a\in \plainC{}_0(\R^d)$ is scalar, $\CB\in \plainC{} (\R^d)$ is matrix-valued,   
$\bbeta\in \plainC\infty_0(\R^d)$ is a vector-function 
(i.e. an $(n\times 1)$-matrix-valued function with some $n$),  
and the dimensions of $\CB, \SQ$ and $\bbeta$ are appropriately matched so that 
the kernel $W(x, y)$ makes sense. 
We study spectral asymptotics of the operator $\iop(W)$ 
by reducing it to the operator of the form 
\eqref{eq:pdo}. 
Let $\t$ be as defined in \eqref{eq:sco}, \eqref{eq:sco1}, and let 
$R_0>0$ be a number such that 
\begin{align*}
W(x, y) = W(x, y) \t\big(|x-y| R^{-1}\big), \quad \textup{for all}\quad R\ge R_0.
\end{align*}
Consequently, the operator $\iop(W)$ has the form 
\eqref{eq:pdo} with the function  $X(x, \xi) = X_R(x, \xi) = X_R(\xi) \bbeta(x)$, where 
\begin{align}\label{eq:XR}
X_R(\xi) =  \int e^{-i\xi x} \t\big(|x| R^{-1}\big) \SQ(x) dx.
\end{align} 
Integrating by parts, we conclude that for each $\xi\not = 0$ 
the function $X_R(\xi)$ converges as $R\to\infty$ to a 
$\plainC\infty(\R^{d}\setminus\{0\})$-function
\begin{align}\label{eq:X0}
X_\infty(\xi) = \lim_{R\to\infty}X_R(\xi).
\end{align}
The function $X_\infty$ satisfies \eqref{eq:homo} with $\tau = \a + d$. Indeed, 
using \eqref{eq:homophi} write for $t >0$:
\begin{align}\label{eq:erelimit}
X_R(t\xi) = &\ t^{-\a-d}  
\int e^{-i\xi x} \t\big(|x|(Rt)^{-1}\big) \SQ(x) dx\notag\\
= &\ t^{-\a-d} X_{Rt}(\xi). 
\end{align}
Passing to the limit as $R\to\infty$, we get \eqref{eq:homo} with $\tau = \a + d$, as claimed. 
The equality \eqref{eq:erelimit} also implies that
\begin{align*}
X_R(t\xi) - X_\infty(t\xi) = t^{-\a-d}\big(X_{Rt}(\xi) - X_\infty(\xi)\big) = o(t^{-\a-d}), 
\quad t\to\infty, 
\end{align*}
for each $\xi\not = 0$ and $R>0$, which entails \eqref{eq:Xas}. Thus, applying 
Proposition \ref{prop:homo}, we obtain the following spectral asymptotics.

\begin{thm} \label{thm:w} 
Let the kernel $W$ be as defined in \eqref{eq:w}, and 
let $\a > -d$. Then the singular values of $\iop(W)$ satisfy the relation
\begin{align*}
\SfG_p(\iop(W)) = \sg_p(\iop(W)) = \frac{1}{d(2\pi)^d}
\int_{\R^d}\,|a(x)|^p\,\int_{\mathbb S^{d-1}}|\CB(x) X_\infty(\om) \bbeta(x)|^p\, d\om  \, dx, 
\end{align*}
where $p^{-1} = 1 + \a d^{-1}$.
\end{thm}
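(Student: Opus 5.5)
The plan is to verify that $\iop(W)$ is exactly of the form \eqref{eq:pdo} with symbol $X_R(\xi)\bbeta(x)$, and then to apply Proposition \ref{prop:homo} with $\tau = \a+d$.

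\emph{Weights and symbol.} In \eqref{eq:pdo} I take the weights $\CB(x)$ (cut off as needed, see below) and $\CA(y) = a(y)$, and the symbol $X(x,\xi) = X_R(\xi)\bbeta(x)$ for a fixed $R\ge R_0$. Since $W(x,y) = W(x,y)\t(|x-y|R^{-1})$, Fourier inversion of $\t(|\cdot|R^{-1})\SQ(\cdot)$ gives
\begin{align*}
\CB(x)\,\t(|x-y|R^{-1})\,\SQ(x-y)\,\bbeta(x)\,a(y) = \frac{1}{(2\pi)^d}\int e^{i\xi(x-y)}\,\CB(x) X_R(\xi)\bbeta(x)\, a(y)\, d\xi,
\end{align*}
understood in the distributional sense. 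This identifies the two operators.

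\emph{Hypotheses of Proposition \ref{prop:homo}.} There are three things to check.
\begin{itemize}
\item $X_R$ is bounded and smooth. This holds because $\t(|x|R^{-1})\SQ(x)$ is compactly supported and integrable, since $\a>-d$.
\item The homogeneity \eqref{eq:homo} with $\tau=\a+d$ and the asymptotics \eqref{eq:Xas}. Both follow from the scaling identity \eqref{eq:erelimit}. For uniformity in $\xi$, I would make the convergence $X_{Rt}(\om)\to X_\infty(\om)$ uniform on the sphere $\om\in\mathbb S^{d-1}$. To do this, integrate by parts in the tail region $\{|x|\gtrsim R\}$: the terms involving $\t'$ and its derivatives carry a factor $R^{-1}$ and act on a function homogeneous of order $\a$. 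Repeating the integration by parts enough times produces decay uniform in $\om$.
\item The factor $\bbeta(x)$ is smooth and compactly supported, so the $\sup_x$ in \eqref{eq:Xas} is harmless.
\end{itemize}

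\emph{Continuity of the weights.} The proposition requires weights in $\plainC{}_0$, but $\CB$ is only continuous. Because $\bbeta$ has compact support, I can replace $\CB$ by $\CB\t_1$, where $\t_1\in \plainC\infty_0$ equals $1$ on $\supp\bbeta$. The kernel is unchanged.

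\emph{Computing the coefficient.} Proposition \ref{prop:homo} gives $\SfG_p=\sg_p$ with $p=d/(\a+d)$, so that $p^{-1}=1+\a d^{-1}$, and with the principal symbol
\begin{align*}
\CT_\infty(x,\om) = \CB(x) X_\infty(\om)\bbeta(x)\,a(x).
\end{align*}
Since $a$ is scalar and $\bbeta$ is a column vector, $\CT_\infty(x,\om)$ is a rank-one matrix. Its only nonzero singular value is therefore $|a(x)|\,|\CB(x)X_\infty(\om)\bbeta(x)|$, and \eqref{eq:pdoas} becomes exactly the formula claimed in Theorem \ref{thm:w}.

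The main obstacle is the uniformity in \eqref{eq:Xas}, which requires a careful integration-by-parts estimate of $X_R - X_\infty$ on the unit sphere. When $-d<\a\le 0$, the oscillatory integral defining $X_\infty$ converges only conditionally, so this limit has to be justified first.
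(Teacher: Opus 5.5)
Your proposal is correct and follows the paper's own argument. You realise $\iop(W)$ as an operator of the form \eqref{eq:pdo} with symbol $X_R(\xi)\bbeta(x)$, invoke Proposition \ref{prop:homo} with $\tau=\a+d$, and read off the single nonzero singular value $|a(x)|\,|\CB(x)X_\infty(\om)\bbeta(x)|$ of the rank-one principal symbol. Your extra remarks make explicit details the paper leaves implicit: uniformity of \eqref{eq:Xas} over the sphere via repeated integration by parts in the tail, and cutting off $\CB$ on $\supp\bbeta$ to obtain a compactly supported weight.
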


\begin{proof}
The matrix $\CT_\infty(x, \xi)$ is rank one and 
\begin{align*}
s_1\big(\CT_\infty(x, \xi)\big) = |a(x)|\, |\CB(x) X_\infty(\xi) \bbeta(x) |.
\end{align*}
The required formula follows from \eqref{eq:pdoas}.
\end{proof}

Consider two examples in which the above formula can be simplified.  

\begin{example}\label{ex:scal} 
Suppose that $\SQ$, $\CB = b$ and $\bbeta = \b$ are scalar functions, and that 
$\SQ(x) = |x|^{\a}$, $\a > -d$.   
Then (see, e.g. \cite[Ch. 2, Sect. 3.3]{GS1964})
\begin{align}\label{eq:xinf}
X_\infty(\xi) = 
\nu_{\a, d}\, |\xi|^{-(d+\a)},\quad \nu_{\a, d} =  
2^{d+\a}\pi^{\frac{d}{2}} 
\frac{\G\big(\frac{d+\a}{2}\big)}{\G\big(-\frac{\a}{2}\big)},\quad \a \not = 0, 2, 4, \dots, 
\end{align}
and $X_\infty(\xi) = 0$ for $\a = 0, 2, 4, \dots$. 
Thus, for $1/p = 1+\a/d$ and $\a\not = 0, 2, 4, \dots,$ we have 
\begin{align}\label{eq:Xint}
\frac{1}{d(2\pi)^d}\int_{\mathbb S^{d-1}} 
|X_\infty(\om)|^p d\om
= \frac{1}{d(2\pi)^d} \big|\nu_{\a, d}\big|^p \frac{2\pi^{\frac{d}{2}}}{\G\big(\frac{d}{2}\big)}
=  
\frac{ \big|\nu_{\a, d}\big|^p}{d\,  2^{d-1} \pi^{\frac{d}{2}}\G\big(\frac{d}{2}\big)}.
\end{align}
Now Theorem \ref{thm:w} yields 
\begin{align*}
\SfG_p(\iop(W)) = \sg_p(\iop(W)) = \frac{ \big|\nu_{\a, d}\big|^p}{d\,  2^{d-1} \pi^{\frac{d}{2}}\G\big(\frac{d}{2}\big)}\,\int_{\R^d} 
|a(x)b(x)\b(x)|^p  dx,\quad \frac{1}{p} = 1 + \frac{\a}{d}.
\end{align*}  
\end{example}

Note that the case of scalar functions $\SQ$ was studied in \cite{BS1970}, 
see also \cite[Theorem 10.9]{BS1977}. 
Next we consider an important example of a vector-valued function $\SQ$.

\begin{example}\label{ex:grad} 
Suppose that $\CB = b$ and $\bbeta = \b$ are scalar functions, and that 
$\SQ(x) = \nabla |x|^{\a+1} = (\a+1)|x|^{\a-1} x$, $\a > -d$. 
The vector-valued function $\SQ$ 
is homogeneous of order $\a$, and it follows from Example 
\ref{ex:scal} that 
\begin{align}\label{eq:xinfgrad}
X_\infty(\xi) = i\nu_{\a+1, d}
|\xi|^{-(\a+1+d)} \xi,
\quad \a \not = -1, 1, 3, \dots,
\end{align}
and $X_\infty(\xi) = 0$ for $\a = -1, 1, 3, \dots$. 
Thus, for $1/p = 1+\a/d$ and $\a\not = -1, 1, 3, \dots,$ we have 
\begin{align}\label{eq:Xintvec}
\frac{1}{d(2\pi)^d}\int_{\mathbb S^{d-1}} 
|X_\infty(\om)|^p d\om 
= 
 \frac{1}{d(2\pi)^d} \big|\nu_{\a+1, d}\big|^p 
 \frac{2\pi^{\frac{d}{2}}}{\G\big(\frac{d}{2}\big)}
 = \frac{ \big|\nu_{\a+1, d}\big|^p}{d\,  2^{d-1} \pi^{\frac{d}{2}}\G\big(\frac{d}{2}\big)}.
\end{align}
Now Theorem \ref{thm:w} yields 
\begin{align*}
\SfG_p(\iop(W)) = \sg_p(\iop(W)) = 
\frac{ \big|\nu_{\a+1, d}\big|^p}{d\,  2^{d-1} \pi^{\frac{d}{2}}\G\big(\frac{d}{2}\big)}\, 
\int_{\R^d} |a(x)b(x)|^p dx,\quad \frac{1}{p} = 1 + \frac{\a}{d}.
\end{align*}
\end{example} 
  
\section{Spectral asymptotics for the model problem}\label{sect:model} 
 
The objective of this section is to find the spectral asymptotics for a model 
integral operator. Recall that for any function $\mathcal K = \mathcal K(x, y)$, $x\in \R^n, y\in\R^d$, 
we denote by $\iop(\CK)$ the integral operator acting from $\plainL2(\R^d)$ into $\plainL2(\R^n)$.  
In each case the values of $n$ and $d$ are clear from the context. 
If $\CK(x, y)$ is $\mathbb C^s$-valued then the ``target" space 
$\plainL2(\R^n)$ is replaced by $\plainL2(\R^n; \mathbb C^s)$.

\subsection{The model operator}   
Let $\SQ\in\plainC\infty(\R^3\setminus \{0\})$ be an  $m\times n$-matrix-valued  
function $\SQ(x) = \{\SQ^{(l, s)}(x)\}$, $1\le l\le m$, $1\le s\le n$, (positively) 
homogeneous of order $\a>-3/2$. 

Let $\boldsymbol\b_{j,k} = \{\b_{j, k}^{(s)}\}_{s = 1,2,\dots, n}: \R^{3N}\mapsto \mathbb C^n$, 
$j = 1, 2, \dots, M$, $k = 1, 2, \dots, N-1$, be continuous 
vector-valued functions such that  
\begin{align}\label{eq:betajk}
|\p_x^{s}\, \boldsymbol \beta_{j, k}(\hat\bx, x)| \lesssim 
1+ (|x-x_k|\wedge 1)^{1 - |s|},\quad s\in\mathbb N_0^{3}.
\end{align}
Let 
\begin{align}\label{eq:rmodel}
\begin{lcases}
r = &\ 2, \quad \textup{if} \quad \a >0,\\
r > &\ 6(2\a+3)^{-1}\quad \textup{is arbitrary}, \quad \textup{if}\quad -3 < 2\a \le 0. 
\end{lcases}
\end{align}
Introduce also the scalar weight functions 
$a = a(x), b_{j, k} = b_{j, k}(\hat\bx)$ for the same range of values of $j$ and $k$ as above,
such that  
\begin{align}\label{eq:abbeta_as}
a\in\plainL{r}(\R^3),  \quad b_{j, k}\in \plainL2(\R^{3N-3}), \quad 
\textup{and are compactly supported}.  
\end{align}
Thus, automatically, $\4\, a\4_{\,r,q}<\infty$ for every $q>0$.
Now consider the vector-valued kernel $\CM(\hat\bx, x)$ with $m M$ components: 
\begin{align}\label{eq:cm_as}
\begin{lcases}
\CM(\hat\bx, x) = &\ \{\CM_j(\hat\bx, x)\}_{j=1}^M,\ 
\CM_j(\hat\bx, x) = \sum_{k=1}^{N-1} \CM_{j, k}(\hat\bx, x),\\
\CM_{j, k}(\hat\bx, x) = &\  
b_{j, k}(\hat\bx) \big(\SQ(x_k-x) \boldsymbol\b_{j, k}(\hat\bx, x)\big)\, a(x). 
\end{lcases}
\end{align}
Our aim is to find an asymptotic formula for the singular values of the operator 
$\iop(\CM): \plainL2(\R^3, \mathbb C)\to\plainL2(\R^{3N-3}; \mathbb C^{M m})$. 
The results on homogeneous kernels from Sect. \ref{sect:intop} 
are not applicable directly, since 
the number of ``target" variables (i.e. $3N-3$) 
is greater than the number of the input variables (i.e. $3$), unless $N=2$. 
The proof of Theorem \ref{thm:model_as} below amounts to reducing the operator 
$\iop(\CM)$ to a form for which Proposition \ref{prop:homo} can be used. 

As in \eqref{eq:XR} and \eqref{eq:X0} define 
\begin{align*}
X_R(\xi) = &\ \int e^{-i x\xi}\, \t(|x| R^{-1}) \, \SQ(x)\, dx,\ R>0,\\
&\ \textup{and}\quad X_\infty(\xi) = \lim_{R\to\infty} X_R(\xi). 
\end{align*}
For $t\in\R^3$ and $0\not = \xi\in\R^3$ define 
\begin{align*}
Y(t, \xi) = 
\sum_{j=1}^M \,|b_{j,1}(t)|^2 \,
|X_\infty(\xi) \boldsymbol\b_{j,1}(t, t)|^2, \,\ N=2,
\end{align*}
and
\begin{align}\label{eq:Y}
Y(t, \xi) = 
\sum_{j=1}^M \,\sum_{k=1}^{N-1}\int_{\R^{3N-6}}\, |b_{j,k}(\tilde\bx_{k, N}, t)|^2 \,
|X_\infty(\xi) \boldsymbol\b_{j,k}(\tilde\bx_{k, N}, t, t)|^2 
\,d\tilde\bx_{k, N},\ N\ge 3.
\end{align}
Here we have used 
the representation $(\hat\bx, x) = (\tilde\bx_{k, N}, x_k, x)$ introduced in \eqref{eq:xtilde}.  

\begin{thm}\label{thm:model_as} 
Let $\CM$ be the operator defined above, where   
$\SQ\in\plainC\infty(\R^3\setminus\{0\})$ is a homogeneous matrix-valued  
function of order $\a > -3/2$.  
Then the operator $\iop(\CM)$ belongs to $\BS_{p, \infty}$, with 
\begin{align}\label{eq:pmodel}
\frac{1}{p} = 1+ \frac{\a}{3},
\end{align}
and 
\begin{align}\label{eq:model} 
\SfG_p\big(\iop(\CM)\big)
=  \sg_p\big(\iop(\CM)\bigl)
=  \frac{1}{3(2\pi)^3}\, \int_{\R^3}\, \int_{\mathbb S^2}\, 
|a(x)|^p\, |Y(x, \om)|^{\frac{p}{2}}
\, d\om\, dx.
\end{align}
\end{thm}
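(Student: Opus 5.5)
The plan is to reduce $\iop(\CM)$ to a pseudodifferential operator on $\plainL2(\R^3)$ of the form \eqref{eq:pdo}. We then apply Proposition \ref{prop:homo} (via Theorem \ref{thm:w}) and remove all errors with Corollaries \ref{cor:zero} and \ref{cor:zero1}. The membership $\iop(\CM)\in\BS_{p,\infty}$ follows directly: each $\CM_{j,k}$ satisfies \eqref{eq:t1}, \eqref{eq:t2} with $T_1 = \SQ(x_k-x)$ and $T_2=\boldsymbol\b_{j,k}$. Corollary \ref{cor:product} then applies componentwise, and \eqref{eq:blockvec} and \eqref{eq:triangle} sum over $j,k$. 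The same bound, being continuous in $\|b_{j,k}\|_{\plainL2}$ and $\4\, a\4_{\,r,p}$, lets us first assume via Corollary \ref{cor:zero1} that $a$ and the $b_{j,k}$ are continuous with compact support. Both sides of \eqref{eq:model} depend continuously on these weights; for the right-hand side this uses Hölder's inequality.

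\emph{Step 1: freezing $\boldsymbol\b$.} Replace $\boldsymbol\b_{j,k}(\hat\bx,x)$ by $\boldsymbol\b_{j,k}(\hat\bx,x_k)$, i.e.\ by $\boldsymbol\b_{j,k}(\tilde\bx_{k,N},x_k,x_k)$. The difference $T_2 = \boldsymbol\b_{j,k}(\hat\bx,x)-\boldsymbol\b_{j,k}(\hat\bx,x_k)$ vanishes at $x=x_k$ and still satisfies \eqref{eq:t2}. The second part of Corollary \ref{cor:product} therefore puts the error in $\BS_{q,\infty}$ with $1/q = 1+(\a+1)/3>1/p$. Hence $\SfG_p$ of the error vanishes, and by Corollary \ref{cor:zero} we may work with the frozen kernel $\CM^0$.

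\emph{Step 2: structure of the frozen operator.} For fixed $k$, write $\hat\bx = (\tilde\bx, t)$ with $\tilde\bx=\tilde\bx_{k,N}$ and $t=x_k$. Then
\[
\iop(\CM^0_{j,k})u(\hat\bx) = b_{j,k}(\tilde\bx,t)\,\big(\SQ * (au)\big)(t)\,\boldsymbol\b_{j,k}(\tilde\bx,t,t).
\]
So $\CA_k:=\{\iop(\CM^0_{j,k})\}_j$ is a multiplication operator composed with the convolution $u\mapsto \SQ*(au)$. Consequently $\CA_k^*\CA_k = (\SQ*a\,\cdot)^*\,\Omega_k(t)\,(\SQ*a\,\cdot)$, with the matrix weight
\[
\Omega_k(t)=\sum_j\int |b_{j,k}(\tilde\bx,t)|^2\, \boldsymbol\b_{j,k}\boldsymbol\b_{j,k}^{*}\,d\tilde\bx
\]
(suitably tensored with the $m$-dimensional identity). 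Summing over $k$, $\sum_k\CA_k^*\CA_k=\CB^*\CB$, where $\CB = \Omega^{1/2}(t)\,(\SQ*a\,\cdot)$ and $\Omega=\sum_k\Omega_k$; for $N=2$ there is no $\tilde\bx$-integration. This $\CB$ is exactly of the form \eqref{eq:w}, up to the harmless cut-off $\t(|x-y|/R)$ made possible by compact supports. Theorem \ref{thm:w} then gives the asymptotics of $\CB$, with $|X_\infty(\om)\Omega^{1/2}(x)|$ replaced by the rank-one-type expression $|a(x)|\,Y(x,\om)^{1/2}$, since $|X_\infty\Omega^{1/2}|^2 = Y$ by \eqref{eq:Y}. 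By \eqref{eq:double} the formula transfers to $(\sum_k\CA_k^*\CA_k)^{1/2}$, yielding \eqref{eq:model}. One must take care that the rank of $\CT_\infty$ may exceed one; I would then replace $\Omega^{1/2}$ by the vector of components so that the counting in \eqref{eq:pdoas} reproduces $Y^{p/2}$. This requires checking that only the rank-one structure in $\om$ survives, since $\SQ$ acts on a single input.

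\emph{Step 3 (main obstacle): cross terms.} We have $\iop(\CM^0)^*\iop(\CM^0) = \sum_k\CA_k^*\CA_k + \sum_{k\ne k'}\CA_k^*\CA_{k'}$, and I must show $\SfG_{p/2}$ of the cross sum is zero. The approach is to split $\SQ(x_k-x) = \SQ\,\t(|x_k-x|/\d) + \SQ\,\z(|x_k-x|/\d)$. The second piece is $\plainC\infty$ in $x$, with all derivatives bounded by $C_\d$. It therefore satisfies \eqref{eq:kernelexp} with arbitrarily large $\a$ and lies in $\BS_{q,\infty}$ for all small $q$, so its $\SfG_p$ is zero. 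For the localized pieces, $\CA_k^{\d *}\CA_{k'}^{\d}$ has kernel supported where $|x_k-x_{k'}|<2\d$. I would write it as $(\1_{\{|x_k-x_{k'}|<2\d\}}\CA_k^\d)^*\CA_{k'}^\d$ and apply Corollary \ref{cor:homo} to the first factor. Its weight $b\,\1_{\{|x_k-x_{k'}|<2\d\}}$ has $\plainL2$-norm tending to $0$ as $\d\to0$, by dominated convergence, since the set $\{x_k=x_{k'}\}$ is null. With \eqref{eq:multipl} and \eqref{eq:trianglep} this gives $\SfG_{p/2}(\textup{cross})\lesssim o(1)$ uniformly as $\d\to0$. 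Meanwhile the diagonal terms with the $\t$-cutoff have the same asymptotics as without it, by the smooth-remainder argument. Corollary \ref{cor:zero1} then completes the proof. The delicate point is keeping the bounds uniform in $\d$ for the diagonal parts while sending the cross parts to zero, and I expect this to require the most care.
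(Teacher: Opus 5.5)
The gap is in Step~3, and it matters for every $N\ge 3$. Your Steps~1 and~2 are sound. Step~2 is a legitimate alternative to the paper's splitting $\CF=\CF_0+\CF_1$, with one adjustment: since $\Omega^{1/2}$ is merely continuous, put it on the left as $(\Omega^{1/2})^{T}\otimes I_m$ acting on $\mathrm{vec}\,\SQ$, because Theorem~\ref{thm:w} requires the right factor $\bbeta$ to be smooth.

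The failing step is the identity $\CA_k^{\d *}\CA_{k'}^{\d}=\big(\1_{\{|x_k-x_{k'}|<2\d\}}\CA_k^{\d}\big)^*\CA_{k'}^{\d}$, which is false. The kernel of $\CA_k^{\d *}\CA_{k'}^{\d}$ at $(x,y)$ is an integral over $\hat\bx$ whose integrand lives on $\{|x_k-x|<\d,\ |x_{k'}-y|<\d\}$. Since $x$ and $y$ are independent, this only gives $|x_k-x_{k'}|>|x-y|-2\d$. So for $|x-y|\ge 4\d$ the whole contribution comes from $\{|x_k-x_{k'}|\ge 2\d\}$, and your indicator throws it away. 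The discarded piece $\big((1-\1_{\{|x_k-x_{k'}|<2\d\}})\CA_k^{\d}\big)^*\CA_{k'}^{\d}$ is a cross term of exactly the original kind. Nothing in the proposal shows it is small or of lower order: with merely continuous weights, localizing $\SQ$ near $x=x_k$ gains nothing for it.

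What is missing is the mechanism that makes $k\neq k'$ better than $k=k'$: $x_k$ and $x_{k'}$ are \emph{different} integration variables.

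\emph{The paper's route.} It first reduces to $a\in\plainC{\infty}_0$ and smooth frozen $\bbeta_{j,k}$ (with $b_{j,k}$ absorbed). It then substitutes $x_{k'}\mapsto x_{k'}+y$ in the kernel of $\iop(\CM_{j,k})^*\iop(\CM_{j,k'})$, which turns $\SQ(x_{k'}-y)$ into $\SQ(x_{k'})$. The kernel is therefore $\plainC{\infty}_0$ in $y$, and Proposition~\ref{prop:BS} gives superpolynomial decay.

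\emph{A route from your own factorization.} This needs only bounded, compactly supported weights, which you already arranged.
\begin{itemize}
\item Write $\CA_k=J_kZ$, where $Zu=\chi\,(\SQ*(au))$ for a cut-off $\chi$ (so $Z\in\BS_{p,\infty}$) and $J_kF=\{b_{j,k}(\hat\bx)F(x_k)\bbeta_{j,k}(\hat\bx)\}_j$.
\item For $k=k'$, $J_k^*J_k$ is (right) multiplication by $\Omega_k$. This is not compact, which is exactly why the diagonal terms carry the asymptotics.
\item For $k\neq k'$, $J_k^*J_{k'}$ is an integral operator in the variables $(x_k,x_{k'})$ with bounded, compactly supported kernel, hence Hilbert--Schmidt.
\item By \eqref{eq:multipl}, $\CA_k^*\CA_{k'}=Z^*(J_k^*J_{k'})Z\in\BS_{q,\infty}$ with $1/q=2/p+1/2$. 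Thus $q<p/2$ and $\SfG_{p/2}$ of the cross terms vanishes.
\end{itemize}
With either replacement for Step~3, and no $\d$-localization at all, the rest of your argument goes through.
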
 

Note that the condition $\a>-3/2$ guarantees that $p <2$. 

First we estimate the $\BS_{p, \infty}$-norm of each of the operators $\iop(\CM_{j,k})$.

\begin{lem}\label{lem:cmjk} 
Assume that $\SQ$ is $\a$-homogeneous with some $\a>-3/2$. 
Let $1/p = 1 + \a/3$ and let $r$ be as defined in \eqref{eq:rmodel}.
Then $\iop(\CM_{j, k})\in \BS_{p, \infty}$, $j=1, 2, \dots, M$, $k = 1, 2, \dots, N-1$, and 
\begin{align}\label{eq:cmjk}
\| \iop(\CM_{j, k})\|_{p, \infty}\lesssim \|b_{j, k}\|_{\plainL2} \, 
\4\, a\4_{\,r,p},
\end{align}
where the implicit constant depends on 
the function 
$\SQ$, the constants in \eqref{eq:betajk}, 
and the radii of supports of $a$ and $b_{j, k}$, 
but does not depend on the functions $a$ and $b_{j, k}$ otherwise.

If $\boldsymbol\b_{j, k}(\hat\bx, x_k) = 0$, then 
$\iop(\CM_{j, k})\in \BS_{q, \infty}$ with $1/q = 1 + (\a+1)/3$, so that 
$\SfG_p\big(\iop(\CM_{j, k})\big)= 0$.
\end{lem}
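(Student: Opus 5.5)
The plan is to reduce the lemma to Corollary \ref{cor:product} applied component by component. Fix $j,k$. Since $\iop(\CM_{j,k})$ maps into $\plainL2(\R^{3N-3};\mathbb C^m)$, I view it as a block-vector operator whose components are the scalar operators with kernels
\begin{align*}
\CM_{j,k}^{(l)}(\hat\bx, x) = \sum_{s=1}^n b_{j,k}(\hat\bx)\, \SQ^{(l,s)}(x_k-x)\, \b^{(s)}_{j,k}(\hat\bx, x)\, a(x),\quad l = 1,\dots, m.
\end{align*}
By \eqref{eq:blockvec} and the quasi-triangle inequality \eqref{eq:triangle}, it suffices to prove \eqref{eq:cmjk} for each finite-sum term $b_{j,k}\, \iop\big(\SQ^{(l,s)}(x_k-x)\,\b^{(s)}_{j,k}\big)\, a$ separately. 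The constants in these inequalities depend only on $p$, $m$ and $n$.

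For each term I set $T_1(\hat\bx,x) = \SQ^{(l,s)}(x_k-x)\,\zeta(\hat\bx, x)$ and $T_2 = \b^{(s)}_{j,k}$, where $\zeta$ is a harmless cutoff discussed below. Homogeneity of order $\a$ and smoothness of $\SQ$ on the unit sphere give $|\p_x^{i}\SQ^{(l,s)}(x_k-x)|\lesssim |x-x_k|^{\a-|i|}$. Condition \eqref{eq:betajk} is precisely \eqref{eq:t2}.

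There is one point to handle. Corollary \ref{cor:product} requires $(|x-x_k|\wedge 1)^{\a-|i|}$ with $A\in\plainL\infty$, but $|x-x_k|^{\a-|i|}$ grows when $|x-x_k|>1$ and $\a-|i|>0$. This is where the compact supports in \eqref{eq:abbeta_as} enter. On $\supp b_{j,k}\times\supp a$ we have $|x-x_k|\le C$, where $C$ depends on the radii of the supports. There $|x-x_k|^{\a-|i|}\lesssim_C (|x-x_k|\wedge 1)^{\a-|i|}$ when $\a - |i|\le 0$, and $|x-x_k|^{\a-|i|}\lesssim_C 1 \lesssim (|x-x_k|\wedge1)^{\a-|i|}$ when $\a - |i|>0$. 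To make this rigorous for derivatives, I multiply the kernel by a smooth cutoff $\zeta(x)\tilde\zeta(\hat\bx)$ equal to $1$ on the supports of $a$ and $b_{j,k}$. This leaves the operator $b_{j,k}\iop(\cdot)a$ unchanged. I put $\zeta$ into $T_1$; the cutoff in $\hat\bx$ does not affect $x$-derivatives and is only used to bound $|x_k|$. Then \eqref{eq:t1} holds with $A$ equal to a constant depending on $\SQ$ and the radii.

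Corollary \ref{cor:product} now gives membership in $\BS_{p,\infty}$ with $1/p = 1+\a/3$ and the bound \eqref{eq:homoprep}. That bound is exactly \eqref{eq:cmjk} for this term, with $r$ as in \eqref{eq:rmodel}, which coincides with \eqref{eq:r} for $d=3$.

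For the second claim, the hypothesis $\boldsymbol\b_{j,k}(\hat\bx,x_k)=0$ means $T_2(\hat\bx,x_k)=0$ for every component. The second part of Corollary \ref{cor:product} then gives $\BS_{q,\infty}$ with $1/q = 1+(\a+1)/3$, provided $\4\, a\4_{\,r,q}<\infty$. That finiteness is automatic, because $a$ is compactly supported and in $\plainL r$.

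One needs $r$ to be admissible for the exponent $\a+1$ as well. For $\a+1>0$, \eqref{eq:r} asks for $r=2$. If $\a\le0$ the given $r$ exceeds $6/(2\a+3)$, which may be above $2$; since $a$ is compactly supported, $\plainL r$ then embeds into $\plainL2$ locally and we may use $r=2$ in the lattice norm. Since $q<p$, we have $\BS_{q,\infty}\subset\BS_{p,\infty}$ with $\SfG_p = 0$, as noted after \eqref{eq:gnorm}.

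The main obstacle is just the bookkeeping of the large-$|x-x_k|$ region, which the compact supports resolve. Beyond that, everything is a direct application of Corollary \ref{cor:product}.
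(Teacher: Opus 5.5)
Your proposal is correct and follows the paper's proof. The paper applies Corollary \ref{cor:product} with $T_1=\SQ(x_k-x)$ and $T_2=\bbeta_{j,k}$, uses the compact supports to replace $|x-x_k|$ by $|x-x_k|\wedge 1$, and invokes the second part of that corollary when $\bbeta_{j,k}(\hat\bx,x_k)=0$. Your componentwise reduction, cutoff and check of $r$ just make explicit what the paper leaves implicit.

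Two small points. First, for $\a-|i|>0$ the chain $|x-x_k|^{\a-|i|}\lesssim 1\lesssim(|x-x_k|\wedge 1)^{\a-|i|}$ is false near $x=x_k$. The bound you need still holds directly: it is an equality when $|x-x_k|\le 1$, and it compares $C^{\a-|i|}$ with $1$ when $1<|x-x_k|\le C$. Second, the case $-3/2<\a\le -1$ is not covered by your discussion of $r$. There the given $r>6/(2\a+3)$ already exceeds the threshold $6/(2\a+5)$ required for the exponent $\a+1$, so it is admissible as it stands.
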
 
 
\begin{proof} Since $\SQ$ is $\a$-homogeneous, we have 
\begin{align*}
|\p_x^j\,\SQ(x_k-x)|\lesssim |x-x_k|^{\a-|j|},\quad j\in\mathbb N_0^3.
\end{align*} 
Since the kernel 
$\CM_{j, k}$ is compactly supported, on the right-hand side we can replace 
$|x-x_k|$ with $|x-x_k|\wedge 1$. Taking into account also 
\eqref{eq:betajk} we see that the conditions of Corollary \ref{cor:product} 
are satisfied with $T_1 = \SQ$ and $T_2 = \boldsymbol\b_{j, k}$. Consequently, 
$\iop(\CM_{j, k})\in \BS_{p, \infty}$, and the bound 
\eqref{eq:cmjk} holds, as claimed. 

If $\boldsymbol\b_{j, k}(\hat\bx, x_k) = 0$, then  the 
inclusion 
$\iop(\CM_{j, k})\in \BS_{q, \infty}$ follows from the second part of 
Corollary \ref{cor:product}. 
\end{proof}

This lemma has two important consequences:
\begin{enumerate}
\item  
It suffices to prove Theorem \ref{thm:model_as} for functions 
$a, b_{j, k}$ that satisfy 
\begin{align}\label{eq:binfty}
a\in\plainC{\infty}_0(\R^3), \quad b_{j, k}\in \plainC{\infty}_0(\R^{3N-3}), 
\end{align}
instead of \eqref{eq:abbeta_as}. 
Indeed, let $b_{j, k}^{(\nu)} \in\plainC\infty_0, \nu >0$, be a family of functions such that 
$\|b_{j, k}^{(\nu)} - b_{j, k}\|_{\plainL2}\to 0$ as $\nu\to 0$, and such that 
the supports of all 
$b_{j, k}^{(\nu)}$ and $b$ are contained in the same ball of fixed radius independent of $\nu$. 
Then, denoting by $\CM_{j, k}^{(\nu)}$ 
the corresponding kernels \eqref{eq:cm_as}, we obtain from 
Lemma \ref{lem:cmjk} that 
\begin{align*}
\|\iop(\CM_{j, k}^{(\nu)}) - \iop(\CM_{j, k})\|_{p, \infty}
\lesssim \|b_{j, k}^{(\nu)} - b_{j, k}\|_{\plainL2}   
\, \bigg[\sum_{n\in \Z^3}\|a\|_{\plainL{r}(\CC_n)}^p\bigg]^{\frac{1}{p}}\to  0,\quad \nu\to 0.
\end{align*}
Thus, by \eqref{eq:triangle} and \eqref{eq:blockvec}, 
$\|\iop(\CM) - \iop(\CM^{(\nu)})\|_{p, \infty}\to 0$,   
as $\nu\to 0$.  Now Corollary~\ref{cor:zero1} implies that 
\begin{align*}
\SfG_p\big(\iop(\CM^{(\nu)})\big) \to \SfG_p\big(
\iop(\CM)\big),\quad 
\sg_p\big(\iop(\CM^{(\nu)})\big) \to 
\sg_p\big(\iop(\CM)\big),\quad \nu\to 0.
\end{align*} 
In the same way one proves that $a$ may be assumed to satisfy \eqref{eq:binfty}. 
\item 
It suffices to prove Theorem \ref{thm:model_as} for vectors 
$\boldsymbol\b_{j, k}$ that are independent of $x$, i.e. 
for 
\begin{align}\label{eq:xind}
\boldsymbol\b_{j, k}(\hat\bx, x) = \boldsymbol\b_{j, k}(\hat\bx).
\end{align} 
Indeed, in the definition of $\CM_{j, k}$ 
replace $\boldsymbol\b_{j, k}(\hat\bx, x)$ 
with $\widetilde{\boldsymbol\b_{j, k}}(\hat\bx)= \boldsymbol\b_{j, k}(\hat\bx, x_k)$ 
and denote the associated operator by $\widetilde\CM_{j, k}$. 
Since  
$\boldsymbol\b_{j, k}(\hat\bx, x) - \widetilde{\boldsymbol\b_{j, k}}(\hat\bx)$ vanishes at 
$x = x_k$, by Lemma \ref{lem:cmjk} we have  $\CM_{j, k} - \widetilde\CM_{j, k}\in \BS_{q, \infty}$. 
Since $q<p$, this means that 
$\SfG_p\big(\CM_{j, k} - \widetilde\CM_{j, k}\big) = 0$.
Thus Corollary \ref{cor:zero} implies that 
\begin{align*}
\SfG_p\big(\iop(\CM)\big) = 
\SfG_p\big(\iop(\widetilde\CM)\big),\quad 
\sg_p\big(\iop(\CM)\big) = \sg_p\big(\iop(\widetilde\CM)\big).
\end{align*}
\end{enumerate}

Combining the observations (1) and (2), in the 
proof below we always assume that 
\eqref{eq:xind} and \eqref{eq:binfty} hold. 
Moreover, we absorb $b_{j, k}$ into $\boldsymbol\b_{j, k}$. 
Therefore, the kernels $\CM_{j, k}$ take the form
\begin{align}\label{eq:smoothm}
\CM_{j, k}(\hat\bx, x) =  
\big(\SQ(x_k-x) \boldsymbol\b_{j, k}(\hat\bx)\big)\, a(x), 
\end{align}
with $a\in \plainC\infty_0(\R^3)$ and $\boldsymbol\b_{j, k}\in \plainC\infty_0(\R^{3N-3})$. 

\subsection{Proof of Theorem \ref{thm:model_as} for $N=2$}
By \eqref{eq:smoothm} the kernel \eqref{eq:cm_as} has the form 
\begin{align*}
\CB(x_1) \big(\widetilde{\SQ}(x_1-x) \bbeta(x_1)\big) a(x) 
\end{align*}
with $\CB(x_1) = I$, 
\begin{align*}
\textup{$M\times M$-block-matrix}\quad 
\widetilde\SQ(t) = 
\begin{pmatrix}
\SQ(t) & 0 & \cdots&0\\
0 & \SQ(t)& \dots&0\\
\vdots & \vdots &\ddots &\vdots\\
0&0&\cdots & \SQ(t)
\end{pmatrix}
\quad 
\textup{and}\quad 
\bbeta(x_1) =  
\begin{pmatrix}
\bbeta_{1, 1}(x_1)\\
\bbeta_{2, 1}(x_1)\\
\vdots\\
\bbeta_{M, 1}(x_1)
\end{pmatrix}.
\end{align*}
Now \eqref{eq:model} follows directly from Theorem \ref{thm:w}. 

\subsection{Proof of Theorem \ref{thm:model_as} for $N\ge 3$ } \label{subsect:model_as}
We still assume that $\CM_{j, k}$ have the form \eqref{eq:smoothm}. 

\begin{lem}\label{lem:cross_as} 
For each $j = 1, 2, \dots, M$ and each pair $k, l = 1, 2, \dots, N-1$, $k\not = l$, we have 
\begin{align*}
\SfG_{p/2}\big(\iop(\CM_{j,k})^*\iop(\CM_{j,l})\big) = 0.
\end{align*}
\end{lem}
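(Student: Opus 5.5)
The plan is to show that the kernel of the operator $\iop(\CM_{j,k})^*\iop(\CM_{j,l})$, $k\not = l$, is a $\plainC\infty_0$-function on $\R^3\times\R^3$. The operator then lies in $\BS_{q,\infty}$ for every $q>0$, and so $\SfG_{p/2}$ of it vanishes. As agreed above, I assume \eqref{eq:xind} and \eqref{eq:binfty}, so that the kernels have the form \eqref{eq:smoothm} with $a\in\plainC\infty_0(\R^3)$ and $\boldsymbol\b_{j,k}\in\plainC\infty_0(\R^{3N-3})$.

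\textbf{Step 1: the kernel.} By Fubini, the operator $\iop(\CM_{j,k})^*\iop(\CM_{j,l})$ acting on $\plainL2(\R^3)$ has kernel
\begin{align*}
K(x,y) = \overline{a(x)}\, a(y)\int_{\R^{3N-3}} \big(\SQ(x_k-x)\boldsymbol\b_{j,k}(\hat\bx)\big)^*\,
\SQ(x_l-y)\boldsymbol\b_{j,l}(\hat\bx)\, d\hat\bx .
\end{align*}
The integral converges absolutely because $\a>-3/2>-3$, so $\SQ$ is locally integrable and both $\boldsymbol\b$'s are compactly supported. The key observation is that $k\not = l$, so $x_k$ and $x_l$ are distinct independent integration variables inside $\hat\bx$. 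The singular factor in $x$ involves only $x_k$, and the singular factor in $y$ involves only $x_l$.

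\textbf{Step 2: smoothness.} In the integral I substitute $x_k = x+w$ and $x_l = y+v$, keeping the remaining variables $\tilde\bx$ of $\hat\bx$ unchanged. The integrand becomes
\begin{align*}
\big(\SQ(w)\boldsymbol\b_{j,k}(\tilde\bx, x+w, y+v)\big)^*\,\SQ(-v)\boldsymbol\b_{j,l}(\tilde\bx, x+w, y+v),
\end{align*}
with the arguments of $\boldsymbol\b$ written in the obvious order. All dependence on $(x,y)$ now sits in the smooth, compactly supported factors $\boldsymbol\b_{j,k},\boldsymbol\b_{j,l}$, and the singular weights $|\SQ(w)|\,|\SQ(-v)|$ are locally integrable and independent of $(x,y)$. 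Because $x,y$ range over the compact support of $a$, the variables $w,v$ effectively range over a bounded set. Hence differentiation under the integral sign is justified to every order: each $\p_x^m\p_y^n$ falls only on the $\boldsymbol\b$'s. Combined with the smoothness of $a$, this gives $K\in\plainC\infty_0(\R^6)$.

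\textbf{Step 3: conclusion.} A kernel in $\plainC\infty_0$ lies in $\plainN{s}_2$ in $x$ for every $s>0$, uniformly in $y$, on a bounded cube. Proposition \ref{prop:BS} therefore gives $\iop(K)\in\BS_{q,\infty}$ with $q^{-1}=2^{-1}+s/3$ for arbitrarily large $s$, so in particular for some $q<p/2$. By the remark after \eqref{eq:gnorm}, $\SfG_{p/2}(\iop(K))=0$.

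The only delicate point is Step 2. Differentiating $\SQ(x_k-x)$ directly in $x$ would generate singularities of order $\a-|m|$, which stop being locally integrable once $|m|\ge 3+\a$. The change of variables is what avoids this: it moves every derivative onto the smooth factors $\boldsymbol\b$, and that step depends essentially on $k\not = l$.
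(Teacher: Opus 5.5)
Your proof is correct and follows essentially the paper's argument: write the kernel of $\iop(\CM_{j,k})^*\iop(\CM_{j,l})$, use $k\not=l$ to shift integration variables so that the singular factors $\SQ$ no longer depend on the external variables, and conclude via Proposition \ref{prop:BS} that the singular values decay faster than any power. The only difference is that the paper shifts just $x_l\mapsto x_l+y$, which gives smoothness in the input variable $y$ for each fixed $x$ (exactly what Proposition \ref{prop:BS} requires), whereas your double shift gives joint smoothness in $(x,y)$ — slightly more than needed, but equally valid.
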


\begin{proof} 
Fix a $j = 1, 2, \dots, M$ and write the kernel $\CP_{k, l}(x, y)$ of the operator 
$\iop(\CM_{j,k})^*\iop(\CM_{j,l})$:
\begin{align*}
\CP_{k,l}(x, y) 
= &\ \overline{a(x)}a(y)
\int  \big(\overline{\SQ(x_k-x) \boldsymbol\b_{j, k}(\hat\bx)} 
\big) 
\cdot \big(\SQ(x_l-y)  \boldsymbol\b_{j, l}(\hat\bx)\big)\, d\hat\bx.
\end{align*}
Write $\hat\bx = (\tilde\bx_{l, N}, x_l)$, $d\hat\bx = d\tilde\bx_{l, N} dx_l$  
and change $x_l$ to $x_l+y$, so that
\begin{align*}
\CP_{k,l}(x, y) 
= &\ \overline{a(x)}a(y)
\int  \big(\overline{\SQ(x_k-x) \boldsymbol\b_{j, k}(\tilde\bx_{l, N}, x_l+y)} 
\big) 
\cdot \big(\SQ(x_l)  \boldsymbol\b_{j, l}(\tilde\bx_{l, N}, x_l+y)\big)\, 
d\tilde\bx_{l, N}\, dx_l.
\end{align*}
Because of the assumptions $a\in \plainC\infty_0(\R^3)$ and 
$\boldsymbol\b_{j, k}\in \plainC\infty_0(\R^{3N-3})$, 
for all $x\in \R^3$ 
the kernel $\CP_{k,l}$ is a $\plainC\infty_0$-function of $y\in\R^3$. 
Hence by Proposition \ref{prop:BS} the singular values of the operator $\iop(\CP_{k,l})$ decay 
faster than any negative power of their number.  In particular, 
$\SfG_{p/2}\big(\iop(\CP_{k,l})\big) = 0$, as required.
\end{proof}

By Corollary \ref{cor:zero}, Lemma \ref{lem:cross_as} entails that   
\begin{align*}
\SfG_{p/2}\big(\iop(\CM)^* \iop(\CM)\big) 
= \SfG_{p/2}\bigg(\sum_{j=1}^M\, \sum_{k=1}^{N-1} \iop(\CM_{j,k})^*\iop(\CM_{j,k})\bigg),
\end{align*}
and the same equality holds for the functional $\sg_{p/2}$.
Let us write the kernel 
$\CF(x, y)$ 
of the operator on the right-hand side:
\begin{align*}
\CF(x, y) &\ = \overline{a(x)}a(y)
\sum_{j=1}^M\, \sum_{k=1}^{N-1} \int\limits_{\R^{3N-3}}  \big(\overline{\SQ(x_k-x) \boldsymbol\b_{j, k}(\hat\bx)} 
\big) 
\cdot \big(\SQ(x_k-y)  \boldsymbol\b_{j, k}(\hat\bx)\big)\, d\hat\bx\\
= &\ \overline{a(x)}a(y)
\sum_{j=1}^M\, \sum_{k=1}^{N-1} \int\limits_{\R^3}\int\limits_{\R^{3N-6}}  
 \big(\overline{\SQ(t-x) \boldsymbol\b_{j, k}((\tilde\bx_{k, N}, t)} 
\big) 
\cdot \big(\SQ(t-y)  \boldsymbol\b_{j, k}((\tilde\bx_{k, N}, t)\big)\, 
d\tilde\bx_{k, N}\, dt.
\end{align*}
To rewrite the right-hand side in a more compact form introduce 
the ($n\times n$)-matrix-function $\SB(t) = \{ B^{(l, s)}(t) \}_{l, s = 1}^n$: 
\begin{align*}
B^{(l, s)}(t)
 = \sum_{j=1}^M\, \sum_{k=1}^{N-1}\int_{\R^{3N-6}}\, 
\overline{\b_{j, k}^{(s)}(\tilde\bx_{k, N}, t)}
 \b_{j, k}^{(l)}(\tilde\bx_{k, N}, t) 
\,d\tilde\bx_{k, N},
\end{align*}
so that 
\begin{align*}
\CF(x, y) = 
 \overline{a(x)}a(y) 
\int_{\R^3} \tr\, \big( {\sf B}(t) \SQ^*(t-x) \SQ(t-y)\big)\, dt.
\end{align*}
Since both $a$ and $\sf B$ are compactly supported, the above integral 
coincides with
\begin{align*}
\CF(x, y) = 
 \overline{a(x)}a(y) &\ 
\int_{\R^3} \tr\, \big( {\sf B}(t) \SQ^*(t-x) \SQ(t-y)\big)\, \\
&\ \qquad \times \t(|t-x| R^{-1}) \,\t(|t-y| R^{-1})\, dt,
\end{align*} 
for all sufficiently large $R>0$. Split $\CF$ into two kernels: $\CF = \CF_0+\CF_1$, where 
\begin{align*}
\CF_0(x, y) = 
 \overline{a(x)}a(y) &\ 
\int_{\R^3} \tr\, \big( {\sf B}(x) \SQ^*(t-x) \SQ(t-y)\big)\, \\
&\ \qquad \times \t(|t-x| R^{-1}) \,\t(|t-y| R^{-1})\, dt,\\
\CF_1(x, y) = 
 \overline{a(x)}a(y) &\ 
\int_{\R^3} \tr\, \big[ \big(\SB(t) - \SB(x)\big)\SQ^*(t-x) \SQ(t-y)\big]\, \\
&\ \qquad \times \t(|t|R_1^{-1})^2 \t(|t-x| R^{-1}) \,\t(|t-y| R^{-1})\, dt.
\end{align*} 
The cut-off $\t(|t|R_1^{-1})^2$ (with a sufficiently large $R_1>0$) could be inserted since 
$a$ and $\SB$ is compactly supported. 
Let us show that 
the operator $\iop(\CF_1)$ gives a zero contribution to the asymptotics. Indeed, represent 
$\iop(\CF_1)$ as the sum $\sum_{l, s =1}^n \sum_{r = 1}^m
\iop(\CR_1^{(l, s, r)})\,\iop(\CR_2^{(r, l)})$ where
\begin{align*}
\CR_1^{(l, s, r)}(x, t) = &\ \overline{a(x)} \, \t(|t|R_1^{-1})\, 
  \overline{\CT^{(r, s)}(t-x)}
  \, \big(B^{(l, s)} (t) - B^{(l, s)} (x)\big),\\
\CR_2^{(r, l)}(t, y) = &\ \, \t(|t|R_1^{-1})\, a(y)\, \CT^{(r, l)}(t-y),
\quad\textup{where}\quad 
 \CT^{(r, s)}(x) = \SQ^{(r, s)}(x)  
 \t(|x| R^{-1}). 
\end{align*}
To determine to which compact class the operator 
$\iop(\CF_1)$ belongs we consider each 
of the above factors separately and then use \eqref{eq:multipl}. By Corollary \ref{cor:homo},
$\op(\CR_2^{(r, l)})\in \BS_p$ with the $p$ as in \eqref{eq:pmodel}.  
Let us consider now $\CR_1^{(l, s, r)}$. 
Since $B^{(l, s)} (t) - B^{(l, s)} (x)$ vanishes for $t = x$, 
applying Corollary 
\ref{cor:product}, 
we conclude that 
$\op(\CR_1^{(l, s, r)})\in \BS_{q}$
with $1/q = 1 + {(\a+1)}/{3}$. By \eqref{eq:multipl} and \eqref{eq:triangle}, 
$\iop(\CF_1)\in \BS_{q_1},\ 1/{q_1} = 2+ {(2\a+1)}/{3}$. As 
$q_1 < p/2$, Corollary \ref{cor:zero} ensures that  
\begin{align*}
\SfG_{p/2} (\iop(\CF)) = \SfG_{p/2} (\iop(\CF_0)), 
\quad \sg_{p/2} (\iop(\CF)) = \sg_{p/2} (\iop(\CF_0)).
\end{align*}
Rewrite the kernel $\CF_0$ in the form
\begin{align*}
\CF_0(x, y) = 
 \overline{a(x)}a(y) &\ 
\int_{\R^3} \tr\, \big( {\sf B}(x) \SQ^*(t) \SQ(t+x-y)\big)\, \\
&\ \qquad \times \t(|t| R^{-1}) \,\t(|t + x-y| R^{-1})\, dt.
\end{align*}
Taking Fourier transform, we can now rewrite 
the operator $\iop(\CF_0)$ as a pseudo-differential operator 
in $\plainL2(\R^3)$ of the form 
\eqref{eq:pdo} with 
\begin{align*}
\CA(x) = a(x), \,\CB(x) = \overline{a(x)},
\end{align*}
and symbol
\begin{align*}  
\tr\, \big( {\sf B}(x) X_R^*(\xi) X_R(\xi)\big) = &\ Y_R(x, \xi)
: = 
\sum_{j=1}^M \,\sum_{k=1}^{N-1}\int_{\R^{3N-6}}\, 
|X_R(\xi) \boldsymbol\b_{j,k}(\tilde\bx_{k, N}, t, t)|^2 
\,d\tilde\bx_{k, N},
\\ \quad 
\textup{where}\quad X_R(\xi) = &\ \int e^{-i\xi x} \, \t(|x| R^{-1})\, \SQ(x)\, dx.
\end{align*}
It is clear that the function $X_\infty(\xi) = \lim_{R\to\infty} X_R(\xi)$  
is homogeneous of order $(-\a-d)$ and that 
\begin{align*}
Y_R(x, \xi) 
= 
Y(x, \xi) 
+ o(|\xi|^{-2(\a+d)}),\quad |\xi|\to\infty,
\end{align*}
uniformly in $x\in\R^3$, where $Y(x, \xi)$  is defined in \eqref{eq:Y}. 
Thus by Proposition \ref{prop:homo}, 
\begin{align*}
\SfG_{p/2} (\iop(\CF_0)) = \sg_{p/2} (\iop(\CF_0))
= \frac{1}{3(2\pi)^3}\, \int_{\R^3}\, \int_{\mathbb S^2}\, 
|a(x)|^p\, |Y(x, \om)|^{\frac{p}{2}}
\, d\om\, dx.
\end{align*}
In view of the identities 
\begin{align*}
\SfG_p(\iop(\CM)) = 
\SfG_{p/2} (\iop(\CF)) = 
\SfG_{p/2} (\iop(\CF_0)),\\
\sg_p(\iop(\CM)) = 
\sg_{p/2} (\iop(\CF)) = 
\sg_{p/2} (\iop(\CF_0)),
\end{align*}
the above formula completes the proof of Theorem \ref{thm:model_as}.
\qed

\subsection{Three examples} 
Here we illustrate the 
utility of Theorem \ref{thm:model_as} by three examples that will 
be important in the proof of Theorems \ref{thm:maingk} and \ref{thm:maingkasym}.  
As in Section \ref{sect:model}, the (matrix-valued) function $\SQ$ 
is assumed to be $\a$-homogeneous, where $\a > -3/2$, and the parameter $p\in (0, 2)$ 
is defined by 
\begin{align*}
\frac{1}{p} = 1 + \frac{\a}{3}.
\end{align*}

\begin{example}\label{ex:vect}
 Suppose that $\boldsymbol\beta_{j, k} = \b_{j, k}$ are scalar functions and 
\begin{align*}
\SQ = 
\begin{pmatrix}
Q^{(1)}\\Q^{(2)}\\ \vdots\\ Q^{(m)}\\
\end{pmatrix}
\end{align*}
is an $\a$-homogeneous vector-function. 
Then with the notation 
\begin{align*}
B(t) = \sum_{j=1}^M \, |b_{j,k}(t)|^2 |\b_{j,k}(t, t)|^2, 
\quad\textup{if}\quad N = 2,
\end{align*}
and 
\begin{align*}
B(t) = \sum_{j=1}^M \,\sum_{k=1}^{N-1}\int_{\R^{3N-6}}\, |b_{j,k}(\tilde\bx_{k, N}, t)|^2 
|\b_{j,k}(\tilde\bx_{k, N}, t, t)|^2 
\,d\tilde\bx_{k, N}, \quad \textup{if}\quad N\ge 3,
\end{align*}
the formula \eqref{eq:Y} becomes 
\begin{align*}
Y(t, \xi) = B(t) |X_\infty(\xi)|^2. 
\end{align*}
Therefore, by Theorem \ref{thm:model_as}, 
\begin{align*} 
\SfG_p\big(\iop(\CM)\big)
= &\ \sg_p\big(\iop(\CM)\bigl)\notag\\ 
= &\ \frac{1}{3(2\pi)^3}\, \int_{\R^3}\, |a(x)|^p\, 
|B(x)|^{\frac{p}{2}}\, dx \int_{\mathbb S^2}\, 
\, |X_\infty(\om)|^p\, d\om.
\end{align*}
Consider two important special cases of this formula.

\begin{enumerate}
\item 
Let $m = 1$ and $\SQ(x) = |x|^\a$, $\a\not = 0, 2,\dots$, as in Example \ref{ex:scal}. Using the formula 
\eqref{eq:Xint} with $d=3$ for the function $X_\infty$, we obtain that 
\begin{align} \label{eq:scalar}
\SfG_p\big(\iop(\CM)\big)
=  \sg_p\big(\iop(\CM)\bigl) 
=  \frac{|\nu_{\a, 3}|^p}{6\pi^2}
\, \int_{\R^3}\, |a(x)|^p\, 
|B(x)|^{\frac{p}{2}}\, dx.
\end{align}
\item 
Let $m = 3$ and 
$\SQ^{(1)}(x) = \nabla |x|^{\a+1} = (\a+1)|x|^{\a-1}x$, $\a\not = -1, 1, 3, \dots,$ as in 
Example \ref{ex:grad}. Using the formula 
\eqref{eq:Xintvec} with $d=3$ for the function $X_\infty$, we obtain that 
\begin{align} \label{eq:vector}
\SfG_p\big(\iop(\CM)\big)
= \sg_p\big(\iop(\CM)\bigl)
= \frac{|\nu_{\a+1, 3}|^p}{6\pi^2}
\, \int_{\R^3}\, |a(x)|^p\, 
|B(x)|^{\frac{p}{2}}\, dx.
\end{align}
\end{enumerate}
\end{example}

In the next two examples our objective 
is to write a formula $Y(t, \xi)$, defined in \eqref{eq:Y}. 
Substituting $Y(t, \xi)$ into \eqref{eq:model} gives the sought asymptotics.

\begin{example}\label{ex:scaltriplet} 
Suppose that $\boldsymbol\b_{j, k}$ are vector-functions with three components and 
that 
\begin{align*}
\SQ(x) =  \nabla^T\,|x|^{\a+1}
=  (\a+1)|x|^{\a-1}\,\begin{pmatrix}
x^{(1)}&x^{(2)}& x^{(3)} 
\end{pmatrix},\ \a\not = -1, 1, 3, \dots,
\end{align*}
so that $X_\infty(\xi)$ is given by \eqref{eq:xinfgrad}:
\begin{align*}
X_\infty(\xi) =  i \nu_{\a+1, 3}
\, |\xi|^{-(\a+4)}\begin{pmatrix}
\xi^{(1)}& \xi^{(2)}& \xi^{(3)}
\end{pmatrix}.
\end{align*}
Substituting this into \eqref{eq:Y} we obtain that 
\begin{align*}
Y(t, \xi) = \frac{|\nu_{\a+1, 3}|^2}{|\xi|^{2(\a+4)}} \, 
\sum_{j=1}^M \, |b_{j,1}(t)|^2 \,
|\xi\cdot \boldsymbol\b_{j,1}(t, t)|^2,\ N = 2,
\end{align*}
and 
\begin{align}\label{eq:Yvec}
Y(t, \xi) = \frac{|\nu_{\a+1, 3}|^2}{|\xi|^{2(\a+4)}} \, 
\sum_{j=1}^M \,\sum_{k=1}^{N-1}\int_{\R^{3N-6}}\, |b_{j,k}(\tilde\bx_{k, N}, t)|^2 \,
|\xi\cdot \boldsymbol\b_{j,k}(\tilde\bx_{k, N}, t, t)|^2 
\,d\tilde\bx_{k, N},\ N\ge 3.
\end{align} 
\end{example}

\begin{example}\label{ex:matriplet}

Suppose that $\SQ(x)$ is the matrix 
\begin{align*}
\SQ(x) = \nabla \nabla^T\, |x|^{\a+2},\quad \a \not = -2, 0, 2, \dots,
\end{align*}
i.e.
\begin{align*}
\SQ^{(l, s)}(x) =   \p_{x^{(l)}} \p_{x^{(s)}} |x|^{\a+2},\quad l, s = 1, 2, 3.
\end{align*}
According to \eqref{eq:xinf}, the Fourier transform $X_\infty(\xi)$ for this matrix-function is given by
\begin{align*}
X_\infty(\xi) = - \nu_{\a+2, 3}\, |\xi|^{-(\a+5)}
\begin{pmatrix}
(\xi^{(1)})^2 & \xi^{(1)}\xi^{(2)} & \xi^{(1)} \xi^{(3)}\\[0.2cm]
\xi^{(2)}\xi^{(1)} & (\xi^{(2)})^2 & \xi^{(2)}\xi^{(3)}\\[0.2cm]
\xi^{(3)}\xi^{(1)} & \xi^{(3)}\xi^{(2)} & (\xi^{(3)})^2
\end{pmatrix}.
\end{align*}
Consequently, 
\begin{align*}
|X_\infty(\xi)\boldsymbol\b_{j, k}(\hat\bx, x)|^2 =  
\frac{|\nu_{\a+2, 3}|^2}{|\xi|^{2(\a+4)}}\, |\xi\cdot \boldsymbol\b_{j, k}(\hat\bx, x)|^2.
\end{align*}
After substitution into \eqref{eq:Y} we obtain 
\begin{align*}
Y(t, \xi) = \frac{|\nu_{\a+2, 3}|^2}{|\xi|^{2(\a+4)}} \, 
\sum_{j=1}^M \, |b_{j,1}(t)|^2 \,
|\xi\cdot \boldsymbol\b_{j,1}(t, t)|^2,\ N = 2,
\end{align*}
and
\begin{align}\label{eq:Ymat}
Y(t, \xi) = \frac{|\nu_{\a+2, 3}|^2}{|\xi|^{2(\a+4)}} \, 
\sum_{j=1}^M \,\sum_{k=1}^{N-1}\int_{\R^{3N-6}}\, |b_{j,k}(\tilde\bx_{k, N}, t)|^2 \,
|\xi\cdot \boldsymbol\b_{j,k}(\tilde\bx_{k, N}, t, t)|^2 
\,d\tilde\bx_{k, N},\ N\ge 3.
\end{align}

\end{example}

\section{Factorization of $\SG$ and $\SK$: restating Theorems \ref{thm:maingk} and 
\ref{thm:maingkasym}  }\label{sect:factor}

In this section we begin the proof of Theorems \ref{thm:maingk} and 
\ref{thm:maingkasym} by restating them first 
using a natural factorization of the 
operators $\SG$ and $\SK$ which will be specified in Sect.~\ref{subsect:fact}. 

Our first step however is to prove that under the conditions of Theorems 
\ref{thm:maingk} and \ref{thm:maingkasym} 
the asymptotic coefficients 
\eqref{eq:coeffAB} and \eqref{eq:coeffABasym} are finite. 

\subsection{Asymptotic coefficients}\label{subsect:ascoeff}
Let $M(\bx, R; \psi)$ be as defined in \eqref{eq:locl2},
and let 
\begin{align*}
M(\bx; \psi) : = M(\bx, 1; \psi) = \|\psi\|_{\plainL2(B(\bx, 1))},\quad \bx\in \R^{3N}.
\end{align*}
Recall the notation \eqref{eq:lattice} for the lattice quasi-norms. 

\begin{lem}\label{lem:holder}
Suppose that $\4\, \rho \4_{\,1, q}<\infty$ with some $q\in (0, 1]$. If $N=2$, then 
\begin{align*}
\int_{\R^3}\, M(x, x; \psi)^{2q}\, dx
\lesssim \4\, \rho \4_{\,1, q}^{\, q}.
\end{align*}
If $N\ge 3$, then for all $j, k = 1, 2, \dots, N-1$, $j <k$, we have
\begin{align*}
\int_{\R^3}\, \bigg( \int_{\R^{3N-6}}
\,\big(M(\tilde\bx_{j, k}, x, x; \psi)\big)^2   \, d\tilde\bx_{j, k}\bigg)^q\, dx
\lesssim \4\, \rho \4_{\,1, q}^{\, q}.
\end{align*}
\end{lem}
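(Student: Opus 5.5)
We reduce the inner quantity to an integral of $|\psi|^2$ over a tube around the coalescence set, and then sum over unit cubes using the concavity of $t\mapsto t^q$.

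\textbf{Step 1: rewrite the inner integral.} Consider $N\ge 3$; the case $N=2$ is identical with $\tilde\bx_{j,k}$ absent. By definition,
\begin{align*}
M(\tilde\bx_{j,k}, x, x;\psi)^2=\int_{B((\tilde\bx_{j,k},x,x),1)}|\psi(\by)|^2\,d\by .
\end{align*}
Integrate over $\tilde\bx_{j,k}$ and apply Fubini. A point $\by=(\tilde\by, y_j, y_k)$ contributes only if $|y_j-x|<1$ and $|y_k-x|<1$, and the measure of the admissible $\tilde\bx_{j,k}$ is at most the volume of the unit ball in $\R^{3N-6}$. Hence
\begin{align*}
I(x):=\int_{\R^{3N-6}} M(\tilde\bx_{j,k},x,x;\psi)^2\,d\tilde\bx_{j,k}\lesssim \int_{\R^{3N-3}}\mathbbm 1_{\{|y_j-x|<1\}}\,|\psi(\hat\by_k, y_k)|^2\,\mathbbm 1_{\{|y_k-x|<1\}}\,d\hat\by_k\,dy_k .
\end{align*}
Dropping the constraint on $y_j$ gives
\begin{align*}
I(x)\lesssim\int_{|y-x|<1}\rho_k(y)\,dy\le \int_{B(x,1)}\rho(y)\,dy,
\end{align*}
where $\rho_k(y)=\int|\psi(\hat\by_k,y)|^2\,d\hat\by_k$ is the $k$-th summand of \eqref{eq:density}, so $\rho_k\le\rho$.

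\textbf{Step 2: sum over cubes.} For $x\in\CC_n$ the ball $B(x,1)$ is covered by the cubes $\CC_{n'}$ with $|n'-n|_\infty\le 2$, of which there are boundedly many. Therefore
\begin{align*}
I(x)\lesssim \sum_{|n'-n|_\infty\le 2}\|\rho\|_{\plainL1(\CC_{n'})}.
\end{align*}
Since $q\le1$, the bound $(\sum a_i)^q\le\sum a_i^q$ gives
\begin{align*}
I(x)^q\lesssim \sum_{|n'-n|_\infty\le 2}\|\rho\|_{\plainL1(\CC_{n'})}^q .
\end{align*}
Integrate over $x\in\CC_n$, which has unit volume, and sum over $n\in\Z^3$. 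Each $n'$ appears a bounded number of times, so
\begin{align*}
\int_{\R^3} I(x)^q\,dx\lesssim \4\,\rho\4_{\,1,q}^{\,q},
\end{align*}
which is the claim.

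\textbf{Main obstacle.} There is no real difficulty; the only point needing care is the Fubini step. One must check that integrating out $\tilde\bx_{j,k}$ over the unit-ball constraint costs only a fixed constant, and that the remaining marginal is dominated by $\rho$. This follows from $\rho_k\le\rho$ and the measure bound on the admissible $\tilde\bx_{j,k}$ stated in Step 1.
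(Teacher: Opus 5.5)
Your proof is correct and follows essentially the same route as the paper: a decomposition into unit cubes, a Fubini argument that dominates the integrated $M^2$ by the mass of $\rho$ in an $O(1)$-neighbourhood, and subadditivity of $t\mapsto t^q$. The only difference is the order of operations. You bound $I(x)$ pointwise by $\int_{B(x,1)}\rho$ and then raise to the power $q$, whereas the paper first applies H\"older's inequality on each unit cube to move the power $q$ outside the $x$-integral, and then bounds $\int_{\CC_n}I(x)\,dx$ by $\|\rho\|_{\plainL1(\CC_n^{(3)})}$.
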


\begin{proof} We prove the lemma only for $N\ge 3$. The assumption $N=2$ only simlipfies 
the calculations. 
We omit $\psi$ from the notation and 
denote the integral on the left-hand side by $I$. We use the familiar notation  
$\CC^{(r)}_n = [-r/2, r/2)^3+n, n\in\Z^3, r >0$, see \eqref{eq:cube}. 
By H\"older's inequality, 
\begin{align*}
I = &\ \sum_{n\in\Z^3}\, \int_{\CC^{(1)}_n}\, \bigg( \int_{\R^{3N-6}}
\,\big(M(\tilde\bx_{j, k}, x, x)\big)^2   \, d\tilde\bx_{j, k}\bigg)^q\, dx\\[0.2cm]
\le &\ \sum_{n\in\Z^3}\, \bigg[\int_{\CC^{(1)}_n}\, \int_{\R^{3N-6}}
\,\big(M(\tilde\bx_{j, k}, x, x)\big)^2   \, d\tilde\bx_{j, k}\, dx\bigg]^{q}.
\end{align*}
Estimate:
\begin{align*} 
\int_{\CC_n^{(1)}}  &\  \int_{\R^{3N-6}}\,
\big(M(\tilde\bx_{j, k}, x, x)\big)^2\, d\tilde\bx_{j, k} 
dx\\
  \le  &\  
  \int\limits_{\CC_n^{(1)}}  \int\limits_{\R^{3N-6}}\bigg[ \int\limits_{|w-x|<1}
 \,  \int\limits_{|t-x| < 1} \,  \int\limits_{|\tilde\by_{j, k} - \tilde\bx_{j, k}|<1 } 
  |\psi(\tilde\by_{j, k}, t, w)|^2\, d\tilde\by_{j, k} dt dw\bigg]
  d\tilde\bx_{j, k}
  \, dx\\
  \lesssim &\ \int\limits_{\CC_n^{(3)}}  
 \,  \int\limits_{\R^3} \,  \int\limits_{\R^{3N-6}} 
  |\psi(\tilde\by_{j, k}, t, w)|^2\, d\tilde\by_{j, k} dt dw
  = \int\limits_{\CC_n^{(3)}}  
 \,  \int\limits_{\R^{3N-3}} 
  |\psi(\hat\by_k, w)|^2\, d\hat\by_k dw \le \| \rho\|_{\plainL1(\CC_n^{(3)})}.
 \end{align*}
Therefore,
\begin{align*} 
I\lesssim \sum_{n\in\Z^3} \| \rho\|_{\plainL1(\CC_n^{(3)})}^q 
\lesssim \sum_{n\in\Z^3} \| \rho\|_{\plainL1(\CC_n^{(1)})}^q 
= \4\, \rho\4_{\, 1, q}^{\,q} < \infty,   
\end{align*}
as claimed.
\end{proof}

Now we can estimate the coefficients \eqref{eq:coeffAB} and \eqref{eq:coeffABasym}:

 \begin{lem}\label{lem:coeffs} 
 The coefficients $A$ and $B$ satisfy the bounds 
 \begin{align*}
 A\lesssim \4\, \rho \4_{\,1, 3/8}^{\, 3/8},\quad B\lesssim \4\, \rho \4_{\,1, 1/2}^{\, 1/2}. 
 \end{align*}
If the function $\psi$ satisfies \eqref{eq:antisym}, then 
the coefficients $A_{\textup{asym}}$ and $B_{\textup{asym}}$ satisfy the bounds 
  \begin{align*}
A_{\textup{asym}}\lesssim \4\, \rho\4_{\, 1, 3/10}^{\,3/10},
\quad B_{\textup{asym}} \lesssim \4\, \rho\4_{\, 1, 3/8}^{\,3/8}.
\end{align*}
\end{lem}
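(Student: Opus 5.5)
The plan is to bound the integrands $H(x)$ and $E(x,\om)$ pointwise by local $\plainL2$-norms of $\psi$, and then apply Lemma \ref{lem:holder}.

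\emph{Coefficients $A$ and $B$.} By \eqref{eq:grpsi} with $r=1/2$, $R=1$ we have $|\psi(\bx)|\lesssim M(\bx;\psi)$ for every $\bx\in\R^{3N}$. Taking $\bx=(\tilde\bx_{j,k},x,x)$ gives
\begin{align*}
H(x)\lesssim \sum_{j<k}\int_{\R^{3N-6}} M(\tilde\bx_{j,k},x,x;\psi)^2\,d\tilde\bx_{j,k}
\end{align*}
for $N\ge3$, and $H(x)\lesssim M(x,x;\psi)^2$ for $N=2$. The map $t\mapsto t^q$ is subadditive for $q\le1$, so
\begin{align*}
H(x)^q\lesssim \sum_{j<k}\bigg(\int M(\tilde\bx_{j,k},x,x;\psi)^2\,d\tilde\bx_{j,k}\bigg)^q .
\end{align*}
Lemma \ref{lem:holder} is stated for $j<k\le N-1$. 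By the symmetric roles of the indices, its proof works verbatim for any pair $j<k\le N$: the bound there used only an integral of $|\psi|^2$ over all variables, with one particle variable restricted to a cube. Integrating in $x$ then gives $\int H^q\,dx\lesssim \4\,\rho\4_{\,1,q}^{\,q}$. Taking $q=3/8$ yields the bound for $A$, and $q=1/2$ yields the bound for $B$.

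\emph{Coefficients $A_{\textup{asym}}$ and $B_{\textup{asym}}$.} Here $|\sv\cdot\om|\le|\sv|=|\nabla_x\psi|$. Again by \eqref{eq:grpsi}, $|\nabla_x\psi(\bx)|\lesssim M(\bx;\psi)$ wherever the gradient is defined; Remark \ref{rem:grad} makes $\sv$ continuous off $\{x=0\}$, which is a null set, so the bound holds a.e. This gives, uniformly in $\om$,
\begin{align*}
E(x,\om)\lesssim \int_{\R^{3N-6}} M(\tilde\bx,x,x;\psi)^2\,d\tilde\bx
\end{align*}
with $\tilde\bx=\tilde\bx_{N-1,N}$ (no integral when $N=2$). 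Integrating over $\om\in\mathbb S^2$ only contributes the factor $4\pi$. Lemma \ref{lem:holder} with $q=3/10$ and $q=3/8$ then gives the two remaining bounds.

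The only point needing care is the index range in Lemma \ref{lem:holder}, i.e.\ the pair $(N-1,N)$ in $E$ and the pairs involving $N$ in $H$. Its proof never used that $k\le N-1$, so it extends at once.
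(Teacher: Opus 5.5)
Your proposal is correct and is essentially the paper's argument: pointwise bounds $|\psi(\bx)|,\,|\sv(\bx)|\lesssim M(\bx;\psi)$ obtained from \eqref{eq:grpsi}, followed by Lemma \ref{lem:holder} with $q=3/8,\,1/2,\,3/10,\,3/8$. You also spell out two steps the paper uses without comment: the subadditivity of $t\mapsto t^q$ for the sum over pairs in $H$, and the validity of Lemma \ref{lem:holder} for pairs involving the index $N$. One correction to your phrasing about $\sv$: the continuity from Remark \ref{rem:grad} is what upgrades the essential-sup bound to a pointwise bound at every point with $x\neq 0$, and this is needed because the diagonal is a null set in $\R^{3N}$, so an a.e.\ bound in $\R^{3N}$ alone would not suffice.
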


\begin{proof}
By \eqref{eq:grpsi}, for all $j<k$, we have 
\begin{align*}
|\psi(\tilde\bx_{j, k}, x, x)|\lesssim M(\tilde\bx_{j, k}, x, x; \psi),  
\end{align*}
By definitions \eqref{eq:H} and \eqref{eq:coeffAB}, 
the required bounds follow from Lemma \ref{lem:holder} for $q = 3/8$ 
and $q = 1/2$ respectively. 
 
Now assume that $\psi$ satisfies \eqref{eq:antisym}. 
By Remark \ref{rem:grad}, the function $\sv(\bx)=\nabla_x\psi(\hat\bx, x)$ 
is continuous on $\R^{3N}\setminus\{\bx = (\hat\bx, 0)\}$.
Together with \eqref{eq:grpsi} this implies that 
\begin{align*}
|\sv(\bx)|
\lesssim M(\bx; \psi), \quad \textup{for all}\quad \bx\in\R^{3N} \quad \textup{s.t.}\quad x \not = 0.  
\end{align*}
By definitions \eqref{eq:E} and \eqref{eq:coeffABasym}, 
the required bounds follow from Lemma \ref{lem:holder} with $q = 3/10$ and 
$q = 3/8$ respectively. 
\end{proof}

 \subsection{Factorization of the operators $\SG$ and $\SK$} 
\label{subsect:fact}
Let us describe the factorization of the operators  
$\sf \G$ and  $\SK$, 
which is central to the proof of Theorems  
\ref{thm:maingk} and \ref{thm:maingkasym}. 

\subsubsection{General case: Theorem \ref{thm:maingk}}
Introduce the new set of functions
\begin{align}\label{eq:psij}
\begin{lcases}
\psi_j(\hat\bx, x) = &\  
\psi(x_1, \dots, x_{j-1}, x, x_j, \dots, x_{N-1})\\
\sv_j(\hat\bx, x) = &\  \nabla_x\psi_j(\hat\bx, x)
\end{lcases}
,\quad j = 1, 2, \dots, N,
\end{align}
depending on the fixed pair of variables $\hat\bx$ and $x$. Thus the definitions 
\eqref{eq:den} and 
 \eqref{eq:tau} rewrite in the form:
\begin{align*}
\g(x, y) = \sum_{j=1}^N
\int_{\R^{3N-3}} \overline{\psi_j(\hat\bx, x)}
 \psi_j(\hat\bx, y)\, d\hat\bx,
\end{align*}
and 
\begin{align*}
\vark(x, y) = &\ \sum_{j=1}^N \int_{\R^{3N-3}} 
\overline{\sv_j(\hat\bx, x)} \cdot \sv_j(\hat\bx, y) d\hat\bx.
\end{align*}
Therefore the operators $\SG = \iop(\g)$ and $\SK = \iop(\vark)$ can be represented  
in the form 
\begin{align}\label{eq:factorize}
\SG = \iop({\sf\Psi})^*\, \iop({\sf \Psi}),\quad 
\SK = \iop(\SV)^*\, \iop(\SV),
\end{align}
where 
$\iop({\sf\Psi}): \plainL2(\R^3)\mapsto\plainL2(\R^{3N-3}; \mathbb C^N)$ and 
$\iop(\SV):\plainL2(\R^3)\mapsto\plainL2(\R^{3N-3};\mathbb C^{3N})$ 
are operators with vector-valued kernels given by 
\begin{align}\label{eq:bpsi}
{\sf\Psi}(\hat\bx, x) = \{ \psi_j(\hat\bx, x)\}_{j=1}^N,\quad 
\SV(\hat\bx, x) = \{\sv_j(\hat\bx, x)\}_{j=1}^N. 
\end{align} 
The regularity bounds of Sect.~\ref{sect:regpsi} can be rephrased in terms 
of the vector-function $\SPsi$. For instance, the bound 
\eqref{eq:FS} takes the form 
\begin{align}\label{eq:FSj}
|\p_{x}^m\SPsi(\hat\bx, x)|\lesssim 
\big(1+\l(\hat\bx, x)^{1-|m|}\big) 
M(\bx; {\sf \Psi}), 
\, \quad \textup{for all}\quad m\in \mathbb N_0^3, 
\end{align}
and all $(\hat\bx, x)\notin\Sigma:=\Sigma_N$,  where 
$\l(\bx) = \l_N(\bx)$ (see Sect.~\ref{subsect:hod} for 
definitions of $\Sigma_j$ and $\l_j$),
and 
\begin{align}\label{eq:newm}
M(\bx; \SPsi) = \bigg[\sum_{j=1}^N \, M(\bx; \psi_j)^2\bigg]^{\frac{1}{2}} 
= \bigg[\int_{|\bx-\by|<1}\, |\SPsi(\by)|^2 \,d\by\bigg]^{\frac{1}{2}}.
\end{align} 
Furthermore, using the functions $\psi_j$ one can rewrite 
the function $H(x)$ defined in \eqref{eq:H} as follows.

\begin{lem}\label{lem:Hnew}
\begin{align}\label{eq:Hnew}
H(x) =  
\begin{lcases}
&\ \frac{1}{16}\, \big(|\psi_1(x, x)|^2 + |\psi_2(x, x)|^2\big),\quad \textup{if}\ N = 2;\\[0.3cm]
&\ \frac{1}{16}\sum\limits_{j=1}^N\,\sum\limits_{k=1}^{N-1}\int\limits_{\R^{3N-6}} \big| 
\psi_j(\tilde\bx_{k, N},x, x)|^2\, d\tilde\bx_{k, N},\quad 
\textup{if}\ N\ge 3.
\end{lcases}
\end{align}
\end{lem}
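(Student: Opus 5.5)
This identity is a pure bookkeeping statement about relabeling variables, so the plan is to expand the right-hand side of \eqref{eq:Hnew} using the definition \eqref{eq:psij} and match it term by term with \eqref{eq:H}. No analysis is needed beyond Fubini and renaming integration variables. Finiteness is guaranteed by Lemma~\ref{lem:holder}.

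First I treat $N=2$. Here $\psi_1(\hat\bx,x)=\psi(x,x_1)$ and $\psi_2(\hat\bx,x)=\psi(x_1,x)$, with $\hat\bx=x_1$. Evaluating at $\hat\bx=x$ gives $\psi_1(x,x)=\psi_2(x,x)=\psi(x,x)$. The right-hand side of \eqref{eq:Hnew} is therefore $\frac{1}{16}\cdot 2|\psi(x,x)|^2=\frac18|\psi(x,x)|^2$, which agrees with \eqref{eq:H}.

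For $N\ge 3$, fix $j\in\{1,\dots,N\}$ and $k\in\{1,\dots,N-1\}$. In $\psi_j(\tilde\bx_{k,N},x,x)$ the $k$-th entry of $\hat\bx$ and the variable $x$ are both set equal to $x$. By \eqref{eq:psij}, the argument of $\psi$ has $x$ in slot $j$, while the remaining $N-1$ slots are filled in order by $x_1,\dots,x_{N-1}$. The entry $x_k$ lands in slot $k$ if $k<j$ and in slot $k+1$ if $k\ge j$.

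So $\psi_j(\tilde\bx_{k,N},x,x)=\psi(\tilde\by_{s,s'},x,x)$, where $\{s,s'\}$ is the pair $\{j,\kappa\}$ with $\kappa=k$ for $k<j$ and $\kappa=k+1$ for $k\ge j$. The remaining $N-2$ variables are an order-preserving relabeling of $\tilde\bx_{k,N}$. Integrating over $\tilde\bx_{k,N}\in\R^{3N-6}$ therefore gives exactly $\int|\psi(\tilde\bx_{s,s'},x,x)|^2\,d\tilde\bx_{s,s'}$, with the pair $(s,s')$ in increasing order.

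The main (still elementary) step is counting. The map $(j,k)\mapsto\{j,\kappa(j,k)\}$ sends the $N(N-1)$ pairs $(j,k)$ onto unordered pairs $\{s,s'\}$ with $s\neq s'$. For fixed $j$, $k\mapsto\kappa$ is a bijection from $\{1,\dots,N-1\}$ onto $\{1,\dots,N\}\setminus\{j\}$. Hence the map is a bijection onto ordered pairs $(j,\kappa)$ with $j\ne\kappa$, and each unordered pair $\{s,s'\}$ arises exactly twice, from $(s,s')$ and $(s',s)$.

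Consequently
\[
\frac1{16}\sum_{j=1}^N\sum_{k=1}^{N-1}\int|\psi_j(\tilde\bx_{k,N},x,x)|^2\,d\tilde\bx_{k,N}=\frac{2}{16}\sum_{1\le s<s'\le N}\int|\psi(\tilde\bx_{s,s'},x,x)|^2\,d\tilde\bx_{s,s'},
\]
which equals $H(x)$ by \eqref{eq:H}.
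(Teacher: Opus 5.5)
Your proof is correct and follows essentially the same route as the paper. The paper also uses \eqref{eq:psij} to identify $\int|\psi_j(\tilde\bx_{k,N},x,x)|^2\,d\tilde\bx_{k,N}$ with $\int|\psi(\tilde\bx_{k,j},x,x)|^2\,d\tilde\bx_{k,j}$ for $k<j$ and with $\int|\psi(\tilde\bx_{j,k+1},x,x)|^2\,d\tilde\bx_{j,k+1}$ for $k\ge j$, then regroups the double sum so that each pair $s<s'$ appears twice, which is your bijection $(j,k)\mapsto(j,\kappa)$ in different words. (The appeal to Lemma~\ref{lem:holder} is unnecessary, since the identity relates nonnegative quantities and holds by renaming variables even if they are infinite.)
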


\begin{proof}
Let us show this, for example, for $N\ge 3$. Indeed, it follows from 
\eqref{eq:psij} and \eqref{eq:xtilde} that for any $x, y\in \R^3$,
\begin{align*}
\int\limits_{\R^{3N-6}} \big| 
\psi_j(\tilde\bx_{k, N},y, x)|^2\, d\tilde\bx_{k, N}
=  
\begin{lcases}
&\ \int\limits_{\R^{3N-6}} \big| 
\psi(\tilde\bx_{k, j}, y, x)|^2\, d\tilde\bx_{k, j},\quad k < j;\\
&\ \int\limits_{\R^{3N-6}} \big| 
\psi(\tilde\bx_{j, k+1}, x, y)|^2\, d\tilde\bx_{j, k+1},\quad k \ge  j,\\
\end{lcases}
\end{align*}
and hence the right-hand side of \eqref{eq:Hnew} coincides with
\begin{align*}
\frac{1}{16}\, \sum\limits_{j=1}^N\,\sum\limits_{k=1}^{j-1}&\ \int\limits_{\R^{3N-6}} \big| 
\psi(\tilde\bx_{k, j},x, x)|^2\, d\tilde\bx_{k, j} 
  + \frac{1}{16}\sum\limits_{j=1}^{N-1}\,\sum\limits_{k=j}^{N-1}\int\limits_{\R^{3N-6}} \big| 
\psi(\tilde\bx_{j, k+1},x, x)|^2\, d\tilde\bx_{j, k+1} \\
= &\ \frac{1}{8}\, \sum\limits_{j=1}^{N-1}\,\sum\limits_{k=j+1}^{N}\int\limits_{\R^{3N-6}} \big| 
\psi(\tilde\bx_{j, k},x, x)|^2\, d\tilde\bx_{j, k} = H(x),
\end{align*}
as claimed.
\end{proof}

Given the factorization \eqref{eq:factorize}, by relation \eqref{eq:double}, 
the main Theorem  \ref{thm:maingk}
translates into the following theorem.

\begin{thm}\label{thm:recast}
The following asymptotics hold:
\begin{enumerate}
\item\label{item:repsi} 
If $\4\, \rho\4_{\, 1, 3/8} < \infty$, then $\SfG_{3/4}(\iop(\SPsi)) = \sg_{3/4}(\iop(\SPsi)) = A$;\\
\item\label{item:ret} 
If $\4\, \rho\4_{\, 1, 1/2} < \infty$, then $\SfG_1(\iop(\SV)) = \sg_1(\iop(\SV)) = B$; 
\end{enumerate}
\end{thm}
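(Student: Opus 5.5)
The plan is to reduce $\iop(\SPsi)$ and $\iop(\SV)$, up to operators with arbitrarily small $\SfG_p$, to model operators $\iop(\CM)$ of Sect.~\ref{sect:model}. I would then read off the asymptotic coefficients from Example~\ref{ex:vect}, using \eqref{eq:scalar} with $\a=1$, $p=3/4$ for $\SPsi$, and \eqref{eq:vector} with $\a=0$, $p=1$ for $\SV$. The passage to the limit would go through Corollary~\ref{cor:zero1}. The a priori control of all remainders would come from Corollary~\ref{cor:homo}, Corollary~\ref{cor:product} and the block-vector inequality \eqref{eq:blockvec}. That bound, $\|b\,\iop(T)\,a\|_{p,\infty}\lesssim \|A\|_{\plainL\infty}\|b\|_{\plainL2}\4\,a\4_{\,r,p}$, is where the hypothesis on $\4\,\rho\4_{\,1,p}$ enters, through the local quantity $M(\bx;\SPsi)$ in \eqref{eq:FSj}.

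\emph{Step 1: cut-offs.} For small $\varepsilon>0$ and large $R$, I would split the kernel $\SPsi(\hat\bx,x)$ with a smooth partition of unity in $\bx$ into four pieces:
\begin{enumerate}
\item a piece supported where $|x|>R$ or $|\hat\bx|>R$;
\item a piece supported near $x=0$ or near coalescence points of order $\ge3$ involving $x$ (that is, $|x-x_k|<\varepsilon$ and $\min_{l\neq k,\,0\le l\le N-1}|x_k-x_l|<\varepsilon^{1/2}$, with $x_0=0$);
\item a piece supported where $\l(\bx)\ge \varepsilon$;
\item pieces supported in the sets $\SW_k(\varepsilon^{1/2},\varepsilon)$, $k=1,\dots,N-1$.
\end{enumerate}
Piece (i): Corollary~\ref{cor:homo} with $\a=1$ (resp. $\a=0$ for $\SV$) bounds its $\BS_{p,\infty}$ quasi-norm by the lattice quasi-norm of $\rho$ restricted outside a large ball, which tends to $0$ as $R\to\infty$. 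Piece (iii): the kernel is $\plainC\infty$ in $x$ with bounds $\varepsilon^{-K}$, so its singular values decay faster than any power and it contributes zero to $\SfG_p$. Piece (ii): Corollary~\ref{cor:homo} applies with a weight $b$ whose $\plainL2$-norm is small, because the measure of the support tends to $0$; the $\plainL\infty$ bound on $A$ comes from \eqref{eq:FSj}. Hence $\SfG_p$ of pieces (i)--(iii) tends to $0$, and only pieces (iv) survive.

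\emph{Step 2: leading singularity.} In $\SW_k$ I would write $\psi=e^{F}\phi=e^{F-K}e^{K}\phi$ with $K=\tfrac14|x-x_k|$, and then $e^K=1+K+(e^K-1-K)$. This gives
\[
\psi=e^{F-K}\phi+\tfrac14|x-x_k|\,e^{F-K}\phi+\text{rem}.
\]
On $\SW_k$ the factor $e^{F-K}$ is smooth in $x$, so by Theorem~\ref{thm:regphi} the function $\boldsymbol\b=\tfrac14 e^{F-K}\phi$ satisfies \eqref{eq:betajk}. The first term has derivative bounds $1+|x-x_k|^{2-|m|}$, and so does the remainder, by \eqref{eq:expbnd1}. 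By Corollary~\ref{cor:homo} with $\a=2$ both lie in $\BS_{q,\infty}$ with $1/q=5/3>1/p$, so they do not affect the asymptotics. What remains is exactly a model kernel \eqref{eq:cm_as} with $\SQ(x)=|x|$, and each component $\psi_j$ gives one index $j$. For $\SV$ I would differentiate: the leading term is $\tfrac14\nabla_x|x-x_k|\,e^{F-K}\phi$, and the remainders have order $\ge1$.

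\emph{Step 3: constants.} On the diagonal $e^{F-K}\phi(\tilde\bx,x,x)=\psi(\tilde\bx,x,x)$. Theorem~\ref{thm:model_as} and Example~\ref{ex:vect} then give $B(x)$, and by Lemma~\ref{lem:Hnew} this is a multiple of $H(x)$ (restricted by the cut-offs). I would then compute $\nu_{1,3}=-8\pi$ and $\nu_{1,3}$ in the gradient case, and check that the numerical factors reproduce \eqref{eq:coeffAB}. Finally I would let $\varepsilon\to0$ and $R\to\infty$, using monotone convergence and the finiteness from Lemma~\ref{lem:coeffs}.

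I expect the main obstacle to be Step~1(ii): making the contribution near the nucleus and near higher-order coalescences uniformly small. The bounds \eqref{eq:FSj} only give $\l^{1-|m|}$, so the smallness has to come from $\|b\|_{\plainL2}$ together with the weights $a$. I would first try to handle this via a dyadic-in-$\varepsilon$ partition combined with Proposition~\ref{prop:orthog}. A secondary difficulty is arranging the cut-offs so that the weights $a$ and $b$ in \eqref{eq:cm_as} separate as products of functions of $x$ and of $\hat\bx$. I would handle this by approximating the cut-offs by finite sums of such products.
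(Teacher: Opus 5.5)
Your overall architecture matches the paper's. You cut off to a compact region, separate the particles, isolate $\frac14|x-x_k|\,e^{F-K}\phi$ (this is \eqref{eq:singlet}), feed the result into Theorem~\ref{thm:model_as} via Example~\ref{ex:vect} with $\nu_{1,3}=-8\pi$, and pass to the limit by dominated convergence. The genuine gap is Step~1(ii). You leave it open, and the argument you do sketch for it fails for $\SPsi$. As you define it, this piece carries the mixed cut-off $\t(|x-x_k|\varepsilon^{-1})$. When all $x$-derivatives fall on the cut-off you get terms of size $\varepsilon^{-|m|}|\SPsi|$, and $\SPsi$ need not vanish at $x=x_k$. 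On the shell $|x-x_k|\sim\varepsilon$ one can have $\l\sim\varepsilon$, so the kernel satisfies \eqref{eq:s} with $\a=1$ only with a constant of order $\varepsilon^{-1}$; uniformly in $\varepsilon$ it is an order-$0$ kernel. The weight $b$ cutting to $\min_{l\neq k}|x_k-x_l|<\varepsilon^{1/2}$ on a bounded region has $\|b\|_{\plainL2}\sim\varepsilon^{3/4}$, so Corollary~\ref{cor:homo} gives a bound of order $\varepsilon^{-1/4}$, which is not small. (For $\SV$, with $\a=0$, the same cut-off would be harmless.) Moreover, your four pieces miss the configurations where $x$ is close to $x_k$ and some other pair $x_s,x_l$ with $s,l\neq k$ is close. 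Such points lie neither in (ii) nor in (iii), and they are not in $\SW_k$ either, since $\SW_k$ requires all other pairs to be separated.

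The missing idea, which is how Lemma~\ref{lem:separate} works, is to put all the smallness into cut-offs that depend on only one group of variables. Take $\eta_\d(\hat\bx)$ from \eqref{eq:ydel}. Then $(1-\eta_\d)b\,\iop(\SPsi)\,a$ is small by Corollary~\ref{cor:homo} applied to the unmodified kernel, because $\|(1-\eta_\d)b\|_{\plainL2}\to0$. This removes every coalescence among $x_0,\dots,x_{N-1}$ at once, including the region you missed. The factor $\t(|x|\d^{-1})$ is absorbed into $a$. On $\supp\eta_\d$, the identity \eqref{eq:partun} shows that the remainder $\eta_\d\prod_{k=0}^{N-1}\z(|x-x_k|\d^{-1})\SPsi$ is smooth in $x$ for fixed $\d$, so it contributes nothing and no uniformity in $\d$ is needed. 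The mixed cut-offs $\t(|x-x_k|\d^{-1})$ then occur only in the main terms, where they are absorbed into $\boldsymbol\b_{j,k}$ in \eqref{eq:cm_as}; this is allowed with $\d$-dependent constants in \eqref{eq:betajk}. This is also why your ``secondary difficulty'' does not arise: $\boldsymbol\b_{j,k}$ may depend on $(\hat\bx,x)$, so nothing needs to factor into products. Finally, Corollary~\ref{cor:homo} cannot handle piece (i), since $1-\t(|\hat\bx|R^{-1})\notin\plainL2$. You need the lattice bound of Lemma~\ref{lem:boundexp}, i.e.\ Theorem~\ref{thm:BSspace} with $A=|b|\,M(\bx;\SPsi)$. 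That is where the hypothesis $\4\,\rho\4_{\,1,p/2}<\infty$ actually enters, with $p/2$ rather than $p$.
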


\subsubsection{Antisymmetric case: Theorem \ref{thm:maingkasym}}

If the function $\psi$ is totally antisymmetric, i.e.~\eqref{eq:antisym} holds, 
then by \eqref{eq:fb}  
the  factorization formulas \eqref{eq:factorize} hold with the kernels 
\begin{align}\label{eq:bpsias}
{\sf\Psi}(\hat\bx, x) = \sqrt{N}\, \psi(\hat\bx, x),\quad 
\SV(\hat\bx, x) = \sqrt{N}\, \nabla_x\psi(\hat\bx, x), 
\end{align}
and the bound \eqref{eq:varphider} with $R=1$ rewrites as 
\begin{align}\label{eq:FSas}
|\p_x^m\, \big( e^{-F_1(x)}    \SPsi(\hat\bx, x)\big)|  \lesssim \big(1 + \l(\bx)^{2-m}\big)\, M(\bx, \SPsi),\quad 
m\in\mathbb N_0^3,
\end{align}
where $\l(\bx)$ is as in \eqref{eq:FSj}.

\begin{thm}\label{thm:recastasym}
Let $\psi$ be totally antisymmetric. Then 
\begin{enumerate}
\item\label{item:repsiasy} 
If 
$\4\, \rho\4_{\, 1, 3/10} < \infty$, then 
$\SfG_{3/4}(\iop(\SPsi)) = \sg_{3/4}(\iop(\SPsi)) = A_{\textup{\tiny asym}}$;\\
\item\label{item:retasy} 
If 
$\4\, \rho\4_{\, 1, 3/8} < \infty$, then 
$\SfG_1(\iop(\SV)) = \sg_1(\iop(\SV)) = B_{\textup{\tiny asym}}$.
\end{enumerate}
\end{thm}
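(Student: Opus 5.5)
The plan is to follow the scheme already used for the model operator: localize $\iop(\SPsi)$ and $\iop(\SV)$ near the pair coalescence sets, replace each localized kernel by the leading term isolated in \eqref{eq:triplet}, \eqref{eq:triplet1}, and read off the asymptotics from Theorem \ref{thm:model_as} together with Examples \ref{ex:scaltriplet} and \ref{ex:matriplet}. By \eqref{eq:double} and \eqref{eq:fctr}, the claimed identities for $\SfG$ and $\sg$ of $\iop(\SPsi)$ and $\iop(\SV)$ are exactly the restatement of \eqref{eq:sgamma_asym} and \eqref{eq:skappa_asym}. I will therefore keep track of the Schatten index $p$ produced by the homogeneity of the leading kernel, namely $1/p=1+\a/3$:
\begin{itemize}
\item $\a=2$ for $\SPsi$, where the kernel is $\nabla|x-x_k|^3$ paired with $\sff_k$;
\item $\a=1$ for $\SV$, where the kernel is $\nabla\nabla^T|x-x_k|^3$.
\end{itemize}
I will check that the constants $A_{\textup{asym}}$, $B_{\textup{asym}}$ in \eqref{eq:coeffABasym}, with exponents $3/10$ and $3/8$ equal to $p/2$, are what the model formula \eqref{eq:model} returns.

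\emph{Step 1 (a priori bounds and cut-offs).} Using \eqref{eq:FSas} and Corollary \ref{cor:homo} with $\a=2$, and the analogous bound with $\a=1$ for $\nabla_x$ of $e^{-F_1}\SPsi$, I obtain quasi-norm bounds of the form $\|b\,\iop(\cdot)\,a\|\lesssim\|b\|_{\plainL2}\,\4\,a\4_{\,r,p}$. The factor $e^{F_1(x)}$ is smooth away from $x=0$. Near $x=0$ it only adds a term of type \eqref{eq:t2} with $\a=1$ (respectively $\a=0$), which is handled by Corollary \ref{cor:product}. Combined with the finiteness of $\4\,\rho\4_{\,1,3/10}$ (respectively $\4\,\rho\4_{\,1,3/8}$), through the argument of Lemma \ref{lem:holder} applied cube by cube, these bounds show the following. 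Cutting off to $|x|\le R$, $|\hat\bx|\le R$ changes $\SfG$ by an amount that tends to $0$ as $R\to\infty$. The same holds for cutting off to a neighbourhood $|x|<\d'$ of the nucleus with $\d'\to0$, and to neighbourhoods of higher-order coalescences, since those sets have small weights $\|a\|_{\plainL r}$ or $\|b\|_{\plainL2}$. Corollary \ref{cor:zero1} then reduces the problem to the kernel multiplied by a smooth partition of unity subordinate to the sets $\SW_k(\varepsilon,\d)$, $k=1,\dots,N-1$, with $\varepsilon>\d$. On the complement of these sets the kernel is smooth in $x$, so its contribution to $\SfG$ vanishes.

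\emph{Step 2 (leading term).} On $\SW_k(\varepsilon,\d)$ I insert \eqref{eq:triplet} and \eqref{eq:triplet1}. The remainders $L_k$ and $L_k^{(1)}$ satisfy \eqref{eq:lsmooth} and \eqref{eq:l1smooth} with exponent $\a+\mu$ for some $\mu>0$. By Corollary \ref{cor:homo} they lie in $\BS_{q,\infty}$ with $q<p$, so by Corollary \ref{cor:zero} they contribute nothing. What remains is precisely a model kernel \eqref{eq:cm_as}: take $\SQ(x)=\frac1{24}\nabla^T|x|^3$ (Example \ref{ex:scaltriplet}) or $\frac1{24}\nabla\nabla^T|x|^3$ (Example \ref{ex:matriplet}), with $\boldsymbol\b_{j,k}=\sqrt N\,\sff_k$ times the cut-offs. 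Condition \eqref{eq:betajk} holds by \eqref{eq:alsmooth}. The sign change from $x_k-x$ versus $x-x_k$ is harmless, because the modulus is taken in \eqref{eq:Y}.

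\emph{Step 3 (computing the constant).} Theorem \ref{thm:model_as} yields \eqref{eq:model} with $Y$ given by \eqref{eq:Yvec} or \eqref{eq:Ymat}, where $\boldsymbol\b(\tilde\bx,t,t)$ equals $\sqrt N\,\sv(\tilde\bx,t,t)$. Summing over $k$ and using antisymmetry to identify all $N-1$ terms produces the factor $N(N-1)$ that appears in $E(x,\om)$ in \eqref{eq:E}. I then remove the cut-offs by monotone convergence, which is justified by Lemma \ref{lem:coeffs}. Finally I evaluate $|\nu_{3,3}|$ and $|\nu_{4,3}|$ from \eqref{eq:xinf}, together with $\frac1{3(2\pi)^3}$, and check that this reproduces the prefactors in \eqref{eq:coeffABasym}.

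I expect the main obstacle to be Step 1. The localization error must be controlled uniformly by lattice quasinorms of $\rho$ at the critical index $p/2<1/2$. This requires applying \eqref{eq:FSas} with the local factor $M(\bx;\SPsi)$ rather than a global sup, and summing with Proposition \ref{prop:orthog} over cubes in $x$, since $p<2$.
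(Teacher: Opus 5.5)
Your route is the same as the paper's, step for step:

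\begin{itemize}
\item Lemmata \ref{lem:tocompact} and \ref{lem:separate} reduce the problem to $b\,\iop(T_\d)\,a$ supported in $\SU_k(\d)\subset\SW_k(3\d,\d)$.
\item On these sets you insert \eqref{eq:triplet} and \eqref{eq:triplet1}, and discard $L_k$, $L_k^{(1)}$ because they lie in $\BS_{q,\infty}$ with $q<p$.
\item Theorem \ref{thm:model_as} is then applied through Examples \ref{ex:scaltriplet} and \ref{ex:matriplet}.
\item Antisymmetry produces the factor $N(N-1)$. The paper takes $b$ and $\eta_\d$ totally symmetric for this; otherwise you should identify the $k$-terms only after the cut-offs are removed.
\item The cut-offs are removed by dominated convergence, justified by Lemma \ref{lem:coeffs}.
\end{itemize}

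Your indices $p=3/5$ and $p=3/4$ are the correct ones. The subscripts $3/4$ and $1$ in the displayed statement are misprints; the paper's proof works with $\SfG_{3/5}(\iop(\SPsi))$ and $\SfG_{3/4}(\iop(\SV))$. Two steps, however, are wrong as written.

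\emph{The factor $e^{F_1}$.} The appeal to Corollary \ref{cor:product} does not work.
\begin{itemize}
\item Its hypothesis \eqref{eq:t1} is a pure power bound at the singular point. With $T_1=e^{-F_1}\SPsi$ and $\a=2$ it would require $|T_1|\lesssim|x|^2$ near $x=0$, which fails.
\item Multiplying kernel bounds instead, $e^{F_1}$ (resp.\ $\nabla e^{F_1}$) lowers the regularity at $x=0$ to order $1$ (resp.\ $0$). That is the Kato cusp. Corollary \ref{cor:homo} then gives only $\BS_{3/4,\infty}$ (resp.\ $\BS_{1,\infty}$) for the piece near the nucleus, which does not control $\SfG_{3/5}$ (resp.\ $\SfG_{3/4}$).
\end{itemize}
The correct observation, which is what the paper uses, is that $e^{F_1}$ depends on $x$ alone and is bounded. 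So $\iop(\SPsi)=\iop(e^{-F_1}\SPsi)\,e^{F_1}$, and $e^{F_1}$ is absorbed into the weight $a$. For $\SV$, write
\[
\SV=\sqrt N\, e^{F_1}\,\nabla_x\big(e^{-F_1}\psi\big)+\sqrt N\,(\nabla F_1)\,\psi .
\]
The second term is $\iop(\SPsi)$ composed with the bounded multiplier $\nabla F_1$, which is supported in $\{|x|\le 1\}$. By Lemma \ref{lem:boundexp} (only finitely many cubes contribute) it lies in $\BS_{3/5,\infty}$, and so it is negligible for $\SfG_{3/4}$. Because $\nabla F_1$ has compact support, this needs only local integrability of $\rho$, not the stronger lattice condition of part (1).

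\emph{The constant for $\SV$.} In Example \ref{ex:matriplet} with $\SQ=\nabla\nabla^T|x|^3$ one has $\a=1$. The relevant constant is therefore $\nu_{\a+2,3}=\nu_{3,3}=96\pi$, the same as for $\SPsi$, and not $\nu_{4,3}$. In fact $\nu_{4,3}=0$, since $1/\G(-2)=0$ (equivalently, $|x|^4$ is a polynomial, so $X_\infty=0$ by \eqref{eq:xinf}); using it would give $B_{\textup{asym}}=0$. With $\nu_{3,3}$ and the prefactor $\frac{1}{3(2\pi)^3}=\frac{1}{24\pi^3}$ you get $|\nu_{3,3}|^{3/5}/(24\pi^3)=(3\pi^6)^{-2/5}$ and $|\nu_{3,3}|^{3/4}/(24\pi^3)=\big(8/(3\pi^9)\big)^{1/4}$, which are the prefactors in \eqref{eq:coeffABasym}.
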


For both representations \eqref{eq:bpsi} and \eqref{eq:bpsias}, 
the definition \eqref{eq:density} of the one-particle density takes the form
\begin{align*}
\rho(x) = \int_{\R^{3N-3}}\, |\SPsi(\hat\bx, x)|^2\, d\hat\bx.
\end{align*}
Along with $\rho(x)$ we also define the ``weighted" density
\begin{align*}
\rho[f](x) =  
\int_{\R^{3N-3}} \, |f(\hat\bx)|^2 \, |\SPsi(\hat\bx, x)|^2   \, d\hat\bx,
\end{align*}
with an arbitrary function $f\in\plainL\infty(\R^{3N-3})$.
Now we are ready to take the first step towards the proof of Theorems \ref{thm:recast} 
and \ref{thm:recastasym}.

\subsection{Preliminary spectral bounds} \label{subsect:psb}
In this section we obtain some spectral bounds for the operator 
$\iop(T)$ with a vector-valued 
kernel $T = T(\hat\bx, x)$ that satisfies the bound
\begin{align}\label{eq:imit}
|\p_x^s \, T(\hat\bx, x)|\lesssim \big(1+ \l(\hat\bx, x)^{\alpha - |s|}\big) 
\, M(\bx; \SPsi),\quad s\in \mathbb N_0^3,
\end{align}
with some $\a>-3/2$. Later on these bounds will be used 
for $T(\hat\bx, x) = \SPsi(\hat\bx, x)$ (for Theorem \ref{thm:recast}) 
or $T(\hat\bx, x) = e^{-F_1(x)} \SPsi(\hat\bx, x)$ (for Theorem \ref{thm:recastasym}), see 
\eqref{eq:f1} for the definition of $F_1(x)$.  
We denote 
\begin{align}\label{eq:peee}
\frac{1}{p} = 1 + \frac{\alpha}{3}.
\end{align} 
As usual, the condition $\a>-3/2$ ensures that $p <2$. 

\begin{lem}\label{lem:boundexp}
 Let $T = T (\hat\bx, x)$, $\hat\bx\in\R^{3N-3}, x\in \R^3$, be  
as above, and let 
\begin{align}\label{eq:rholat}
\4\, \rho\4_{\, 1, p/2} < \infty.
\end{align}   
Let $b = b(\hat\bx)$ and $a = a(x)$ be weights that satisfy the conditions 
\begin{align}\label{eq:ba}
\sup_{\hat\bx} M(\hat\bx; b)<\infty, \ 
\sup_{n\in \Z^3}\, \|a\|_{\plainL{r}(\CC^{(1)}_n)}<\infty,
\end{align}
where the parameter $r$ is as in \eqref{eq:rmodel}. 
Then $b \, \iop(T)\, a\in \BS_{p, \infty}$, and  
 \begin{align}\label{eq:boundexp}
\|b\, \iop(T)\, a\|_{p, \infty}^p\lesssim  \sum_{n\in\Z^3}
 \bigg[\int\limits_{\CC^{(4)}_n}  \, \rho[M(b)](x)  \, dx\bigg]^{\frac{p}{2}} \ 
 \|a\|_{\plainL{r}(\CC^{(1)}_n)}^p\lesssim \4\, \rho\4_{\, 1, p/2}^{\frac{p}{2}}.
 \end{align}
In particular,
\begin{align}\label{eq:byrho}
\| \iop(T) \|_{p, \infty}\lesssim \4\, \rho\4_{\, 1, p/2}^{\frac{1}{2}}.
\end{align} 

\end{lem}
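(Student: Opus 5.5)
Plan: localize in $x$ to unit cubes, apply Theorem~\ref{thm:BSspace} (in the form of its proof) with a suitable majorant, and sum the pieces using Proposition~\ref{prop:orthog}.

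\emph{Localization.} For $n\in\Z^3$ set $W_n(\hat\bx,x)=b(\hat\bx)\,T(\hat\bx,x)\,a(x)\1_n(x)$. Then $b\,\iop(T)\,a=\sum_n\iop(W_n)$. The ranges in $x$ are disjoint, so $\iop(W_k)\iop(W_j)^*=0$ for $k\neq j$. Since $p<2$, Proposition~\ref{prop:orthog} gives
\begin{align*}
\|b\,\iop(T)\,a\|_{p,\infty}^p\lesssim\sum_n\|\iop(W_n)\|_{p,\infty}^p .
\end{align*}
It therefore suffices to show
\begin{align*}
\|\iop(W_n)\|_{p,\infty}\lesssim\Big[\int_{\CC_n^{(4)}}\rho[M(b)]\,dx\Big]^{1/2}\|a\|_{\plainL{r}(\CC_n^{(1)})}.
\end{align*}

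\emph{Bound on a single cube.} Fix $n$ and a cutoff $\eta_n\in\plainC\infty_0(\CC_n^{(2)})$ equal to $1$ on $\CC_n^{(1)}$. By \eqref{eq:imit}, the kernel $T\eta_n$ satisfies \eqref{eq:kernelexp} with $z_k(\hat\bx)=x_k$ (including $x_0=0$), since $\l(\bx)\ge \min_k(1\wedge|x-x_k|)/\sqrt2$. The majorant is
\begin{align*}
A_n(\hat\bx)=\sup_{x\in\CC_n^{(2)}}M(\hat\bx,x;\SPsi).
\end{align*}
Theorem~\ref{thm:nik}, applied as in the proof of Theorem~\ref{thm:BSspace}, together with Proposition~\ref{prop:BS}, gives $\|\iop(W_n)\|_{p,\infty}\lesssim\|bA_n\|_{\plainL2(\R^{3N-3})}\|a\|_{\plainL{r}(\CC_n^{(1)})}$.

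\emph{Main obstacle: controlling $\|bA_n\|_{\plainL2}$.} The supremum in $x$ has to be converted into an integral of $\rho[M(b)]$ over a fattened cube. For $x\in\CC_n^{(2)}$ the unit ball $B((\hat\bx,x),1)$ lies inside $B(\hat\bx,1)\times\CC_n^{(4)}$, so
\begin{align*}
A_n(\hat\bx)^2\le\int_{|\hat\by-\hat\bx|<1}\int_{\CC_n^{(4)}}|\SPsi(\hat\by,y)|^2\,dy\,d\hat\by .
\end{align*}
Multiply by $|b(\hat\bx)|^2$ and integrate in $\hat\bx$, then use Fubini. The inner integral over $\hat\bx$ becomes $\int_{|\hat\bx-\hat\by|<1}|b(\hat\bx)|^2\,d\hat\bx=M(\hat\by;b)^2$. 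The result is
\begin{align*}
\|bA_n\|_{\plainL2}^2\le\int_{\CC_n^{(4)}}\int|M(\hat\by;b)|^2|\SPsi(\hat\by,y)|^2\,d\hat\by\,dy=\int_{\CC_n^{(4)}}\rho[M(b)](y)\,dy .
\end{align*}
This is the first inequality of \eqref{eq:boundexp}.

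\emph{Second inequality and \eqref{eq:byrho}.} By \eqref{eq:ba}, $M(b)$ and $\|a\|_{\plainL{r}(\CC_n^{(1)})}$ are uniformly bounded. Hence $\rho[M(b)]\lesssim\rho$, and the sum is at most a constant times $\sum_n\|\rho\|_{\plainL1(\CC_n^{(4)})}^{p/2}$. Since $\CC_n^{(4)}$ is covered by a bounded number of unit cubes and $p/2<1$, this is $\lesssim\4\,\rho\4_{\,1,p/2}^{p/2}$. Taking $a=b=1$ gives \eqref{eq:byrho}.
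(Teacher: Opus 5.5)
Your proof is correct and follows the paper's argument. The paper cites \eqref{eq:piv} from Theorem~\ref{thm:BSspace}, whose proof is exactly your cube localization combined with Proposition~\ref{prop:orthog}, using the majorant $A(\hat\bx,x)=|b(\hat\bx)|M(\bx;\SPsi)$; it then carries out the same Fubini computation over $B(\hat\bx,1)\times\CC^{(4)}_n$. Your version is slightly more careful in two places: you include the nucleus $z_0=0$ among the singular points, which is needed because $\l(\bx)\le|x|$, and you correctly keep the square in $M(\hat\by;b)^2$ when identifying $\rho[M(b)]$.
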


\begin{proof}
We use Theorem \ref{thm:BSspace} with $d=3$ and $m = 3N-3$. 
By \eqref{eq:imit}, the kernel $b(\hat\bx) T(\hat\bx, x)$ 
satisfies the bound \eqref{eq:kernelexp} 
with $z_j(x) = x_j, j = 1, 2, \dots, N-1$, 
and 
\begin{align*}
A(\hat\bx, x) = |b(\hat\bx)| M(\bx; \SPsi),
\end{align*}
$\bx = (\hat\bx, x)$.  
Therefore 
\begin{align*}
A_n(\hat\bx) = &\ \|A(\hat\bx,\,\cdot\,)\|_{\plainL\infty(\CC_n^{(2)})}\le 
|b(\hat\bx)| 
\|\SPsi\|_{\plainL2(B_n(\hat\bx))},\, n\in\Z^3,\\  
\textup{where}\quad B_n(\hat\bx) 
= &\ 
\{\hat\by\in\R^{3N-3}: |\hat\by-\hat\bx|<1\}
\times \{x\in\R^3: x\in \CC^{(4)}_n\}
\end{align*}
As a consequence, 
\begin{align*}
\|A_n\|_{\plainL2(\R^{3N-3})}^2
\le &\ \int\limits_{\R^{3N-3}} \,  |b(\hat\bx)|^2   
\int\limits_{|\hat\by-\hat\bx|< 1}   \bigg[\int\limits_{\CC^{(4)}_n} |\SPsi(\hat\by, x)|^2\,   
dx\bigg]\, d\hat\by  d\hat\bx\\
= &\ \int\limits_{\CC^{(4)}_n}\int\limits_{\R^{3N-3}} \,     
\bigg[\int\limits_{|\hat\by-\hat\bx|<1} |b(\hat\bx)|^2 d\hat\bx\bigg] \, |\SPsi(\hat\by, x)|^2\,
d\hat\by\, dx\\
= &\ \int\limits_{\CC^{(4)}_n}\int\limits_{\R^{3N-3}} \,     
M(\hat\by; b) |\SPsi(\hat\by, x)|^2\,
d\hat\by\, dx\\
= &\  \int\limits_{\CC^{(4)}_n}  \, \rho[M(b)](x)  \, dx.
\end{align*}
Now the first bound in \eqref{eq:boundexp} follows from \eqref{eq:piv}. The second bound holds because of \eqref{eq:ba}. In particular, it holds when $a =1$ and $b = 1$. This leads to \eqref{eq:byrho}.
\end{proof}

\subsection{Reduction to compact supports}
Let $a_R(x) = a(x) \t(|x| R^{-1})$, $b_R(\hat\bx) = b(\bx) \t(|\bx| R^{-1})$.
 
\begin{lem}\label{lem:tocompact} 
Suppose that \eqref{eq:rholat} is satisfied and that the weights $a$ and $b$ 
are as in Lemma \ref{lem:boundexp}. Then 
\begin{align*}
\lim_{R\to\infty}\, \SfG_p (b_R\, \iop(T)\, a_R) = &\ \SfG_p (b\, \iop(T)\, a),\\
\lim_{R\to\infty}\, \sg_p (b_R\, \iop(T)\, a_R) = &\ \sg_p (b\, \iop(T)\, a).
\end{align*} 
 \end{lem}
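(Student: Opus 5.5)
The plan is to use Corollary \ref{cor:zero1}: it suffices to show that
\[
\SfG_p\big(b\,\iop(T)\,a - b_R\,\iop(T)\,a_R\big)\to 0, \qquad R\to\infty,
\]
and in fact we will bound the full quasi-norm $\|\cdot\|_{p,\infty}$ of the difference, which dominates $\SfG_p$ by \eqref{eq:gnorm}. We split the difference as
\begin{align*}
b\,\iop(T)\,a - b_R\,\iop(T)\,a_R = (b-b_R)\,\iop(T)\,a + b_R\,\iop(T)\,(a-a_R),
\end{align*}
and handle the two terms separately. By the triangle inequality \eqref{eq:triangle}, it then suffices to show that each term has $\BS_{p,\infty}$-quasinorm tending to zero.

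For the second term we apply Lemma \ref{lem:boundexp} with weights $b_R$ and $a-a_R$. Since $|b_R|\le|b|$, we have $M(\hat\bx;b_R)\le M(\hat\bx;b)$, and $\|a-a_R\|_{\plainL{r}(\CC^{(1)}_n)}\le \|a\|_{\plainL{r}(\CC^{(1)}_n)}$. Moreover $a-a_R$ vanishes on $\CC^{(1)}_n$ unless $|n|\gtrsim R$. The first bound in \eqref{eq:boundexp} gives
\begin{align*}
\|b_R\,\iop(T)\,(a-a_R)\|_{p,\infty}^p\lesssim \sup_{\hat\bx} M(\hat\bx;b)\,\sup_n\|a\|_{\plainL{r}(\CC_n^{(1)})}^p\sum_{|n|\gtrsim R}\|\rho\|_{\plainL1(\CC^{(4)}_n)}^{p/2}.
\end{align*}
This uses $\rho[M(b)]\le \sup M(b)\,\rho$. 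Each cube $\CC^{(4)}_n$ is covered by finitely many unit cubes, so the series $\sum_n \|\rho\|^{p/2}_{\plainL1(\CC_n^{(4)})}$ is controlled by $\4\,\rho\4_{\,1,p/2}^{p/2}$ and converges by \eqref{eq:rholat}. Its tail over $|n|\gtrsim R$ therefore tends to zero.

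For the first term the weight $b-b_R$ is supported where $|\hat\bx|\ge R/2$, which is the main obstacle: the bound \eqref{eq:boundexp} only gives decay through $\rho[M(b-b_R)]$. Here $M(\hat\by;b-b_R)$ is bounded by $\sup M(b)$ and vanishes for $|\hat\by|<R/2-1$. Hence
\[
\rho[M(b-b_R)](x)\le C\int_{|\hat\by|>R/2-1}|\SPsi(\hat\by,x)|^2\,d\hat\by=:C\rho_R(x),
\]
and $\rho_R\le\rho$ with $\rho_R(x)\to0$ pointwise a.e. by dominated convergence, since $\SPsi(\,\cdot\,,x)\in\plainL2$ for a.e. $x$. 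We then get
\begin{align*}
\|(b-b_R)\,\iop(T)\,a\|_{p,\infty}^p\lesssim \sum_n\|\rho_R\|_{\plainL1(\CC^{(4)}_n)}^{p/2}.
\end{align*}
Each summand tends to zero by dominated convergence, and each is dominated by $\|\rho\|^{p/2}_{\plainL1(\CC^{(4)}_n)}$, which is summable. Dominated convergence for the series then gives the limit zero, completing the argument.
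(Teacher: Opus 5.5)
Your proposal is correct and follows the paper's proof essentially step for step. Both split the difference as $(b-b_R)\,\iop(T)\,a + b_R\,\iop(T)\,(a-a_R)$, bound each piece with Lemma \ref{lem:boundexp} (dominated convergence for the first, the tail of the convergent lattice series for the second), and conclude via \eqref{eq:triangle}, \eqref{eq:gnorm} and Corollary \ref{cor:zero1}. One harmless slip: since $\rho[f]$ involves $|f|^2$, the correct bound is $\rho[M(b)]\le (\sup M(b))^2\rho$, so your constant should be $(\sup M(b))^p$ rather than $\sup M(b)$.
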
 
 
\begin{proof}
Estimate the $\BS_{p, \infty}$-quasi-norm of 
$(b - b_R)  \, \iop(T)\, a$ using Lemma \ref{lem:boundexp}. Since $M(\hat\bx; b-b_R) = 0$ for 
$|\hat\bx|<R/4$, we have 
\begin{align*}
\|(b-b_R)\, \iop(T)\, a\|_{p, \infty}^p\lesssim  &\ \sum_{n\in\Z^3}
 \bigg[\int_{\CC^{(4)}_n}  \, \bigg(\int_{|\hat\bx|>R/4}\, |\psi(\hat\bx, x)|^2 \, d\hat\bx\bigg)  \, dx\bigg]^{\frac{p}{2}} \ 
 \|a\|_{\plainL{r}(\CC^{1}_n)}^p\\
\lesssim &\ \sum_{n\in\Z^3}
 \bigg[\int_{\CC^{(4)}_1}  \, \bigg(\int_{|\hat\bx|>R/4}\, |\psi(\hat\bx, x)|^2 \, d\hat\bx\bigg)  \, dx\bigg]^{\frac{p}{2}}. 
\end{align*}
Each integral tends to zero as $R\to\infty$. Due to the condition \eqref{eq:rholat}, by the Dominated Convergence Theorem, the sum on the right-hand side tends to zero as well. 

By Lemma \ref{lem:boundexp} again, 
\begin{align*}
\|(b_R\, \iop(T)\, (a-a_R)\|_{p, \infty}^p\lesssim  &\ \sum_{|n|> R/4}
 \bigg(\int_{\CC^{(4)}_n}  \, \rho(x)\, dx\bigg)^{\frac{p}{2}} \ 
 \|a\|_{\plainL{r}(\CC^{1}_n)}^p\\
\lesssim &\ \sum_{|n|>R/4} \|\rho\|_{\plainL1(\CC^{(1)}_n)}^{\frac{p}{2}}.
\end{align*}
By the condition \eqref{eq:rholat}, the right-hand side tends to zero as $R\to\infty$. 

Due to \eqref{eq:triangle}, 
\begin{align*}
\|b\, \iop(T)\, a & 
- b_R\, \iop(T)\, a_R\|_{p, \infty}^{{\frac{\scalel{1}}{\scalet{p+1}}}}\\
\le &\ 
\|(b - b_R)\, \iop(T)\, a\|_{p, \infty}^{{\frac{\scalel{1}}{\scalet{p+1}}}}
+ \|b_R\, \iop(T)\, (a -a_R)\|_{p, \infty}^{{\frac{\scalel{1}}{\scalet{p+1}}}}
\to 0,\quad\textup{as}\quad R\to \infty.
\end{align*}
The required equalities follow from 
\eqref{eq:gnorm} and Corollary \ref{cor:zero1}.
\end{proof}

\subsection{Separation of the particles}
First we construct appropriate cut-offs. 
Recall the notation $x_0 = 0$.  
Fix a $\d > 0$.  Define the set 
\begin{align}\label{eq:wd}
\SU_k(\d) = \{ (\hat\bx, x): |x_s-x_l|> 4\d, &\ \, 0\le s < l\le N-1,\ 
\    \textup{and}\, \, 
|x-x_k|<\d \},\notag\\ 
&\ \qquad\qquad k = 0, 1, \dots, N-1.
\end{align}
In words, for $\bx\in \SU_k(\d)$ 
the coordinate pair $x, x_k$ are close to each other, whereas all the other pairs are 
kept separate. 
It is clear that for distinct $j$ and $k$ 
the sets $\SU_k(\d)$ and $\SU_j(\d)$ 
are disjoint.  

Let us now construct cut-off functions supported on $\SU_k(\d)$. 
Let $\t\in\plainC\infty_0(\R)$ and $\z = 1-\t$ be as defined in \eqref{eq:sco}, \eqref{eq:sco1}. 
Denote 
\begin{align}\label{eq:ydel}
\eta_\d(\hat\bx) = \prod_{0\le l < s \le N-1} \z\big(|x_l-x_s|(8\d)^{-1}\big).
\end{align}
By the definition of $\z$, 
\begin{align*}
\supp \eta_\d\subset \{\hat\bx\in \R^{3N-3}: |x_s-x_l|>4\d,\, 0\le s < l\le N-1\}.
\end{align*}
As $\t \ge 0$ and $\z\le 1$, we have 
\begin{align}\label{eq:comply}
1-\eta_\d(\hat\bx)\le \sum_{0\le l<s\le N-1} \t\big(|x_l-x_s|(8\d)^{-1}\big),
\end{align}
Now it is clear that 
\begin{align}\label{eq:cutoff}
\textup{the support of the function} 
\quad \eta_\d(\hat\bx) \t\big(|x-x_k|\d^{-1}\big)
\ \  
\textup{belongs to}\quad \SU_k(\d),  
\end{align} 
for all $k = 0, 1, \dots, N-1$. 
Furthermore, due to \eqref{eq:sco1}, we have 
\begin{align}\label{eq:partun}
\eta_\d(\hat\bx)\bigg[1 - \sum_{k=0}^{N-1} \t(|x-x_k|\d^{-1})\bigg] 
= \eta_\d(\hat\bx)\prod_{k=0}^{N-1} \z\big(|x-x_k|\d^{-1}\big).
\end{align} 
Along with the kernel $T$ introduced earlier, 
define the truncated kernel 
\begin{align}\label{eq:sdelta}
T_\d(\hat\bx, x) = \eta_\d(\hat\bx) T(\hat\bx, x)\,\sum_{k=1}^{N-1}  \t(|x-x_k|\d^{-1}).
\end{align} 
 Recall that $p$ is as defined in \eqref{eq:peee}. 
 
\begin{lem} \label{lem:separate}
Let $b\in \plainL2(\R^{3N-3})$, and let $a$ be such that 
$\4\, a\4_{\,r,p}<\infty$. Then 
\begin{align*}
\lim_{\d\to 0}\, \SfG_p (b\, \iop(T_\d)\, a) = &\ \SfG_p (b\, \iop(T)\, a),\\
\lim_{\d\to 0}\, \sg_p (b \, \iop(T_\d)\, a) = &\ \sg_p (b\, \iop(T)\, a).
\end{align*} 
 \end{lem}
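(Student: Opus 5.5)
By Corollary \ref{cor:zero1}, it suffices to show that
\begin{align*}
\SfG_p\big(b\,\iop(T - T_\d)\, a\big)\to 0,\quad \d\to 0.
\end{align*}
I would split $T - T_\d$ using \eqref{eq:partun} into three kernels:
\begin{align*}
T - T_\d = (1-\eta_\d)\, T + \eta_\d\, \t(|x|\d^{-1})\, T + \eta_\d \prod_{k=0}^{N-1}\z\big(|x-x_k|\d^{-1}\big)\, T =: T^{(1)}_\d + T^{(2)}_\d + T^{(3)}_\d .
\end{align*}
The inequality \eqref{eq:triangle} then reduces the claim to estimating each piece separately.

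\emph{The term $T^{(3)}_\d$.} On its support we have $|x-x_k|\ge \d/2$ for all $k=0,\dots,N-1$, and $\hat\bx$ lies in the support of $\eta_\d$. Hence $\l(\bx)\gtrsim \d$ there. Derivatives of the cut-offs cost powers of $\d^{-1}$, which depend on $\d$ but not on the functions. The resulting kernel is $\plainC\infty$ in $x$, with bounds of the form \eqref{eq:kernelexp} for any large $\a'$ and with $A(\hat\bx,x) = C_\d M(\bx;\SPsi)$. Applying Lemma \ref{lem:boundexp} with an exponent $\a'>\a$ (and with the weight $b$ folded into $A$ as in Corollary \ref{cor:pivo}) puts $b\,\iop(T^{(3)}_\d)\,a$ into $\BS_{p',\infty}$ with $p'<p$. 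The key point is that membership in a class with $p'<p$ gives $\SfG_p = 0$ for every fixed $\d$, so the constant $C_\d$ does not matter. The weight conditions must be handled here: because $b\in\plainL2$, I would use Theorem \ref{thm:BSspace} directly, estimating $\|A_n\|_{\plainL2}\lesssim C_\d\|b\|_{\plainL2}\sup M$, together with the finite lattice norm of $a$.

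\emph{The terms $T^{(1)}_\d$ and $T^{(2)}_\d$.} These must be small in the $\BS_{p,\infty}$ quasinorm uniformly, not merely of lower order. The cut-offs $1-\eta_\d$ and $\t(|x|\d^{-1})$ do not depend on the relevant smoothness structure in a bad way. For $T^{(1)}_\d$, I would use \eqref{eq:comply} to dominate it by a sum of $\t(|x_l-x_s|(8\d)^{-1})$ factors, which depend only on $\hat\bx$ and can therefore be absorbed into the weight $b$. Applying \eqref{eq:pivo} with $A = M(\bx;\SPsi)$, which is bounded, gives
\begin{align*}
\big\|b\,\t\big(|x_l-x_s|(8\d)^{-1}\big)\,\iop(T)\,a\big\|_{p,\infty}\lesssim \big\|b\,\t\big(|x_l-x_s|(8\d)^{-1}\big)\big\|_{\plainL2}\,\4\, a\4_{\,r,p}\to 0,
\end{align*}
by dominated convergence, since the sets $\{|x_l-x_s|<8\d\}$ shrink to null sets. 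Strictly, $1-\eta_\d$ is not literally a sum; I would write it as a telescoping product expansion, each term containing one $\t$ factor multiplied by factors bounded by $1$.

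For $T^{(2)}_\d$, the cut-off $\t(|x|\d^{-1})$ depends on $x$, so it would instead be absorbed into $a$. Its $x$-derivatives have size about $\d^{-|j|}$, but on the support $|x|<\d$ these are compensated: $\d^{-|j|}\lesssim \l^{-|j|}$ only when $\l\gtrsim\d$. This step is the main obstacle. To avoid differentiating the cut-off, I would instead absorb $\1_{\{|x|<\d\}}$ into $a$ as a multiplication weight, applying the bound to $b\,\iop(T)\,(a\,\t(|x|\d^{-1}))$. Corollary \ref{cor:pivo} then gives a bound by $\4\, a\,\t(|\cdot|\d^{-1})\4_{\,r,p}\le \|a\|_{\plainL{r}(B(0,\d))}\to 0$, using $r<\infty$. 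Since $\t(|x|\d^{-1})$ is a multiplication operator on the input side, this is legitimate. The same device treats $T^{(1)}_\d$ cleanly.
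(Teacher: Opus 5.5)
Your argument is the paper's: the same three-way split coming from \eqref{eq:partun}, with $1-\eta_\d$ absorbed into $b$ and controlled via \eqref{eq:comply} (no telescoping is needed, since Corollary \ref{cor:homo} sees $b$ only through $\|(1-\eta_\d)b\|_{\plainL2}$), with $\t(|x|\d^{-1})$ absorbed into $a$ so that the bound is $\|b\|_{\plainL2}\|a\|_{\plainL{r}(B(0,\d))}\to 0$, and with the piece supported away from all coalescence points shown to satisfy $\SfG_p = 0$ for each fixed $\d$. The one step to tighten is that last one: Theorem \ref{thm:BSspace} with the larger exponent $\a'$ gives a bound in terms of $\4\, a\4_{\,r,p'}$ with $p'<p$, and this is \emph{not} controlled by $\4\, a\4_{\,r,p}$; the paper handles this by first reducing to compactly supported $a$, and you can equally write $a = a\1_{\{|x|<R\}} + a\1_{\{|x|\ge R\}}$, note that the tail contributes at most $C_\d\|b\|_{\plainL2}\4\, a\1_{\{|x|\ge R\}}\4_{\,r,p}\to 0$ in $\BS_{p,\infty}$ as $R\to\infty$, and conclude $\SfG_p = 0$ via \eqref{eq:trianglep} and \eqref{eq:gnorm}.
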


\begin{proof} 
By \eqref{eq:trianglep},
\begin{align}\label{eq:di}
\SfG_p\big(b\, \iop(T_\d)\, a - b\, \iop(T)\, & a\big)
^{{\frac{\scalel{1}}{\scalet{p+1}}}}\notag\\
\le &\ \SfG_p\big((I-\eta_\d)b\, \iop(T)\, a\big)^{{\frac{\scalel{1}}{\scalet{p+1}}}}
+ \SfG_p\big(b\, \iop(\eta_\d T - T_\d)\, a\big)^{{\frac{\scalel{1}}{\scalet{p+1}}}}.
\end{align}
First we estimate the first term on the right-hand side using Corollary \ref{cor:homo}. 
Since $M(\bx; \SPsi)\lesssim 1$, and $1-\eta_\d$ satisfies \eqref{eq:comply}, we have
\begin{align*}
\|(I-\eta_\d)b\, \iop(T)\, a\|_{p, \infty}
\lesssim   &\  \sum_{0\le l < s\le N-1}
\bigg[\int_{\R^{3N-3}}  \t\big(|x_l-x_s|(8\d)^{-1}\big)^2 |b(\hat\bx)|^2 \, d\hat\bx\bigg]^{\frac{1}{2}}\, \4 \, a\4_{\,r, p}\\
 \lesssim &\  \sum_{0\le l < s\le N-1}
\bigg[\int_{|x_l-x_s|< 8\d}  |b(\hat\bx)|^2 \, d\hat\bx\bigg]^{\frac{1}{2}}\, 
\4 \, a\4_{\,r, p}.
\end{align*}
As $b\in \plainL2(\R^{3N-3})$,
the right-hand side tends to zero as $\d\to 0$. By 
virtue of \eqref{eq:gnorm} so does the first term on the right of \eqref{eq:di}.

Let us now estimate the second term on the right-hand side of \eqref{eq:di}. 
Due to \eqref{eq:partun}, we can write that 
\begin{align*}
 \eta_\d(\hat\bx, x) T(\hat\bx, x) - &\ T_\d(\bx, x)  \\
= &\ \eta_\d(\hat\bx)\bigg[ 1 - \sum_{k=0}^{N-1}\, \t(|x-x_k|\d^{-1})
+ \t(|x|\d^{-1})
\bigg] \,
T(\hat\bx, x)\\[0.2cm]
= &\ \eta_\d(\hat\bx) \prod_{k=0}^{N-1} \z(|x-x_k|\d^{-1})\, T(\hat\bx, x) 
+ \eta_\d(\hat\bx) \t(|x|\d^{-1})\, T(\hat\bx, x)\\
=: &\ T_\d^{(1)}(\hat\bx, x) + T_\d^{(2)}(\hat\bx, x).
\end{align*}
By the definition \eqref{eq:ydel},  
the support of $T_\d^{(1)}(\hat\bx, x)$  
is separated from all coalescence points $x = x_k$, 
$k = 0, 1, \dots, N-1$. 
Taking into account \eqref{eq:imit}, we conclude that  
\begin{align*}
|\p_x^j\, T_\d^{(1)}(\hat\bx, x)|
\lesssim M(\bx; \SPsi)\lesssim 1, \quad j\in \mathbb N_0^3, 
\end{align*}
with constants depending on $\d$. 
By Lemma \ref{lem:tocompact}, we may assume that $a$ is compactly supported, so that 
$\4\, a\4_{\,r, q}<\infty$ for all $q >0$. 
Thus Corollary \ref{cor:homo} ensures 
that $\iop(b\, T_\d^{(1)}\, a)\in \BS_{q, \infty}$ 
for all $q >0$ and hence $\SfG_{p}(b\, T_\d^{(1)}\, a) = 0$. 

By Corollary \ref{cor:homo} again,
\begin{align*}
\|b\, T_\d^{2)}\, a\|_{p, \infty}\lesssim \|b\|_{\plainL2(\R^{3N-3})} \4 \, \t(\,\cdot\,/\d) a\4_{\,r, p} 
= \|b\|_{\plainL2(\R^{3N-3})}\,  \| \t(\,\cdot\,/\d) a\|_{\plainL{r}(\CC_0^{(1)})}\to 0,\quad \textup{as} 
\quad \d\to 0. 
\end{align*}
Thanks to \eqref{eq:gnorm}, the same is true for $\SfG_p(b \,T_\d^{(2)} a)$. 
Using \eqref{eq:trianglep} we get that 
\begin{align*}
\SfG_p\big(\eta_\d b\, \iop(T - T_\d)\, a\big)\to 0, \quad \textup{as}\quad \d\to 0.
\end{align*}
A few lines above we proved that the first term in \eqref{eq:di} also tends to zero as $\d\to 0$. 
Now the required equalities follow from Corollary \ref{cor:zero1}.
\end{proof}

The rest of the paper is focused on the proof of Theorems \ref{thm:recast}
and \ref{thm:recastasym}.  

\section{Proof of Theorem \ref{thm:recast}}\label{sect:prfrecast}
\label{sect:proofs}

\subsection{Regularity of $\psi$ on the sets $\SU_k(\d)$}
Her we lay the grounds for the proof of Theorem \ref{thm:recast} by studying the structure of 
the function $\psi$ on the sets $\SU_k(\d), k = 1, 2, \dots, N-1$, defined in \eqref{eq:wd}. 
We begin with rewriting the factorization $\psi = e^F\phi$ 
introduced in Sect. \ref{sect:reg}, in terms of the functions $\psi_j(\hat\bx, x)$:
\begin{align*}
\psi_j(\hat\bx, x) = &\ e^{F(\bx)}\, \phi_j(\hat\bx, x),\quad \textup{where}\\
\phi_j(\hat\bx, x) = &\ 
\phi(x_1, \dots, x_{j-1}, x, x_j, \dots, x_{N-1}),\quad j = 1, 2, \dots, N,
\end{align*}
where $F(\hat\bx, x)$ is the function \eqref{eq:jascut} 
which we represent in a slightly different form, using the notation  
\begin{align*}
G(t) = \frac{1}{4}|t|\, \t(|t|), \ t\in\R^3:
\end{align*}
\begin{align}\label{eq:FG}
F(\bx) = &\ \sum_{k=1}^{N-1} G(x-x_k) - 2 Z \, G(x) 
 - 2Z\sum_{s=1}^{N-1} G(x_s) + \sum_{1\le l < s\le N-1} \, G(x_s-x_l).
\end{align}
By Theorem \ref{thm:regphi}, 
\begin{align}\label{eq:phij}
|\p_x^s\,  \phi_j(\bx)|\lesssim \big(1+\l(\hat\bx, x)^{2-|s|}\big) 
\, M(\bx; \SPsi), \quad s\in\mathbb N_0^3,
\end{align}
where $M(\bx; \SPsi)$ is defined in \eqref{eq:newm}.
For $\bx\in \SU_k(\d)$ it will be convenient to use 
a different factorization of $\psi_j$. Namely, for each value of $k$ we split the function 
\eqref{eq:FG} into two summands,   
\begin{align*}
F(\bx) = G(x-x_k) + G_k(\hat\bx, x), \quad k = 1, 2, \dots, N-1,
\end{align*}
where 
\begin{align*}
  G_k(\bx) = 
\sum_{\substack{1\le s\le N-1\\
s\not = k}} G(x-x_s)
- 2 Z \, G(x) - 2Z\sum_{s=1}^{N-1} G(x_s) + \sum_{1\le l < s\le N-1} \, G(x_s-x_l),
\end{align*}
and rewrite the product $\psi_j = e^F \phi_j$ as follows: 
\begin{align}\label{eq:phijk}
\psi_j(\hat\bx, x) =  e^{G(x-x_k)}\, \phi_{j,k}(\hat\bx, x), \ k = 1, \dots, N-1.
\end{align}
where
\begin{align*}   
\phi_{j,k}(\hat\bx, x) = e^{G_k(\hat\bx, x)}\phi_j(\hat\bx, x),\ k = 1, \dots, N-1.
\end{align*}
Observe that the function $G_k(\bx)$ is $\plainC\infty$ on $\SU_k(\d)$, and hence $\phi_{j, k}$ 
possesses the same smoothness as $\phi_j$. Furthermore, on $\SU_k(\d)$ the variable $x$ is separated from all $x_s$, $s\not = k$, and precisely, $|x-x_s|>\d$, $0\le s\le N-1, s\not = k$. Therefore \eqref{eq:phij} implies that  
\begin{align}\label{eq:onw}
|\p_x^s\, \phi_{j,k}(\hat\bx, x)|
\lesssim \big(1+ (|x-x_k|\wedge 1)^{2-|s|}\big) \, &M(\bx; \SPsi),\notag\\
(\hat\bx,\, &\ x)\in \SU_k(\d),\ k = 1, 2, \dots, N-1,
\end{align}
for all $s\in \mathbb N_0^3$, with some coefficients depending on $\d$. 

\begin{lem} For all $\bx\in \SU_k(\d)$, $k = 1, 2, \dots, N-1$, we have 
\begin{align}
\psi_j(\hat\bx, x) = &\ \frac{1}{4}
|x-x_k| \phi_{j, k}(\hat\bx, x) + R_{j, k}(\hat\bx, x),\label{eq:singlet}\\
\nabla_x \psi_j(\hat\bx, x) = &\ \frac{1}{4}
\frac{x-x_k}{|x-x_k|} \phi_{j, k}(\hat\bx, x) + R^{(1)}_{j, k}(\hat\bx, x),
\label{eq:singlet1}
\end{align}
where
\begin{align}
|\p_x^m\, R_{j, k}(\hat\bx, x)|\lesssim &\ \big(1+( |x-x_k|\wedge 1)^{2-|m|}\big) \, M(\bx, \SPsi),
\label{eq:rsmooth}\\
|\p_x^m\, R^{(1)}_{j, k}(\hat\bx, x)|\lesssim &\ 
\big(1+( |x-x_k|\wedge 1)^{1-|m|}\big) \, M(\bx, \SPsi),
\label{eq:r1smooth}
\end{align}
for all $m\in \mathbb N_0^3$.
\end{lem}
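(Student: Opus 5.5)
The plan is to expand the exponential Jastrow factor $e^{G(x-x_k)}$ in \eqref{eq:phijk} to first order at the coalescence point $x=x_k$. On $\SU_k(\d)$ we have $|x-x_k|<\d$. If $\d$ is small, say $\d<1/2$, then $\t(|x-x_k|)=1$, so $G(x-x_k)=\frac14|x-x_k|$ there. In general $G(t)=\frac14|t|$ for $|t|<1/2$, and for $|t|\ge 1/2$ the functions $e^{G(t)}$ and $G(t)$ are smooth and bounded with all derivatives on the bounded set in question. We then write
\begin{align*}
\psi_j = e^{G(x-x_k)}\phi_{j,k} = G(x-x_k)\,\phi_{j,k} + \big(e^{G(x-x_k)} - G(x-x_k)\big)\phi_{j,k}.
\end{align*}
Since $G(x-x_k)=\frac14|x-x_k|$ on the relevant set, the first term gives the leading part of \eqref{eq:singlet}. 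We define
\begin{align*}
R_{j,k} := \big(e^{G(x-x_k)} - G(x-x_k)\big)\phi_{j,k}.
\end{align*}
If the cutoff region is not entirely excluded, we also put into $R_{j,k}$ the piece where $G\neq\frac14|\cdot|$. That piece is smooth and bounded because $|x-x_k|\ge 1/2$ there, and $\phi_{j,k}$ obeys \eqref{eq:onw}, so it satisfies \eqref{eq:rsmooth} trivially.

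To prove \eqref{eq:rsmooth}, we apply the Leibniz rule to the product $\big(e^{G}-G\big)\phi_{j,k}$. By \eqref{eq:expbnd1}, applied to $\frac14|t|$ after rescaling, the first factor has bounded value and satisfies $|\p^s_x(e^{G}-G)|\lesssim (|x-x_k|\wedge1)^{2-|s|}$ for $|s|\ge1$; away from the origin this factor is smooth. The second factor obeys \eqref{eq:onw}, i.e. the bound $(1+(|x-x_k|\wedge1)^{2-|s|})M(\bx;\SPsi)$. Multiplying the two bounds, each product $(\cdot)^{2-|s|}(1+(\cdot)^{2-|m|+|s|})$ is dominated by $1+(\cdot)^{2-|m|}$ up to constants, because the base is at most $1$ and $4-|m|\ge 2-|m|$. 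This gives \eqref{eq:rsmooth}. A point to check carefully is the case $|m|\le 1$: there $e^G-G$ and $\phi_{j,k}$ are merely Lipschitz and bounded, which is still consistent with the bound $1+(\cdot)^{2-|m|}\asymp 1$.

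For \eqref{eq:singlet1}, we differentiate \eqref{eq:singlet} for $x\neq x_k$:
\begin{align*}
\nabla_x\psi_j = \tfrac14\frac{x-x_k}{|x-x_k|}\phi_{j,k} + R^{(1)}_{j,k},\qquad R^{(1)}_{j,k}:=\tfrac14|x-x_k|\nabla_x\phi_{j,k}+\nabla_x R_{j,k}.
\end{align*}
The term $\nabla_x R_{j,k}$ satisfies \eqref{eq:r1smooth} directly from \eqref{eq:rsmooth}. For $|x-x_k|\nabla_x\phi_{j,k}$ we use the Leibniz rule once more. The derivatives of $|x-x_k|$ are $\lesssim |x-x_k|^{1-|s|}$, and the derivatives of $\nabla\phi_{j,k}$ are $\lesssim 1+(\cdot)^{1-|m|+|s|}$ by \eqref{eq:onw}. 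Each resulting product is $\lesssim 1+(\cdot)^{1-|m|}$ (more precisely $(\cdot)^{2-|m|}$ plus bounded terms), which yields \eqref{eq:r1smooth}.

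The only genuinely delicate point is bookkeeping: we must ensure that all constants depend only on $\d$ and that $M(\bx;\SPsi)$ appears linearly. This holds because $e^{G}$ and $G$ are bounded on $\SU_k(\d)$ independently of $\psi$. Everything else is routine Leibniz-rule estimation.
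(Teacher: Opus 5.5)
Your proposal is correct and follows the paper's proof essentially step for step. Both write $\psi_j = G(x-x_k)\phi_{j,k} + \big(e^{G(x-x_k)}-G(x-x_k)\big)\phi_{j,k}$, bound the remainder with \eqref{eq:expbnd1}, \eqref{eq:onw} and the Leibniz rule, and then differentiate to get $R^{(1)}_{j,k} = \frac14|x-x_k|\nabla_x\phi_{j,k} + \nabla_x R_{j,k}$. Your extra step of absorbing $G(t)-\frac14|t|$ (which vanishes for $|t|<1/2$) into $R_{j,k}$ only makes explicit the identification $G(x-x_k)=\frac14|x-x_k|$, which the paper uses without comment because later $\d<1/2$.
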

 
 \begin{proof}
 Rewrite \eqref{eq:phijk} in the form
 \begin{align*}
 \psi_j(\bx) = G(x-x_k) \phi_{j, k}(\bx) + R_{j, k}(\bx),
 \end{align*}
 with
 \begin{align*}
 R_{j, k}(\bx) = \big(e^{G(x-x_k)} - G(x-x_k)\big) \phi_{j, k}(\bx). 
 \end{align*}
 By \eqref{eq:expbnd1}, 
\begin{align*}
|\p_x^m \big(e^{G(x-x_k)} - G(x-x_k)\big) |
\lesssim 1+|x-x_k|^{2-|m|},\quad m\in\mathbb N_0^3.
 \end{align*}
 Together with \eqref{eq:onw} this leads to \eqref{eq:singlet}.
 
In order to prove \eqref{eq:singlet1} we differentiate \eqref{eq:singlet}:
\begin{align*}
\nabla_x\,\psi_j(\hat\bx, x) = &\ \frac{1}{4}
\frac{x-x_k}{|x-x_k|} \phi_{j, k}(\hat\bx, x) + R^{(1)}_{j, k}(\hat\bx, x),\\
R^{(1)}_{j, k}(\hat\bx, x) = &\ 
\frac{1}{4}
|x-x_k| \nabla_x\phi_{j, k}(\hat\bx, x) + \nabla_x R_{j, k}(\hat\bx, x),
\end{align*} 
The bounds \eqref{eq:r1smooth} hold thanks to \eqref{eq:onw} and \eqref{eq:rsmooth}. 
 \end{proof}

\subsection{Proof of Theorem \ref{thm:recast} \eqref{item:repsi}} 
We use the results of Sect. \ref{sect:factor} 
with $T(\hat\bx, x) = \SPsi(\hat\bx, x)$. 
Our argument is built on Lemmata \ref{lem:tocompact} and \ref{lem:separate}. 
They reduce the analysis to the study of 
the  truncated kernel $T_\d(\hat\bx, x)$ 
as defined in \eqref{eq:sdelta}:
\begin{align*}
T_\d(\hat\bx, x) = \eta_\d(\hat\bx) \SPsi(\hat\bx, x)\,\sum_{k=1}^{N-1}  \t(|x-x_k|\d^{-1}).
\end{align*} 
To prove Theorem \ref{thm:recast} \eqref{item:repsi} we find the spectral asymptotics of the 
operator $ b \iop(T_\d) a$ with compactly supported $a$, $b$ and with the cut-off 
$\eta_\d$ defined in \eqref{eq:ydel}. Define 
\begin{align}\label{eq:hdelta2}
H_\d(x; a, b) =  \frac{|a(x)|^2}{16}\, |b(x)|^2  |\eta_\d(x)|^2 
\big(|\psi_1(x, x)|^2 + |\psi_2 (x, x)|^2\big),\quad \textup{if}\quad N = 2,
\end{align}
and 
\begin{align}\label{eq:hdelta}
H_\d(x; a, b) =  \frac{|a(x)|^2}{16} 
\sum_{j=1}^N &\ \,\sum_{k=1}^{N-1}\int_{\R^{3N-6}}\, |b(\tilde\bx_{k, N}, x)|^2\notag\\ 
&\ \times |\eta_\d(\tilde\bx_{k, N}, x)|^2 
|\psi_j (\tilde\bx_{k, N}, x, x)|^2 \,d\tilde\bx_{k, N},\quad \textup{if}\quad N \ge 3. 
\end{align}

\begin{lem}\label{lem:predr}
Let $a\in\plainC{}_0(\R^3)$ and $b\in\plainC{}_0(\R^{3N-3})$. 
Then for each $\d\in (0, 1/2)$ we have  
\begin{align}\label{eq:predr}
\SfG_{3/4} (b\, \iop(T_\d)\, a) 
= \sg_{3/4} (b \,\iop(T_\d)\, a) = \frac{1}{3}\bigg(\frac{2}{\pi}\bigg)^{\frac{5}{4}} \, \int_{\R^3}\, 
H_{\d}(x; a, b)^{\frac{3}{8}}\,dx,
\end{align}
where $H_\d$ is defined in \eqref{eq:hdelta2} or \eqref{eq:hdelta}.
\end{lem}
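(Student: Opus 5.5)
We will reduce $b\,\iop(T_\d)\,a$ to the model operator of Theorem \ref{thm:model_as} with $\SQ(x)=|x|$, $\a=1$, $p=3/4$. By \eqref{eq:cutoff}, on the support of the $k$-th summand of $T_\d$ we are in $\SU_k(\d)$. We therefore substitute the representation \eqref{eq:singlet} for each component $\psi_j$, $j=1,\dots,N$, and split
\begin{align*}
b\,T_\d\,a = \sum_{k=1}^{N-1} b\,\eta_\d\,\t(|x-x_k|\d^{-1})\Big\{\tfrac14|x-x_k|\,\phi_{j,k}\Big\}_{j=1}^N a
+ \sum_{k=1}^{N-1} b\,\eta_\d\,\t(|x-x_k|\d^{-1})\{R_{j,k}\}_{j}\, a .
\end{align*}

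\textbf{The remainder.} We first show that the second sum contributes nothing. By \eqref{eq:rsmooth}, together with $M(\bx;\SPsi)\lesssim1$ and the smoothness of the cut-offs, each kernel $\eta_\d\t(\cdot)R_{j,k}$ satisfies \eqref{eq:s} with $\a=2$ and with $\l$ replaced by $|x-x_k|\wedge1$. The cut-off factors have bounded derivatives, with constants depending on $\d$. Since $a,b$ are compactly supported, Corollary \ref{cor:homo} gives membership in $\BS_{q,\infty}$ with $1/q=1+2/3$, so $q<3/4$ and $\SfG_{3/4}=0$. By \eqref{eq:blockvec}, \eqref{eq:trianglep} and Corollary \ref{cor:zero}, only the first sum matters.

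\textbf{The main term.} The first sum is exactly of the form \eqref{eq:cm_as} with $M=N$ and $\SQ(t)=|t|$, which is $1$-homogeneous. The scalar functions are
\begin{align*}
\b_{j,k}(\hat\bx,x)=\tfrac14\,\t(|x-x_k|\d^{-1})\,\eta_\d(\hat\bx)\,\phi_{j,k}(\hat\bx,x),
\end{align*}
with $b_{j,k}=b$, the weight $a$, and $\SQ(x_k-x)=|x-x_k|$. We must check \eqref{eq:betajk}. This follows from \eqref{eq:onw}, since $(|x-x_k|\wedge1)^{2-|s|}\lesssim 1+(|x-x_k|\wedge1)^{1-|s|}$. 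The subtle point is that $\phi_{j,k}$ is defined through $G_k$, which is smooth only on $\SU_k(\d)$. The cut-off confines us to this set, so we take $\b_{j,k}$ to be zero outside it; the product is then continuous with the required bounds. One must also verify that $\b_{j,k}$ is continuous as required in Sect.~\ref{sect:model}, which follows from $\phi\in\plainC1$.

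We then apply Example \ref{ex:vect}(1) with $\a=1$, $p=3/4$, and $\nu_{1,3}=2^4\pi^{3/2}\G(2)/\G(-1/2)=-8\pi$. This gives
\begin{align*}
\SfG_{3/4}=\sg_{3/4}=\frac{(8\pi)^{3/4}}{6\pi^2}\int |a|^{3/4}B^{3/8}\,dx .
\end{align*}
To evaluate $B$ we note that at $x=x_k$ we have $\t=1$ and $G(0)=0$, so $\phi_{j,k}(\tilde\bx_{k,N},x,x)=\psi_j(\tilde\bx_{k,N},x,x)$. Hence $|a|^2B$ equals exactly $H_\d(x;a,b)$ from \eqref{eq:hdelta}, including the factor $1/16$; the case $N=2$ is analogous via \eqref{eq:hdelta2}.

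The remaining step is to check the constant: $(8\pi)^{3/4}/(6\pi^2)=\tfrac13 (2/\pi)^{5/4}$. Indeed, $(8\pi)^{3/4}=2^{9/4}\pi^{3/4}$, and dividing by $6\pi^2$ gives $2^{5/4}\pi^{-5/4}/3$, which is correct.

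\textbf{Expected obstacle.} We expect the main obstacle to be the precise verification of the hypotheses of the model theorem for $\b_{j,k}$ near the boundary of $\SU_k(\d)$. A related point is that the factor $\eta_\d(\hat\bx)$, which depends on $x_k$, must be evaluated at $x_k=x$ when forming $Y$; this is harmless since $\t(|x-x_k|\d^{-1})=1$ there.
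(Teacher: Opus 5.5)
Your proposal is correct and follows the paper's proof essentially step by step. Both split $T_\d$ via \eqref{eq:singlet}, discard the $R_{j,k}$-part as an element of $\BS_{3/5,\infty}$ using \eqref{eq:rsmooth} and Corollary \ref{cor:homo}, and evaluate the main term through Example \ref{ex:vect} with $\SQ(x)=|x|$, using $\phi_{j,k}(\tilde\bx_{k,N},x,x)=\psi_j(\tilde\bx_{k,N},x,x)$ and $|\nu_{1,3}|=8\pi$ to obtain the constant $\frac13(2/\pi)^{5/4}$.
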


\begin{proof} 
We conduct the proof only for $N\ge 3$. The case $N=2$ requires only minor adjustments. 

By \eqref{eq:cutoff}, the support of $\eta_\d(\hat\bx) \t(|x-x_k|\d^{-1})$ belongs to the set 
$\SU_k(\d), k = 1, 2, \dots, N-1$. Thus we can 
write out the components of the kernel $T_\d$ using \eqref{eq:singlet}:
\begin{align*}
(T_\d(\hat\bx, x))_j = &\ \eta_\d(\hat\bx)
\sum_{k=1}^{N-1} e^{G(x-x_k)} \t(|x-x_k|\d^{-1})\phi_{j, k}(\hat\bx, x)\\
= &\ \sum_{k=1}^{N-1} S_{j, k}(\hat\bx, x) + \sum_{k=1}^{N-1} \Om_{j, k}(\hat\bx, x), 
\quad j = 1, 2, \dots, N, 
\end{align*}
where
\begin{align*}
S_{j, k}(\hat\bx, x) = &\ \frac{1}{4}\eta_\d(\hat\bx)\, |x-x_k| \,
\t(|x-x_k|\d^{-1})\phi_{j, k}(\hat\bx, x),\\[0.3cm]
\Om_{j, k}(\hat\bx, x) = &\ 
\eta_\d(\hat\bx) \t(|x-x_k|\d^{-1}) R_{j, k}(\hat\bx, x).
\end{align*} 
By \eqref{eq:rsmooth}, each kernel $\Om_{j, k}$ satisfies 
\eqref{eq:s} with $A(\bx) = M(\bx, \SPsi)\le 1$ and $\a = 2$. 
Since both weights $a$ and $b$ are continuous and compactly supported, 
by Corollary \ref{cor:homo}, the operator 
$\iop(b\, \Om_{j, k}\,  a)$  
belongs to $\BS_{3/5, \infty}$, and hence  $\SfG_{3/4}(b \iop(\Om_{j, k}) a) = 0$. 
Thanks to Corollary \ref{cor:zero}, 
\begin{align*}
\SfG_{3/4}(b\, \iop(T_\d)\, a ) = &\ \SfG_{3/4}(b\, \iop(S)\, a),\quad  
\sg_{3/4}(b\, \iop(T_\d)\, a ) = \sg_{3/4}(b\, \iop(S)\, a),\\
 \textup{where} \quad S(\hat\bx, x) = &\ \{ S_j(\hat\bx, x)\}_{j=1}^N,\, 
 S_j(\hat\bx, x) = \sum_{k=1}^{N-1} S_{j, k}(\hat\bx, x).
\end{align*}
From Example  \ref{ex:vect} with $\a=1$, $\SQ(x) = |x|$ 
and formula \eqref{eq:scalar} 
we obtain that 
\begin{align}\label{eq:tech}
\SfG_{3/4}(b\, \iop(S)\, a) = \sg_{3/4}(b\, \iop(S)\, a)
= \frac{|\nu_{1,3}|^{\frac{3}{4}}}{6\pi^2}
\,\int_{\R^3}  |a(x)|^{\frac{3}{4}} |\SB(x)|^{\frac{3}{8}} \, dx,
\end{align}
where
\begin{align}\label{eq:sbd}
\SB(x) 
 = \frac{1}{16}\sum_{j=1}^N \,\sum_{k=1}^{N-1}\int_{\R^{3N-6}}\, |b(\tilde\bx_{k, N}, x)|^2 
 |\eta_\d(\tilde\bx_{k, N}, x)|^2 
|\phi_{j,k}(\tilde\bx_{k, N}, x, x)|^2 \,d\tilde\bx_{k, N},
\end{align}
By \eqref{eq:phijk}, $\phi_{j, k}(\tilde\bx_{k, N}, x,x) = \psi_j(\tilde\bx_{k, N}, x,x)$, 
and an easy calculation shows that the coefficient 
on the right-hand side of \eqref{eq:tech} equals $\frac{1}{3}(\frac{2}{\pi})^{\frac{5}{4}}$.  
Now \eqref{eq:predr} follows. 
\end{proof}

\begin{proof}[Proof of Theorem \ref{thm:recast}\eqref{item:repsi}]
Let $a(x) = a_R(x) = \t(|x| R^{-1})$, $b(\hat\bx) = b_R(\hat\bx) = \t(|\hat\bx| R^{-1})$. 
According to Lemmata \ref{lem:tocompact} and \ref{lem:separate}, 
to prove Theorem \ref{thm:recast}\eqref{item:repsi} it suffices to show that 
\begin{align*}
\lim_{\d\to0, R\to\infty}\SfG_{3/4}(b_R\, \iop (T_\d) \, a_R) = \lim_{\d\to0, R\to\infty}
\sg_{3/4}(b_R\, \iop (T_\d) \, a_R) 
= \frac{1}{3} \bigg(\frac{2}{\pi}\bigg)^{\frac{5}{4}}\, \int_{\R^3}\, H(x)^{\frac{3}{8}}   \, dx.
\end{align*}
Due to \eqref{eq:predr} it suffices to check that  
\begin{align}\label{eq:h38}
\lim_{\d\to0, R\to\infty} 
\int_{\R^3}\, H_\d(x; a_R, b_R)^{\frac{3}{8}}   \, dx
= \int_{\R^3}\, H(x)^{\frac{3}{8}}   \, dx.
\end{align}
According to Lemma \ref{lem:coeffs}, 
the function $H(x)^{3/8}$ is integrable. Therefore, the function 
$\psi(\tilde\bx, x, x)$ is 
square-integrable in $\tilde\bx$ for a.e. $x\in\R^3$.
Furthermore, 
$|\eta_\d|\le 1$, $|a_R|, |b_R|\le 1$ and $\eta_\d\to 0, \d\to 0,$ a.e., and $a_R\to 1, b_R\to 1$, $R\to\infty$,  
a.e. Thus, by the Dominated Convergence Theorem, 
$H_\d(x; a_R, b_R)\to H(x)$ a.e. $x\in\R^3$. Since 
 $H_\d(x; a_R, b_R)\le H(x)$, the Dominated Convergence Theorem ensures that 
\eqref{eq:h38} holds. 
Theorem 
\ref{thm:recast}\eqref{item:repsi} is proved. 
\end{proof}

\subsection{Proof of Theorem \ref{thm:recast} \eqref{item:ret}} 
We use the results of Sect. \ref{sect:factor} with the vector-valued kernel 
$T(\hat\bx, x) = \SV(\hat\bx, x) = \nabla_x\SPsi(\hat\bx, x)$, see \eqref{eq:bpsi}. 
Let the vector-valued 
kernel $T_\d(\hat\bx, x)$ 
be as defined in \eqref{eq:sdelta}:
\begin{align*}
T_\d(\hat\bx, x) = \eta_\d(\hat\bx) \nabla_x\SPsi(\hat\bx, x)\,\sum_{k=1}^{N-1}  \t(|x-x_k|\d^{-1}).
\end{align*} 
To prove Theorem \ref{thm:recast} \eqref{item:ret} we find the spectral asymptotics of the 
operator $ b \iop(T_\d) a$ with compactly supported $a$, $b$ and with the cut-off 
$\eta_\d$ defined in \eqref{eq:ydel}.

\begin{lem}
Let $a\in\plainC{}_0(\R^3)$ and $b\in\plainC{}_0(\R^{3N-3})$. 
Then for each $\d\in (0, 1/2)$ we have  
\begin{align}\label{eq:predre}
\SfG_{1/2} (b\, \iop(T_\d)\, a) 
= \sg_{1/2} (b \,\iop(T_\d)\, a) =  \frac{4}{3\pi} 
 \, \int_{\R^3}\, 
H_{\d}(x; a, b)^{\frac{1}{2}}\,dx,
\end{align}
where $H_\d$ is defined in \eqref{eq:hdelta}.
\end{lem}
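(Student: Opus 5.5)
The plan mirrors the proof of Lemma \ref{lem:predr}, now using the gradient representation \eqref{eq:singlet1} in place of \eqref{eq:singlet}. As there, I treat only $N\ge 3$; the case $N=2$ needs only notational changes.

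\textbf{Step 1: splitting the kernel.} By \eqref{eq:cutoff}, the support of $\eta_\d(\hat\bx)\t(|x-x_k|\d^{-1})$ lies in $\SU_k(\d)$, so \eqref{eq:singlet1} applies there. The $j$-th component of $T_\d$ is therefore the $\mathbb C^3$-valued function
\begin{align*}
(T_\d)_j = \sum_{k=1}^{N-1} S_{j,k} + \sum_{k=1}^{N-1}\Om_{j,k},
\end{align*}
with
\begin{align*}
S_{j,k} = \frac14\,\eta_\d(\hat\bx)\,\t(|x-x_k|\d^{-1})\,\frac{x-x_k}{|x-x_k|}\,\phi_{j,k}(\hat\bx,x)
\end{align*}
and $\Om_{j,k}=\eta_\d\,\t(|x-x_k|\d^{-1})\,R^{(1)}_{j,k}$. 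There is one subtlety: the derivatives of the cut-off $\t(|x-x_k|\d^{-1})$ fall on a region $|x-x_k|\ge \d/2$, where everything is smooth, so they do not spoil the bounds.

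\textbf{Step 2: the remainder is negligible.} By \eqref{eq:r1smooth}, each $\Om_{j,k}$ satisfies \eqref{eq:s} with $\a=1$ and $A=M(\bx;\SPsi)\lesssim 1$. Since $a,b$ are continuous and compactly supported, Corollary \ref{cor:homo} gives $b\,\iop(\Om_{j,k})\,a\in\BS_{3/4,\infty}$, hence $\SfG_1=0$ for it. Corollary \ref{cor:zero}, together with \eqref{eq:blockvec} for the finitely many components, then reduces $\SfG_1$ and $\sg_1$ of $b\,\iop(T_\d)\,a$ to those of $b\,\iop(S)\,a$, where $S_j=\sum_k S_{j,k}$.

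\textbf{Step 3: reduction to the model operator.} I cast $b\,S\,a$ in the form \eqref{eq:cm_as}, with $M=N$ indices $j$, scalar $\b_{j,k}=\frac14\,\eta_\d\,\t(|x-x_k|\d^{-1})\,\phi_{j,k}$, weight $b_{j,k}=b$, and vector kernel
\begin{align*}
\SQ(x_k-x)=-\frac{x_k-x}{|x_k-x|},
\end{align*}
which is homogeneous of order $\a=0$ and equals $-\nabla|t|$ at $t=x_k-x$. Condition \eqref{eq:betajk} follows from \eqref{eq:onw}. The sign is irrelevant after taking moduli. Example \ref{ex:vect}(2) with $\a=0$, formula \eqref{eq:vector}, then gives
\begin{align*}
\SfG_1=\sg_1=\frac{|\nu_{1,3}|}{6\pi^2}\int|a(x)|\,\SB(x)^{1/2}\,dx ,
\end{align*}
with $\SB$ as in \eqref{eq:sbd} (the factor $1/16$ comes from $\b_{j,k}^2$). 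On the diagonal, $\t(0)=1$ and $\phi_{j,k}(\tilde\bx_{k,N},x,x)=\psi_j(\tilde\bx_{k,N},x,x)$ because $G(0)=0$; hence $|a|\,\SB^{1/2}=H_\d(x;a,b)^{1/2}$.

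\textbf{Step 4: the constant, the main check.} From \eqref{eq:xinf},
\begin{align*}
\nu_{1,3}=2^4\pi^{3/2}\,\frac{\G(2)}{\G(-1/2)}=\frac{16\pi^{3/2}}{-2\sqrt\pi}=-8\pi,
\end{align*}
so
\begin{align*}
\frac{|\nu_{1,3}|}{6\pi^2}=\frac{8\pi}{6\pi^2}=\frac{4}{3\pi},
\end{align*}
which matches \eqref{eq:predre}. I expect the remaining obstacle to be purely bookkeeping. One must confirm that the example applies with $\a=0\ne -1,1,3,\dots$. One must also verify the relation $\SfG_{1/2}(\SK)=\SfG_1(\iop(\SV))$ via \eqref{eq:double}, which accounts for the index $1/2$ in \eqref{eq:predre}.
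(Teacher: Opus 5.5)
Your proposal is correct and follows the paper's proof essentially line by line. The steps match: the split via \eqref{eq:singlet1}, discarding the $\Om_{j,k}$ through Corollary \ref{cor:homo} with $\a=1$ (the paper cites \eqref{eq:rsmooth} there, but \eqref{eq:r1smooth}, as you use, is the right bound), and then Example \ref{ex:vect}(2) with $\a=0$ and $|\nu_{1,3}|/(6\pi^2)=4/(3\pi)$. One correction to your closing remark: since $\a=0$ forces $p=1$, what you and the paper's proof actually establish is $\SfG_1(b\,\iop(T_\d)\,a)=\sg_1(b\,\iop(T_\d)\,a)=\frac{4}{3\pi}\int H_\d^{1/2}\,dx$, so the subscript $1/2$ in the statement is a misprint for $1$; \eqref{eq:double} would only convert this into $\SfG_{1/2}$ of $(b\,\iop(T_\d)\,a)^*\,b\,\iop(T_\d)\,a$, not of $b\,\iop(T_\d)\,a$ itself.
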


\begin{proof} 
We conduct the proof only for $N\ge 3$. The case $N=2$ requires only minor adjustments. 

As in the proof of Theorem 
\ref{thm:recast} \eqref{item:repsi}, due to \eqref{eq:cutoff} we can write out the components of the kernel $T_\d$ using \eqref{eq:singlet1}:
\begin{align*}
(T_\d(\hat\bx, x))_j =  
  \sum_{k=1}^{N-1} S_{j, k}(\hat\bx, x) + \sum_{k=1}^{N-1} \Om_{j, k}(\hat\bx, x), 
\quad j = 1, 2, \dots, N, 
\end{align*}
where
\begin{align*}
S_{j, k}(\hat\bx, x) = &\ \frac{1}{4}\eta_\d(\hat\bx)\, 
\frac{x-x_k}{|x-x_k|} \,\t(|x-x_k|)\d^{-1}) 
 \phi_{j, k}(\hat\bx, x),\\[0.3cm]
\Om_{j, k}(\hat\bx, x) = &\ 
\eta_\d(\hat\bx) \t(|x-x_k|\d^{-1})\, R^{(1)}_{j, k}(\hat\bx, x).
\end{align*} 

By \eqref{eq:rsmooth}, each kernel $\Om_{j, k}$ satisfies 
\eqref{eq:s} with $A(\bx) = M(\bx, \SPsi)\le 1$ and $\a = 1$. 
Since both weights $a$ and $b$ are continuous and compactly supported, 
by Corollary \ref{cor:homo}, the operator 
$\iop(b\, \Om_{j, k}\,  a)$  
belongs to $\BS_{3/4, \infty}$, and hence  $\SfG_1(b \iop(\Om_{j, k}) a) = 0$. 
Thanks to Corollary \ref{cor:zero}, 
\begin{align*}
\SfG_1(b\, \iop(T_\d)\, a ) = &\ \SfG_1(b\, \iop(S)\, a),\quad  
\sg_1(b\, \iop(T_\d)\, a ) = \sg_1(b\, \iop(S)\, a),\\
 \textup{where} \quad S(\hat\bx,x) = &\ \{ S_j(\hat\bx, x)\}_{j=1}^N,\, 
 S_j(\hat\bx, x) = \sum_{k=1}^{N-1} S_{j, k}(\hat\bx, x).
\end{align*}
Using Example \ref{ex:vect} with $\a=0$ and 
the vector-function $\SQ(x) = \nabla |x| = x|x|^{-1}$, we obtain 
from formula \eqref{eq:vector} 
that 
 \begin{align*}
\SfG_1(b\, \iop(S)\, a) = \sg_1(b\, \iop(S)\, a)
= \frac{|\nu_{1, 3}|}{6\pi^2}\,\int_{\R^3}  |a(x)|\, |\SB(x)|^{\frac{1}{2}} \, dx,
\end{align*}
where $\SB(x)$ is as in \eqref{eq:sbd}.
By \eqref{eq:phijk}, $\phi_{j, k}(\tilde\bx_{k, N}, x,x) = \psi_j(\tilde\bx_{k, N}, x,x)$, 
and an easy calculation shows that $\frac{|\nu_{1, 3}|}{6\pi^2} = \frac{4}{3\pi}$.  Now 
\eqref{eq:predre} follows. 
\end{proof}

\begin{proof}[Proof of Theorem \ref{thm:recast}\eqref{item:ret}]
Let $a(x) = a_R(x) = \t(|x| R^{-1})$, $b(\hat\bx) = b_R(\hat\bx) = \t(|\hat\bx| R^{-1})$. 
According to Lemmata \ref{lem:tocompact} and \ref{lem:separate}, 
to prove Theorem \ref{thm:recast}\eqref{item:ret} it suffices to show that 
\begin{align*}
\lim_{\d\to0, R\to\infty}\SfG_{1}(b_R\, \iop (T_\d) \, a_R) = \lim_{\d\to0, R\to\infty}
\sg_{1}(b_R\, \iop (T_\d) \, a_R) 
= \frac{4}{3\pi} \int_{\R^3}\, H(x)^{\frac{1}{2}}   \, dx.
\end{align*}
Due to \eqref{eq:predre} it suffices to check that  
\begin{align*}
\lim_{\d\to0, R\to\infty} 
\int_{\R^3}\, H_\d(x; a_R, b_R)^{\frac{1}{2}}   \, dx
= \int_{\R^3}\, H(x)^{\frac{1}{2}}   \, dx.
\end{align*}
This convergence is verified in the same way as in the proof of Theorem 
\ref{thm:recast}\eqref{item:repsi}, which completes the proof of 
 Theorem 
\ref{thm:recast}\eqref{item:ret}.
\end{proof}

This completes the proof of Theorem \ref{thm:recast}, and hence that of 
Theorem \ref{thm:maingk}.   
   
\section{Proof of Theorem \ref{thm:recastasym}} \label{sect:prfrecastasym}

From now on we assume that the function $\psi$ is totally antisymmetric, i.e. 
\eqref{eq:antisym} holds.

We use the regularity results established in Section 
\ref{sect:regasym}. 
For all $\bx\in\SU_k(\d)$ (see \eqref{eq:wd}) and each $j = 1, 2, \dots, N-1$ we have
\begin{align*}
\bigg|x_j - \frac{x_k+x}{2}\bigg|> |x_j - x_k| - \frac{1}{2}|x_k-x|> 4\d - \frac{\d}{2}> 3\d.
\end{align*}
Therefore $\SU_k(\d)\subset \SW_k(3\d, \d)$, and we can use the representations 
\eqref{eq:triplet} and \eqref{eq:triplet1}.

\subsection{Proof of Theorem \ref{thm:recastasym}\eqref{item:repsiasy}}
In line with the representation 
\eqref{eq:bpsias}, we use the results of Sect. \ref{sect:factor} with the scalar kernel 
\begin{align*}
T(\hat\bx, x) = \sqrt{N}\,\tilde\psi(\hat\bx, x) = \sqrt{N}\,\psi(\hat\bx, x)\, e^{-F_1(x)},
\end{align*}
(see \eqref{eq:f1} and \eqref{eq:tildepsi}) which satisfies \eqref{eq:imit} with $\a = 2$ due to \eqref{eq:FSas}. 
Let the truncated kernel $T_\d(\hat\bx, x)$ 
be as defined in \eqref{eq:sdelta}:
\begin{align*}
T_\d(\hat\bx, x) = \sqrt{N}\,\eta_\d(\hat\bx) \psi(\hat\bx, x)e^{-F_1(x)}\,\sum_{k=1}^{N-1}  \t(|x-x_k|\d^{-1}).
\end{align*} 
To prove Theorem \ref{thm:recastasym} \eqref{item:repsiasy} we find the spectral asymptotics of the 
operator $ b \iop(T_\d) a$ with compactly supported $a$, $b$ and with the cut-off 
$\eta_\d$ defined in \eqref{eq:ydel}. Define 
\begin{align}\label{eq:edelta2}
E_\d(x, \om; a, b) =  2  
\bigg(\frac{|a(x)|}{24}\bigg)^2 \,  |b(x)|^2\, |\eta_\d(x)|^2\,  
\big|\sv(x, x) \cdot\om\big|^2,\quad \textup{if}\quad N=2,
\end{align}
and 
\begin{align}\label{eq:edelta}
E_\d(x, \om; a, b) =  
N (N-1)  
\bigg(\frac{|a(x)|}{24}\bigg)^2 \, 
\int_{\R^{3N-6}}\, &\ |b(\tilde\bx, x)|^2\, |\eta_\d(\tilde\bx, x)|^2\, \notag\\
&\ \qquad \times 
\big|\sv(\tilde\bx, x, x) \cdot\om\big|^2\, d\tilde\bx,
\quad \textup{if}\quad N\ge 3,
\end{align}
where $\sv(\hat\bx, x) = \nabla_x\psi(\hat\bx, x)$ and $\tilde\bx = \tilde\bx_{N-1, N}$.

\begin{lem}\label{lem:predrasy}
Let $a\in\plainC{}_0(\R^3)$ and let 
$b\in\plainC{}_0(\R^{3N-3})$ be a totally symmetric function. 
Then for each $\d\in (0, 1/2)$ we have  
\begin{align}\label{eq:predrasy}
\SfG_{3/5} (b\, \iop(T_\d)\, a) 
= \sg_{3/5} (b \,\iop(T_\d)\, a) 
= (3\pi^6)^{-\frac{2}{5}} \, \int_{\R^3}\, \int_{\mathbb S^2}
E_{\d}(x, \om; a e^{-F_1}, b)^{\frac{3}{10}}\,d\om dx,
\end{align}
where $E_\d$ is defined in \eqref{eq:edelta2} or \eqref{eq:edelta}.
\end{lem}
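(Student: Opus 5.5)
The plan is to mirror the proof of Lemma \ref{lem:predr}, with the singlet representation \eqref{eq:singlet} replaced by the triplet representation \eqref{eq:triplet}. Then the model Theorem \ref{thm:model_as} is applied with the $2$-homogeneous vector kernel of Example \ref{ex:scaltriplet}.

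\emph{Step 1: absorb the Jastrow factor into the weight.} Since $F_1$ depends only on $x$, we have
\[
b\,\iop(T_\d)\,a = b\,\iop(T'_\d)\,\big(a e^{-F_1}\big),\qquad T'_\d = \sqrt N\,\eta_\d\,\psi\sum_{k=1}^{N-1}\t(|x-x_k|\d^{-1}),
\]
and $ae^{-F_1}\in\plainC{}_0(\R^3)$. This explains why the weight $ae^{-F_1}$ appears in $E_\d$. From now on we write $a$ for $ae^{-F_1}$.

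\emph{Step 2: extract the leading singularity.} By \eqref{eq:cutoff}, on the support of $\eta_\d(\hat\bx)\t(|x-x_k|\d^{-1})$ we are in $\SU_k(\d)\subset\SW_k(3\d,\d)$, so \eqref{eq:triplet} applies there. This splits $T'_\d=\sum_k S_k+\sum_k\Om_k$, where
\[
S_k = \frac{\sqrt N}{24}\,\eta_\d(\hat\bx)\,\t(|x-x_k|\d^{-1})\,\big(\nabla_x|x-x_k|^3\big)\cdot\sff_k(\bx),\qquad \Om_k=\sqrt N\,\eta_\d\,\t(|x-x_k|\d^{-1})\,L_k .
\]
By \eqref{eq:lsmooth}, $\Om_k$ satisfies \eqref{eq:s} with exponent $2+\mu$ for some $\mu\in(0,1)$. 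Hence Corollary \ref{cor:homo}, with compactly supported continuous $a,b$, puts $b\,\iop(\Om_k)\,a$ in $\BS_{q,\infty}$ with $1/q=1+(2+\mu)/3$, so $q<3/5$. Therefore $\SfG_{3/5}(b\,\iop(\Om_k)\,a)=0$, and by Corollary \ref{cor:zero} it suffices to treat $\sum_k S_k$.

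This sum is exactly the model kernel \eqref{eq:cm_as} with $M=1$, $\SQ(t)=\nabla^T|t|^3$ (so $\a=2$ and $p=3/5$; the sign coming from $t=x_k-x$ is irrelevant), $b_{1,k}=b$, and
\[
\boldsymbol\b_{1,k}=\tfrac{\sqrt N}{24}\,\eta_\d\,\t(|x-x_k|\d^{-1})\,\sff_k .
\]
Condition \eqref{eq:betajk} follows from \eqref{eq:alsmooth}. By reduction (2) in Sect.~\ref{sect:model}, only the value of $\boldsymbol\b_{1,k}$ at $x=x_k$ matters, namely $\frac{\sqrt N}{24}\,\eta_\d(\hat\bx)\,\sv(\tilde\bx_{k,N},x_k,x_k)$.

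\emph{Step 3: evaluate the asymptotic coefficient.} Theorem \ref{thm:model_as} together with Example \ref{ex:scaltriplet} gives
\[
Y(x,\om)=|\nu_{3,3}|^2\,\frac{N}{24^2}\sum_{k=1}^{N-1}\int|b|^2|\eta_\d|^2\,\big|\om\cdot\sv(\tilde\bx_{k,N},x,x)\big|^2\,d\tilde\bx_{k,N}.
\]
Here the antisymmetry \eqref{eq:antisym} and the total symmetry of $b$ (and of $\eta_\d$) are used. After relabelling variables, every $k$-term equals the $k=N-1$ term, so
\[
Y=|\nu_{3,3}|^2\,E_\d(x,\om;a,b)\,|a(x)|^{-2}.
\]
The antisymmetry is needed because swapping the particle labels changes $\sv$ only by a sign, which disappears under the modulus.

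It remains to check the constant:
\[
\nu_{3,3}=2^6\pi^{3/2}\,\frac{\G(3)}{\G(-3/2)}=96\pi,
\qquad
\frac{(96\pi)^{3/5}}{3(2\pi)^3}=3^{-2/5}\pi^{-12/5}=(3\pi^6)^{-2/5}.
\]
The case $N=2$ is identical, with no integration in $\tilde\bx$.

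\emph{Main obstacle.} I expect the main difficulty to be the bookkeeping in Step 3: showing that the $N-1$ pair contributions coincide, so that they produce the factor $N(N-1)$. This requires handling how $\sff_k$, $b$ and $\eta_\d$ transform under the relabelling $x_k\leftrightarrow x_{N-1}$ using \eqref{eq:antisym}. A secondary point is to make sure that the $x$-dependence of $\sff_k$ is harmless, which is exactly what Lemma \ref{lem:cmjk} guarantees.
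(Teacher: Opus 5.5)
Your proposal is correct and follows the paper's proof essentially step by step. The shared steps are: the triplet representation \eqref{eq:triplet} on $\SU_k(\d)\subset\SW_k(3\d,\d)$; discarding the remainders $\Om_k$ via Corollary \ref{cor:homo} with exponent $2+\mu$; applying Example \ref{ex:scaltriplet} with $\a=2$; the symmetry/antisymmetry relabelling that produces the factor $N(N-1)$; and the constant $\nu_{3,3}=96\pi$. The only cosmetic difference is that you move $e^{-F_1}$ into the weight $a$ rather than into $\bbeta_k$, which spares you the paper's remark that $e^{-F_1}$ is smooth on $\SU_k(\d)$.
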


\begin{proof}
Throughout the proof we assume that $N\ge 3$. The case $N=2$ requires only minor adjustments. 

Due to \eqref{eq:cutoff} and to the inclusion $\SU_k(\d)\subset \SW_k(3\d, \d)$ we can use the 
representation \eqref{eq:triplet} to obtain
\begin{align*}
 T_\d(\hat\bx, x) =  \sum_{k=1}^{N-1} S_{k}(\hat\bx, x) + \sum_{k=1}^{N-1} \Om_{k}(\hat\bx, x), 
\end{align*}
where
\begin{align*}
S_k(\hat\bx, x) = &\ \eta_\d(\hat\bx) \,
\SQ(x-x_k)
\cdot\bbeta_k(\hat\bx, x), \quad \SQ(x) =  \nabla|x|^3 = 3|x| x, \\ 
\bbeta_k(\hat\bx, x) =  &\ 
\frac{\sqrt{N}}{24}   \, \sv\bigg(\tilde\bx_{k, N}, \frac{x+x_k}{2}, \frac{x+x_k}{2}\bigg)
e^{-F_1(x)}
\, \t(|x-x_k|\d^{-1}),
\end{align*}
and
\begin{align*}
\Om_k(\hat\bx, x) = 
\sqrt{N}\,\eta_\d(\hat\bx) L_k(\hat{\bx}, x) e^{-F_1(x)}\, 
\t(|x-x_k|\d^{-1}).
\end{align*}
Since $e^{-F_1(x)}$ is smooth on $\SU_k(\d)$, 
by \eqref{eq:lsmooth}, each kernel $\Om_{k}$ satisfies 
\eqref{eq:s} with $A(\bx) = 1$ and $\a = 2+\mu$ with an arbitrary $\mu\in (0, 1)$. 
Since both weights $a$ and $b$ are continuous and compactly supported, 
by Corollary \ref{cor:homo}, 
the operator $\iop(b\, \Om_{k}\,a)$ belongs to 
$\BS_{q, \infty}$, $q = 3/(5+\mu)<3/5$, and hence  
$\SfG_{3/5}(b \iop(\Om_{k}) a) = 0$. Thanks to Corollary \ref{cor:zero}, 
\begin{align*}
\SfG_{3/5}(b\, \iop(T_\d)\, a ) = &\ \SfG_{3/5}(b\, \iop(S)\, a),\quad  
\sg_{3/5}(b\, \iop(T_\d)\, a ) = \sg_{3/5}(b\, \iop(S)\, a),\\
 \textup{where} \quad S(\hat\bx, x) = &\ 
  \sum_{k=1}^{N-1} S_{k}(\hat\bx, x).
\end{align*}
To find the asymptotics for $b\iop S\, a$ we use Example \ref{ex:scaltriplet} with $M=1$, $\a=2$, 
$\SQ(x) = \nabla |x|^3$, which 
gives the asymptotics of the form \eqref{eq:model} with $p = 3/5$, with the function 
$Y(t, \xi)$ given by \eqref{eq:Yvec}:
\begin{align*}
Y(t, \xi) = &\ \frac{|\nu_{3, 3}|^2}{|\xi|^{12}} \, 
 \,\sum_{k=1}^{N-1}\int_{\R^{3N-6}}\, |b(\tilde\bx_{k, N}, t)\, 
 \eta_\d(\tilde\bx_{k, N}, t)|^2 \,
|\xi\cdot \boldsymbol\b_{k}(\tilde\bx_{k, N}, t, t)|^2 
\,d\tilde\bx_{k, N}\\
= &\ \frac{|\nu_{3, 3}|^2}{|\xi|^{12}} \, \frac{N}{24^2}\, e^{-2F_1(t)}
 \,\sum_{k=1}^{N-1}\int_{\R^{3N-6}}\, |b(\tilde\bx_{k, N}, t)\, 
 \eta_\d(\tilde\bx_{k, N}, t)|^2 \,
|\xi\cdot \sv(\tilde\bx_{k, N}, t, t)|^2 
\,d\tilde\bx_{k, N}.
\end{align*}
Recall that by definition \eqref{eq:ydel} the function $\eta_\d$ is  totally 
symmetric, and by our assumption, so is $b$. Furthermore, 
due to \eqref{eq:antisym}, we have 
\begin{align*}
\sv(x_1, \dots,  x_{k-1}, t, x_{k+1},  \dots, x_{N-1}, x) 
=  (-1)^{k+N-1} \sv (x_1, \dots, x_{k-1}, x_{N-1}, x_{k+1},\dots, x_{N-2}, t, x),
\end{align*} 
and hence the expression for $Y(t, \xi)$ simplifies:
\begin{align*}
Y(t, \xi) = &\ \frac{|\nu_{3, 3}|^2}{24^2\,|\xi|^{12}} \, N(N-1) e^{-2F_1(t)}\\
&\  \times \, \int_{\R^{3N-6}}\, |b(\tilde\bx, t)\, 
 \eta_\d(\tilde\bx, t)|^2 \,
 |\xi\cdot \sv(\tilde\bx, t, t)|^2 
\,d\tilde\bx,\, \tilde \bx= \tilde\bx_{N-1, N}.
\end{align*}
Thus, it follows from \eqref{eq:model} that 
 \begin{align*}
\SfG_{3/5}(b\, \iop(S)\, a) = \sg_{3/5}(b\, \iop(S)\, a)
= \frac{|\nu_{3,3}|^{\frac{3}{5}}}{24\pi^3}\,\int_{\R^3}\,  \int_{\mathbb S^2}\, 
E_\d(x, \om; a e^{-F_1}, b)^{\frac{3}{10}}\, d\om dx,
\end{align*}
where $E_\d(x, \om; a, b)$ is as in \eqref{eq:edelta}. A simple calculation shows that 
$\frac{|\nu_{3,3}|^{\frac{3}{5}}}{24\pi^3} = (3\pi^6)^{-\frac{2}{5}}$. 
Now 
\eqref{eq:predrasy} follows.
\end{proof}

\begin{proof}[Proof of Theorem \ref{thm:recastasym}\eqref{item:repsiasy}]
Let $a(x) = a_R(x) = \t(|x| R^{-1})$, $b(\hat\bx) = b_R(\hat\bx) = \t(|\hat\bx| R^{-1})$. 
Recall that 
\begin{align*}
\SPsi(\bx) = T(\hat\bx, x) e^{F_1(x)}.
\end{align*}
Thus, according to Lemmata \ref{lem:tocompact} and \ref{lem:separate}, 
to prove Theorem \ref{thm:recastasym}\eqref{item:repsiasy} it suffices to show that 
\begin{align*}
\lim_{\d\to0, R\to\infty}\SfG_{3/5}(b_R\, \iop (T_\d) \, e^{F_1} a_R) 
= &\ \lim_{\d\to0, R\to\infty}
\sg_{3/5}(b_R\, \iop (T_\d) \,  e^{F_1}  a_R) \\
= &\ (3\pi^6)^{-\frac{2}{5}}\, \int_{\R^3}\, \int_{\mathbb S^2}\, E(x, \om)^{\frac{3}{10}} \, d\om dx.
\end{align*}
Due to \eqref{eq:predrasy} it suffices to check that  
\begin{align}\label{eq:h310}
\lim_{\d\to0, R\to\infty} 
\int_{\R^3}\,\int_{\mathbb S^2}\, E_\d(x, \om; a_R, b_R)^{\frac{3}{10}}\, dx
= \int_{\R^3}\, \int_{\mathbb S^2}\, E(x, \om)^{\frac{3}{10}}\, d\om dx.
\end{align}
According to Lemma \ref{lem:coeffs}, 
the function $E(x, \om)^{3/10}$ is integrable. Therefore, the function 
$\sv(\tilde\bx, x, x)\cdot\om$ is 
square-integrable in $\tilde\bx$ for a.e. $x\in\R^3$, and a.e. $\om\in\mathbb S^2$.
Furthermore, 
$|\eta_\d|\le 1$, $|a_R|, |b_R|\le 1$ and $\eta_\d\to 0, \d\to 0,$ a.e., and $a_R\to 1, b_R\to 1$, $R\to\infty$,  
a.e. Thus, by the Dominated Convergence Theorem, 
$E_\d(x, \om; a_R, b_R)\to E(x, \om)$ a.e. $x\in\R^3$, and a.e. $\om\in\mathbb S^2$. Since 
 $E_\d(x, \om; a_R, b_R)\le E(x, \om)$, the Dominated Convergence Theorem ensures that 
\eqref{eq:h310} holds. Theorem  
\ref{thm:recastasym}\eqref{item:repsiasy} is proved. 
\end{proof}

\subsection{Proof of Theorem \ref{thm:recastasym}\eqref{item:retasy}}
In line with \eqref{eq:bpsias}, we use the results of Sect. \ref{sect:factor} with the vector-valued kernel 
\begin{align*}
T(\hat\bx, x) = \sqrt{N}\, \nabla_x\big( e^{- F_1(x)} \psi(\hat\bx, x)\big)
\end{align*}
(see \eqref{eq:f1} for the definition of $F_1$), which satisfies \eqref{eq:imit} with $\a = 1$ due to \eqref{eq:FSas}. Let the truncated kernel $T_\d(\hat\bx, x)$ be as defined in \eqref{eq:sdelta}:
\begin{align*}
T_\d(\hat\bx, x) = \sqrt{N}\,\eta_\d(\hat\bx)\, \nabla_x\big( e^{- F_1(x)} \psi(\hat\bx, x)\big)\,
\sum_{k=1}^{N-1}  \t(|x-x_k|\d^{-1}).
\end{align*} 
To prove Theorem \ref{thm:recastasym} \eqref{item:retasy} we find the spectral asymptotics of the 
operator $ b \iop(T_\d) a$ with compactly supported $a$, $b$ and with the cut-off 
$\eta_\d$ defined in \eqref{eq:ydel}. Let $E_\d(x, \om; a, b)$ be as defined in 
\eqref{eq:edelta}.

\begin{lem}
Let $a\in\plainC{}_0(\R^3)$ and let 
$b\in\plainC{}_0(\R^{3N-3})$ be a totally symmetric function. 
Then for each $\d\in (0, 1/2)$ we have  
\begin{align}\label{eq:predretasy}
\SfG_{3/4} (b\, \iop(T_\d)\, a) 
= \sg_{3/4} (b \,\iop(T_\d)\, a)  = \bigg(\frac{8}{3\pi^9}\bigg)^{\frac{1}{4}}\, 
\int_{\R^3}\, \int_{\mathbb S^2}
E_{\d}(x, \om; a e^{-F_1}, b)^{\frac{3}{8}}\,d\om dx,
\end{align}
where $E_\d$ is defined in \eqref{eq:edelta2} or \eqref{eq:edelta}.
\end{lem}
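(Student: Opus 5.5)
The argument repeats the proof of Lemma \ref{lem:predrasy}, with \eqref{eq:triplet1} in place of \eqref{eq:triplet}. We assume $N\ge 3$. By \eqref{eq:cutoff} and the inclusion $\SU_k(\d)\subset\SW_k(3\d,\d)$, on the support of $\eta_\d(\hat\bx)\t(|x-x_k|\d^{-1})$ we may write
\begin{align*}
\nabla_x\big(e^{-F_1(x)}\psi\big) = e^{-F_1(x)}\nabla_x\psi - \big(\nabla F_1(x)\big) e^{-F_1(x)}\psi .
\end{align*}
The first term is treated with \eqref{eq:triplet1}. For the second term we use \eqref{eq:triplet}: its leading part is $|x-x_k|(x-x_k)$ times a smooth factor, hence homogeneous of order $2$, and the error is controlled by \eqref{eq:lsmooth}. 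Since $F_1$ is smooth away from $x=0$, and $|x|>3\d$ on $\SU_k(\d)$, this gives
\begin{align*}
T_\d = \sum_{k=1}^{N-1} S_k + \sum_{k=1}^{N-1}\Om_k,\qquad S_k(\hat\bx,x) = \eta_\d(\hat\bx)\,\SQ(x-x_k)\,\bbeta_k(\hat\bx,x),\quad \SQ(x)=\nabla\nabla^T|x|^3 ,
\end{align*}
where $\bbeta_k$ is exactly the vector function from the proof of Lemma \ref{lem:predrasy}. By \eqref{eq:alsmooth}, $\bbeta_k$ satisfies \eqref{eq:betajk}. The remainder $\Om_k$ collects three pieces: $L_k^{(1)}$, the term $\nabla F_1\cdot\psi$, and the terms where derivatives fall on the cut-offs. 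The last kind is supported away from $x=x_k$ or is smooth there.

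For the remainder, \eqref{eq:l1smooth} shows that $\Om_k$ satisfies \eqref{eq:s} with $\a=1+\mu$, $\mu\in(0,1)$. The term $(\nabla F_1)\psi$ is of order $|x-x_k|^2$, so it satisfies \eqref{eq:s} with $\a=2$. Corollary \ref{cor:homo} then puts $b\,\iop(\Om_k)\,a$ in $\BS_{q,\infty}$ with $q=3/(4+\mu)<3/4$. Hence $\SfG_{3/4}(b\,\iop(\Om_k)\,a)=0$, and Corollary \ref{cor:zero} reduces the problem to $S=\sum_k S_k$. The checking step here is to make sure every non-leading term really gains a power of $|x-x_k|$ over the order-one principal part.

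For the principal part we apply Theorem \ref{thm:model_as} through Example \ref{ex:matriplet} with $\a=1$, $M=1$, $m=n=3$, $p=3/4$. The matrix $\SQ=\nabla\nabla^T|x|^3$ is homogeneous of order $1$, and $\a+2=3$ is not excluded, so
\begin{align*}
|X_\infty(\xi)\bbeta|^2 = |\nu_{3,3}|^2|\xi|^{-10}|\xi\cdot\bbeta|^2 .
\end{align*}
Substituting into \eqref{eq:Ymat}, we use the total symmetry of $b$ and $\eta_\d$ together with the antisymmetry identity for $\sv$ from the proof of Lemma \ref{lem:predrasy}. This collapses the $k$-sum to $(N-1)$ copies of the $k=N-1$ term, giving
\begin{align*}
Y(t,\xi) = \frac{|\nu_{3,3}|^2}{|\xi|^{10}}\,\frac{N(N-1)}{24^2}\,e^{-2F_1(t)}\int_{\R^{3N-6}} |b\,\eta_\d|^2\,|\xi\cdot\sv(\tilde\bx,t,t)|^2\,d\tilde\bx .
\end{align*}
Inserting this into \eqref{eq:model} with $p=3/4$ yields
\begin{align*}
\frac{|\nu_{3,3}|^{3/4}}{24\pi^3}\int_{\R^3}\int_{\mathbb S^2} E_\d(x,\om;ae^{-F_1},b)^{3/8}\,d\om\,dx ,
\end{align*}
because $|a|^{3/4}Y^{3/8}$ regroups as $|\nu_{3,3}|^{3/4}E_\d^{3/8}$ on $\om\in\mathbb S^2$.

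The remaining step is the constant. From \eqref{eq:xinf},
\begin{align*}
\nu_{3,3} = 2^6\pi^{3/2}\,\frac{\G(3)}{\G(-3/2)} = 2^6\pi^{3/2}\cdot\frac{2}{\tfrac{4}{3}\sqrt\pi} = 96\pi .
\end{align*}
Then
\begin{align*}
\frac{|\nu_{3,3}|^{3/4}}{24\pi^3} = \frac{96^{3/4}\pi^{3/4}}{24\pi^3} = \Big(\frac{96^3}{24^4\pi^9}\Big)^{1/4} = \Big(\frac{8}{3\pi^9}\Big)^{1/4},
\end{align*}
which is the constant in \eqref{eq:predretasy}. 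The $N=2$ case follows the same steps with \eqref{eq:edelta2}.
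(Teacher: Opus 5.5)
Your proposal is correct and follows the paper's proof essentially step by step. It uses the same splitting of $\nabla_x(e^{-F_1}\psi)$, the same principal kernel $\eta_\d\,\SQ(x-x_k)\bbeta_k$ with $\SQ=\nabla\nabla^T|x|^3$ fed into Example \ref{ex:matriplet}, the same symmetry collapse of the $k$-sum, and the same constant $|\nu_{3,3}|^{3/4}/(24\pi^3)=(8/(3\pi^9))^{1/4}$.

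There are two harmless deviations. You bound the $(\nabla F_1)e^{-F_1}\psi$ piece directly via \eqref{eq:triplet} and \eqref{eq:lsmooth} with $\a=2$, where the paper appeals to Lemma \ref{lem:predrasy}; the two are equivalent. Also, the ``derivatives falling on the cut-offs'' you list never occur, because in $T_\d$ the cut-offs sit outside the gradient.
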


\begin{proof} 
We conduct the proof only for $N\ge 3$. The case $N=2$ requires only minor adjustments. 

Rewrite the kernel $T_\d$:
\begin{align*}
T_\d(\hat\bx, x) = &\ \sqrt{N}\,\eta_\d(\hat\bx)\, e^{- F_1(x)} \sv(\hat\bx, x)\,
\sum_{k=1}^{N-1}  \t(|x-x_k|\d^{-1})\\
  &\ \qquad  -\sqrt{N}\,\eta_\d(\hat\bx)\,  \nabla_x F_1(x) e^{-F_1(x)}\,\psi(\hat\bx, x)\,
\sum_{k=1}^{N-1}  \t(|x-x_k|\d^{-1}).
\end{align*}
where, as usual, $\sv(\hat\bx, x) = \nabla_x\psi(\hat\bx, x)$.

Due to \eqref{eq:cutoff} and to the inclusion $\SU_k(\d)\subset \SW_k(3\d, \d)$ we can use the 
representation \eqref{eq:triplet1} to obtain
\begin{align*}
 T_\d(\hat\bx, x) =  \sum_{k=1}^{N-1} S_{k}(\hat\bx, x) + \sum_{k=1}^{N-1} \Om_{k}(\hat\bx, x), 
\end{align*}
where
\begin{align*}
S_k(\hat\bx, x) = &\ \eta_\d(\hat\bx) \, \SQ(x-x_k)\bbeta_k(\hat\bx, x), \quad 
\SQ(x) = \nabla_x\nabla_x^T\, |x|^3,
\\
\bbeta_k(\hat\bx, x) =  &\ 
\frac{\sqrt{N}}{24}   
 \, \sv\bigg(\tilde\bx_{k, N}, \frac{x+y}{2}, \frac{x+y}{2}\bigg) \, e^{-F_1(x)}
\, \t(|x-x_k|\d^{-1}),
\end{align*}
and
\begin{align*}
\Om_k(\hat\bx, x) = &\ 
\Om_k^{(1)}(\hat\bx, x) + \Om_k^{(2)}(\hat\bx, x),\\
\Om_k^{(1)}(\hat\bx, x) = &\ 
\sqrt{N}\,\eta_\d(\hat\bx) L_k^{(1)}(\hat{\bx}, x) \, e^{-F_1(x)}\, 
\t(|x-x_k|\d^{-1}),\\
\Om_k^{(2)}(\hat\bx, x) = &\   
-\sqrt{N}\,\eta_\d(\hat\bx)\,  \nabla_x F_1(x) e^{-F_1(x)}\,\psi(\hat\bx, x)\,
 \t(|x-x_k|\d^{-1}). 
\end{align*}
Since $e^{-F_1(x)}$ is smooth on $\SU_k(\d)$, by \eqref{eq:l1smooth}, 
each kernel $\Om_{k}^{(1)}$ satisfies 
\eqref{eq:s} with $A(\bx) = 1$ and $\a = 1+\mu$ with an arbitrary $\mu\in (0, 1)$. 
Since both weights $a$ and $b$ are continuous and compactly supported, 
by Corollary \ref{cor:homo}, 
the operator $\iop\big(b\, \Om_{k}^{(1)}\,a\big)$ belongs to 
$\BS_{q, \infty}$, $q = 3/(4+\mu)<3/4$, and hence  
$\SfG_{3/4}(b \iop\big(\Om_{k}^{(1)}) a\big) = 0$. 
The kernel $\Om_k^{(2)}$ has the same structure as the kernel 
considered in Lemma \ref{lem:predrasy} (with the extra weight $\nabla_x F_1(x)$). 
Due to this Lemma, $\iop\big( b \Om_k^{(2)} a\big)\in \BS_{3/5, \infty}$, and hence 
$\SfG_{3/4}(b \iop\big(\Om_{k}^{(2)}) a\big) = 0$. By \eqref{eq:trianglep}, 
$\SfG_{3/4}(b \iop\big(\Om_{k} a\big) = 0$. Thanks to Corollary \ref{cor:zero}, 
\begin{align*}
\SfG_{3/4}(b\, \iop(T_\d)\, a ) = &\ \SfG_{3/4}(b\, \iop(S)\, a),\quad  
\sg_{3/4}(b\, \iop(T_\d)\, a ) = \sg_{3/4}(b\, \iop(S)\, a),\\
 \textup{where} \quad S(\hat\bx, x) = &\ 
  \sum_{k=1}^{N-1} S_{k}(\hat\bx, x).
\end{align*}
To find the asymptotics for $b\iop S\, a$ we use Example \ref{ex:matriplet} with $M=1$, $\a=1$, 
$\SQ(x) = \nabla \nabla^T |x|^3$, which 
gives the asymptotics of the form \eqref{eq:model} with $p = 3/4$, 
with the function $Y(t, \xi)$ given by \eqref{eq:Ymat}:
\begin{align*}
Y(t, \xi) = &\ \frac{|\nu_{3, 3}|^2}{|\xi|^{10}} \, 
 \,\sum_{k=1}^{N-1}\int_{\R^{3N-6}}\, |b(\tilde\bx_{k, N}, t)\, 
 \eta_\d(\tilde\bx_{k, N}, t)|^2 \,
|\xi\cdot \boldsymbol\b_{k}(\tilde\bx_{k, N}, t, t)|^2 
\,d\tilde\bx_{k, N}\\
= &\ \frac{|\nu_{3, 3}|^2}{|\xi|^{10}} \, \frac{N}{24^2}\, e^{-2F_1(t)}\, 
 \,\sum_{k=1}^{N-1}\int_{\R^{3N-6}}\, |b(\tilde\bx_{k, N}, t)\, 
 \eta_\d(\tilde\bx_{k, N}, t)|^2 \,
|\xi\cdot \sv(\tilde\bx_{k, N}, t, t)|^2 
\,d\tilde\bx_{k, N}.
\end{align*}
As in the proof of Lemma \ref{lem:predrasy}, using the total symmetry of $\eta_\d b$ and antisymmetry of $\sv$, we can simplify the formula for $Y(t, \xi)$:
\begin{align*}
Y(t, \xi) = &\ \frac{|\nu_{3, 3}|^2}{24^2\,|\xi|^{10}} \, N(N-1) \\
&\  \times \, e^{-2F_1(t)}\,\int_{\R^{3N-6}}\, |b(\tilde\bx, t)\, 
 \eta_\d(\tilde\bx, t)|^2 \,
 |\xi\cdot \sv(\tilde\bx, t, t)|^2 
\,d\tilde\bx,\, \tilde \bx= \tilde\bx_{N-1, N}.
\end{align*}
Thus, it follows from \eqref{eq:model} that 
 \begin{align*}
\SfG_{3/4}(b\, \iop(S)\, a) = \sg_{3/4}(b\, \iop(S)\, a)
= \frac{|\nu_{3,3}|^{\frac{3}{4}}}{24\pi^3}\,\int_{\R^3}\,  \int_{\mathbb S^2}\, 
E_\d(x, \om; a e^{-F_1}, b)^{\frac{3}{8}}\, d\om dx,
\end{align*}
where $E_\d(x, \om; a, b)$ is as in \eqref{eq:edelta}. 
A simple calculation shows that 
$\frac{|\nu_{3,3}|^{\frac{3}{4}}}{24\pi^3} = (\frac{8}{3\pi^9})^{\frac{1}{4}}$. 
Now 
\eqref{eq:predretasy} follows.
\end{proof}

\begin{proof}[Proof of Theorem \ref{thm:recastasym}\eqref{item:retasy}] 
Rewrite the kernel $\SV$:
\begin{align*}
\SV(\hat\bx, x) = &\ \sqrt{N}\, \nabla_x\psi(\hat\bx, x)
= \sqrt{N}\, e^{F_1(x)} \nabla_x \big( e^{-F_1(x)} \psi(\hat\bx, x)\big) 
+ \sqrt{N}\, \big(\nabla_x F_1(x)\big)\, \psi(\hat\bx, x)\\
= &\ e^{F_1(x)} T(\hat\bx, x) + \sqrt{N}\, \big(\nabla_x F_1(x)\big)\, \psi(\hat\bx, x).
\end{align*}
By Theorem \ref{thm:recastasym}(\ref{item:repsiasy}), which was proved earlier, 
the operator with the second kernel in this sum belogs to $\BS_{3/5, \infty}$, and hence, by 
Corollary \ref{cor:zero}, 
 \begin{align*}
 \SfG_{3/4}(\iop (\SV)) =  \SfG_{3/4}(\iop (T) e^{F_1}),\quad  
  \sg_{3/4}(\iop (\SV)) =  \sg_{3/4}(\iop (T) e^{F_1}).
 \end{align*}
Let $a(x) = a_R(x) = \t(|x| R^{-1})$, $b(\hat\bx) = b_R(\hat\bx) = \t(|\hat\bx| R^{-1})$. 
By Lemmata \ref{lem:tocompact} and \ref{lem:separate}, 
to prove Theorem \ref{thm:recastasym}\eqref{item:retasy} it suffices to show that 
\begin{align*}
\lim_{\d\to0, R\to\infty}\,\SfG_{3/4}(b_R\, \iop (T_\d) e^{F_1}\, a_R) 
= &\ \lim_{\d\to0, R\to\infty}
\sg_{3/4}(b_R\, \iop (T_\d) e^{F_1} \, a_R) \\
= &\ \bigg(\frac{8}{3\pi^9}\bigg)^{\frac{1}{4}}\, 
\int_{\R^3}\, \int_{\mathbb S^2}\, E(x, \om)^{\frac{3}{8}} \, d\om dx.
\end{align*}
Due to \eqref{eq:predrasy} it suffices to check that  
\begin{align*}
\lim_{\d\to0, R\to\infty} 
\int_{\R^3}\,\int_{\mathbb S^2}\, E_\d(x, \om; a_R, b_R)^{\frac{3}{8}}\, dx
= \int_{\R^3}\, \int_{\mathbb S^2}\, E(x, \om)^{\frac{3}{8}}\, d\om dx.
\end{align*}
This convergence is verified in the same way as in the proof of Theorem 
\ref{thm:recastasym}\eqref{item:repsiasy}, which completes the proof of 
 Theorem 
\ref{thm:recastasym}\eqref{item:retasy}.
\end{proof}

This completes the proof of Theorem \ref{thm:recastasym}, 
and hence that of Theorem \ref{thm:maingkasym}.

\section{Appendix}\label{sect:app}

In this appendix we prove an elementary extension result for Sobolev spaces. 
We consider spaces of functions that depend either 
on one variable $x\in\R^d$ or on two variables $(t, x)\in \R^l\times\R^d$. 
Denote $K = \{t\in\R^l: |t| < 1\}$,\ $B = \{x\in\R^d: |x| < 1\}$ and 
$B_0 = B\setminus\{0\}$. In Section \ref{sect:intop} (see Remark \ref{rem:remove}) 
we need only the case of one variable $x\in \R^d$. The following lemma however holds for 
a more general case of two variables $(t, x)\in \R^l\times \R^d$ as well, 
which may come in handy in different 
circumstances.

Our objective is to establish the following elementary extension result. 

\begin{lem} \label{lem:remove}
Let the dimension $l$ be arbitrary, let 
$d\ge 2$, $m\ge 1$, and 
$p\in [d(d-1)^{-1}, \infty]$. Then 
$\plainW{m, p}(B_0) 
= \plainW{m, p}(B)$ and $\plainW{m, p}\big(K\times B_0\big) 
= \plainW{m, p}\big(K\times B\big)$.
\end{lem}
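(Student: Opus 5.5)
The plan is to prove the nontrivial inclusion $\plainW{m,p}(B_0)\subset\plainW{m,p}(B)$; the reverse inclusion is immediate by restriction. The point is that the weak derivatives computed on $B_0$ are also the weak derivatives on $B$, i.e.\ the distributional derivatives on $B$ carry no part supported at the origin. By induction on the order it suffices to treat $m=1$. Indeed, if $u\in \plainW{m,p}(B_0)$, then every $\p^k u$ with $|k|\le m-1$ lies in $\plainW{1,p}(B_0)$. The case $m=1$ then gives $\p^k u\in\plainW{1,p}(B)$ with the same weak gradient. Iterating from top order down shows that the weak derivatives on $B$ of all orders up to $m$ coincide with those on $B_0$.

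For $m=1$, fix $u\in\plainW{1,p}(B_0)$ and $\varphi\in\plainC\infty_0(B)$. I would use a cut-off $\z_\e(x)=\z(|x|/\e)$, with $\z=1-\t$ as in \eqref{eq:sco1}, so that $\z_\e$ vanishes for $|x|<\e/2$ and equals $1$ for $|x|>\e$. Then $\varphi\z_\e\in\plainC\infty_0(B_0)$, and
\begin{align*}
\int u\,\p_i(\varphi\z_\e)\,dx = -\int \p_i u\,\varphi\z_\e\,dx .
\end{align*}
As $\e\to0$ the right-hand side tends to $-\int \p_iu\,\varphi\,dx$, and $\int u\,\z_\e\p_i\varphi\,dx\to\int u\,\p_i\varphi\,dx$, by dominated convergence since $u,\p_iu\in\plainL1(B)$. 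The remaining term is $\int u\,\varphi\,\p_i\z_\e\,dx$. Since $|\nabla\z_\e|\lesssim\e^{-1}$ on the annulus $\e/2<|x|<\e$, Hölder gives
\begin{align*}
\Big|\int u\,\varphi\,\p_i\z_\e\,dx\Big|\lesssim \e^{-1}\|u\|_{\plainL{p}(\{|x|<\e\})}\,\e^{d(1-1/p)} .
\end{align*}
This tends to $0$ when $d(1-1/p)\ge1$, i.e.\ $p\ge d(d-1)^{-1}$. In the borderline case the factor $\e^{d(1-1/p)-1}$ equals $1$, but $\|u\|_{\plainL{p}(\{|x|<\e\})}\to0$ anyway. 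We also need $u\in\plainL{p}(B)$, which is immediate because $\{0\}$ has measure zero. This proves the identity $\int u\p_i\varphi=-\int\p_iu\,\varphi$ on $B$. The case $p=\infty$ is handled the same way, with the bound $\e^{d-1}\|u\|_\infty$.

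For the product domains $K\times B_0$ the argument is identical. I would take test functions $\varphi(t,x)$ and use the same cut-off $\z_\e(x)$ in the $x$-variable only. The error term is now bounded by $\e^{-1}\|u\|_{\plainL{p}(K\times\{|x|<\e\})}\,(\e^{d}|K|)^{1-1/p}$, and the exponent count is unchanged; derivatives in $t$ involve no cut-off error at all.

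The main obstacle is the borderline exponent $p=d/(d-1)$. There the Hölder estimate is only $O(1)$, and one must invoke absolute continuity of the integral to get smallness, so the integrability of $u$ itself, not only of $\nabla u$, is what is being used.
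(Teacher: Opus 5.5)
Your proposal is correct and follows essentially the same route as the paper: cut off a ball of radius $\varepsilon$ around the origin, integrate by parts away from it, and kill the cut-off-gradient term by H\"older, which gives $\varepsilon^{d(1-1/p)-1}\|u\|_{\plainL{p}(\{|x|<\varepsilon\})}\to 0$ exactly when $p\ge d(d-1)^{-1}$. Your explicit induction reducing general $m\ge 1$ to $m=1$ is valid and fills in a step the paper only asserts (``it suffices to show that $\plainW{1,p}(K\times B_0)\subset\plainW{1,p}(K\times B)$'').
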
  
  
\begin{proof} 
We prove only the second equality since the first one is obtained in the same way. 
It suffices to show that $\plainW{1, p}\big(K\times B_0\big) 
\subset \plainW{1, p}\big(K\times B\big)$. 
Let $u\in \plainW{1, p}\big(K\times B_0\big)$. 
We show that for all $j\in \mathbb N_0^{d+l}$, $|j|=1$,  and all test 
functions $\phi\in \plainC\infty_0(K\times B)$ the following identity holds:
\begin{align}\label{eq:parts}
\int u\, \p^j\phi\, dt dx = -\int \p^j u \, \phi \, dt dx,  
\end{align}
where by $\p^j u \in \plainL{p}\big(K\times B_0\big)=\plainL{p}\big(K\times B\big)$ we denote the distributional  
derivatives of $u$ on the set $K\times B_0$. 
Let $\t$ be the cut-off 
function as defined in \eqref{eq:sco} and \eqref{eq:sco1}, and introduce 
$\t_\varepsilon(x) = \t(|x|\varepsilon^{-1})$. Then 
\begin{align*}
\int u\, \p^j\phi\, dt dx = \int u\, \p^j\big(\phi\,(1-\t_\varepsilon)\big)\, dt dx 
+ \int u\, \p^j\big(\phi\,\t_\varepsilon\big)\, dt dx. 
\end{align*}
Since $1-\t_\varepsilon(x) = 0$ for $|x|<\varepsilon/2$, in the first integral 
on the right-hand side we can integrate by parts:
\begin{align}\label{eq:backpar}
\int u\, \p^j\phi\, dtdx = 
&\ -\int (\p^j u)\, \phi\,(1-\t_\varepsilon)\, dt dx  
+ \int u\, \p^j\big(\phi\,\t_\varepsilon\big)\, dt dx\notag\\ 
= &\ -\int (\p^j u)\, \phi\, dt dx
+ \int (\p^j u)\, \phi\,\t_\varepsilon\, dt dx 
+ \int u\, \p^j\big(\phi\,\t_\varepsilon\big)\, dt dx.
\end{align}
The rest of the proof is conducted for $p <\infty$ only. 
The modifications for the case $p = \infty$ are obvious. 
By H\"older's inequality, the second term on the right-hand side of 
\eqref{eq:backpar} can be estimated as follows, with $q^{-1} = 1- p^{-1}$:
\begin{align}\label{eq:1} 
\bigg|\int (\p^j u)\, \phi\,\t_\varepsilon\, dt dx\bigg|
\le \bigg[\int\limits_{|x|< \varepsilon}|\p^j u|^p\, dt dx\bigg]^{\frac{1}{p}} \, 
\bigg[\int\limits_{|x|< \varepsilon} |\phi|^q \, dtdx\bigg]^{\frac{1}{q}} \rightarrow 0
\quad \textup{as}\quad \varepsilon\to 0.
\end{align}
Consider the third integral on the right-hand side of \eqref{eq:backpar}. 
By H\"older's inequality 
again, 
\begin{align*}
\bigg|\int u\, \p^j\big(\phi\,\t_\varepsilon\big)\, dtdx\bigg|
\le \bigg[\int\limits_{|x|< \varepsilon}|u|^p\, dt dx\bigg]^{\frac{1}{p}} \,
\max|\p^j (\phi\,\t_\varepsilon)|\, 
\bigg[\int\limits_{|x|< \varepsilon}\, dt dx\bigg]^{\frac{1}{q}}. 
\end{align*} 
As $\max|\p^j (\phi\t_\varepsilon)|\lesssim \varepsilon^{-1}$, and $q = (1-p^{-1})^{-1} \le d$, 
the right-hand side does not exceed 
\begin{align*}
\varepsilon^{\frac{d}{q} - 1}\bigg[\int\limits_{|x|< \varepsilon}|u|^p\, dtdx\bigg]^{\frac{1}{p}}
\to 0,\quad \textup{as}\quad \varepsilon\to 0.
\end{align*}
Together with \eqref{eq:1} this implies that the last two terms on the right-hand side 
of \eqref{eq:backpar} tend to zero as $\varepsilon\to 0$. This entails 
\eqref{eq:parts}, as required. This completes the proof of the lemma.
\end{proof}      
      
\vskip 0.5cm

  \textbf{Acknowledgments.} 
The second author (A.V.S.) is grateful to J.~Cioslowski    
for comments and stimulating discussions, and to M.~Lewin for pointing out the asymptotic 
problem for the kinetic energy density operator. 
Thanks also go to A.~Nazarov and A.~Tyulenev for bringing to 
the authors attention the book 
\cite{Burenkov1998}, 
and to D.~Edmunds, V.~Kozlov, V.~Maz'ya, G.~Rozenblum, D.~Vassiliev 
for their advice on Sobolev spaces.  The authors are grateful to P.~Hearnshaw for useful suggestions.

\end{document}